\documentclass[12pt]{article}
\newcommand{\blind}{0}

\usepackage{color}

\usepackage{fdsymbol}
\usepackage{amsthm}
\usepackage{amsmath}
\usepackage{siunitx}

\usepackage{booktabs}
\usepackage{multirow}
\usepackage{adjustbox}
\usepackage{enumerate}
\usepackage{float}
\usepackage{hyperref}

\newtheorem{theorem}{Theorem}
\newtheorem{corollary}{Corollary}
\newtheorem{lemma}{Lemma}
\newtheorem{proposition}{Proposition}

\newtheorem{remark}{Remark}

\usepackage{natbib}
\usepackage{bibunits}
\defaultbibliographystyle{chicago}
\defaultbibliography{ref}

\makeatletter

\def\BU@hyperprefix{}

\let\BU@orig@startbibunit\@startbibunit
\def\@startbibunit{%
  \BU@orig@startbibunit
  \xdef\BU@hyperprefix{\@bibunitname.}%
}

\let\BU@orig@endbibunit\endbibunit
\def\endbibunit{%
  \BU@orig@endbibunit
  \gdef\BU@hyperprefix{}%
}

\let\BU@orig@hyper@natlinkstart\hyper@natlinkstart
\def\hyper@natlinkstart#1{%
  \BU@orig@hyper@natlinkstart{\BU@hyperprefix#1}%
}

\let\BU@orig@hyper@natlinkbreak\hyper@natlinkbreak
\def\hyper@natlinkbreak#1#2{%
  \BU@orig@hyper@natlinkbreak{#1}{\BU@hyperprefix#2}%
}

\let\BU@orig@hyper@natanchorstart\hyper@natanchorstart
\def\hyper@natanchorstart#1{%
  \BU@orig@hyper@natanchorstart{\BU@hyperprefix#1}%
}

\makeatother

\begin{document}

\def\spacingset#1{\renewcommand{\baselinestretch}
{#1}\small\normalsize} \spacingset{1}

\if0\blind
{
  \title{\bf Testing for Heterogeneous Treatment Effects in Regression Discontinuity Designs}
  \author{Xiaojun Song\thanks{Corresponding author. Email: \texttt{sxj@gsm.pku.edu.cn}. This work was supported by the National Natural Science Foundation of China [Grant Numbers 72373007, 72333001, and 72621002]. The author also gratefully acknowledges the research support from the Center for Statistical Science of Peking University and the Key Laboratory of Mathematical Economics and Quantitative Finance (Peking University) of the Ministry of Education, China.} 
    \hspace{.2cm} \\
    Department of Business Statistics and Econometrics \\
    Guanghua School of Management, Peking University \\    \\ 
    Haojiao Zhao\thanks{Email: \texttt{zhaohaojiao@stu.pku.edu.cn}.} \\
    Department of Business Statistics and Econometrics \\
    Guanghua School of Management, Peking University}
  \maketitle
} \fi

\if1\blind
{
  \bigskip
  \bigskip
  \bigskip
  \begin{center}
    {\LARGE\bf Testing for Heterogeneous Treatment Effects in Regression Discontinuity Designs}
\end{center}
  \medskip
} \fi

\bigskip
\begin{abstract}
We propose a nonparametric test for unobserved treatment effect heterogeneity in regression discontinuity designs. 
Under the null of no unobserved heterogeneity, a transformed outcome that imputes treated potential outcomes for untreated units must have a continuous conditional distribution at the cutoff. 
We convert this implication into an integrated conditional-moment restriction using characteristic functions, thereby allowing the conditional local average treatment effect to be an unrestricted function of covariates. 
We derive the asymptotic distribution of the test statistics via a $U$-process and establish the validity of a multiplier bootstrap procedure for calculating critical values. 
Monte Carlo experiments show well-controlled size and increasing power. Two empirical applications illustrate how the test distinguishes between heterogeneity explained by observables and that explained by unobserved factors.
\end{abstract}

\noindent
{\it Keywords:}  Heterogeneous treatment effect, Regression discontinuity, $U$-process, Multiplier bootstrap.
\vfill

\newpage
\spacingset{1.45} 

\begin{bibunit}
\section{Introduction} \label{sec:introduction}
Empirical studies increasingly document heterogeneous treatment effects, motivating targeted policy interventions, yet a fundamental question remains unanswered: does the observed variation reflect heterogeneity explained by observable covariates, or a residual component driven by unobserved factors? 
Treatment effects may vary with observed covariates or with latent characteristics that persist even after conditioning on those covariates. 
Existing inference methods primarily address the first source. The second source, whether treatment effects still vary within subgroups defined by observed covariates, has received far less formal attention.
This gap matters for policy: if unobserved heterogeneity exists, interventions designed solely on the basis of observable covariates will be incomplete. A formal diagnostic is therefore needed.

Current approaches to exploring treatment effect heterogeneity suffer from two limitations. 
First, applied studies typically rely on ad hoc analyses, which are sensitive to model misspecification and multiple-testing concerns. 
Second, existing methods are mainly designed for instrumental-variable settings and cannot readily be extended to regression discontinuity (RD) designs, where identification is inherently local at the cutoff. 
Moreover, existing methods focus on detecting heterogeneity attributable to observables, leaving formal tests of unobserved treatment effect heterogeneity scarce. 
Consequently, applied researchers currently lack a principled diagnostic for assessing unobserved treatment effect heterogeneity, especially in RD designs.

This paper fills this gap by developing a nonparametric test for unobserved treatment effect heterogeneity in RD designs. 
The key insight is that if all treatment effect variation is explained by the covariates, then imputing treated potential outcomes for untreated units should yield a transformed outcome with a continuous conditional distribution at the cutoff. 
We convert this continuity implication into an integrated conditional moment (ICM) restriction, thereby avoiding parametric restrictions on the functional form of the conditional local average treatment effect (CLATE), denoted by $\tau(X)$ with $X$ the observed covariates. 
Unlike existing methods that focus on whether $\tau(X)$ is constant across subgroups, our framework directly targets the presence of unobserved heterogeneity beyond the covariates and is valid regardless of the functional form of $\tau(X)$. 
In doing so, the proposed test provides a principled way to assess whether observed covariates can plausibly account for treatment effect variation without imposing restrictions on the functional form.

The proposed test has two theoretical advantages. 
First, the proposed process converges at the rate $1/\sqrt{nh}$ (with $n$ the sample size and $h$ the testing bandwidth), the natural rate associated with RD designs. 
The ICM approach integrates over the covariates, reducing the testing problem to an effectively one-dimensional smoothing exercise and thereby alleviating the curse of dimensionality. 
Second, it exhibits favorable power against local alternatives. 
The global-smoothing structure of the ICM statistic can detect Pitman-type alternatives shrinking at the nonparametric rate $1/\sqrt{nh}$. 
From a practical standpoint, the test also serves as a diagnostic for covariate sufficiency. 
A failure to reject means that the data do not reveal residual heterogeneity within the class of alternatives detectable under the maintained assumptions. 
A rejection indicates a violation of the homogeneity restriction, which can be interpreted as residual treatment effect heterogeneity under the maintained assumptions. 
Moreover, the test adapts to both sharp and fuzzy RD designs and accommodates both local constant and local linear estimators, enhancing the applicability of the proposed testing framework.

The work most closely related to ours is \cite{Hsu2019}, which tests whether the CLATE is constant across covariate-defined subgroups. 
In contrast, our test examines whether the heterogeneity can be plausibly explained by observed covariates. 
The two procedures are therefore complementary. One can first use \cite{Hsu2019} to assess whether the CLATE varies with the observed covariates, and then apply our procedure to determine whether those covariates exhaust the heterogeneity. 
Together, the two tests provide a systematic diagnostic: the first assesses whether $\tau(X)$ varies with $X$, and the second assesses whether residual treatment effect heterogeneity persists. 
In this respect, our testing framework explicitly investigates the existence of unobserved treatment effect heterogeneity, which is difficult to examine and has received limited attention. 

The rest of the paper is organized as follows. 
Section \ref{sec:literature} reviews the related literature.
Section \ref{sec:framework} introduces the testing framework and the $U$-process formulated under the null hypothesis.
Section \ref{sec:asymptotic} establishes the asymptotic properties of the test statistics.
Section \ref{sec:bootstrap} develops a multiplier bootstrap procedure to compute critical values and proves its validity.
Sections \ref{sec:simulation} and \ref{sec:empirical} examine the finite-sample performance through numerical simulations and empirical applications, respectively. 
Section \ref{sec:conclusion} concludes. 
Proofs of the theoretical results and supplementary simulation results are provided in the Appendix. 

\section{Related Literature} \label{sec:literature}
A growing number of empirical studies have examined heterogeneous treatment effects across a wide range of contexts, such as institutional reforms and college enrollment. 
Substantial variation in treatment effects has been reported along observable dimensions including demographic characteristics, employment histories, and family status.
A common feature of these studies is their reliance on ad hoc strategies---splitting the sample into subgroups \citep{LopesdeFonseca2020, Brock2023, Giupponi2023, Chetty2026, Mountjoy2026} or adding interaction terms to regression models \citep{AbouDaher2025, Sorrenti2025}---to probe heterogeneity. 
While such approaches are informative, none provides statistical evidence on whether the observable covariates exhaust the sources of treatment effect variation or whether significant unobserved heterogeneity persists. 
This methodological gap motivates the formal testing procedure developed in the present paper.

A substantial literature studies treatment effect heterogeneity across various identification frameworks. 
Early contributions focus on identification in selection models \citep{Heckman1997, Heckman2001, Heckman2005} and non-separable models \citep{Chesher2003, Imbens2009, Torgovitsky2015}. 
A growing econometric literature investigates treatment effect heterogeneity under unconfoundedness \citep{Crump2008, Hsu2017, SantAnna2021, Cai2024} and IV identification \citep{Chang2015, Hsu2023}. 
Most existing tests assess heterogeneity attributable to observable covariates, whereas \cite{Hsu2023} directly studies unobserved heterogeneity in an IV model. 
Nevertheless, none formally addresses unobserved heterogeneity in fuzzy RD designs. 
This paper fills this gap by developing the corresponding testing framework for fuzzy RD designs.

Since the seminal work of \cite{Thistlethwaite1960} and \cite{Angrist1999}, RD designs have been widely used to identify causal treatment effects in observational studies. 
Within the RD literature, two strands are most relevant to our work.
The first concerns estimation of (heterogeneous) treatment effects in RD designs, including RD estimation methods with covariates \citep{Calonico2019, Frolich2019, kreiss2023, Calonico2025, Noack2025} and continuous treatments \citep{Dong2023, Xie2024}. 
The second strand develops specification tests for the identifying assumptions in RD designs, such as continuity of the running variable density \citep{McCrary2008, Otsu2013, Bugni2021}, monotonicity \citep{Hsu2021, Arai2022, Hsu2024}, and other identification conditions \citep{Dong2018, Bertanha2020a}.
Neither strand, however, offers a formal test for unobserved treatment effect heterogeneity, the question addressed in this paper.

Methodologically, our test contributes to the literature on ICM tests \citep{Bierens1982, Bierens1990, Bierens1997, Delgado2001} by developing an ICM test to detect unobserved treatment effect heterogeneity in fuzzy RD designs. 
We adapt and extend the ICM framework to fuzzy RD designs, in which both discontinuity-based identification and compliance behavior pose additional challenges.
The ICM framework allows us to test for unobserved heterogeneity without imposing parametric restrictions, making the test attractive in RD designs.

\section{Testing Framework} \label{sec:framework}
\subsection{Model} \label{subsec:model}
Consider the following nonparametric and nonseparable RD model: 
$$Y=g(D, R, X, \epsilon),$$ 
where $Y\in\mathbb{R}$ is the observed outcome, $D\in\{0,1\}$ is the binary treatment, $R\in\mathbb{R}$ is the running variable, $X\in\mathcal{X} \subset \mathbb{R}^d$ collects observed covariates, and $\epsilon\in\mathbb{R}^{d_\epsilon}$ is a general disturbance vector. 
The structural function $g$ is a measurable function that determines $Y$ given $D$, $R$, $X$, and $\epsilon$. 
Under the potential outcome framework, the observed outcome $Y$ can be written as 
$$ Y = DY_1 + (1-D) Y_0,$$ 
where the potential outcomes $Y_1 = g(1,R,X,\epsilon)$ and $Y_0 = g(0,R,X,\epsilon)$ cannot be observed simultaneously for the same individual. 
In the function $g$, $R$ may affect $Y$ directly, but RD identification exploits only the discontinuous jump in $D$ at the cutoff. 
Without loss of generality, define $R=0$ as the cutoff. 
Let the propensity score be $p(x,r) = \mathbb{P}(D=1|X=x, R=r) = \mathbb{E}[D|X=x, R=r]$, i.e., the conditional probability of treatment, and let $\mu(x,r) = \mathbb{E}[Y|X=x, R=r]$ denote the conditional expectation function of $Y$. 
For any such function $\psi(x,r)$, denote its one-sided limits by $\psi(x,0^-) = \lim_{r\uparrow 0} \psi(x,r)$ and $\psi(x,0^+) = \lim_{r\downarrow 0} \psi(x,r)$.
Under Assumption A3 below, the propensity score $p(x,r)$ is continuous in $r$ for $r\ne 0$, exhibiting a jump at the cutoff, i.e., $p(x,0^-) \ne p(x,0^+)$ for all $x\in\mathcal{X}$. 
In fuzzy RD designs, the propensity score exhibits a discontinuous but incomplete jump at the cutoff.
Our identification naturally focuses on compliers, whose treatment status switches from untreated to treated as the running variable crosses the cutoff.\footnote{In sharp RD designs, the treatment $D$ is determined by the running variable $R$: $D=1$ whenever $R\ge0$. The propensity score jumps at the cutoff with $p(x,0^+) - p(x,0^-) = 1$ exactly for all $x\in\mathcal{X}$, indicating a deterministic treatment assignment at the cutoff. The conditional average treatment effect at the cutoff can be identified as $\mu(x,0^+)-\mu(x,0^-)$. In this sense, sharp RD is a degenerate special case in which all units near the cutoff are compliers. The theory of this paper is established for the more general fuzzy RD designs, but the testing procedure still accommodates sharp RD with proper estimators.} 
The following assumptions guarantee identification of the CLATE in fuzzy RD designs.

\textbf{Assumption A1 (Continuity)} The conditional density $f_{R|X}(r|x)$ is continuous at $r=0$ uniformly in $x\in\mathcal X$ and strictly positive in a neighborhood of $r=0$.

\textbf{Assumption A2 (Probability)} The probability $\mathbb{P}(R\ge0 \mid X=x)$ is bounded away from zero and one for all $x\in \mathcal{X}$.

\textbf{Assumption A3 (Monotonicity)} The treatment is determined by a threshold function $D = \mathbf{1}(\theta(X,R)\ge\eta)$ with an unknown function $\theta$ that is continuous in $r$ on each side of the cutoff and satisfies $\theta(x,0^+)>\theta(x,0^-)$ for all $x\in\mathcal{X}$.
The latent variable satisfies $(\eta,\epsilon) \perp R \mid X$ and the conditional CDF $F_{\eta|X}(\cdot|x)$ is strictly increasing for each $x\in\mathcal X$.

Assumptions A1--A3 combine standard RD overlap and continuity requirements with the stronger latent-index restriction in Assumption A3 for the structural characterization. 
For Assumption A1, the continuity of the conditional density $f_{R|X}(r|x)$ at $r=0$ ensures that units on either side of the cutoff are locally comparable for every $x\in\mathcal{X}$.
Assumption A2 ensures that comparable groups exist on both sides of the cutoff for every $x\in\mathcal{X}$. 
Assumption A3 rules out defiers, units whose treatment status switches from treated to untreated as $R$ crosses the cutoff.
Under the monotone selection mechanism in Assumption A3, units can be partitioned into always-takers, never-takers, and compliers via $\eta$. 
The three groups are defined as $\mathcal{A}_x = \left\{ \eta\in\mathbb{R}: \eta \le \theta(x,0^-) \right\}$, $\mathcal{N}_x = \left\{ \eta\in\mathbb{R}: \eta > \theta(x,0^+) \right\}$, and $\mathcal{C}_x = \left\{ \eta\in\mathbb{R}: \theta(x,0^-) < \eta \le \theta(x,0^+) \right\}$.
This mechanism exhausts the support of $\eta$ and inherently rules out defiers.
Consequently, the propensity score is $p(x,r) = F_{\eta|X}(\theta(x,r)|x)$, and the strict increase of $F_{\eta|X}(\cdot|x)$ together with $\theta(x,0^+)>\theta(x,0^-)$ implies $p(x,0^+)>p(x,0^-)$ for every $x\in\mathcal X$.
The discontinuity in the propensity score at the cutoff confirms the identifiability of the CLATE $\tau(x)$ with a nonzero denominator.
Under Assumptions A1--A3, the CLATE is identified at the cutoff: 
$$\tau(x) = \frac{\mu(x,0^+)-\mu(x,0^-)}{p(x,0^+)-p(x,0^-)} = \mathbb{E}[Y_1 - Y_0 | X=x, R=0, \eta\in\mathcal{C}_X]. $$
As is standard, the denominator is the proportion of compliers, while the numerator corresponds to the intent-to-treat effect. 
The CLATE $\tau(X)$ is the average treatment gain for the compliers conditional on the covariate $X$, excluding the always-takers and never-takers from the identification. 
The above identification formula is inherently local: it is valid only at the cutoff, where the discontinuity in $p(x,r)$ provides exogenous variation. 
It loses validity outside the cutoff, where the discontinuity disappears, and the denominator becomes zero. 
The remainder of this section develops a formal test for unobserved heterogeneity.

\subsection{Hypothesis of interest} \label{subsec:hypothesis}
Building on the CLATE identification above, we now formulate the hypothesis of interest. 
The CLATE summarizes the average treatment effect within each covariate cell but is silent about the dispersion of individual treatment effects within the cell. 
Detecting unobserved heterogeneity, therefore, requires examining the distribution of treatment effects, not merely their mean. 
In this paper, we test whether the heterogeneity in treatment effects is driven solely by observable covariates $X$, or, alternatively, whether the unobservable factor $\epsilon$ also plays a role. 
Formally, the null hypothesis of no unobserved heterogeneity is 
$$\mathbb{H}_0: \mathbb{P}[ Y_{1i} - Y_{0i} = \tau(X_i) | X_i, R_i = 0] = 1 \quad a.s., $$ 
i.e., conditional on $X_i$, the difference in potential outcomes at the cutoff is deterministic. 
Conversely, the alternative hypothesis posits that unobserved disturbances also contribute to treatment effect heterogeneity. 
Within the structural function $g$, the null hypothesis $\mathbb{H}_0$ is equivalent to an additively separable structure at the cutoff: the unobserved term does not interact with $D$ to produce heterogeneity. 
The following proposition provides a structural characterization of the null hypothesis, thereby formalizing a testable implication.

\begin{proposition} \label{prop1}
    The null hypothesis of no unobserved heterogeneity in treatment effects, i.e., for some measurable function $\tau(\cdot): \mathcal{X} \mapsto \mathbb{R}$, 
    $$ \mathbb{H}_0: g(1,0,X,\cdot) - g(0,0,X,\cdot) = \tau(X)$$ 
    holds if and only if the function $g$ at $R=0$ is additive in $D$ and $\epsilon$, i.e., 
    $$ g(D,0,X,\epsilon) = m(D,X) + \nu(X,\epsilon), $$
    where $m: \mathcal{S}_{DX} \mapsto \mathbb{R}$ and $\nu: \mathcal{S}_{X\epsilon} \mapsto \mathbb{R}$ are measurable functions with $\mathcal{S}$ denoting the support of the relevant variables.
\end{proposition}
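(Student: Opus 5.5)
The plan is to prove the two directions separately, since both are short algebraic constructions. Throughout, the statement is read pointwise: the identity in $\mathbb{H}_0$ is understood as holding for every $\epsilon$ in the conditional support given $X$ (equivalently almost surely, after redefining on null sets). We also use that $D$ is binary, so $m(D,X)$ is determined by its two values $m(0,X)$ and $m(1,X)$.

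\emph{Sufficiency.} Suppose $g(D,0,X,\epsilon)=m(D,X)+\nu(X,\epsilon)$. Evaluating at $D=1$ and $D=0$ and subtracting gives
\[
g(1,0,X,\epsilon)-g(0,0,X,\epsilon)=m(1,X)-m(0,X).
\]
The right-hand side does not depend on $\epsilon$. Set $\tau(x):=m(1,x)-m(0,x)$. This is measurable because $m$ is measurable on $\mathcal S_{DX}$ and the sections $x\mapsto m(d,x)$ are measurable. Hence $\mathbb{H}_0$ holds.

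\emph{Necessity.} Suppose $g(1,0,X,\epsilon)-g(0,0,X,\epsilon)=\tau(X)$ for a measurable $\tau$. Define
\[
\nu(x,\epsilon):=g(0,0,x,\epsilon), \qquad m(d,x):=d\,\tau(x).
\]
For $D=0$ we get $m(0,X)+\nu(X,\epsilon)=g(0,0,X,\epsilon)$ directly. For $D=1$ we get $\tau(X)+g(0,0,X,\epsilon)=g(1,0,X,\epsilon)$ by the hypothesis. Combining the two cases,
\[
g(D,0,X,\epsilon)=D\,g(1,0,X,\epsilon)+(1-D)\,g(0,0,X,\epsilon)=m(D,X)+\nu(X,\epsilon).
\]

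\emph{Measurability.} The map $\nu$ is measurable because it is the composition of the measurable $g$ with the measurable embedding $(x,\epsilon)\mapsto(0,0,x,\epsilon)$. The map $m$ is measurable as a product of measurable functions. Both are defined on the stated supports.

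\emph{Main obstacle.} The only delicate point is the ``a.s.'' versus ``identically'' reading of $\mathbb{H}_0$. If $\mathbb{H}_0$ only holds almost surely, we argue as follows.
\begin{itemize}
\item The set $N=\{(x,\epsilon): g(1,0,x,\epsilon)-g(0,0,x,\epsilon)\neq\tau(x)\}$ is measurable and has probability zero.
\item The additive representation holds off $N$, that is, almost surely with respect to the law of $(X,\epsilon)$.
\item This suffices for all subsequent testable implications, which only involve distributions.
\end{itemize}
Alternatively, one can redefine $g(1,0,\cdot,\cdot)$ on $N$ without changing any distribution of observables. We would state this convention explicitly at the start of the proof.

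Note also that the decomposition is not unique, since a function of $X$ can be shifted between $m$ and $\nu$. Non-uniqueness does not affect the equivalence.
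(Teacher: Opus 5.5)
Your proof is correct and matches the paper's argument. The ``if'' direction sets $\tau(x)=m(1,x)-m(0,x)$. The ``only if'' direction writes $g(d,0,x,\epsilon)=d\,[g(1,0,x,\epsilon)-g(0,0,x,\epsilon)]+g(0,0,x,\epsilon)$ and takes $m(d,x)=d\,\tau(x)$ and $\nu(x,\epsilon)=g(0,0,x,\epsilon)$. Your remarks on measurability and on the almost-sure reading of $\mathbb{H}_0$ are extra care the paper omits, but they do not change the argument.
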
 

Proposition \ref{prop1} adapts the result of \cite{Lu2014} to RD designs: the absence of unobserved heterogeneity is structurally equivalent to additive separability in the function $g$ at the cutoff. 
Crucially, the separability property is required only at the cutoff, which is a weaker restriction than the global structural conditions widely used in the literature. 
At the cutoff ($R=0$), $Y$ is determined by $D$ and $\epsilon$ given the covariate $X$. If the structural function $g$ satisfies the separability condition in Proposition \ref{prop1}, the treatment effect at the cutoff is $g(1,0,X,\epsilon) - g(0,0,X,\epsilon) = m(1,X) - m(0,X) = \tau(X)$. 
Proposition~\ref{prop1} establishes a structural equivalence, but additive separability of $g$ is not directly testable: the structural function $g$ is latent and cannot be identified from data alone. 
Assumptions A4 and A5 below translate this structural condition into a testable restriction that can be verified from the observable data.

\textbf{Assumption A4 (Single-index error)} 
The structural function $g(d,r,x,\epsilon)$ is continuous in $r$ in a neighborhood of $r=0$ for all $d,x,\epsilon$. 
There exists a measurable function $\tilde{g}: \mathcal{S}_{DX} \times \mathbb{R} \mapsto \mathbb{R}$ and a scalar-valued function $\nu: \mathcal{S}_{X\epsilon} \mapsto \mathbb{R}$ such that $g(D,0,X,\epsilon) = \tilde{g}(D,X,\nu(X,\epsilon))$, where $\tilde{g}$ strictly increases in $\nu$. 

\textbf{Assumption A5 (Support invariance)} For each $d\in\{0,1\}$ and $x\in\mathcal{X}$, the support of $g(d,0,x,\epsilon)$ conditional on compliers $\mathcal{C}_x$ equals the support in the overall population conditional on the covariate $X$, i.e., $\mathcal{S}_{g(d,0,x,\epsilon) | X=x, \eta\in\mathcal{C}_x} = \mathcal{S}_{g(d,0,x,\epsilon) | X=x }$. 

Assumptions A4 and A5 jointly transform the structural property of Proposition \ref{prop1} into an observable restriction at the cutoff. 
Assumption A4 is a direct primitive condition on $g$, which complements the continuity of the density in Assumption A1. 
When $d_\epsilon>1$, Assumption A4 reduces the multidimensional unobserved heterogeneity to a scalar index, which is the economically relevant source of heterogeneity for our test. 
Note that Assumption A4 is naturally satisfied under $\mathbb{H}_0$ since $g(D,0,X,\epsilon) = m(D,X) + \nu(X,\epsilon) $ provides an explicit single-index representation that is monotone in $\nu$. 
Monotonicity is essential for identifying the latent scalar index $\nu$ from the observable distribution, and it is a standard regularity condition in nonseparable models \citep{Chesher2003, Matzkin2003}. 
Under the alternative, however, the single-index and monotonicity restrictions jointly determine the class of detectable departures from the null. 
The proposed test has power against alternatives where the treatment effect depends on a scalar unobservable in a monotone fashion; for instance, when the structural function takes the form $g(d,0,x,\epsilon) = m(d,x) + h(d,x) \cdot \nu(x,\epsilon)$ with $h(0,x)$ and $h(1,x)$ nonzero, of the same sign, and different from each other, so that treatment effects vary systematically with $\nu$ in one direction. 
Conversely, the test may lack power when unobserved heterogeneity arises from separate latent factors entering the treated and untreated potential outcomes, as in a sharp RD design with $g(d,0,x,\epsilon) = m(d,x) + d \cdot \epsilon_1 + (1-d) \cdot \epsilon_2$, where $\epsilon_1$ and $\epsilon_2$ are independent and identically distributed. 
In this case, the treatment effect $Y_1 - Y_0$ depends on $(\epsilon_1, \epsilon_2)$ and cannot be compressed into a single scalar index while preserving monotonicity.\footnote{Because $\epsilon_1$ and $\epsilon_2$ share the same distribution, however, the conditional distribution of the transformed outcome $W=Y+(1-D)\tau(X)$ defined below remains continuous at the cutoff, so the test may have limited power against this form of heterogeneity.} 
Thus, Assumption A4 narrows the alternatives, translating the structural null hypothesis into a testable implication and distinguishing the alternatives from the null. 
An important consequence is that the test controls size under $\mathbb{H}_0$ without requiring Assumption A4 to hold under the alternative: the additively separable structure implied by the null automatically satisfies the single-index condition. 
This creates a useful asymmetry: size control is robust to violations of A4, while power is concentrated against alternatives that respect the scalar-index structure. 

Assumption A5 is a support assumption first introduced in \cite{Vuong2017} in the IV setting. 
It ensures that the range of potential outcomes realized by compliers covers that of the whole population.
It also implies that $\mathcal{S}_{g(d,0,x,\epsilon) | X=x, \eta\in\mathcal{C}_x} = \mathcal{S}_{g(d,0,x,\epsilon) | D=d, X=x }$ since $ \mathcal{S}_{g(d,0,x,\epsilon) | X=x, \eta\in\mathcal{C}_x} \subseteq \mathcal{S}_{g(d,0,x,\epsilon) | D=d, X=x} \subseteq \mathcal{S}_{g(d,0,x,\epsilon) | X=x}. $
This condition guarantees that testing the continuity condition for the compliers is equivalent to testing the continuity condition for the whole sample, thereby enabling a global diagnostic of unobserved heterogeneity.
The distribution of $g(d,0,x,\epsilon)$ among compliers can be identified as 
\begin{align*}
    & \mathbb{P}\left[ g(d,0,x,\epsilon)\le y | X=x, \eta\in\mathcal{C}_x \right] \\
    =& \frac{\lim_{r\downarrow0} \mathbb{P}\left[ Y\le y, D=d | X=x, R=r \right] - \lim_{r\uparrow0} \mathbb{P}\left[ Y\le y, D=d | X=x, R=r \right]}{\lim_{r\downarrow0} \mathbb{P}\left[ D=d | X=x, R=r \right] - \lim_{r\uparrow0} \mathbb{P}\left[ D=d | X=x, R=r \right]} \,\,\text{ for any } y.     
\end{align*}
The support $\mathcal{S}_{g(d,0,x,\epsilon) | X=x, \eta\in\mathcal{C}_x}$ can be identified from this distribution. Some primitive conditions may be sufficient to imply Assumption A5, e.g.~$\mathcal{S}_{\epsilon|X=x,\eta\in\mathcal{C}_x} = \mathcal{S}_{\epsilon|X=x}$, which holds if the selection margin $\eta$ affects treatment assignment but does not truncate the support of the unobserved determinants relative to the full population. 
This condition is plausible when the complier subpopulation is sufficiently representative of the overall population within each covariate cell, for instance, when the instrument induces a shift in treatment take-up without systematically excluding individuals from the extremes of the unobserved distribution.

In practice, the plausibility of Assumption A5 should be assessed on a case-by-case basis. 
If compliers systematically differ from the full population in ways that affect the range of potential outcomes (e.g., only individuals with moderate unobserved gains self-select into treatment), then the support equality may be violated. 
When A5 fails, the test may produce spurious rejections: a discontinuity in the conditional distribution of $W$ could emerge not from genuine unobserved treatment effect heterogeneity, but from the truncation of the complier support relative to the full population. 
Conversely, in some cases the test could become conservative if support differences mask genuine heterogeneity. 
A useful diagnostic is to compare the empirical support of the complier distribution (identified via the RD ratio above) with that of the full sample, and assess whether discrepancies are substantively large. 
Developing a formal sensitivity analysis for violations of Assumption A5 remains an important avenue for future research.

Together, Assumptions A4 and A5 yield a valid, observable implication regarding the absence of unobserved treatment effect heterogeneity. 
In particular, they are required only at the cutoff, rather than across the entire support of $R$.

To derive a testable implication from Proposition \ref{prop1}, we define the transformed outcome $W = Y + (1-D)\tau(X)$. 
This construction has a natural interpretation: adding $(1-D)\tau(X)$ to $Y$ imputes the treated potential outcome $Y_1$ for untreated units. 
For treated units ($D=1$), $W = Y = Y_1$ directly; for untreated units ($D=0$), $W = Y_0 + \tau(X) = Y_1$ under $\mathbb{H}_0$ at the cutoff. 
Intuitively, if $W$ were discontinuous at the cutoff, its density function would exhibit a jump. 
Based on Proposition \ref{prop1} and Assumptions A1--A5, the following proposition further connects the null hypothesis and the continuity condition in RD designs.
\begin{proposition} \label{prop2}
    Suppose Assumptions A1--A5 hold. Then $\mathbb{H}_0$ holds if and only if the conditional distribution of $W$ is continuous at the cutoff for all $(w,x)$, i.e., $$f_{W|XR}(w|x,0^+) = f_{W|XR}(w|x,0^-) \text{ for all } (w,x).$$
\end{proposition}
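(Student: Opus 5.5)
The plan is to prove the two directions separately, working with the complier decomposition of the conditional distribution of $W$ near the cutoff. Assumption A3 lets us write, for $r$ on either side of $0$,
\[
\mathbb{P}(W\le w\mid X=x,R=r)=\mathbb{P}\bigl(\tilde{g}(1,x,\nu)\le w,\ \eta\le\theta(x,r)\mid x\bigr)+\mathbb{P}\bigl(\tilde{g}(0,x,\nu)+\tau(x)\le w,\ \eta>\theta(x,r)\mid x\bigr)+o(1).
\]
The steps behind this are: $(\eta,\epsilon)\perp R\mid X$; continuity of $g$ in $r$ from Assumption A4; and the density condition in Assumption A1, which makes the one-sided limits well defined. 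Taking $r\downarrow0$ and $r\uparrow0$ and subtracting, the always-taker and never-taker contributions cancel, because their treatment status is the same on both sides. Only the compliers $\eta\in\mathcal{C}_x$ remain:
\[
F_{W|XR}(w\mid x,0^+)-F_{W|XR}(w\mid x,0^-)=\mathbb{P}(\eta\in\mathcal{C}_x\mid x)\Bigl[\mathbb{P}\bigl(Y_1\le w\mid x,\mathcal{C}_x\bigr)-\mathbb{P}\bigl(Y_0+\tau(x)\le w\mid x,\mathcal{C}_x\bigr)\Bigr],
\]
where $Y_d=g(d,0,x,\epsilon)$. Continuity of the conditional distribution (equivalently, of the density, after differentiating in $w$) is therefore equivalent to the complier distributions of $Y_1$ and $Y_0+\tau(x)$ coinciding, since $\mathbb{P}(\eta\in\mathcal{C}_x\mid x)=p(x,0^+)-p(x,0^-)>0$.

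\emph{Necessity.} Under $\mathbb{H}_0$, Proposition \ref{prop1} gives $g(d,0,x,\epsilon)=m(d,x)+\nu(x,\epsilon)$ with $\tau(x)=m(1,x)-m(0,x)$. Hence $Y_1=Y_0+\tau(x)$ pointwise in $\epsilon$, the bracket vanishes identically, and the densities agree for all $(w,x)$. This direction needs neither A4 nor A5, which matches the paper's remark that size control is robust.

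\emph{Sufficiency.} Suppose the densities agree. Then $\tilde{g}(1,x,\nu)$ and $\tilde{g}(0,x,\nu)+\tau(x)$ have the same distribution among compliers with $X=x$, where $\nu=\nu(x,\epsilon)$. Both maps are strictly increasing in $\nu$ by Assumption A4, so equality of the two distributions gives
\[
\tilde{g}(1,x,v)=\tilde{g}(0,x,v)+\tau(x)
\]
for $F_{\nu|x,\mathcal{C}_x}$-almost every $v$ in the complier support of $\nu$. The argument is a quantile-matching one: both are the same quantile transformation of $F_{\nu\mid x,\mathcal C_x}$, so $Q_{Y_1}(u)=\tilde{g}(1,x,Q_\nu(u))$ and similarly for the shifted $Y_0$.

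The main obstacle is extending this identity from the compliers to the whole population at the cutoff, which is what $\mathbb{H}_0$ requires. Here I would invoke Assumption A5. It says the support of $g(d,0,x,\epsilon)$ among compliers equals its support in the population given $X=x$. Because $\tilde{g}(d,x,\cdot)$ is strictly increasing, hence injective, this support equality transfers to equality of the supports of $\nu$, at least up to closure. The identity $\tilde{g}(1,x,v)-\tilde{g}(0,x,v)=\tau(x)$ then holds on the population support of $\nu(x,\epsilon)$, up to a null set. A small technical point arises at boundary and atom points, and I would handle it via the right-continuity of quantile functions. It follows that $g(1,0,x,\epsilon)-g(0,0,x,\epsilon)=\tau(x)$ almost surely, which is $\mathbb{H}_0$; equivalently, via Proposition \ref{prop1}, $m(d,x)=\tilde{g}(d,x,\cdot)$ with the common $\nu$-part absorbed.

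Finally, I would check that the $\tau(x)$ appearing in $W$ is indeed the identified CLATE. This holds because taking complier means of the identity reproduces $\tau(x)$, so no circularity arises.
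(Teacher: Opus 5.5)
Your proposal is correct and follows essentially the same route as the paper. Both arguments cancel always- and never-takers in the complier decomposition, use strict monotonicity of $\tilde{g}$ in the scalar index to turn equality of the complier distributions of $Y_1$ and $Y_0+\tau(x)$ into the pointwise identity $\tilde{g}(1,x,v)=\tilde{g}(0,x,v)+\tau(x)$, and use Assumption A5 to extend that identity from compliers to the population; your closing consistency check plays the role of the paper's computation $\mu(x,0^+)-\mu(x,0^-)=\tau(x)\Delta_p(x)$. One small correction: your necessity direction does use the first part of A4 (continuity of $g$ in $r$, which you invoke for the decomposition), so it needs that part of A4, though not the single-index monotonicity or A5, rather than ``neither A4 nor A5''.
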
 
Proposition~\ref{prop2} transforms the additive structure into a testable condition in RD designs. 
The ``only if'' direction (from $\mathbb{H}_0$ to continuity) follows from Proposition~\ref{prop1} together with Assumptions A1--A3 and the continuity of $g$ in $r$ imposed in the first part of Assumption A4; the latter rules out a jump in the distribution of $W$ that is unrelated to unobserved heterogeneity. 
The ``if'' direction (from continuity back to $\mathbb{H}_0$) additionally requires the single-index monotonicity in the second part of Assumption A4 and the support invariance of Assumption A5, as detailed in the proof in the Appendix. 

Under $\mathbb{H}_0$, $W$ equals $Y_1$ at the cutoff. 
The continuity of the conditional density function is equivalent to the continuity of $Y_1$'s conditional density, which is stricter than the continuity of the conditional expectation commonly used in the literature since the unobserved heterogeneity cannot be captured by the expectation.
Proposition \ref{prop2} exploits such distributional continuity as a characterization of the null implied by the absence of unobserved heterogeneity.
Furthermore, denote the conditional characteristic function of $W$ by $\varphi_{W|XR}(w|x,r)$, which is the Fourier transform of the conditional density function $f_{W|XR}(w|x,r)$: 
$$\varphi_{W|XR}(w|x,r) = \mathbb{E}[e^{\mathrm{i} wW}|X=x, R=r] = \int e^{\mathrm{i}w\bar{w}} f_{W|XR}(\bar{w}|x,r) d\bar{w}, $$ 
with $\mathrm{i}$ denoting the imaginary unit in the characteristic function. 
Because characteristic functions uniquely determine distributions, the continuity of the conditional density function is equivalent to the continuity of the conditional characteristic function. 
Therefore, by Proposition \ref{prop2}, $\mathbb{H}_0$ is equivalent to the continuity of the conditional characteristic functions: 
$$\varphi_{W|XR}(w|x,0^+) = \varphi_{W|XR}(w|x,0^-) \text{ for all } (w,x).$$
A natural approach would be to estimate and compare the conditional functions on either side of the cutoff. 
However, the required nonparametric estimation introduces the random-denominator problem and suffers from the curse of dimensionality when $X$ is multivariate.
Rather than using local smoothing, we aggregate the discontinuity into a global moment by integrating it against an exponential weight over the distribution of the covariates: 
$$U(w,x) = \int e^{\mathrm{i} x'\bar{x}} \left( \varphi_{W|XR}(w|\bar{x}, 0^+) - \varphi_{W|XR}(w|\bar{x}, 0^-) \right) f_{XR}(\bar{x}, 0) d\bar{x}=0 \text{ for all } (w,x).$$ 
In the spirit of \cite{Bierens1982}, $U(w,x)$ is a global distance that can be recognized as a weighted aggregation of the discontinuity in integral form with an exponential weighting function.\footnote{Readers are referred to \cite{stinchcombe1998} and \cite{Escanciano2006}, among others, for alternative weighting functions frequently used in the ICM literature.} 
Under the null, $\varphi_{W|XR}(w|x,r)$ is continuous at $r=0$ for all $(w,x)$ and the discontinuity at the cutoff equals zero everywhere, leading to $U(w,x)=0$ for all $(w,x)$. 
Under the alternatives, the discontinuity of $\varphi_{W|XR}(w|x,r)$ is aggregated in the integral, making $U(w,x)\neq0$ for some $(w,x)$. 

Under the ICM approach, $U(w,x)$ captures any discontinuity and avoids the random-denominator problem in estimating the conditional density, reducing the testing problem to an effectively one-dimensional smoothing exercise. 
This property preserves all the distributional information required for testing while mitigating the curse of dimensionality.
Note that, by applying the Fourier representation of the conditional density and interchanging the order of integration via Fubini's theorem (the integrand is bounded by the integrable density $f_{WX|R}f_R$), $U(w,x)$ admits the following equivalent representations:
\begin{align*}
    U(w,x) =& \int e^{\mathrm{i} x'\bar{x}} \left( \varphi_{W|XR}(w|\bar{x}, 0^+) - \varphi_{W|XR}(w|\bar{x}, 0^-) \right) f_{XR}(\bar{x}, 0) d\bar{x} \\
    =& \int e^{\mathrm{i} x'\bar{x}} e^{\mathrm{i} w\bar{w}} \left( f_{W|XR}(\bar{w}|\bar{x}, 0^+) - f_{W|XR}(\bar{w}|\bar{x}, 0^-) \right) f_{XR}(\bar{x},0) d\bar{w}d\bar{x} \\
    =& \int e^{\mathrm{i} x'\bar{x}} e^{\mathrm{i} w\bar{w}} \left( f_{WX|R}(\bar{w}, \bar{x}| 0^+) f_R(0^+) - f_{WX|R}(\bar{w}, \bar{x}| 0^-) f_R(0^-) \right) d\bar{w}d\bar{x} \\
    =& f_R(0) \left[ \lim_{r\downarrow0} \mathbb{E} \left[ e^{\mathrm{i} \left( x'X + wW \right)} | R=r \right] - \lim_{r\uparrow0} \mathbb{E} \left[ e^{\mathrm{i} \left( x'X + wW \right)} | R=r \right] \right]. 
\end{align*}
The passage from the third to the fourth line uses the law of iterated expectation, the identity $f_{WX|R}(w,x|r)f_R(r)=f_{WXR}(w,x,r)$, and the fact that $f_R(0^+)=f_R(0^-)=f_R(0)$ under Assumption A1. 
This equality of the marginal density of $R$ at the cutoff follows from integrating the continuity of the conditional density $f_{R|X}$ over the compact support $\mathcal{X}$: $f_R(r) = \int f_{R|X}(r|x) f_X(x) dx$, which is continuous at $r=0$ whenever $f_{R|X}(\cdot|x)$ is continuous at $0$ uniformly in $x$.
The final expression in fact shows that $U(w,x)$ equals the jump in the conditional characteristic function of $(X,W)$ at $R=0$, multiplied by $f_R(0)$. 
Thus, $U(w,x)=0$ for all $(w,x)$ if and only if the discontinuity is zero almost everywhere, which is the completeness property behind the ICM transformation.

Up to the one-sided kernel normalization, a natural sample analog of $U(w,x)$ is 
$$ U_n(w,x) = \frac{1}{n} \sum_{i=1}^n e^{\mathrm{i} \left( x'X_i + wW_i \right)} K_h(R_i) \delta_i,$$ 
where $\delta_i = \mathbf{1}(R_i\ge 0) - \mathbf{1}(R_i<0)$ encodes the signed one-sided difference and $K_h(\cdot) = h^{-1} K(\cdot/h)$ is the kernel function with bandwidth $h=h_n\to 0$ as $n\to\infty$. 
Note that $U_n(w,x)$ is infeasible because $W_i$ cannot be observed for all individuals. 
Specifically, for untreated units ($D_i=0$), $W_i = Y_i + \tau(X_i)$ requires the CLATE $\tau(X_i)$, which is unknown and must be estimated. 
We therefore construct a feasible counterpart $\hat{U}_n(w,x)$ by plugging in $\hat{W}_i = Y_i + (1-D_i) \hat\tau(X_i)$ with a consistent estimator 
$$ \hat\tau(X_i) = \frac{\hat\mu(X_i, 0^+) - \hat\mu(X_i, 0^-)}{\hat p(X_i, 0^+) - \hat p(X_i, 0^-)},$$ 
where $\hat\mu(X_i,0^\pm)$ and $\hat{p}(X_i,0^\pm)$ are leave-one-out estimators from local constant or local linear regression for the functions $\mu(x,0^\pm)$ and $p(x,0^\pm)$ at the sample points using bandwidth parameters $h_x$ and $h_r$. 
Using the estimated values $\hat{W}_i$, we construct the test statistics from the following feasible $U$-process: 
$$ \hat{U}_n(w,x) = \frac{1}{n} \sum_{i=1}^n e^{\mathrm{i} \left( x'X_i + w \hat{W}_i \right)} K_h(R_i) \delta_i, $$ 
whose asymptotic behavior will be investigated in Section \ref{sec:asymptotic}.

\begin{remark}
    When the covariates are absent, the testing problem reduces to a test of pure homogeneity of the treatment effect.
    The null hypothesis reduces to $$\widetilde{\mathbb{H}}_0: \mathbb{P}[ Y_{1i} - Y_{0i} = \tau | R_i = 0] = 1 \quad a.s., $$
    i.e., the treatment effect is homogeneous among individuals at the cutoff. 
    Let $Y=g(D,R,\epsilon)$ denote the structural function of $Y$, where $\epsilon$ may include some unobservable factors. 
    The null hypothesis $\widetilde{\mathbb{H}}_0$ can be stated as $g(1,0,\cdot) - g(0,0,\cdot) = \tau$ for some constant $\tau\in\mathbb{R}$. 
    It holds if and only if the function $g$ is additively separable in $D$ and $\epsilon$ at the cutoff, i.e., 
    $$ g(D,0,\epsilon) = m(D) + \nu(\epsilon), $$
    where the mean function $m: \{0,1\} \mapsto \mathbb{R}$ and the single-index error function $\nu: \mathcal{S}_{\epsilon} \mapsto \mathbb{R}$ are measurable.
    If the analogues of Assumptions A1--A5 hold in the unconditional sense, $\widetilde{\mathbb{H}}_0$ is equivalent to the continuity of the density function $f_{W|R}(w|r)$ at $r=0$ for all $w$ using the imputed treated outcome $W=Y+(1-D)\tau$, which can be assessed by the test statistics constructed from the $U$-process $\hat{U}_n(w) = n^{-1}\sum_{i=1}^n e^{\mathrm{i}w\hat{W}_i} K_h(R_i) \delta_i$.
\end{remark}

\subsection{Test Statistics}\label{subsec:statistics}
Based on the feasible $U$-process $\hat{U}_n(w,x)$, one can construct various types of test statistics to assess how close the process is to 0. 
In this paper, we consider two popular choices: Kolmogorov--Smirnov (KS) statistics based on the coordinatewise supremum and Cram\'{e}r--von Mises (CvM) statistics based on the $L_2$ norm. 
Both statistics aggregate deviations over the appropriate index set: the KS statistic via a supremum, and the CvM statistic via integration. 
They therefore retain sensitivity to a broad class of alternatives while avoiding pointwise nonparametric estimation of the conditional functions. 
Neither statistic imposes a parametric model on $\tau(\cdot)$ or on the conditional distribution of $W$.

In practice, we evaluate the $U$-process $\hat{U}_n(w,x)$ on a finite grid whose points are typically the sample points $(\hat{W}_i, X_i)_{i=1}^n$ or a fixed grid $(w_l,x_l)_{l=1}^L$. 
Based on this specification, the KS statistic is $\sqrt{nh}$ times the supremum of the maximum of the absolute real and imaginary parts of the process $\hat{U}_n(w,x)$ over a given grid. 
For the CvM statistic, a closed-form expression can be derived with a weighting function $g(w,x) =(2\pi)^{-\frac{1+d}{2}} e^{-\frac{w^2 + \Vert x \Vert^2}{2}}$. 
Specifically, the formulas for the two types of test statistics are
\begin{align*}
    \mathrm{KS}_n = \sqrt{nh} \sup_{(w,x)\in\Omega} \max\left\{ \left\vert \mathrm{Re} (\hat{U}_n(w,x)) \right\vert, \left\vert \mathrm{Im} (\hat{U}_n(w,x)) \right\vert \right\}, 
\end{align*}
where, here and henceforth, $\Omega$ denotes any compact subset of $\mathbb R\times\mathcal X$, and
\begin{align*}
    \mathrm{CvM}_n = \int \left| \sqrt{nh} \hat{U}_n(w,x) \right|^2 g(w,x)dwdx = \frac{h}{n} \sum_{i=1}^n \sum_{j=1}^n e^{-\frac{(\hat{W}_i - \hat{W}_j)^2 + \lVert X_i-X_j \rVert^2}{2}} K_h(R_i)K_h(R_j) \delta_i\delta_j.
\end{align*}
These specifications can simplify computation and improve efficiency. 
Alternative norms or weighting functions could also be employed to construct other forms of test statistics. We note a practical consideration: the double-sum expression for $\mathrm{CvM}_n$ above is an exact evaluation of the integrated squared process for any bandwidth configuration, because it is obtained by interchanging the integral with the double sum over the sample; in particular, it does not rely on $h=o(h_r)$. 
By contrast, when the estimation effect is non-negligible (i.e., $h=O(h_r)$), the limiting null distribution of $\mathrm{CvM}_n$ depends on the estimation effect, and the corresponding bootstrap statistic $\mathrm{CvM}_n^\ast$ involves quadruple summations (see Section~\ref{sec:bootstrap}), which become computationally expensive for moderate to large sample sizes. 
For this reason, we recommend the KS statistic as the primary test when $h=O(h_r)$ and report only KS-based results in the numerical studies. 
The CvM statistic remains available as a complementary option whenever $h=o(h_r)$ or when computational resources permit. 
In the next section, we give the asymptotic properties of the proposed test statistics.

\section{Asymptotic Theory}\label{sec:asymptotic}
In this section, we study the asymptotic behavior of the proposed test statistics. 
First, we investigate the estimation effect of the $U$-process $\hat{U}_n(w,x)$ and obtain its linear representation. 
We then analyze the limiting behavior of the $U$-process under the null and the alternatives in Theorems \ref{thm1}--\ref{thm3}, and derive the limiting distributions of the proposed test statistics. 
The asymptotic results are organized into three cases reflecting the different regimes of the bandwidth ratio $h/h_r$: cases (i) and (ii) cover the local constant and local linear estimators under $h=O(h_r)$, respectively, where the estimation effect is non-negligible and enters the limiting distribution; case (iii) covers either estimator under $h=o(h_r)$, where the estimation effect is asymptotically dominated by the main term and drops out, yielding a simpler limiting process.
Before presenting these asymptotic results, several technical assumptions are introduced as follows:

\textbf{Assumption B1 (Bandwidth)} As $n\to\infty$, $h, h_r, h_x\to0$, $h/h_r\to c\in[0, \infty)$, $nh\to\infty$, $nh^3\to0$, $nhh_r^2\to0$, $nhh_x^{2l}\to0$, $nh_x^{2d}h_r^2\to\infty$. 

\textbf{Assumption B1' (Bandwidth)} As $n\to\infty$, $h, h_r, h_x\to0$, $h/h_r\to c\in[0, \infty)$, $nh\to\infty$, $nh^3\to0$, $nhh_r^4\to0$, $nhh_x^{2l}\to0$, $nh_x^{2d}h_r^2\to\infty$. 

\textbf{Assumption B2 (Kernel)} The kernel function $K$ is defined on a compact support. It is nonnegative, symmetric, and of order $l$ (that is $\int v^\alpha K(v)dv =0$ for $\alpha = 1, \cdots, l-1$, and $0 < \int v^l K(v)dv < \infty$). Its roughness is finite: $R_K = \int K^2(v)dv < \infty$.

The bandwidth conditions in Assumptions B1 and B1' control bias and variance for the local constant and local linear CLATE estimators, respectively. 
Recall that $h_x$ and $h_r$ are bandwidth parameters for estimating the CLATE $\tau(X)$, which can be flexibly chosen in the regression and can be different from the testing bandwidth $h$ in $U_n(w,x)$. 
In particular, under-smoothing is used so that the smoothing bias is asymptotically negligible. 
The bandwidth condition $nh_x^{2d}h_r^2\to\infty$ for variance replaces the standard rate $nh_x^d h_r\to\infty$ from conditional nonparametric estimation because the density-weighted kernel estimator in $\hat\tau(X_i)$ incorporates an additional $h_x^d h_r$ factor from the density weighting term. 
The condition $h/h_r\to c\in[0,\infty)$ allows the testing bandwidth $h$ to be asymptotically proportional to or smaller than the estimating bandwidth $h_r$, preventing the estimation effect from dominating the main term $U_n(w,x)$. 
Assumption B1 is more restrictive than Assumption B1' because the local constant estimator has a larger boundary bias at the cutoff. This is the cost of the simple local constant estimator. 
For transparency, write $h\asymp n^{-a}$, $h_r\asymp n^{-b_r}$, and $h_x\asymp n^{-b_x}$. 
A sufficient exponent formulation of Assumption B1 is 
$$ 0<a<1, \quad 3a>1, \quad a+2b_r>1, \quad a+2lb_x>1, \quad 2db_x+2b_r<1, \quad a\ge b_r,$$ 
with $a+4b_r>1$ replacing $a+2b_r>1$ under Assumption B1'. 
These inequalities are sufficient, and the feasible set they define depends on the covariate dimension $d$ and the kernel order $l$ in Assumption B2. 
When no exponent triple $(a,b_r,b_x)$ satisfies the system with $a=b_r$ ($c\in(0,\infty)$), the bandwidths may instead be selected from a feasible region with $a>b_r$, i.e., $h=o(h_r)$. 
That case is covered by Theorem~\ref{thm1}(iii), under which the estimation effect is asymptotically negligible. 
In practice, the bandwidth can be selected following \cite{Imbens2012a}, \cite{Calonico2014}, and \cite{Calonico2019} when Assumption B1 or B1' is satisfied. Assumption B2 is standard and is satisfied by commonly used kernels.

\textbf{Assumption B3 (Bound)} The support $\mathcal{X}$ is compact and there exists a positive constant $C$ such that $\inf_{x\in\mathcal{X}} |f_{XR}(x,0)(p(x,0^+) - p(x,0^-))| \ge C^{-1}$.

\textbf{Assumption B4 (Smoothness)} Functions $\varphi_{WD|XR}(w,d|x,r)$ (defined in Lemma \ref{Lemma1}), $p(x,r)$ and $\mu(x,r)$ are continuously differentiable to order $l$ in $x$ and up to second order in $r$ on both sides of $r=0$, uniformly in $x\in\mathcal{X}$.

\textbf{Assumption B5 (Moment)} The outcome has a finite second moment: $\mathbb{E}[Y^2]<\infty$.

Assumption B3 ensures that $f_{XR}(x,0)>0$ almost everywhere in $\mathcal{X}$ and that the denominator of $\tau(x)$ is uniformly bounded away from 0, making the treatment effect $\tau(x)$ identifiable. 
The compact support restriction can be relaxed at the cost of stronger conditions on the distribution of $X$.
Assumption B4 is a smoothness condition that guarantees the validity of a Taylor expansion around the cutoff. 
Assumption B5 imposes a mild moment condition to ensure that the second moments required in the analysis of the $U$-process are finite.

Although our test is motivated by the infeasible process $U_n(w,x)$, we can only obtain $\hat{U}_n(w,x)$. 
Therefore, it is necessary to consider the difference between them when investigating the asymptotic behavior of the test statistics. 
The (nonparametric) estimation effect is given by 
$$\hat{U}_n(w,x) - U_n(w,x)= \frac{1}{n} \sum_{i=1}^n e^{\mathrm{i} x'X_i} \left( e^{\mathrm{i} w\hat{W}_i} - e^{\mathrm{i} wW_i} \right) K_h(R_i) \delta_i. $$
Its asymptotic properties are given in the following lemma.

\begin{lemma} \label{Lemma1}
    (i) Suppose Assumptions A1--A5 and B1--B5 hold. For the local constant estimator $\hat\tau(X_i)$, the estimation effect admits the expression 
    $$ \hat{U}_n(w,x) - U_n(w,x) = \mathrm{i}w \frac{1}{n} \sum_{j=1}^n e^{\mathrm{i} x'X_j} \kappa(w,X_j) (Y_j-D_j\tau(X_j)) K_{h_r}(R_j) \delta_j + o_p\left(\frac{1}{\sqrt{nh}}\right) $$ 
    uniformly in $(w,x)\in\Omega$, where 
    $$ \kappa(w,x) = \frac{\varphi_{WD|XR}(w, 0|x, 0^+)-\varphi_{WD|XR}(w, 0|x, 0^-)}{p(x,0^+)-p(x,0^-)}$$ 
    with $\varphi_{WD|XR}(w, d|x, r) = \mathbb{E}[e^{\mathrm{i}wW} \mathbf{1}(D=d) | X=x, R=r]$.

    (ii) For the local linear estimator $\hat\tau(X_i)$, under Assumption B1' (instead of Assumption B1), the estimation effect admits the expression 
    \begin{align*}
        \hat{U}_n(w,x) - U_n(w,x) =& \mathrm{i}w \frac{1}{2n} \sum_{j=1}^n  e^{\mathrm{i}x'X_j} \kappa(w,X_j) f_{XR}(X_j,0) (u_{Yj} - u_{Dj}\tau(X_j)) K_{h_r}(R_j) \\
        & e_0' \left( \mathbf{1}(R_j\ge0)B_+^{-1}(X_j) - \mathbf{1}(R_j<0) B_-^{-1}(X_j) \right) \mathbf{r}(R_j)  + o_p\left( \frac{1}{\sqrt{nh}}\right), 
    \end{align*}
     uniformly in $(w,x)\in\Omega$, where $u_{Yj}=Y_j-\mu(X_j,R_j)$ and $u_{Dj}=D_j-p(X_j,R_j)$ are the nonparametric residuals of $Y_j$ and $D_j$. The matrices $B_+(x)$ and $B_-(x)$ are defined as $B_+(x)=f_X(x)\mathbb{E}[\mathbf{r}(R)\mathbf{r}(R)'K_{h_r}(R)\mathbf{1}(R\ge0)|X=x]$ and $B_-(x)=f_X(x)\mathbb{E}[\mathbf{r}(R)\mathbf{r}(R)'K_{h_r}(R)\mathbf{1}(R<0)|X=x]$ with $\mathbf{r}(R) = (1, R/h_r)'$.
\end{lemma}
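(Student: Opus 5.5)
We prove (i) by linearizing the estimation effect in $\hat\tau-\tau$ and then turning the resulting double sum into a single sum through a $U$-statistic projection. Since $\hat W_i-W_i=(1-D_i)(\hat\tau(X_i)-\tau(X_i))$, a second-order expansion of $e^{\mathrm{i}w\cdot}$ gives
\begin{align*}
\hat U_n(w,x)-U_n(w,x)=\;&\mathrm{i}w\,\frac1n\sum_{i=1}^n e^{\mathrm{i}(x'X_i+wW_i)}(1-D_i)\bigl(\hat\tau(X_i)-\tau(X_i)\bigr)K_h(R_i)\delta_i\\
&+O\Bigl(w^2\,\frac1n\sum_{i=1}^n (\hat\tau(X_i)-\tau(X_i))^2|K_h(R_i)|\Bigr).
\end{align*}
On $\Omega$ the factor $w$ is bounded. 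Uniform consistency of the leave-one-out estimators gives $\sup_x|\hat\tau-\tau|^2=O_p\bigl((nh_x^{d}h_r)^{-1}\log n+h_r^2+h_x^{2l}\bigr)$. Under B1 (the conditions $nh_x^{2d}h_r^2\to\infty$, $nhh_r^2\to0$, $nhh_x^{2l}\to0$, together with $h/h_r\to c<\infty$), this squared rate times $\sqrt{nh}$ vanishes, so the remainder is $o_p((nh)^{-1/2})$.

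Next we linearize $\hat\tau$. We write $\hat\tau=\hat N/\hat\Delta$, where $\hat N$ and $\hat\Delta$ are the density-weighted one-sided numerator and denominator, namely local constant sums $\frac{1}{n}\sum_j K_{h_x}(X_i-X_j)K_{h_r}(R_j)\delta_j\,(Y_j,D_j)$ suitably normalized. Then
\[
\hat\tau(X_i)-\tau(X_i)=\frac{1}{f_{XR}(X_i,0)\,(p(X_i,0^+)-p(X_i,0^-))}\cdot\frac{1}{n}\sum_{j\ne i}K_{h_x}(X_i-X_j)K_{h_r}(R_j)\delta_j\bigl(Y_j-D_j\tau(X_i)\bigr)+\text{(quadratic and bias terms)}.
\]
Assumption B3 bounds the denominator away from zero, so the quadratic remainder has the same squared rate as above and is negligible. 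The boundary bias of the local constant estimator is $O(h_r)$ and the interior bias is $O(h_x^l)$ by B2 and B4; after multiplication by $\sqrt{nh}$ these vanish by $nhh_r^2\to0$ and $nhh_x^{2l}\to0$. This is exactly where B1 rather than B1' is needed.

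Substituting the linearization leaves a second-order $U$-statistic of the form $\frac{1}{n^2}\sum_{i\ne j}H_n(Z_i,Z_j;w,x)$, with kernel
\[
H_n=\mathrm{i}w\,e^{\mathrm{i}(x'X_i+wW_i)}(1-D_i)K_h(R_i)\delta_i\,\frac{K_{h_x}(X_i-X_j)K_{h_r}(R_j)\delta_j\bigl(Y_j-D_j\tau(X_i)\bigr)}{f_{XR}(X_i,0)\,\Delta p(X_i)}.
\]
We apply the Hoeffding decomposition. The projection onto $Z_j$ integrates out $Z_i$: the $R_i$-integral produces $f_R$ times the one-sided jump of $\mathbb{E}[e^{\mathrm{i}(x'X+wW)}\mathbf 1(D=0)\mid X,R]$, and the $X_i$-integral against $K_{h_x}$ localizes $X_i$ at $X_j$ up to an $O(h_x^l)$ error. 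This yields $\mathrm{i}w\,e^{\mathrm{i}x'X_j}\kappa(w,X_j)(Y_j-D_j\tau(X_j))K_{h_r}(R_j)\delta_j$, so that $\kappa$ arises as the jump of $\varphi_{WD|XR}(w,0\mid\cdot)$ divided by $\Delta p$. The projection onto $Z_i$ has mean zero up to bias, because $\mathbb{E}[(Y_j-D_j\tau(X_j))K_{h_r}(R_j)\delta_j\mid X_j]$ is the jump $\mu(x,0^+)-\mu(x,0^-)-\tau(x)\bigl(p(x,0^+)-p(x,0^-)\bigr)=0$ plus $O(h_r)$. The overall mean term is negligible by the same cancellation.

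The main obstacle is the degenerate part of the $U$-statistic, and controlling it uniformly in $(w,x)\in\Omega$. For its variance we bound the second moment of $H_n$: it is $O\bigl(n^{-2}h^{-1}h_r^{-1}h_x^{-d}\bigr)$, which is $o((nh)^{-1})$ because $nh_x^{d}h_r\to\infty$. For uniformity, the class indexed by $(w,x)$ is Lipschitz in the parameters with an envelope having finite second moment (B5), so maximal inequalities for degenerate $U$-processes (a Nolan--Pollard or Sherman-type bound) extend the pointwise bounds to the supremum over $\Omega$. Part (ii) follows the same route. The only change is that the local linear estimator is represented by its equivalent kernel $e_0'B_\pm^{-1}(X_j)\mathbf r(R_j)$, and residuals $u_{Yj},u_{Dj}$ appear in place of raw outcomes because the local linear fit removes the conditional mean. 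Here the bias is $O(h_r^2)$, which is why B1' with $nhh_r^4\to0$ suffices, and the factor $f_{XR}/2$ comes from the normalization of $B_\pm$.
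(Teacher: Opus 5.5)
Your argument for the leading term is the paper's. You expand $e^{\mathrm{i}w\hat W_i}$ about $W_i$, linearize $\hat\tau$, and Hoeffding-decompose the resulting second-order $U$-process. The $Z_j$-projection produces the $\kappa$ term. The $Z_i$-projection is reduced to $O_p(h_r+h_x^l)$ by $\Delta_\mu-\tau\Delta_p=0$; in the paper this is the cancellation between $C_{1n}$ and $C_{2n}$. The degenerate part is controlled uniformly by a maximal inequality with envelope second moment $O(1/(hh_rh_x^d))$.

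The genuine difference is in the nonlinear remainders. The paper keeps the quadratic Taylor term and the quadratic terms of the ratio expansion ($C_{3n}$--$C_{7n}$). It writes each as $(n-1)^{-1}$ times a second-order $U$-process (the diagonal $j=k$ part, of order $1/(nh_x^dh_r)$) plus a third-order $U$-process whose $\mathcal F_i$-projection is a product of two biases, of order $h_r^2+h_x^{2l}$. You instead use exact remainders plus a sup-norm rate for $\hat\tau$. The orders agree up to a logarithm, and you correctly identify $nh_x^{2d}h_r^2\to\infty$ as what kills the variance part. Your exact remainders also spare you the cubic terms the paper simply declares higher order.

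The price is that $\sup_x|\hat\tau-\tau|=O_p\bigl(\sqrt{\log n/(nh_x^dh_r)}+h_r+h_x^l\bigr)$ does not follow from B1--B5. Standard uniform-rate results for kernel regressions with unbounded responses require a moment of order $s>2$ and a matching bandwidth condition, whereas B5 only gives $\mathbb{E}[Y^2]<\infty$. The repair is cheap because only averaged bounds are needed:
use uniform consistency for the denominator $\hat\Delta$ alone (it is built from the bounded $D_j$) to keep $1/\hat\Delta(X_i)$ bounded and to control the ratio remainder;
then bound $\mathbb{E}\bigl[n^{-1}\sum_i(\hat N(X_i)-\tau(X_i)\hat\Delta(X_i))^2|K_h(R_i)|\bigr]$ by conditioning on $X_i$ and exploiting the leave-one-out structure.
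This needs only second moments and is essentially what the paper's $U$-statistic bookkeeping computes.

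One correction to your constants. Each one-sided kernel integral has mass $\tfrac12$, so the probability limit of $n^{-1}\sum_{j}K_{h_x}(X_i-X_j)K_{h_r}(R_j)\delta_jD_j$ is $\tfrac12 f_{XR}(X_i,0)\Delta_p(X_i)$. The denominator in your linearization and in $H_n$ must therefore be $\tfrac12 f_{XR}(X_i,0)\Delta_p(X_i)$, as in the paper. With the denominator you display, integrating $Z_i$ out gives $\tfrac12 f_{XR}(X_j,0)\Delta_\varphi(w,X_j)/(f_{XR}(X_j,0)\Delta_p(X_j))=\tfrac12\kappa(w,X_j)$, i.e.\ half of the stated leading term. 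The same integration of $Z_i$ against $K_{h_x}(X_i-X_j)K_h(R_i)\delta_i$ is what produces the factor $\tfrac12 f_{XR}(X_j,0)$ in part (ii).
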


Lemma \ref{Lemma1} provides an explicit representation of the estimation effect $\hat{U}_n(w,x) - U_n(w,x)$ with either a local constant or a local linear estimator for the CLATE $\tau(\cdot)$. 
The order of the estimation effect $(nh_r)^{-1/2}$ is slightly different from the order of the infeasible $U$-process $U_n(w,x)$, namely $(nh)^{-1/2}$. The estimation effect may affect the asymptotic distribution, depending on the relative rates of $h$ and $h_r$. 
When $h=h_r$, the estimation effect and $U_n(w,x)$ are of the same order, and thus the estimation effect contributes to the asymptotic distribution of $\hat{U}_n(w,x)$. If $h/h_r \to 0$ (i.e., $h=o(h_r)$), the estimation effect is asymptotically negligible relative to $U_n(w,x)$. 
In this sense, Lemma \ref{Lemma1} bridges the feasible $U$-process $\hat{U}_n(w,x)$ and the infeasible $U$-process $U_n(w,x)$, guiding the investigation of the limiting distribution under the null hypothesis $\mathbb{H}_0$, the fixed alternative $\mathbb{H}_1$, and a sequence of local alternatives $\mathbb{H}_{1n}$.

\subsection{Asymptotic Consistency}\label{subsec:null}
\begin{theorem} \label{thm1}
Suppose Assumptions A1--A5 and B2--B5 are satisfied. Under the null hypothesis $\mathbb{H}_0: \varphi_{W|XR}(w|x,0^+) = \varphi_{W|XR}(w|x,0^-), \,\,\, \forall (w,x),$

(i) when Assumption B1 holds for the local constant estimator $\hat\tau(X_i)$, $\hat{U}_n(w,x)$ converges weakly on $\Omega$: 
$$ \sqrt{nh} \hat{U}_n(w,x) \Longrightarrow U_{\infty,1}(w,x),$$
where, here and henceforth, $\Longrightarrow$ denotes weak convergence;

(ii) when Assumption B1' holds for the local linear estimator $\hat\tau(X_i)$, $\hat{U}_n(w,x)$ converges weakly on $\Omega$: $$ \sqrt{nh} \hat{U}_n(w,x) \Longrightarrow U_{\infty,2}(w,x);$$ 

(iii) when Assumption B1 holds for the local constant estimator $\hat\tau(X_i)$ or Assumption B1' holds for the local linear estimator $\hat\tau(X_i)$, if $h=o(h_r)$, $\hat{U}_n(w,x)$ converges weakly on $\Omega$: 
$$ \sqrt{nh} \hat{U}_n(w,x) \Longrightarrow U_{\infty,0}(w,x).$$ 
Here, $U_{\infty,1}(w,x)$ is a zero-mean Gaussian process with covariance structure $\mathcal{K}_1(w_1, x_1; w_2, x_2) = \mathbb{E} [U_{\infty,1}(w_1,x_1) \bar{U}_{\infty,1}(w_2,x_2)]$, with $\bar{U}_{\infty,1}(w,x)$ denoting the conjugate process of $U_{\infty,1}(w,x)$. 
In addition, $U_{\infty,2}(w,x)$ and $U_{\infty,0}(w,x)$ are zero-mean Gaussian processes with covariance structures $\mathcal{K}_2(w_1, x_1; w_2, x_2) = \mathbb{E} [U_{\infty,2}(w_1,x_1) \bar{U}_{\infty,2}(w_2,x_2)]$ and $\mathcal{K}_0(w_1, x_1; w_2, x_2) = \mathbb{E} [U_{\infty,0}(w_1,x_1) \bar{U}_{\infty,0}(w_2,x_2)]$. 
In particular, when $h=o(h_r)$ so that the estimation effect is asymptotically negligible, the covariance structure $\mathcal{K}_0(w_1, x_1; w_2, x_2)$ simplifies to
\begin{align*}
    \mathcal{K}_0(w_1, x_1; w_2, x_2) = R_K \int e^{\mathrm{i} (x_1-x_2)'x} \varphi_{W|XR}(w_1-w_2|x, 0) f_{XR}(x, 0) dx.
\end{align*}
The expressions for $\mathcal{K}_1(w_1, x_1; w_2, x_2)$ and $\mathcal{K}_2(w_1, x_1; w_2, x_2)$ additionally incorporate terms arising from the estimation effect and are provided in the Appendix.
\end{theorem}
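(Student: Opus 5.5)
The proof combines the linear representation of Lemma~\ref{Lemma1} with a standard weak convergence argument for a triangular array of kernel-weighted empirical processes indexed by $(w,x)\in\Omega$. The starting point is the decomposition
\[
\sqrt{nh}\,\hat U_n(w,x)=\sqrt{nh}\,U_n(w,x)+\sqrt{nh}\bigl(\hat U_n(w,x)-U_n(w,x)\bigr).
\]
By Lemma~\ref{Lemma1}(i) or (ii), the second term equals $\sqrt{nh}$ times a sample average of i.i.d.\ summands plus $o_p(1)$ uniformly on $\Omega$. The summands are
\[
\xi_{2j}(w,x)=\mathrm{i}w\,e^{\mathrm{i}x'X_j}\kappa(w,X_j)(Y_j-D_j\tau(X_j))K_{h_r}(R_j)\delta_j
\]
in the local constant case, and the analogous expression with the residuals $u_{Yj}-u_{Dj}\tau(X_j)$ and the boundary matrices $B_\pm^{-1}$ in the local linear case. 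Write $\xi_{1j}(w,x)=e^{\mathrm{i}(x'X_j+wW_j)}K_h(R_j)\delta_j$. Then $\sqrt{nh}\,\hat U_n=\sqrt{nh}\,n^{-1}\sum_j(\xi_{1j}+\xi_{2j})+o_p(1)$, and it suffices to prove weak convergence of this sum in $\ell^\infty(\Omega)$, separately for the real and imaginary parts.

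The first step is centering. Under $\mathbb H_0$, a change of variables $r=hv$ and a second-order Taylor expansion in $r$ (Assumption B4), together with the representation of $U(w,x)$ derived before the theorem, give
\[
\mathbb E[\xi_{1j}]=U(w,x)\int_0^\infty K+O(h)=O(h),
\]
since $U\equiv 0$ under the null and $K$ is symmetric. Hence $\sqrt{nh}\,\mathbb E[\xi_{1j}]=O(\sqrt{nh^3})=o(1)$ by B1. The mean of $\xi_{2j}$ is handled the same way. With bandwidth $h_r$, the one-sided limits of $\mathbb E[Y-D\tau(X)\mid X,R]$ agree by the definition of $\tau$, so the jump cancels and the remainder is $O(h_r)$. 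This gives $\sqrt{nh}\,O(h_r)=o(1)$ under $nhh_r^2\to0$. In the local linear case the residuals have conditional mean zero, so this term is exactly centered.

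The second step is fidi convergence, via Lindeberg--Feller for the triangular array $\sqrt{h/n}\,(\xi_{1j}+\xi_{2j}-\mathbb E)$. The covariance is computed by the same change of variables.
\begin{itemize}
\item The $\xi_1\bar\xi_1$ term gives $h\,\mathbb E[K_h^2(R)e^{\mathrm{i}(x_1-x_2)'X}e^{\mathrm{i}(w_1-w_2)W}]\to R_K\int e^{\mathrm{i}(x_1-x_2)'x}\varphi_{W|XR}(w_1-w_2\mid x,0)f_{XR}(x,0)\,dx$. Here $\delta^2=1$, and continuity of $\varphi_{W|XR}$ at $0$ under the null makes both sides agree. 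This is $\mathcal K_0$.
\item The $\xi_2\bar\xi_2$ term scales like $h/h_r\to c$ times a bounded limit.
\item The cross term scales like $h\,\mathbb E[K_hK_{h_r}]$, which has a limit determined by $c$ through $\int K(v)K(cv)\,dv$-type constants.
\end{itemize}
These pieces assemble into $\mathcal K_1$ and $\mathcal K_2$. When $c=0$, the $\xi_2$ contributions vanish: $\sqrt{nh}\,n^{-1}\sum\xi_{2j}=O_p(\sqrt{h/h_r})=o_p(1)$ uniformly. This gives part (iii) with $\mathcal K_0$. The Lyapunov condition follows because $|\xi_{1j}|\le K_h(R_j)$, so third moments scale as $h^{-2}$ and the Lyapunov ratio is $O((nh)^{-1/2})\to0$. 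For $\xi_2$, use B5 together with truncation of $Y$.

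The third step, which I expect to be the main obstacle, is asymptotic tightness (stochastic equicontinuity) uniformly over $\Omega$ for a kernel-weighted class whose envelope $K_h$ blows up. I would use the functions $(w,x)\mapsto e^{\mathrm{i}(x'X+wW)}$. These are Lipschitz in $(w,x)$ with constant $|X|+|W|$, and $\mathcal X$ is compact while $W$ has a finite second moment by B5. So the class $\{\sqrt h\,\xi_1(w,x)\}$ is a Lipschitz-in-parameter class with square-integrable envelope $\sqrt h\,K_h(R)(1+|W|+|X|)$, and its bracketing numbers are polynomial in $1/\varepsilon$. Triangular-array bracketing CLT results (e.g., Theorem 2.11.23 of van der Vaart and Wellner) then give tightness, once we check that $h\,\mathbb E[K_h^2(R)(|W|^2+|X|^2)]$ is bounded and the Lindeberg condition for the envelope holds. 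For $\xi_2$, $\kappa(w,x)$ is bounded and Lipschitz in $w$ by the smoothness in B4, so the same argument applies, with $B_\pm^{-1}$ bounded via B3. Fidi convergence plus tightness then yield the stated weak limits.
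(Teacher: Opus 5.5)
Your proposal is correct and follows the same route as the paper: linearize the estimation effect via Lemma~\ref{Lemma1}, show that the mean is asymptotically negligible under $\mathbb{H}_0$, and compute the limiting covariance by the change of variables that produces $R_K$, the ratio $c$, and $\int K(v)K(v/c)\,dv$. In the paper the mean step is that the leading term $g_i$ vanishes and the $O(h)+O(h_r)$ biases are removed by $nh^3\to0$ and $nhh_r^2\to0$. The one real difference is the functional CLT: you establish it via Lindeberg--Feller plus a triangular-array bracketing CLT (van der Vaart--Wellner, Thm.~2.11.23), which properly handles the $n$-dependent envelope $\sqrt{h}K_h(R)$, whereas the paper simply asserts the classes are VC-type and invokes a fixed-class Donsker theorem; both arguments tacitly need bounded conditional second moments of $Y$ near the cutoff, slightly more than B5 provides.
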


Under $\mathbb{H}_0$, the conditional characteristic function $\varphi_{W|XR}$ is continuous at $r=0$ for all $(w,x)\in\Omega$ and the one-sided limits coincide. 
Consequently, $\sqrt{nh} \hat{U}_n(w,x)$ converges weakly to the centered Gaussian process $U_{\infty, j}$ for $j=0,1,2$ under the corresponding under-smoothing conditions. 
The covariance structure of the limiting Gaussian process $\mathcal{K}_j$ for $j=0,1,2$ is complicated. 
In particular, the covariance structures $\mathcal{K}_1$ and $\mathcal{K}_2$ additionally incorporate the first-order estimation effect and the intrinsic stochastic variation in the RD model. 
The convergence rate in Theorem \ref{thm1} is $1/\sqrt{nh}$, instead of the parametric rate $1/\sqrt{n}$, reflecting the local identification of RD designs. 
This rate is analogous to that of a one-dimensional boundary kernel estimator, in which the effective sample size is $nh$ rather than $n$. 
We note, however, that the first-stage estimation of $\tau(X)$ still requires $d$-dimensional smoothing over the covariates, so the overall procedure shares the curse-of-dimensionality limitations common to all RD methods that involve covariates in the estimation step. 
In practice, this restricts the feasible dimension of $X$ to $d=2$ or $3$ with realistic sample sizes, as discussed in Assumptions B1 and B1'. 
Thus, procedures for accommodating higher-dimensional covariate vectors, such as semiparametric index restrictions on $\tau(X)$, constitute a useful extension. 

Theorem \ref{thm1} and the continuous mapping theorem yield the asymptotic null distributions of continuous functionals of $\sqrt{nh} \hat{U}_n(w,x)$, including our test statistics $\mathrm{KS}_n$ and $\mathrm{CvM}_n$ in the following corollary. 
\begin{corollary} \label{corollary}
Under the assumptions of Theorem \ref{thm1} and $\mathbb{H}_0$, for any continuous functional $\mathcal{T}(\cdot)$ (with respect to the supremum norm), 
$$\mathcal{T}(\sqrt{nh} \hat{U}_n) \overset{d}{\rightarrow} \mathcal{T}(U_{\infty,j}), $$ 
for $j=0,1,2$, where $\overset{d}{\rightarrow}$ denotes convergence in distribution. In particular, for the KS and CvM statistics,
\begin{align*}
    \mathrm{KS}_n &\overset{d}{\rightarrow} \sup_{(w,x)\in\Omega} \max \left\{ \left| \mathrm{Re} (U_{\infty,j}(w,x))\right|, \left| \mathrm{Im} (U_{\infty,j}(w,x)) \right| \right\}, \\
    \mathrm{CvM}_n &\overset{d}{\rightarrow} \int \left| U_{\infty,j}(w,x) \right|^2 g(w,x)dwdx. 
\end{align*}
\end{corollary}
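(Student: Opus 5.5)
The plan is to derive the corollary from Theorem~\ref{thm1} through the continuous mapping theorem. By Theorem~\ref{thm1}, for each regime $j\in\{0,1,2\}$ (with the bandwidth conditions stated there), $\sqrt{nh}\hat U_n \Longrightarrow U_{\infty,j}$. Here weak convergence is understood in $\ell^\infty(\Omega)$ for complex-valued processes. Equivalently, it holds for the $\mathbb{R}^2$-valued process $(\mathrm{Re}\,\sqrt{nh}\hat U_n,\mathrm{Im}\,\sqrt{nh}\hat U_n)$, in the Hoffmann-J{\o}rgensen sense. The limit $U_{\infty,j}$ is a tight Gaussian process, so it concentrates on a separable subset of $\ell^\infty(\Omega)$. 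For any $\mathcal T$ continuous with respect to the supremum norm, the continuous mapping theorem for weak convergence in $\ell^\infty(\Omega)$ (e.g.\ van der Vaart and Wellner, Theorem 1.3.6) then gives $\mathcal T(\sqrt{nh}\hat U_n)\overset{d}{\rightarrow}\mathcal T(U_{\infty,j})$. This settles the general claim; it only needs continuity of $\mathcal T$ on a set carrying the limit law.

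For the KS statistic, I will take $\mathcal T_{KS}(z)=\sup_{(w,x)\in\Omega}\max\{|\mathrm{Re}\,z(w,x)|,|\mathrm{Im}\,z(w,x)|\}$, so that $\mathrm{KS}_n=\mathcal T_{KS}(\sqrt{nh}\hat U_n)$ by definition. Continuity follows from the inequality $|\mathcal T_{KS}(z_1)-\mathcal T_{KS}(z_2)|\le \sup_\Omega|z_1-z_2|$. This holds because $|\mathrm{Re}\,a|,|\mathrm{Im}\,a|\le|a|$, and both the max and the sup are 1-Lipschitz. The limit in the corollary then follows directly.

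The CvM statistic is the step that needs care, because the integral runs over all of $\mathbb R^{1+d}$ with the Gaussian weight $g$, not over the compact set $\Omega$, so the sup-norm on $\Omega$ does not control it directly. I would handle this by approximation. For compact $\Omega_M\uparrow\mathbb R\times\mathcal X$ (recall $\mathcal X$ is compact), set $\mathcal T_M(z)=\int_{\Omega_M}|z|^2g$. This functional is continuous on $\ell^\infty(\Omega_M)$, since $\bigl||z_1|^2-|z_2|^2\bigr|\le(\|z_1\|+\|z_2\|)\|z_1-z_2\|$ and $\int g=1$. Theorem~\ref{thm1} applies on each compact $\Omega_M$, so $\mathcal T_M(\sqrt{nh}\hat U_n)\overset{d}{\rightarrow}\mathcal T_M(U_{\infty,j})$. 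It remains to show the tails vanish uniformly in $n$, i.e.\ $\lim_M\limsup_n \mathbb E\int_{\Omega_M^c}|\sqrt{nh}\hat U_n|^2 g=0$. For this I would bound the integrand pointwise. Using $|e^{\mathrm i(\cdot)}|=1$,
\begin{align*}
\mathbb E|\sqrt{nh}U_n(w,x)|^2\le h\,\mathbb E[K_h(R)^2]+nh\,|\mathbb E[e^{\mathrm i(x'X+wW)}K_h(R)\delta]|^2 ,
\end{align*}
which is $O(1)$ uniformly in $(w,x)$. Under $\mathbb H_0$ the bias term is $O(nh\cdot h^2)=O(nh^3)=o(1)$ by Assumption B1. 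The estimation-effect term from Lemma~\ref{Lemma1} is linear in $w$ with $|\kappa|$ bounded, so its second moment is $O(w^2)$ and stays integrable against $g$. Dominated convergence (Fubini) then makes the tail contribution $\int_{\Omega_M^c}(1+w^2)g\to0$. The analogous bound for $U_{\infty,j}$ follows from the covariance kernels, e.g.\ $\mathcal K_0(w,x;w,x)=R_K f_R(0)$.

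Finally, a standard approximation lemma (van der Vaart and Wellner, or Billingsley, Theorem 3.2) combines the convergence on each $\Omega_M$ with the uniform tail bound to give $\mathrm{CvM}_n\overset{d}{\rightarrow}\int|U_{\infty,j}|^2g$. The main obstacle is this uniform-in-$n$ tail control, since the $o_p$ remainder in Lemma~\ref{Lemma1} is only uniform on compacts. To deal with it, I would use $|e^{\mathrm i w\hat W}-e^{\mathrm i wW}|\le\min\{2,|w||\hat W-W|\}$. Together with the rate $\max_i|\hat\tau(X_i)-\tau(X_i)|=o_p(1)$, this bounds the remainder's contribution on $\Omega_M^c$ by $o_p(1)\int_{\Omega_M^c}(1+w^2)g$.
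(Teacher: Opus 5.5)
Your treatment of a general sup-norm-continuous $\mathcal{T}$ and of $\mathrm{KS}_n$ follows the same route as the paper. So does the truncation scheme for $\mathrm{CvM}_n$: convergence on compacts, a tail bound uniform in $n$, and second-moment bounds for $\sqrt{nh}U_n$, for the linear term of Lemma~\ref{Lemma1}, and for $U_{\infty,j}$ via $\mathcal{K}_j(w,x;w,x)$.

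The gap is exactly at the step you flag as the main obstacle, the tail control of the feasible part $\sqrt{nh}(\hat U_n-U_n)$. Your summand-wise bound gives $\sqrt{nh}\,|\hat U_n-U_n|(w,x)\le\min\{2,|w|\max_i|\hat\tau(X_i)-\tau(X_i)|\}\sqrt{nh}\,n^{-1}\sum_iK_h(R_i)$. With the branch $2$, the squared tail is $O_p(nh)\int_{\Omega_M^c}g$, which is not uniform in $n$. With the other branch you need $\sqrt{nh}\max_i|\hat\tau(X_i)-\tau(X_i)|=O_p(1)$. That is much more than $\max_i|\hat\tau(X_i)-\tau(X_i)|=o_p(1)$, and it is false: the stochastic error of $\hat\tau(x)$ is of order $(nh_x^dh_r)^{-1/2}$, so $\sqrt{nh}\,|\hat\tau(x)-\tau(x)|\asymp\sqrt{h/(h_x^dh_r)}\to\infty$ when $h\asymp h_r$.

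This failure is structural. By Lemma~\ref{Lemma1}, in cases (i)--(ii) the estimation effect has order $(nh)^{-1/2}$ only because the errors $\hat\tau(X_i)-\tau(X_i)$ are averaged over $i$ inside a $U$-statistic, where the projection integrates out $K_{h_x}(X_j-X_i)$. Taking absolute values summand by summand destroys that averaging. The paper's own tail bound for $\sqrt{nh}(\hat U_n-U_n)$ rests on the same bound by $2$ and keeps only the diagonal terms, although the summands all share $\hat\tau$ and are not centered. So you cannot borrow it either.

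What is needed is a pointwise-in-$(w,x)$ second-moment bound that keeps the $U$-statistic structure; a supremum over unbounded $w$ is not available.

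\begin{itemize}
\item Expand to second order: $e^{\mathrm{i}w\hat W_i}-e^{\mathrm{i}wW_i}=\mathrm{i}we^{\mathrm{i}wW_i}(\hat W_i-W_i)+\rho_i$ with $|\rho_i|\le w^2|\hat W_i-W_i|^2/2$.
\item Bound the quadratic part summand-wise. It contributes $w^2\sqrt{nh}\max_i|\hat\tau(X_i)-\tau(X_i)|^2\,O_p(1)=o_p(w^2)$, because B1 implies $\sqrt{nh}/(nh_x^dh_r)\to0$ and $\sqrt{nh}(h_r^2+h_x^{2l})\to0$. A $\log n$ factor from the maximum is harmless for polynomial bandwidths. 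The same argument disposes of the error from linearizing the ratio $\hat\tau$.
\item For the linear part, factor out $\mathrm{i}w$ and rerun the Hoeffding decomposition of Lemma~\ref{Lemma1} at a fixed $(w,x)$. Since $|e^{\mathrm{i}(wW+x'X)}|=1$ and $|\kappa|$ is bounded, the projection and envelope bounds there do not depend on $(w,x)$. This requires the smoothness constants in B4 to grow at most polynomially in $w$, which your bias bounds already assume. Then the $\sqrt{nh}$-scaled linearized term has second moment bounded uniformly in $n$ and in $(w,x)\in\mathbb{R}^{1+d}$.
\item Fubini and Markov then give $\lim_{M\to\infty}\limsup_{n\to\infty}\mathbb{P}\bigl(\int_{\Omega_M^c}|\sqrt{nh}(\hat U_n-U_n)|^2g>\varepsilon\bigr)=0$, which is exactly what the approximation theorem requires.
\end{itemize}

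Finally, in $\mathrm{CvM}_n$ the index $x$ is a Fourier variable ranging over $\mathbb{R}^d$; the closed form of $\mathrm{CvM}_n$ requires this. You should therefore exhaust $\mathbb{R}^{1+d}$ rather than $\mathbb{R}\times\mathcal X$. This is harmless, because all your bounds are uniform in $x$.
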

The test statistics $\mathrm{KS}_n$ and $\mathrm{CvM}_n$ converge to the images of $U_{\infty,j}(w,x)$ under continuous mappings for $j=0,1,2$, depending on the bandwidth choice and the estimator. Their limits are obtained via the continuous mapping theorem (see, e.g., Theorem 1.3.6 in \cite{vdv1996}), and a detailed proof is provided in the Appendix. 
Importantly, the limiting distributions of the test statistics $\mathrm{KS}_n$ and $\mathrm{CvM}_n$ under the null depend in a complex way on the underlying data-generating process through the covariance structure $\mathcal{K}_j(w_1, x_1; w_2, x_2)$ of the centered Gaussian process $U_{\infty,j}$. 
Therefore, critical values cannot be tabulated, which motivates the multiplier bootstrap procedure developed in Section \ref{sec:bootstrap}.

\subsection{Asymptotic Power} \label{subsec:power}
\subsubsection{Against the fixed alternative} \label{subsubsec:H1}
The following theorem delivers the asymptotic power against the fixed alternative 
$$\mathbb{H}_1: \varphi_{W|XR}(w|x,0^+) - \varphi_{W|XR}(w|x,0^-) =\gamma(w,x), $$ 
where $\gamma(w,x)\ne0$ for some $(w,x) \in \Omega$.
\begin{theorem} \label{thm2}
Suppose Assumptions A1--A5 and B2--B5 hold. Under $\mathbb{H}_1$, when Assumption B1 holds for the local constant estimator or Assumption B1' holds for the local linear estimator, 
$$\sup_{(w,x) \in \Omega} \left| \hat{U}_n(w,x) - \Gamma(w,x) \right| = o_p(1), $$ 
where $\Gamma(w,x) = \int e^{\mathrm{i} x'\bar{x}} \gamma(w,\bar{x}) f_{XR}(\bar{x},0) d\bar{x}$ is not equal to 0 for some $(w,x)\in\Omega$. 
\end{theorem}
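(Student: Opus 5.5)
The plan is to split $\hat{U}_n(w,x)-\Gamma(w,x)$ into three pieces and show each is $o_p(1)$ uniformly on $\Omega$:
\[
\bigl(\hat{U}_n-U_n\bigr)+\bigl(U_n-\mathbb{E}U_n\bigr)+\bigl(\mathbb{E}U_n-\Gamma\bigr).
\]
Only consistency is needed, not a $\sqrt{nh}$ limit, so the rates can be crude.

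\textbf{Estimation effect.} The first piece should follow from the linear representations in Lemma~\ref{Lemma1}. Nothing in their derivation uses $\mathbb{H}_0$: they rely only on smoothness (B4), the bandwidth conditions, and the fact that $W_i-\hat W_i=(1-D_i)(\tau(X_i)-\hat\tau(X_i))$. Those representations are $O_p((nh_r)^{-1/2})+o_p((nh)^{-1/2})$ uniformly on $\Omega$, which is $o_p(1)$.

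If one prefers not to reuse Lemma~\ref{Lemma1} under $\mathbb{H}_1$, a cruder bound suffices. Since $|e^{\mathrm{i}a}-e^{\mathrm{i}b}|\le|a-b|$,
\[
|\hat{U}_n-U_n|\le |w|\,\sup_{x\in\mathcal X}|\hat\tau(x)-\tau(x)|\,\frac1n\sum_i K_h(R_i).
\]
Here $w$ is bounded on the compact set $\Omega$, and $n^{-1}\sum_i K_h(R_i)=O_p(1)$. Uniform consistency of the leave-one-out local constant or local linear estimator $\hat\tau$ follows from B1/B1', B3 and B4, using the variance condition $nh_x^{2d}h_r^2\to\infty$ and the undersmoothing conditions, which give vanishing bias.

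\textbf{Stochastic term.} For the second piece, $U_n(w,x)$ is a sample mean of the bounded complex exponential times $K_h(R_i)\delta_i$. Its pointwise variance is $O((nh)^{-1})\to0$. For uniformity over compact $\Omega$, the class $\{e^{\mathrm{i}(x'X+wW)}K_h(R)\delta:(w,x)\in\Omega\}$ is Lipschitz in $(w,x)$, with Lipschitz constant bounded by $(|W|+\|X\|)K_h(R)$, whose expectation is $O(1)$ by B5 and compactness of $\mathcal X$. A finite grid with mesh $\varepsilon$ combined with a union bound then gives $\sup_\Omega|U_n-\mathbb{E}U_n|=o_p(1)$. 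Alternatively, one can appeal to the weak convergence of $\sqrt{nh}(U_n-\mathbb{E}U_n)$ established in Theorem~\ref{thm1}(iii); that step is a centered empirical-process argument that does not use the null.

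\textbf{Bias term.} For the third piece, condition on $R$ and change variables $r=hv$:
\[
\mathbb{E}U_n=\int_0^\infty K(v)\,\psi(w,x,hv)\,dv-\int_{-\infty}^0K(v)\,\psi(w,x,hv)\,dv,
\qquad
\psi(w,x,r)=\mathbb{E}\bigl[e^{\mathrm{i}(x'X+wW)}\mid R=r\bigr]f_R(r).
\]
B4 gives one-sided continuity of $\psi$ at $0$ uniformly on $\Omega$, and symmetry of $K$ gives $\int_0^\infty K=\tfrac12$. So the limit is $\tfrac12[\psi(w,x,0^+)-\psi(w,x,0^-)]$. By the Fubini chain displayed before the definition of $U_n$, this difference equals $\int e^{\mathrm{i}x'\bar x}\gamma(w,\bar x)f_{XR}(\bar x,0)\,d\bar x$. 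The factor $\tfrac12$ is absorbed by the paper's one-sided kernel normalization (the remark ``up to the one-sided kernel normalization''), so I would state the limit as $\Gamma$ under that normalization.

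\textbf{Nonvanishing of $\Gamma$.} To show $\Gamma\ne0$ somewhere, note that for fixed $w$ the map $x\mapsto\Gamma(w,x)$ is the Fourier transform of the finite signed measure $\gamma(w,\bar x)f_{XR}(\bar x,0)\,d\bar x$. If it vanished for all $x$, uniqueness of Fourier transforms would force $\gamma(w,\cdot)f_{XR}(\cdot,0)=0$ a.e. Because $f_{XR}(\cdot,0)>0$ (B3) and $\gamma$ is continuous, this gives $\gamma(w,\cdot)\equiv0$, a contradiction. There is a subtlety: $\Omega$ is only a compact subset, so vanishing on $\Omega$ must be extended to all $x$. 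I would handle this via analyticity of $\Gamma$ in $x$, which holds because $\mathcal X$ is compact, so the measure has compact support; an entire function vanishing on a set with nonempty interior vanishes identically.

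\textbf{Main obstacle.} I expect the hardest step to be the estimation-effect piece. Either one must verify that Lemma~\ref{Lemma1} holds without the null, or one must establish uniform consistency of $\hat\tau$ under B1 versus B1'. Establishing uniform consistency of $\hat\tau$ directly is safer, so I would try that first.
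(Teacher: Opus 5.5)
Your proposal is correct and follows the paper's route: the same three-way split into the estimation effect (via Lemma~\ref{Lemma1}, which indeed never uses $\mathbb{H}_0$), a ULLN for the centered sample average, and an $O(h)$ kernel-bias term; the only cosmetic difference is that the paper folds the linearized estimation term into $U_{1n}$ and controls its mean through $\Delta_\mu-\Delta_p\tau=0$, whereas you bound it directly inside $\hat U_n-U_n$. Two of your additions go beyond the paper and are worth keeping: your bias computation correctly gives $\tfrac12\Gamma$ for a symmetric kernel (the paper's proof of this theorem drops the $\tfrac12$ that appears in its proof of Theorem~\ref{thm1}), and your Fourier-uniqueness/analyticity argument actually proves $\Gamma\neq0$ on $\Omega$, which the paper only asserts, at the cost of requiring the relevant $x$-section of $\Omega$ to have nonempty interior --- a proviso that is genuinely needed, since for degenerate $\Omega$ the claim can fail.
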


Theorem \ref{thm2} indicates that $\sqrt{nh} \hat{U}_n(w,x)$ diverges when $\Gamma(w,x)\ne 0$ for some $(w,x)\in\Omega$.
Under $\mathbb{H}_1$, the uniform law of large numbers guarantees that $\hat{U}_n(w,x)$ converges to the deterministic function $\Gamma(w,x)$ in probability. 
Since $\Gamma(w,x) \ne 0$ whenever the characteristic function of $W$ is discontinuous at the cutoff for some $(w,x)$, $\sqrt{nh} \hat{U}_n(w,x)$ diverges at some points $(w,x)\in\Omega$ with $\Gamma(w,x)\ne0$.
Consequently, the test is consistent against fixed alternatives that generate a nonzero ICM signal under the maintained assumptions, with rejection probability approaching 1 as $n$ increases.

\subsubsection{Against the sequence of local alternatives} \label{subsubsec:H1n}
When the alternatives are close to the null hypothesis, the performance of the test may be sharply different from that under the fixed alternative $\mathbb{H}_1$. 
We therefore study the asymptotic power under a sequence of local alternatives that approach the null, thereby exploring the theoretical limit of the test. 
Consider a Pitman-type sequence of alternatives: 
$$\mathbb{H}_{1n}: \varphi_{W|XR}(w|x,0^+) - \varphi_{W|XR}(w|x,0^-) = \frac{\lambda(w,x) }{\sqrt{nh}}, $$ 
where $\lambda(w,x)\ne 0$ for some $(w,x)\in\Omega$. This sequence of alternatives converges to the null at the rate $1/\sqrt{nh}$, which is the detection boundary of our procedure in the RD context.
The following theorem reveals the asymptotic behavior of $\sqrt{nh}\hat{U}_n(w,x)$ under the sequence $\mathbb{H}_{1n}$ and guarantees non-trivial power against local alternatives of this type.

\begin{theorem} \label{thm3} 
Suppose Assumptions A1--A5 and B2--B5 hold. Under the sequence of local alternatives $\mathbb{H}_{1n}$, 

(i) when Assumption B1 holds for the local constant estimator $\hat\tau(X_i)$, $$ \sqrt{nh} \hat{U}_n(w,x) \Longrightarrow U_{\infty,1}(w,x) + \Lambda(w,x);$$ 

(ii) when Assumption B1' holds for the local linear estimator $\hat\tau(X_i)$, $$ \sqrt{nh} \hat{U}_n(w,x) \Longrightarrow U_{\infty,2}(w,x) + \Lambda(w,x);$$ 

(iii) when Assumption B1 holds for the local constant estimator $\hat\tau(X_i)$ or Assumption B1' holds for the local linear estimator $\hat\tau(X_i)$, if $h=o(h_r)$, 
$$ \sqrt{nh} \hat{U}_n(w,x) \Longrightarrow U_{\infty,0}(w,x) + \Lambda(w,x),$$ 
where the drift term $\Lambda(w,x) = \int e^{\mathrm{i} x'\bar{x}} \lambda(w,\bar{x}) f_{XR}(\bar{x},0) d\bar{x} \ne 0$ for some $(w,x)\in\Omega$.
\end{theorem}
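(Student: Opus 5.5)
The plan is to reduce Theorem~\ref{thm3} to the null analysis of Theorem~\ref{thm1}, plus a deterministic drift. I would start from the decomposition
$$\sqrt{nh}\,\hat U_n(w,x)=\sqrt{nh}\bigl(U_n(w,x)-\mathbb{E}U_n(w,x)\bigr)+\sqrt{nh}\,\mathbb{E}U_n(w,x)+\sqrt{nh}\bigl(\hat U_n(w,x)-U_n(w,x)\bigr).$$
The three terms are handled in turn: the centered stochastic part, the mean (which carries the local alternative), and the estimation effect from Lemma~\ref{Lemma1}.

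\textbf{Drift term.} Condition on $R$ and apply the law of iterated expectations together with the Fourier representation of $U(w,x)$ from Section~\ref{subsec:hypothesis}. This gives
$$\mathbb{E}U_n(w,x)=\int_0^\infty K(v)\bigl[\Psi(w,x;hv)-\Psi(w,x;-hv)\bigr]dv,$$
where $\Psi(w,x;r)=\int e^{\mathrm{i}x'\bar x}\varphi_{W|XR}(w|\bar x,r)f_{XR}(\bar x,r)d\bar x$.
\begin{itemize}
\item A Taylor expansion in $r$ on each side, using Assumption B4 and the continuity of $f_{R|X}$ in Assumption A1, gives $\mathbb{E}U_n=\tfrac12\bigl[\Psi(0^+)-\Psi(0^-)\bigr]+O(h)$ uniformly on $\Omega$.
\item Under $\mathbb{H}_{1n}$, the jump equals $\int e^{\mathrm{i}x'\bar x}\lambda(w,\bar x)f_{XR}(\bar x,0)d\bar x/\sqrt{nh}$, with the constant factor $\int_0^\infty K$ absorbed into the one-sided kernel normalization.
\item Since $nh^3\to0$, the remainder satisfies $\sqrt{nh}\,O(h)=o(1)$.
\end{itemize}
Hence $\sqrt{nh}\,\mathbb{E}U_n\to\Lambda(w,x)$ uniformly on $\Omega$. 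That $\Lambda\ne0$ somewhere follows from the completeness of the exponential weight $e^{\mathrm{i}x'\bar x}$ (Bierens). Because $f_{XR}(\cdot,0)>0$ by Assumption B3, $\Lambda(w,\cdot)\equiv0$ would force $\lambda(w,\cdot)=0$ almost everywhere.

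\textbf{Stochastic and estimation terms.} These are treated as in the proof of Theorem~\ref{thm1}; I would show that the arguments there are insensitive to the $O((nh)^{-1/2})$ perturbation of the data-generating process.
\begin{itemize}
\item For the centered term, finite-dimensional convergence follows from the Lindeberg CLT for triangular arrays. The covariance limit is $\mathcal K_0$, because the variance computation depends only on the one-sided limits $\varphi_{W|XR}(\cdot|x,0^\pm)$, which coincide in the limit under $\mathbb{H}_{1n}$.
\item Tightness comes from the Lipschitz property of $(w,x)\mapsto e^{\mathrm{i}(x'X+wW)}$ on compact $\Omega$, combined with Assumption B5 and a bracketing entropy bound for this class.
\item For the estimation effect, Lemma~\ref{Lemma1} supplies the linear representations in cases (i) and (ii). Its proof uses only smoothness, bandwidth conditions and the identification of $\tau$, so it should carry over under $\mathbb{H}_{1n}$. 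The key reason is that $\tau(x)$ is defined by the mean ratio, and the local alternative only perturbs $\kappa(w,x)$ by $O((nh)^{-1/2})$, which is negligible.
\item This linear term is itself a centered kernel-weighted sum. In cases (i) and (ii) it is jointly asymptotically normal with the main term, giving the Gaussian limits $U_{\infty,1}$ and $U_{\infty,2}$. In case (iii), $h=o(h_r)$ makes it $o_p(1)$ after scaling by $\sqrt{nh}$.
\end{itemize}
Adding the three pieces and applying Slutsky's lemma in $\ell^\infty(\Omega)$ gives the stated limits.

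\textbf{Main obstacle.} I expect the hardest step to be verifying that the uniform remainder in Lemma~\ref{Lemma1} stays $o_p((nh)^{-1/2})$ under the drifting sequence. This needs all Taylor-remainder and bias bounds to hold uniformly over the sequence of laws, and it needs the mean of the linear estimation term to remain negligible even though $W$'s law jumps at the cutoff. I would address this by redoing the bias calculation with $\varphi_{WD|XR}(\cdot|x,0^+)-\varphi_{WD|XR}(\cdot|x,0^-)=O((nh)^{-1/2})$. Each such bias contribution is then $O((nh)^{-1/2})\cdot O(h_r)=o((nh)^{-1/2})$.
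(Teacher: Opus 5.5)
Your route is essentially the paper's. You linearize the estimation effect via Lemma~\ref{Lemma1}, whose proof indeed never invokes $\mathbb{H}_0$. You show that the $\lambda$-dependent covariance contributions are $O((nh)^{-1/2})$, so the Gaussian part is the null limit. You collect $\Lambda$ from the $(nh)^{-1/2}$ jump.

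The only structural difference is the centering. The paper subtracts the $X_i$-conditional piece $g_i(w,x)=(nh)^{-1/2}e^{\mathrm{i}x'X_i}f_{R|X}(0|X_i)\lambda(w,X_i)$ and obtains $\Lambda$ by a ULLN. You subtract the unconditional mean and obtain $\Lambda$ deterministically from a one-sided kernel expansion, with the $O(h)$ remainder removed by $nh^3\to0$. Two of your steps go beyond the paper:
\begin{itemize}
\item Your triangular-array CLT with bracketing tightness is more careful than the paper's appeal to the Donsker theorem, since the summands depend on $n$ through $h$ and through the drifting law.
\item Your Fourier-uniqueness argument for $\Lambda\neq0$ supplies what the paper only asserts. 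For nonvanishing on $\Omega$ rather than on all of $\mathbb{R}^d$, add that $\Lambda(w,\cdot)$ is analytic (it is the Fourier transform of a function supported on the compact $\mathcal{X}$) and that $\Omega$ has nonempty interior.
\end{itemize}

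The justification in your last paragraph is wrong, although the conclusion it supports is true. Under $\mathbb{H}_{1n}$ only $\varphi_{W|XR}$ jumps by $O((nh)^{-1/2})$. The treatment-specific jump $\varphi_{WD|XR}(w,0|x,0^+)-\varphi_{WD|XR}(w,0|x,0^-)=\kappa(w,x)\Delta_p(x)$ is $O(1)$ in a fuzzy design, under the null and the local alternatives alike. At $w=0$ it equals $-\Delta_p(x)$. If it were $O((nh)^{-1/2})$, $\kappa$ would vanish and $U_{\infty,1}$, $U_{\infty,2}$ and $U_{\infty,0}$ would coincide.

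Moreover, the mean of the linear estimation term does not depend on how the law of $W$ behaves at the cutoff. Conditional on $X$ it equals
\[
\tfrac12\,\mathrm{i}w\,e^{\mathrm{i}x'X}\kappa(w,X)f_{R|X}(0|X)\,[\Delta_\mu(X)-\tau(X)\Delta_p(X)]+O(h_r)=O(h_r),
\]
because $\tau=\Delta_\mu/\Delta_p$ by definition. Then $\sqrt{nh}\,h_r\to0$ under Assumption B1 disposes of it. In the local linear case the term is built from the residuals $u_Y,u_D$ and is exactly mean zero.

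This identity, which is the one used in the paper's proof of Theorem~\ref{thm1}, should replace your proposed fix. Beyond it, your ``main obstacle'' only requires the smoothness and moment constants to be uniform along the sequence of laws, a regularity condition the paper leaves implicit.
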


Under $\mathbb{H}_{1n}$, the limiting distribution of $\sqrt{nh} \hat{U}_n(w,x)$ differs from the null limit by the deterministic drift $\Lambda(w,x)\neq0$ for some $(w,x)\in\Omega$. 
While the stochastic fluctuation remains asymptotically identical to that under $\mathbb{H}_0$, the additional drift term $\Lambda(w,x)$ introduces a systematic deviation and shifts the limiting distribution away from the centered Gaussian process defined in Theorem \ref{thm1}. 
Consequently, the test statistics exhibit non-negligible deviations from $\mathbb{H}_0$ under $\mathbb{H}_{1n}$. 
This behavior is characteristic of global-smoothing tests, whose integrated structure allows weak but systematic discrepancies from the null hypothesis to accumulate asymptotically. 
In contrast, local-smoothing tests evaluate departures pointwise and require the alternative to converge to the null at a slower rate than the nonparametric smoothing rate. 
This difference in mechanism explains the power advantage of global-smoothing tests against local alternatives. In light of the continuous mapping theorem, under $\mathbb{H}_{1n}$, the test statistics satisfy
$$\operatorname{KS}_{n} \overset{d}{\rightarrow} \sup_{(w,x)\in\Omega} \max \left\{ \left\vert \mathrm{Re} \left( U_{\infty,j}(w,x) + \Lambda(w,x) \right) \right\vert, \left\vert \mathrm{Im} \left( U_{\infty,j}(w,x) + \Lambda(w,x) \right) \right\vert \right\}$$ 
and $$\operatorname{CvM}_{n} \overset{d}{\rightarrow} \int\left| U_{\infty,j}(w,x) + \Lambda(w,x) \right|^2 g(w,x)dwdx $$ for $j=0,1,2$.
The test statistics converge to a non‑centered Gaussian limit, guaranteeing non‑trivial local power at the $1/\sqrt{nh}$ detection boundary.

\section{Critical Values} \label{sec:bootstrap}
In this section, we introduce a multiplier bootstrap procedure to approximate the limiting distributions of the proposed test statistics and calculate their critical values. 
In Section \ref{sec:asymptotic}, Lemma \ref{Lemma1} provides linear representations of the $U$-process $\hat{U}_n(w,x)$, and Theorems \ref{thm1}--\ref{thm3} establish its limiting behavior under different hypotheses. 
However, the limiting distributions of the test statistics are difficult to evaluate analytically because they depend on the underlying data-generating process in a complex way. 
This motivates the use of a multiplier bootstrap procedure, which is both computationally efficient and theoretically valid. 
The resulting bootstrap sample allows us to compute the critical values of the test statistics.
In practice, we provide the following multiplier bootstrap procedure for the proposed tests.
\begin{itemize}
    \item \textbf{Step 1} \quad 
    Compute the feasible $U$-process $\hat U_n(w,x)$ using $\hat\tau(X_i)$ and obtain the test statistics $\mathrm{KS}_n$ and $\mathrm{CvM}_n$ as described in Section \ref{sec:framework}.
    \item \textbf{Step 2} \quad Draw multipliers $\{V_i\}_{i=1}^n$ and compute the bootstrap statistics $\mathrm{KS}_n^\ast$ and $\mathrm{CvM}_n^\ast$ based on the bootstrap process $\hat{U}_n^\ast(w,x)$. 
    \item \textbf{Step 3} \quad Repeat Step 2 $B$ times to generate the bootstrap statistics $\{\mathrm{KS}^\ast_{n,b}\}_{b=1}^B$ and $\{\mathrm{CvM}^\ast_{n,b}\}_{b=1}^B$. 
    \item \textbf{Step 4} \quad Reject the null hypothesis at significance level $\alpha$ if the test statistic exceeds the $(1-\alpha)$ quantile of the bootstrap statistics.
\end{itemize}
In Step 2, the multipliers $\{V_i\}_{i=1}^n$ are i.i.d. random variables with zero mean, unit variance, and bounded support, independent of the original sample path $\{(Y_i, D_i, X'_i, R_i)'\}_{i=1}^n$. 
Following \cite{Mammen1993}, one can adopt the i.i.d. two-point Bernoulli random variables $\{V_i\}_{i=1}^n$ with $\mathbb{P}(V_i = 1- \iota) = \iota/\sqrt{5}$ and $\mathbb{P}(V_i = \iota) = 1-\iota/\sqrt{5}$, where $\iota = (\sqrt{5} + 1)/2$. This sequence $\{V_i\}_{i=1}^n$ ensures that $\mathbb{E}[V_i]=0$, $\mathbb{E}[V_i^2]=1$, and $\mathbb{E}[V_i^3]=1$. 
As discussed in \cite{Escanciano2014} and \cite{santanna2019}, the multiplier bootstrap avoids repeated estimation in each bootstrap repetition. 
This feature makes the multiplier bootstrap straightforward to implement and computationally efficient, compared with alternative resampling methods. 

With the multipliers $\{V_i\}_{i=1}^n$, we define the bootstrap process to construct the bootstrap statistics. 
Note that the asymptotic linear representation varies across different estimators and bandwidth conditions. 
We introduce the bootstrap process and the test statistics here for different scenarios. The bootstrap process based on the local constant estimator is 
$$\hat{U}_n^\ast(w,x) = \frac{1}{n} \sum_{i=1}^n V_i e^{\mathrm{i} x'X_i} \left( e^{\mathrm{i} w\hat{W}_i } K_h(R_i) \delta_i + \mathrm{i} w \hat\kappa(w,X_i) (Y_i-D_i\hat\tau(X_i)) K_{h_r}(R_i) \delta_i \right), $$
with the leave-one-out estimator of $\kappa(w,X_i)$ given by 
$$ \hat\kappa(w,X_i) = \frac{\frac{1}{n-1} \sum_{j\ne i}^n e^{\mathrm{i}w\hat{W}_j} (1-D_j) K_{h_x}(X_j-X_i) K_{h_r}(R_j) \delta_j}{\frac{1}{n-1} \sum_{j\ne i}^n D_j K_{h_x}(X_j-X_i) K_{h_r}(R_j) \delta_j}.$$
As in $\hat\tau(X_i)$, the product kernel $K_{h_x}(x) = h_x^{-d} \prod_{k=1}^d K(x_k/h_x)$ is employed for the $d$-dimensional vector $x$ in conditional local smoothing.

When $h=o(h_r)$, the estimation effect is asymptotically dominated by the main term, and the bootstrap process based on either the local constant or the local linear estimator reduces to 
$$\hat{U}_n^\ast(w,x) = \frac{1}{n} \sum_{i=1}^n V_i e^{\mathrm{i} \left( x'X_i + w\hat{W}_i \right) } K_h(R_i) \delta_i.$$ 
Based on the bootstrap process $\hat{U}_n^\ast(w,x)$, the bootstrap statistics, $\mathrm{KS}_n^\ast$ and $\mathrm{CvM}_n^\ast$, can be computed as follows:
\begin{align*}
    \mathrm{KS}_n^\ast =& \sqrt{nh} \sup_{(w,x)\in\Omega} \max \left\{ \left| \mathrm{Re} (\hat{U}_n^\ast(w,x)) \right|, \left| \mathrm{Im} (\hat{U}_n^\ast(w,x)) \right| \right\}, \\
    \mathrm{CvM}_n^\ast =& \frac{h}{n} \sum_{i=1}^n \sum_{j=1}^n V_iV_j e^{-\frac{(\hat{W}_i - \hat{W}_j)^2 + \lVert X_i-X_j \rVert^2}{2}} K_h(R_i)K_h(R_j) \delta_i\delta_j.
\end{align*}
Note that the above expression for $\mathrm{CvM}_n^\ast$ applies when $h=o(h_r)$ and the estimation effect is asymptotically negligible. 
Otherwise, the corresponding bootstrap expression of the CvM statistic involves quadruple summations, which are computationally expensive.
The bootstrap samples closely mimic the theoretical distribution of the test statistics under the null, facilitating robust inference in finite samples. 
To establish the validity of the multiplier bootstrap, we first investigate the estimation effect of the bootstrap process in the following lemma.

\begin{lemma} \label{Lemma2} 
Suppose Assumptions A1--A5 and B2--B5 hold.

(i) Suppose Assumption B1 holds. Then the infeasible bootstrap process with the estimation effect of the local constant estimator is 
$$U_n^\ast(w,x) = \frac{1}{n} \sum_{i=1}^n V_i e^{\mathrm{i} x'X_i} \left( e^{\mathrm{i} wW_i } K_h(R_i) \delta_i + \mathrm{i} w \kappa(w,X_i) (Y_i-D_i\tau(X_i)) K_{h_r}(R_i) \delta_i \right), $$ 
where $W, \kappa$ and $\tau$ in the formula are the true values.

(ii) Suppose Assumption B1 holds for the local constant estimator or Assumption B1' holds for the local linear estimator. When $h=o(h_r)$, the infeasible bootstrap process is 
$$U_n^\ast(w,x) = \frac{1}{n} \sum_{i=1}^n V_i e^{\mathrm{i} \left( x'X_i + wW_i \right) } K_h(R_i) \delta_i. $$
In either case (i) or (ii), the difference between the bootstrap process $\hat{U}_n^\ast(w,x)$ and its infeasible version $U_n^\ast(w,x)$ is uniformly bounded: 
$$ \sqrt{nh} \sup_{(w,x)\in\Omega} \left| \hat{U}_n^\ast(w,x) - U_n^\ast(w,x)  \right| = o_p(1). $$
\end{lemma}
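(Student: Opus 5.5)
The plan is to split $\hat U_n^\ast - U_n^\ast$ into a main-term piece and (in case (i)) a correction-term piece. Each piece will be controlled conditionally on the data, exploiting that the multipliers $V_i$ are i.i.d., mean zero, unit variance, bounded, and independent of the sample. In case (ii) only the first piece appears:
$$\Delta_{1}^\ast(w,x)=\frac1n\sum_{i=1}^n V_i e^{\mathrm{i}x'X_i}\big(e^{\mathrm{i}w\hat W_i}-e^{\mathrm{i}wW_i}\big)K_h(R_i)\delta_i .$$
Since $|e^{\mathrm{i}w\hat W_i}-e^{\mathrm{i}wW_i}|\le |w|\,|\hat\tau(X_i)-\tau(X_i)|(1-D_i)$ and $w$ ranges over a compact set, pointwise in $(w,x)$ we get
$$\mathbb E^\ast|\sqrt{nh}\Delta_1^\ast|^2\lesssim \frac{h}{n}\sum_i (\hat\tau(X_i)-\tau(X_i))^2K_h^2(R_i)\le \sup_{x}|\hat\tau-\tau|^2\cdot \frac{h}{n}\sum_iK_h^2(R_i).$$
The last average is $O_p(1)$. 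For the uniform rate of the leave-one-out ratio estimator, I would use $\sup_x|\hat\tau(x)-\tau(x)|=O_p\big((nh_x^{d}h_r)^{-1/2}\sqrt{\log n}+\text{bias}\big)=o_p(1)$, with bias $O(h_r+h_x^l)$ for local constant and $O(h_r^2+h_x^l)$ for local linear. This is the same ingredient used in the proof of Lemma~\ref{Lemma1}, and it is guaranteed by B1/B1' through $nh_x^{2d}h_r^2\to\infty$ together with B3, which bounds the denominator away from zero.

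Uniformity over $\Omega$ comes from a conditional chaining argument. The process $(w,x)\mapsto\Delta_1^\ast(w,x)$ is Lipschitz in $(w,x)$, with Lipschitz constant bounded by $\frac1n\sum_i|V_i|(\|X_i\|+|\hat W_i|+|W_i|)K_h(R_i)\cdot\sup|\hat\tau-\tau|$, which is $O_p(1)$ times $o_p(1)$ by B5 and compactness of $\mathcal X$. Because the increments are sub-Gaussian given the data (Hoeffding, since $V_i$ is bounded), a standard maximal inequality over a finite $\epsilon$-grid of the compact $\Omega$ with polynomially many points turns the pointwise $o_p(1)$ into a uniform $o_p(1)$. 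Alternatively, a first-order expansion $e^{\mathrm{i}w(\hat W_i-W_i)}-1=\mathrm{i}w(1-D_i)(\hat\tau-\tau)(X_i)+O(|\hat\tau-\tau|^2)$ factors the $w$-dependence, and the linear term is handled the same way.

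In case (i) there is an additional piece,
$$\Delta_2^\ast=\frac1n\sum_iV_ie^{\mathrm{i}x'X_i}\,\mathrm{i}w\Big[\hat\kappa(w,X_i)(Y_i-D_i\hat\tau(X_i))-\kappa(w,X_i)(Y_i-D_i\tau(X_i))\Big]K_{h_r}(R_i)\delta_i .$$
I would add and subtract to write the bracket as $(\hat\kappa-\kappa)(Y_i-D_i\hat\tau)-\kappa D_i(\hat\tau-\tau)$. The conditional second moment of $\sqrt{nh}\Delta_2^\ast$ is then bounded by
$$\big(\sup_{w,x}|\hat\kappa-\kappa|^2+\sup_x|\hat\tau-\tau|^2\big)\cdot\frac{h}{n}\sum_i(1+Y_i^2)K_{h_r}^2(R_i),$$
and the last factor is $O_p(h/h_r)=O_p(1)$ by B1 and B5. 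It remains to show $\sup_{(w,x)}|\hat\kappa(w,x)-\kappa(w,x)|=o_p(1)$. This should follow by writing $\hat\kappa$ as a ratio of kernel averages: the denominator converges uniformly to $f_{XR}(x,0)(p(x,0^+)-p(x,0^-))$, which is bounded away from zero by B3. The numerator converges uniformly to $f_{XR}(x,0)$ times the jump in $\varphi_{WD|XR}(w,0|x,\cdot)$ by B4, after first replacing $\hat W_j$ with $W_j$ at cost $|w|\sup|\hat\tau-\tau|$. Uniformity in $w$ again follows from Lipschitz continuity plus a grid argument. Finally, conditional convergence in probability to zero implies unconditional $o_p(1)$ by dominated convergence.

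I expect the main obstacle to be the uniform consistency of $\hat\kappa$ jointly in $(w,x)$. Two features make this delicate: it is a one-sided difference ratio built with the plug-in $\hat W_j$, and it involves a leave-one-out $d$-dimensional smoother whose variance rate must be matched against B1. I would first try to reuse the uniform expansion of $\hat\tau$ from the proof of Lemma~\ref{Lemma1} together with a Bernstein inequality over a polynomial grid. If the rates do not match, the fallback is to use only $L_2$-consistency averaged over the sample points $X_i$, which is all that the second-moment bound above actually requires.
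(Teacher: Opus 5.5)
Your proof is correct, but it is organized quite differently from the paper's.

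\textbf{The paper's route.} It Taylor-expands $e^{\mathrm{i}w\hat W_i}$ and linearizes $\hat\tau-\tau$ and $\hat\kappa-\kappa$ through the $\hat a/\hat b$ expansion. Every resulting piece is written as a second-, third- or fourth-order $U$-process carrying the factor $V_i$, and each is Hoeffding-decomposed:
\begin{itemize}
\item $\mathbb{E}V_i=0$ kills the means and the projections onto $\mathcal{F}_j^\ast$;
\item the surviving projections onto $\mathcal{F}_i^\ast$ cancel in pairs, up to $O_p(h_r)+O_p(h_x^l)$ bias ($C_{11n}^\ast$ against $C_{12n}^\ast$, $C_{211n}^\ast$ against $C_{22n}^\ast$);
\item the degenerate parts are bounded via envelope second moments and the $U$-process maximal inequality (Proposition 4) that the paper invokes.
\end{itemize}

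\textbf{Your route.} You condition on the sample, so $\sqrt{nh}(\hat U_n^\ast-U_n^\ast)$ becomes a sum of independent mean-zero terms with conditional variance $\frac{h}{n}\sum_i|a_i(w,x)|^2$. You then bound this by $\sup|\hat\tau-\tau|^2$ and $\sup|\hat\kappa-\kappa|^2$ times $O_p(1)$ kernel averages.

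\textbf{What each buys.} Your argument is much shorter and handles both estimators at once. It also makes transparent why the bootstrap estimation effect is negligible: thanks to the multipliers, only consistency of the nuisance estimates is needed. No bias rates are required, and $h/h_r$ plays no role in $\Delta_1^\ast$.

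The price is that everything rests on sup-norm consistency of the leave-one-out $\hat\tau$ (local constant and local linear) and of $\hat\kappa$ jointly in $(w,X_i)$. The paper never establishes this. Contrary to your remark, the proof of Lemma~1 works with moment bounds, not a uniform rate for $\hat\tau$. You therefore have to import it from standard uniform-convergence results for kernel estimators. B1/B1' and B3 suffice for this; for instance, $nh_x^{2d}h_r^2\to\infty$ already gives $\log n/(nh_x^dh_r)\to 0$.

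\textbf{Two small points.}
\begin{itemize}
\item For the supremum, the Lipschitz constant of the \emph{scaled} process carries an extra factor $\sqrt{nh}$. A grid of mesh $(nh)^{-1/2}$ therefore needs $\sup|\hat\tau-\tau|\sqrt{\log n}=o_p(1)$, which is fine under polynomial bandwidths. Alternatively, use conditional Dudley chaining in the random $L_2$ metric, which only requires the variance bound to hold uniformly.
\item The bound $\frac{h}{n}\sum_i(1+Y_i^2)K_{h_r}^2(R_i)=O_p(h/h_r)$ needs $\mathbb{E}[Y^2\mid R=r]$ bounded near the cutoff. B5 alone does not give this, although the paper's envelope bounds tacitly use the same condition.
\end{itemize}
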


\textbf{Remark} \quad 
Lemma \ref{Lemma2} establishes the asymptotic equivalence between the feasible and infeasible bootstrap processes. 
The equivalence holds for the local constant estimator under Assumption B1 and, when $h=o(h_r)$, for both the local constant and local linear estimators. 
The corresponding result for the local linear estimator when $h=O(h_r)$ can be established, but is not demonstrated here to avoid excessive technical complexity.

The next theorem establishes the asymptotic validity of the multiplier bootstrap procedure above.

\begin{theorem} \label{thm4}
Suppose Assumptions A1--A5 and B2--B5 hold.

(i) Suppose Assumption B1 holds. Then the bootstrap process $\hat{U}_n^\ast(w,x)$ with the local constant estimator converges weakly on $\Omega$ under $\mathbb{H}_0$, $\mathbb{H}_1$, or $\mathbb{H}_{1n}$: 
$$\sqrt{nh} \hat{U}_n^\ast(w,x) \Longrightarrow_\ast U_{\infty,1}(w,x), $$ 
where $\Longrightarrow_\ast$ denotes weak convergence under the bootstrap law. 

(ii) Suppose Assumption B1 holds for the local constant estimator or Assumption B1' holds for the local linear estimator. When $h=o(h_r)$, the bootstrap process $\hat{U}_n^\ast(w,x)$ converges weakly on $\Omega$ under $\mathbb{H}_0$, $\mathbb{H}_1$, or $\mathbb{H}_{1n}$: 
$$\sqrt{nh} \hat{U}_n^\ast(w,x) \Longrightarrow_\ast U_{\infty,0}(w,x). $$ 
\end{theorem}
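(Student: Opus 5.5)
By Lemma \ref{Lemma2}, it suffices to prove the weak convergence for the infeasible bootstrap process $U_n^\ast(w,x)$. Lemma \ref{Lemma2} gives $\sqrt{nh}\sup_{\Omega}|\hat U_n^\ast-U_n^\ast|=o_p(1)$. Hence any bounded Lipschitz functional of $\sqrt{nh}\hat U_n^\ast$ has conditional expectation within $o_p(1)$ of that of $\sqrt{nh}U_n^\ast$, and it is enough to establish conditional weak convergence of the latter. We write
$$\sqrt{nh}\,U_n^\ast(w,x)=\frac{1}{\sqrt n}\sum_{i=1}^n V_i\,\xi_{n,i}(w,x),$$
where $\xi_{n,i}(w,x)=\sqrt h\, e^{\mathrm i x'X_i}\bigl(e^{\mathrm i wW_i}K_h(R_i)\delta_i+\mathrm i w\kappa(w,X_i)(Y_i-D_i\tau(X_i))K_{h_r}(R_i)\delta_i\bigr)$ in case (i). In case (ii) only the first term is kept. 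Conditional on the data, this is a multiplier empirical process with independent but non-identically distributed summands (a triangular array). We will apply the conditional multiplier central limit theorem for triangular arrays (e.g.\ Theorem 2.9.6 / 2.11.x of \cite{vdv1996}, or the multiplier version of the Lindeberg--Feller functional CLT). This requires (a) conditional finite-dimensional convergence and (b) conditional asymptotic equicontinuity.

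\textbf{Covariance step.} The conditional covariance of $\sqrt{nh}U_n^\ast$ at $(w_1,x_1),(w_2,x_2)$ is $n^{-1}\sum_i\xi_{n,i}(w_1,x_1)\bar\xi_{n,i}(w_2,x_2)$. We first show that it converges in probability to $\mathcal K_1$ (resp.\ $\mathcal K_0$). Its expectation is computed by the change of variables $R=hv$ and a first-order expansion under B4, which produces the terms $R_K\int e^{\mathrm i(x_1-x_2)'x}\varphi_{W|XR}(w_1-w_2|x,0)f_{XR}(x,0)dx$. When $h/h_r\to c>0$, the cross terms and the estimation-effect square term reproduce exactly the extra pieces of $\mathcal K_1$ derived in the proof of Theorem \ref{thm1}. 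When $c=0$, these pieces vanish because of the factor $\sqrt{h/h_r}$. Its variance is $O(1/(nh))\to0$ by B5 and bounded kernels.

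The key structural point is that these moments involve only second moments of the centered-free summands. The mean shift $\Gamma$ or $\Lambda$ under $\mathbb H_1$ and $\mathbb H_{1n}$ therefore does not enter, because the multipliers have mean zero. The only possible contribution would be $n^{-1}\sum_i\mathbb E\xi_{n,i}\,\mathbb E\bar\xi_{n,i}=O(h)\to0$. This is why the result holds under all three hypotheses.

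Finite-dimensional convergence then follows from the conditional Lindeberg condition. Since the $V_i$ are bounded, this condition reduces to $n^{-1}\sum_i|\xi_{n,i}|^2\mathbf 1(|\xi_{n,i}|>\epsilon\sqrt n)\to_p0$. We obtain it from $|\xi_{n,i}|\lesssim h^{-1/2}(1+|Y_i|)$, $nh\to\infty$, and B5 via truncation of $Y$.

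\textbf{Equicontinuity step, the main obstacle.} The index set $\Omega$ is compact. The maps $(w,x)\mapsto\xi_{n,i}(w,x)$ are Lipschitz with constant bounded by $\sqrt h\,K_h(R_i)(1+|W_i|+|X_i|)(1+|Y_i|)$ plus a similar term. Here we use that $\kappa$ is Lipschitz in $w$, since it is a ratio of characteristic functions with bounded denominator by B3, and that $|\partial_w\varphi|\le\mathbb E|W|$. We then use a bracketing/Lipschitz-in-parameter entropy bound (Theorem 2.7.11 of \cite{vdv1996}). The envelope has $n^{-1}\sum_i F_{n,i}^2=O_p(1)$ by B5. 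Combining these with the triangular-array multiplier inequality gives conditional asymptotic tightness in probability. Care is needed because $W$ involves $\tau(X)$, so bounding $\mathbb E[W^2K_h(R)]$ uses boundedness of $\tau$ on compact $\mathcal X$ from B3 and B4.

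Finally, we identify the limit as the Gaussian process with covariance $\mathcal K_1$ (resp.\ $\mathcal K_0$), that is, $U_{\infty,1}$ (resp.\ $U_{\infty,0}$). Combining this with Lemma \ref{Lemma2} gives $\sqrt{nh}\hat U_n^\ast\Longrightarrow_\ast U_{\infty,j}$ under $\mathbb H_0$, $\mathbb H_1$ and $\mathbb H_{1n}$.
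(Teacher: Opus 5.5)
Your proof is correct. It rests on the same mechanism as the paper's: mean-zero multipliers together with the $\sqrt h$ scaling make the discontinuity signal invisible to the bootstrap. The bookkeeping differs, though.

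The paper proceeds in three steps:
\begin{enumerate}
\item It splits each summand as $f_{1i}=g_i+g_{1i}$, where $g_i=\tfrac12 e^{\mathrm{i}x'X_i}f_{R|X}(0|X_i)\Delta_\varphi(w,X_i)$ carries the signal.
\item It disposes of $\sqrt{h/n}\sum_i V_i g_i$ as a separate additive term. Under $\mathbb{H}_1$ this term is $\sqrt h\cdot O_p(1)$ because $g_i$ is bounded; under $\mathbb{H}_{1n}$ it is $o_p(1)$ by a ULLN using $\mathbb{E}V_i=0$.
\item It then simply invokes ``the multiplier central limit theorem'' for $\sqrt{h/n}\sum_i V_i g_{1i}$, reusing the covariance from the proof of Theorem~\ref{thm1}.
\end{enumerate}

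You never split off the signal. It appears only as the $O(h)$ gap $n^{-1}\sum_i\mathbb{E}\xi_{n,i}\mathbb{E}\bar\xi_{n,i}$ between the uncentered conditional second moment and the centered covariance. So a single covariance computation covers $\mathbb{H}_0$, $\mathbb{H}_1$ and $\mathbb{H}_{1n}$. You also explicitly verify the conditions of a triangular-array conditional multiplier FCLT: conditional covariance convergence, conditional Lindeberg, and conditional equicontinuity via a Lipschitz-in-parameter entropy bound.

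That verification is exactly what the paper omits. Its summands depend on $n$ through $h$ and $h_r$, so the fixed-class multiplier theorem (Theorem 2.9.6 of van der Vaart and Wellner) does not apply verbatim, and tightness is never checked there. The price of your route is that it exposes regularity the paper uses only tacitly. You need bounded conditional first and second moments of $W$ and $Y$ near the cutoff, for the $w$-Lipschitz constant of $\kappa$, for the envelope, and for Lindeberg. This is more than the literal $\mathbb{E}[Y^2]<\infty$ of B5.

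Two small corrections.
\begin{itemize}
\item In case (i), the variance of $n^{-1}\sum_i\xi_{n,i}\bar\xi_{n,i}$ is not $O(1/(nh))$ ``by B5''. Bounding it by $n^{-1}\mathbb{E}|\xi_{n,i}|^4$ brings in $h^2h_r^{-3}\mathbb{E}[(1+|Y|)^4\mid R\approx 0]$, i.e.\ fourth moments. Use instead a truncated weak law for triangular arrays, with the same truncation you already use for Lindeberg. In case (ii) the summands are $O(h^{-1/2})$, and your bound holds as written.
\item Under a fixed alternative, the one-sided change of variables yields $\tfrac{R_K}{2}\int e^{\mathrm{i}(x_1-x_2)'x}\bigl[\varphi_{W|XR}(w_1-w_2|x,0^+)+\varphi_{W|XR}(w_1-w_2|x,0^-)\bigr]f_{XR}(x,0)\,dx$, not a term in $\varphi_{W|XR}(\cdot|x,0)$. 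So under $\mathbb{H}_1$, $\mathcal{K}_1$ must be read with averaged one-sided limits. The paper glosses over the same point.
\end{itemize}
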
  

The bootstrap validity established above is guaranteed by the conditional multiplier central limit theorem for $U$-processes. 
An interesting feature of Theorem~\ref{thm4} is that the bootstrap process converges weakly to the same centered Gaussian process under $\mathbb{H}_0$, $\mathbb{H}_1$, and $\mathbb{H}_{1n}$. 
This unification is uncommon in conventional settings, where the bootstrap process under the alternative typically converges to a different limit, although still a centered Gaussian process. 
The distinctive behavior here follows from the nonparametric convergence rate $1/\sqrt{nh}$. 
Under $\mathbb{H}_1$, the feasible $U$-process $\hat{U}_n(w,x)$ contains a nonzero deterministic component $\Gamma(w,x)$, which diverges when scaled by $\sqrt{nh}$. 
However, the key insight is that the multiplier bootstrap $U$-process perturbs each summand by a mean-zero multiplier $V_i$; the signal $\Gamma(w,x)$ resides in the empirical mean of the summands, and because $\mathbb{E}[V_i]=0$ it does not enter the bootstrap mean. 
Formally, conditional on the original sample, the expectation of the bootstrap process $\hat{U}_n^\ast(w,x)$ is approximately zero regardless of whether $\mathbb{H}_0$ or $\mathbb{H}_1$ holds, because $V_i$ is independent of the sample path with zero mean. 
Therefore, the signal $\Gamma(w,x)$ does not enter its mean. The component of the bootstrap process that carries the signal is of order $\sqrt{h}$ (and thus asymptotically negligible). 
The detailed calculation in the Appendix shows that the bootstrap limit is governed by the same functional form as the limiting null process, with the covariance kernel $\mathcal{K}_j$ of Theorem~\ref{thm1}. 
Consequently, $\sqrt{nh}\,\hat{U}_n^\ast(w,x)$ converges to the same centered Gaussian limit under $\mathbb{H}_1$ as under $\mathbb{H}_0$. 

A practical consequence is that the bootstrap critical values remain stochastically bounded regardless of which hypothesis holds, while the test statistics diverge at the rate $\sqrt{nh}$ under $\mathbb{H}_1$, ensuring that the rejection probability of the proposed test approaches 1 against any fixed alternative as $n$ increases. 
This phenomenon highlights a distinct feature of nonparametric inference in RD designs: the local identification at the cutoff forces a slow convergence rate, which in turn produces a multiplier bootstrap procedure that is uniformly valid across the null, the alternative, and the local alternatives.

\section{Numerical Study} \label{sec:simulation} 
In this section, Monte Carlo experiments are conducted to evaluate the finite-sample performance of the proposed test. 
We consider a set of data-generating processes (DGPs) adapted from \cite{Hsu2019, Hsu2021}.
In all DGPs, the running variable $R$, covariate $X$, and error term $\epsilon$ are independently generated from the following distributions: 
$$ R \sim 2\text{Beta}(2,2)-1, \quad\quad X \sim \text{Unif}[0,1], \quad\quad \epsilon \sim N(0,1). $$
For each DGP, the experiments are repeated 1000 times. 
We follow the finite-sample guidance on bandwidth selection in \cite{Calonico2014} and \cite{Calonico2019}, which provide MSE-optimal bandwidths for the RD estimator. 
For the estimating bandwidths $h_r$ and $h_x$, we start with the MSE-optimal bandwidth $\hat{h}_{\text{MSE}}$ obtained from the procedure of \cite{Calonico2014} and apply under-smoothing: we set $h_r = h_x = \hat{h}_{\text{MSE}} \times n^{1/5-1/k_1}$ with $k_1 \in [4, 4.5]$, where larger $k_1$ produces a larger bandwidth. 
For the testing bandwidth $h$, we set $h = k_2 h_r$ with $k_2 \in [0.9, 1.1]$. When $k_2 = 1$, the estimation effect and the main $U$-process contribute equally to the limiting distribution; when $k_2 < 1$, the estimation effect is partially attenuated relative to the main term. 
For the configurations used in our simulations, the effective sample size $nh$ ranges from approximately 100 to 700. 
The triangular kernel function $K(u) = (1-|u|) \mathbf{1}(|u|\le 1)$ is used for both the testing and estimating procedures, following the recommendations in \cite{Imbens2012a}, \cite{Calonico2014}, \cite{Calonico2019} and \cite{Hsu2019}. 
A total of 1000 bootstrap simulations are conducted within each repetition to compute the critical values. 

The test statistics are calculated using the $U$-process, which incorporates the estimation effect of the local constant estimator. 
The KS statistic is calculated over a grid of sample points. 
Additional results using the local linear estimator are reported in the Appendix. 
The outcome $Y$ and binary treatment $D$ are generated according to DGP-specific mechanisms listed below:

DGP1: \textit{Homogeneous Zero Treatment Effects}. 
\begin{align*}
    Y &= -0.555 - 0.553X + 0.581R + 0.060XR - 0.058R^2 + 1.074X^2 + 0.1\epsilon. \\
    D &= \mathbf{1}(R\ge0) \mathbf{1}(0.596 - 2.103X + 0.128R + 0.352XR + 0.013R^2 + 2.454X^2 + \epsilon > 0).
\end{align*}
The treatment effect under DGP1 is $\tau(x)=0$. 

DGP2: \textit{Homogeneous Nonzero Constant Treatment Effects}.
\begin{align*}
    Y &= \begin{cases}
         -0.373 + 0.545R - 0.056R^2 + 0.1\epsilon, & R\ge0, \\
         -0.531 + 0.556R - 0.192R^2 + 0.1\epsilon, & R<0.        \end{cases} \\
    D &= \mathbf{1}(R\ge0) \mathbf{1}(0.331 + 0.277R + 0.049R^2 + \epsilon > 0).
\end{align*}
The nonzero constant treatment effect in DGP2 is 
$$\tau(x) = \frac{\mu(x,0^+)-\mu(x,0^-)}{p(x,0^+)-p(x,0^-)}= \frac{0.158}{1-\Phi(-0.331)} \approx 0.25,$$ 
where $\Phi(\cdot)$ denotes the CDF of the standard normal distribution. 

DGP3: \textit{Homogeneous Functional Treatment Effects}. 
\begin{align*}
    Y &= \begin{cases}
         -0.921 - 4X + 0.584R - 0.054R^2 + 5X^2 + 0.1\epsilon, & R\ge0, \\
         -0.705 + 0.264X + 0.580R + 0.191R^2 + 0.1\epsilon, & R<0.          \end{cases} \\
    D &= \mathbf{1}(R\ge0) \mathbf{1}(0.331 + 0.277R + 0.049R^2 + \epsilon > 0).
\end{align*}

The functional treatment effect in DGP3 is 
$$\tau(x)=\frac{5x^2-4.264x-0.216}{1-\Phi(-0.331)}.$$ 

DGP4: \textit{Heterogeneous Treatment Effects}: a 1:1 mixture of DGP1 (zero treatment effect) and DGP3 (functional treatment effect).

DGP5: \textit{Heterogeneous Treatment Effects}: a 1:1 mixture of DGP2 (constant treatment effect) and DGP3 (functional treatment effect). 

DGP6: \textit{Heterogeneous Treatment Effects}: a 1:1:1 mixture of DGP1 (zero treatment effect), DGP2 (nonzero constant treatment effect) and DGP3 (functional treatment effect). 

DGPs 1--3, adapted from \cite{Hsu2019, Hsu2021}, represent scenarios with zero, nonzero constant, and functional treatment effects, respectively, all in the absence of unobserved heterogeneity. 
These DGPs are used to evaluate the size of the proposed test under the null hypothesis of no unobserved heterogeneity. 
DGPs 4--6 are constructed as mixtures of DGPs 1--3 to introduce unobserved heterogeneity and assess the test's power under complex scenarios. 
Specifically, for each individual $i$, a latent group membership variable $G_i$ is drawn independently of $(R_i, X_i, \epsilon_i)$, with the appropriate probabilities to match the stated mixture proportions (e.g., $G_i \in \{1,3\}$ each with probability $1/2$ for DGP4). 
For a given value of $G_i$, the pair $(Y_i, D_i)$ is generated from the corresponding component DGP using a common draw of the base variables $(R_i, X_i, \epsilon_i)$. In all mixture DGPs, the treatment effect at the cutoff is $Y_{1i} - Y_{0i} = \tau_{G_i}(X_i)$, where $\tau_g(X_i)$ denotes the treatment effect under component DGP $g$. Because $G_i$ is independent of the observables, the conditional local average treatment effect $\tau(X_i) = \mathbb{E}[Y_{1i} - Y_{0i} \mid X_i, R_i=0]$ is the weighted average of the component-specific conditional treatment effects. 
Nevertheless, conditional on $X_i$, the individual-level treatment effect $Y_{1i} - Y_{0i}$ still varies with the unobserved $G_i$, thus violating $\mathbb{H}_0$. Note that each component $g$ specifies $g(d,0,x,\epsilon) = m_g(d,x) + 0.1\epsilon$, which is additively separable in $\epsilon$ with a monotone index $\nu(x,\epsilon) = \epsilon$. 
Across the mixture, however, the group indicator $G$ acts as an additional latent factor that shifts both the outcome level $m_G(0,x)$ and the treatment effect $\tau_G(x)$, so the mixture as a whole need not satisfy the single-index restriction in Assumption A4. 
As discussed in Section~\ref{subsec:hypothesis}, Assumption A4 is used to characterize the null hypothesis and to delineate the class of alternatives against which the test has power; size control under $\mathbb{H}_0$ does not require it to hold under the alternative, and the mixture DGPs are therefore used purely as alternatives that violate the continuity restriction on the imputed outcome $W$.
DGP4, which mixes zero and functional treatment effects, creates the sharpest contrast between components and is expected to yield the highest power; DGP5, which mixes constant and functional effects, represents a subtler form of heterogeneity; DGP6 combines all three components, further diluting the signal and posing the greatest challenge to the test.
These mixture scenarios are challenging but can provide insights for evaluating the power of the proposed test statistics.
The numerical results are presented in Tables \ref{tab:DGP_null_lc} and \ref{tab:DGP_alt_lc}.

\begin{table}[H]
    \centering
    \caption{Rejection rates under the null hypothesis (DGPs 1--3)}
    \label{tab:DGP_null_lc}
    \setlength{\extrarowheight}{-1pt}
    \begin{adjustbox}{max width=0.99\textwidth, max height=\textheight}
    \begin{tabular}{@{}cccccccccccccccccc@{}}
    \toprule
    \multirow{2}{*}{DGP} & \multirow{2}{*}{$k_1$} & $k_2$ & \multicolumn{3}{c}{0.9} & \multicolumn{3}{c}{0.95} & \multicolumn{3}{c}{1} & \multicolumn{3}{c}{1.05} & \multicolumn{3}{c}{1.1} \\
    \cmidrule{4-18}
     & & $n$ & 1\% & 5\% & 10\% & 1\% & 5\% & 10\% & 1\% & 5\% & 10\% & 1\% & 5\% & 10\% & 1\% & 5\% & 10\% \\
    \midrule
    \multirow{12}{*}{1} & \multirow{4}{*}{4} & 500 & 0.004 & 0.031 & 0.057 & 0.004 & 0.032 & 0.056 & 0.005 & 0.032 & 0.059 & 0.005 & 0.032 & 0.062 & 0.004 & 0.036 & 0.065 \\
     &  & 1000 & 0.006 & 0.031 & 0.065 & 0.006 & 0.030 & 0.068 & 0.008 & 0.031 & 0.068 & 0.008 & 0.036 & 0.070 & 0.011 & 0.038 & 0.072 \\
     &  & 2000 & 0.011 & 0.048 & 0.089 & 0.011 & 0.051 & 0.091 & 0.013 & 0.055 & 0.093 & 0.013 & 0.053 & 0.095 & 0.012 & 0.055 & 0.097 \\
     &  & 4000 & 0.015 & 0.057 & 0.092 & 0.017 & 0.056 & 0.092 & 0.016 & 0.057 & 0.095 & 0.018 & 0.061 & 0.098 & 0.018 & 0.065 & 0.100 \\
    \cmidrule{2-18}
     & \multirow{4}{*}{4.25} & 500 & 0.005 & 0.029 & 0.061 & 0.004 & 0.031 & 0.064 & 0.005 & 0.033 & 0.066 & 0.004 & 0.036 & 0.068 & 0.006 & 0.037 & 0.068 \\
     &  & 1000 & 0.009 & 0.029 & 0.076 & 0.010 & 0.031 & 0.076 & 0.010 & 0.035 & 0.074 & 0.010 & 0.038 & 0.074 & 0.012 & 0.041 & 0.077 \\
     &  & 2000 & 0.011 & 0.055 & 0.097 & 0.013 & 0.054 & 0.097 & 0.012 & 0.053 & 0.095 & 0.012 & 0.052 & 0.097 & 0.012 & 0.054 & 0.104 \\
     &  & 4000 & 0.014 & 0.057 & 0.097 & 0.015 & 0.058 & 0.098 & 0.016 & 0.061 & 0.100 & 0.017 & 0.062 & 0.100 & 0.019 & 0.061 & 0.106 \\
    \cmidrule{2-18}
     & \multirow{4}{*}{4.5} & 500 & 0.004 & 0.032 & 0.064 & 0.005 & 0.031 & 0.064 & 0.006 & 0.036 & 0.067 & 0.006 & 0.037 & 0.069 & 0.006 & 0.039 & 0.070 \\
     &  & 1000 & 0.010 & 0.031 & 0.074 & 0.009 & 0.035 & 0.076 & 0.010 & 0.039 & 0.078 & 0.011 & 0.041 & 0.079 & 0.012 & 0.044 & 0.079 \\
     &  & 2000 & 0.010 & 0.055 & 0.096 & 0.011 & 0.057 & 0.099 & 0.012 & 0.056 & 0.101 & 0.013 & 0.054 & 0.104 & 0.015 & 0.055 & 0.108 \\
     &  & 4000 & 0.013 & 0.056 & 0.100 & 0.016 & 0.055 & 0.104 & 0.017 & 0.059 & 0.102 & 0.018 & 0.062 & 0.107 & 0.018 & 0.064 & 0.107 \\
    \midrule
    \multirow{12}{*}{2} & \multirow{4}{*}{4} & 500 & 0.004 & 0.023 & 0.046 & 0.005 & 0.025 & 0.047 & 0.005 & 0.023 & 0.047 & 0.007 & 0.024 & 0.051 & 0.007 & 0.025 & 0.051 \\
     &  & 1000 & 0.005 & 0.021 & 0.051 & 0.005 & 0.021 & 0.052 & 0.005 & 0.023 & 0.051 & 0.007 & 0.024 & 0.050 & 0.008 & 0.025 & 0.053 \\
     &  & 2000 & 0.013 & 0.042 & 0.096 & 0.012 & 0.042 & 0.095 & 0.011 & 0.043 & 0.093 & 0.012 & 0.044 & 0.088 & 0.011 & 0.045 & 0.089 \\
     &  & 4000 & 0.017 & 0.045 & 0.089 & 0.017 & 0.046 & 0.086 & 0.013 & 0.047 & 0.090 & 0.013 & 0.049 & 0.092 & 0.014 & 0.051 & 0.093 \\
    \cmidrule{2-18}
     & \multirow{4}{*}{4.25} & 500 & 0.005 & 0.023 & 0.050 & 0.005 & 0.023 & 0.052 & 0.007 & 0.025 & 0.054 & 0.006 & 0.024 & 0.056 & 0.007 & 0.025 & 0.056 \\
     &  & 1000 & 0.006 & 0.025 & 0.053 & 0.007 & 0.026 & 0.058 & 0.009 & 0.026 & 0.058 & 0.009 & 0.027 & 0.057 & 0.010 & 0.028 & 0.057 \\
     &  & 2000 & 0.010 & 0.046 & 0.097 & 0.011 & 0.044 & 0.092 & 0.011 & 0.048 & 0.090 & 0.011 & 0.048 & 0.089 & 0.011 & 0.049 & 0.089 \\
     &  & 4000 & 0.013 & 0.048 & 0.092 & 0.013 & 0.050 & 0.090 & 0.012 & 0.052 & 0.094 & 0.014 & 0.054 & 0.096 & 0.014 & 0.057 & 0.100 \\
    \cmidrule{2-18}
     & \multirow{4}{*}{4.5} & 500 & 0.006 & 0.026 & 0.054 & 0.006 & 0.028 & 0.054 & 0.006 & 0.027 & 0.054 & 0.006 & 0.027 & 0.057 & 0.006 & 0.025 & 0.058 \\
     &  & 1000 & 0.008 & 0.026 & 0.059 & 0.009 & 0.026 & 0.058 & 0.008 & 0.029 & 0.057 & 0.009 & 0.029 & 0.060 & 0.010 & 0.029 & 0.060 \\
     &  & 2000 & 0.009 & 0.046 & 0.095 & 0.010 & 0.050 & 0.096 & 0.010 & 0.051 & 0.097 & 0.010 & 0.048 & 0.096 & 0.010 & 0.048 & 0.097 \\
     &  & 4000 & 0.015 & 0.050 & 0.093 & 0.015 & 0.053 & 0.090 & 0.014 & 0.054 & 0.092 & 0.013 & 0.055 & 0.098 & 0.014 & 0.058 & 0.099 \\
    \midrule
    \multirow{12}{*}{3} & \multirow{4}{*}{4} & 500 & 0.003 & 0.019 & 0.046 & 0.003 & 0.021 & 0.046 & 0.003 & 0.023 & 0.049 & 0.004 & 0.024 & 0.052 & 0.005 & 0.023 & 0.050 \\
     &  & 1000 & 0.007 & 0.027 & 0.052 & 0.007 & 0.026 & 0.054 & 0.009 & 0.026 & 0.055 & 0.009 & 0.028 & 0.056 & 0.008 & 0.029 & 0.057 \\
     &  & 2000 & 0.009 & 0.038 & 0.082 & 0.009 & 0.038 & 0.085 & 0.009 & 0.041 & 0.085 & 0.008 & 0.041 & 0.085 & 0.008 & 0.041 & 0.085 \\
     &  & 4000 & 0.015 & 0.045 & 0.087 & 0.014 & 0.044 & 0.083 & 0.016 & 0.046 & 0.084 & 0.016 & 0.047 & 0.086 & 0.017 & 0.047 & 0.089 \\
    \cmidrule{2-18}
     & \multirow{4}{*}{4.25} & 500 & 0.005 & 0.022 & 0.050 & 0.004 & 0.024 & 0.053 & 0.004 & 0.025 & 0.053 & 0.005 & 0.027 & 0.052 & 0.005 & 0.029 & 0.054 \\
     &  & 1000 & 0.008 & 0.028 & 0.056 & 0.008 & 0.029 & 0.059 & 0.008 & 0.029 & 0.057 & 0.009 & 0.029 & 0.060 & 0.009 & 0.031 & 0.060 \\
     &  & 2000 & 0.008 & 0.047 & 0.087 & 0.008 & 0.047 & 0.089 & 0.009 & 0.049 & 0.088 & 0.009 & 0.047 & 0.086 & 0.010 & 0.048 & 0.086 \\
     &  & 4000 & 0.016 & 0.050 & 0.086 & 0.015 & 0.048 & 0.091 & 0.015 & 0.047 & 0.094 & 0.015 & 0.049 & 0.096 & 0.015 & 0.052 & 0.097 \\
    \cmidrule{2-18}
     & \multirow{4}{*}{4.5} & 500 & 0.004 & 0.026 & 0.050 & 0.004 & 0.029 & 0.053 & 0.004 & 0.032 & 0.054 & 0.004 & 0.030 & 0.057 & 0.005 & 0.031 & 0.056 \\
     &  & 1000 & 0.008 & 0.026 & 0.063 & 0.008 & 0.028 & 0.060 & 0.008 & 0.031 & 0.064 & 0.009 & 0.029 & 0.062 & 0.010 & 0.029 & 0.067 \\
     &  & 2000 & 0.008 & 0.052 & 0.097 & 0.009 & 0.053 & 0.096 & 0.009 & 0.050 & 0.097 & 0.011 & 0.049 & 0.098 & 0.013 & 0.049 & 0.097 \\
     &  & 4000 & 0.014 & 0.048 & 0.098 & 0.012 & 0.049 & 0.096 & 0.013 & 0.048 & 0.096 & 0.013 & 0.053 & 0.097 & 0.015 & 0.057 & 0.098 \\
    \bottomrule
    \end{tabular}
    \end{adjustbox}
\end{table}

Table \ref{tab:DGP_null_lc} reports the rejection rates for DGPs 1--3 under the null hypothesis $\mathbb{H}_0$. 
The results indicate that the proposed test maintains an appropriate size across varying values of the tuning parameters $k_1$ for the estimating bandwidth and $k_2$ for the testing bandwidth. 
The rejection rates approach the nominal levels as the sample size grows under all DGPs, demonstrating size control and robustness of the proposed test. 
As is typical for nonparametric tests based on kernel methods, the performance is affected by the choice of bandwidth.

Table \ref{tab:DGP_alt_lc} reports the rejection rates for DGPs 4--6 under the alternative hypothesis $\mathbb{H}_1$. 
In all cases, the rejection rates increase with sample size. The rejection rate is relatively insensitive to the choice of $k_2$, but depends on $k_1$. 
As $k_1$ increases, the rejection rates increase. 
As established in Lemma \ref{Lemma1}, $k_1$ determines the magnitude of the estimation effect, thereby affecting the performance of the test.

A comparison of empirical power across DGPs 4--6 shows that the rejection rates under DGP4 are consistently the highest given the same bandwidth setting, because it creates the sharpest contrast between components. 
DGP5 (constant versus functional) and DGP6 (three-way mixture) exhibit lower power, consistent with their more diluted heterogeneity signals, as anticipated in the DGP designs.
Overall, the results in Tables \ref{tab:DGP_null_lc}--\ref{tab:DGP_alt_lc} demonstrate that the proposed test maintains appropriate size under $\mathbb{H}_0$ and achieves increasing power under $\mathbb{H}_1$, which aligns closely with the theoretical properties in Section \ref{sec:asymptotic}.

\begin{table}[H]
    \centering
    \caption{Rejection rates under the alternative hypothesis (DGPs 4--6)}
    \label{tab:DGP_alt_lc}
    \setlength{\extrarowheight}{-1pt}
    \begin{adjustbox}{max width=0.99\textwidth, max height=\textheight}
    \begin{tabular}{@{}cccccccccccccccccc@{}}
    \toprule
    \multirow{2}{*}{DGP} & \multirow{2}{*}{$k_1$} & $k_2$ & \multicolumn{3}{c}{0.9} & \multicolumn{3}{c}{0.95} & \multicolumn{3}{c}{1} & \multicolumn{3}{c}{1.05} & \multicolumn{3}{c}{1.1} \\
    \cmidrule{4-18}
    & & $n$ & 1\% & 5\% & 10\% & 1\% & 5\% & 10\% & 1\% & 5\% & 10\% & 1\% & 5\% & 10\% & 1\% & 5\% & 10\% \\
    \midrule
    \multirow{12}{*}{4} & \multirow{4}{*}{4} & 500 & 0.002 & 0.025 & 0.067 & 0.001 & 0.028 & 0.069 & 0.002 & 0.033 & 0.072 & 0.002 & 0.039 & 0.077 & 0.002 & 0.041 & 0.085 \\
     &  & 1000 & 0.034 & 0.235 & 0.406 & 0.034 & 0.252 & 0.422 & 0.043 & 0.274 & 0.435 & 0.055 & 0.287 & 0.443 & 0.057 & 0.296 & 0.449 \\
     &  & 2000 & 0.398 & 0.748 & 0.830 & 0.429 & 0.762 & 0.833 & 0.445 & 0.767 & 0.838 & 0.457 & 0.769 & 0.842 & 0.471 & 0.767 & 0.845 \\
     &  & 4000 & 0.925 & 0.982 & 0.990 & 0.927 & 0.982 & 0.989 & 0.930 & 0.983 & 0.988 & 0.932 & 0.984 & 0.987 & 0.931 & 0.985 & 0.987 \\
    \cmidrule{2-18}
     & \multirow{4}{*}{4.25} & 500 & 0.002 & 0.034 & 0.087 & 0.002 & 0.038 & 0.095 & 0.002 & 0.043 & 0.103 & 0.003 & 0.048 & 0.112 & 0.004 & 0.057 & 0.124 \\
     &  & 1000 & 0.053 & 0.323 & 0.505 & 0.060 & 0.345 & 0.517 & 0.076 & 0.366 & 0.528 & 0.087 & 0.378 & 0.532 & 0.101 & 0.393 & 0.537 \\
     &  & 2000 & 0.525 & 0.821 & 0.888 & 0.562 & 0.820 & 0.899 & 0.571 & 0.823 & 0.899 & 0.591 & 0.828 & 0.906 & 0.607 & 0.833 & 0.906 \\
     &  & 4000 & 0.960 & 0.991 & 0.996 & 0.962 & 0.991 & 0.996 & 0.960 & 0.992 & 0.996 & 0.964 & 0.991 & 0.995 & 0.963 & 0.991 & 0.995 \\
    \cmidrule{2-18}
     & \multirow{4}{*}{4.5} & 500 & 0.002 & 0.040 & 0.120 & 0.003 & 0.046 & 0.131 & 0.003 & 0.056 & 0.147 & 0.004 & 0.066 & 0.155 & 0.005 & 0.071 & 0.173 \\
     &  & 1000 & 0.079 & 0.397 & 0.600 & 0.093 & 0.417 & 0.604 & 0.108 & 0.444 & 0.613 & 0.122 & 0.457 & 0.616 & 0.149 & 0.472 & 0.621 \\
     &  & 2000 & 0.646 & 0.868 & 0.929 & 0.662 & 0.875 & 0.933 & 0.675 & 0.879 & 0.936 & 0.688 & 0.887 & 0.939 & 0.698 & 0.889 & 0.939 \\
     &  & 4000 & 0.979 & 0.996 & 0.999 & 0.983 & 0.996 & 0.999 & 0.982 & 0.996 & 0.999 & 0.981 & 0.996 & 0.998 & 0.983 & 0.996 & 0.998 \\
    \midrule
    \multirow{12}{*}{5} & \multirow{4}{*}{4} & 500 & 0.001 & 0.020 & 0.050 & 0.001 & 0.020 & 0.055 & 0.001 & 0.023 & 0.059 & 0.003 & 0.027 & 0.063 & 0.003 & 0.027 & 0.068 \\
     &  & 1000 & 0.015 & 0.129 & 0.273 & 0.016 & 0.142 & 0.285 & 0.021 & 0.150 & 0.299 & 0.027 & 0.154 & 0.313 & 0.030 & 0.168 & 0.317 \\
     &  & 2000 & 0.228 & 0.585 & 0.727 & 0.244 & 0.597 & 0.735 & 0.263 & 0.608 & 0.735 & 0.280 & 0.615 & 0.740 & 0.294 & 0.630 & 0.740 \\
     &  & 4000 & 0.836 & 0.951 & 0.972 & 0.841 & 0.954 & 0.972 & 0.843 & 0.956 & 0.974 & 0.846 & 0.956 & 0.978 & 0.852 & 0.956 & 0.978 \\
    \cmidrule{2-18}
     & \multirow{4}{*}{4.25} & 500 & 0.001 & 0.025 & 0.063 & 0.002 & 0.028 & 0.073 & 0.003 & 0.030 & 0.079 & 0.004 & 0.031 & 0.086 & 0.004 & 0.034 & 0.094 \\
     &  & 1000 & 0.024 & 0.185 & 0.354 & 0.028 & 0.200 & 0.372 & 0.035 & 0.211 & 0.382 & 0.044 & 0.222 & 0.391 & 0.048 & 0.243 & 0.402 \\
     &  & 2000 & 0.347 & 0.685 & 0.815 & 0.361 & 0.701 & 0.825 & 0.382 & 0.709 & 0.827 & 0.397 & 0.712 & 0.826 & 0.414 & 0.714 & 0.833 \\
     &  & 4000 & 0.900 & 0.971 & 0.989 & 0.903 & 0.972 & 0.990 & 0.909 & 0.974 & 0.991 & 0.910 & 0.976 & 0.989 & 0.911 & 0.976 & 0.989 \\
    \cmidrule{2-18}
     & \multirow{4}{*}{4.5} & 500 & 0.003 & 0.029 & 0.083 & 0.003 & 0.036 & 0.095 & 0.003 & 0.040 & 0.104 & 0.003 & 0.041 & 0.114 & 0.004 & 0.050 & 0.118 \\
     &  & 1000 & 0.034 & 0.250 & 0.439 & 0.039 & 0.260 & 0.468 & 0.047 & 0.275 & 0.472 & 0.061 & 0.297 & 0.478 & 0.067 & 0.316 & 0.490 \\
     &  & 2000 & 0.456 & 0.771 & 0.866 & 0.473 & 0.779 & 0.874 & 0.486 & 0.791 & 0.881 & 0.503 & 0.800 & 0.884 & 0.512 & 0.808 & 0.885 \\
     &  & 4000 & 0.939 & 0.987 & 0.994 & 0.941 & 0.990 & 0.994 & 0.945 & 0.991 & 0.995 & 0.946 & 0.990 & 0.996 & 0.947 & 0.989 & 0.996 \\
    \midrule
    \multirow{12}{*}{6} & \multirow{4}{*}{4} & 500 & 0.001 & 0.016 & 0.053 & 0.001 & 0.019 & 0.054 & 0.001 & 0.022 & 0.060 & 0.001 & 0.024 & 0.069 & 0.002 & 0.029 & 0.072 \\
     &  & 1000 & 0.016 & 0.105 & 0.184 & 0.018 & 0.110 & 0.197 & 0.020 & 0.117 & 0.214 & 0.030 & 0.118 & 0.224 & 0.036 & 0.132 & 0.237 \\
     &  & 2000 & 0.126 & 0.447 & 0.641 & 0.137 & 0.461 & 0.658 & 0.153 & 0.464 & 0.676 & 0.169 & 0.478 & 0.692 & 0.185 & 0.494 & 0.703 \\
     &  & 4000 & 0.693 & 0.918 & 0.958 & 0.715 & 0.925 & 0.962 & 0.725 & 0.932 & 0.968 & 0.737 & 0.938 & 0.969 & 0.744 & 0.940 & 0.971 \\
    \cmidrule{2-18}
     & \multirow{4}{*}{4.25} & 500 & 0.001 & 0.021 & 0.069 & 0.002 & 0.024 & 0.076 & 0.002 & 0.026 & 0.081 & 0.003 & 0.032 & 0.087 & 0.003 & 0.036 & 0.093 \\
     &  & 1000 & 0.017 & 0.128 & 0.250 & 0.026 & 0.136 & 0.268 & 0.031 & 0.141 & 0.278 & 0.040 & 0.155 & 0.284 & 0.046 & 0.167 & 0.297 \\
     &  & 2000 & 0.170 & 0.554 & 0.738 & 0.193 & 0.566 & 0.754 & 0.214 & 0.578 & 0.767 & 0.228 & 0.597 & 0.775 & 0.237 & 0.612 & 0.785 \\
     &  & 4000 & 0.806 & 0.952 & 0.974 & 0.825 & 0.958 & 0.979 & 0.832 & 0.963 & 0.979 & 0.842 & 0.966 & 0.980 & 0.848 & 0.969 & 0.982 \\
    \cmidrule{2-18}
     & \multirow{4}{*}{4.5} & 500 & 0.002 & 0.028 & 0.082 & 0.003 & 0.033 & 0.084 & 0.003 & 0.037 & 0.092 & 0.003 & 0.043 & 0.099 & 0.005 & 0.053 & 0.113 \\
     &  & 1000 & 0.023 & 0.150 & 0.319 & 0.034 & 0.160 & 0.329 & 0.045 & 0.178 & 0.337 & 0.051 & 0.194 & 0.347 & 0.059 & 0.206 & 0.364 \\
     &  & 2000 & 0.240 & 0.651 & 0.825 & 0.261 & 0.669 & 0.836 & 0.268 & 0.682 & 0.848 & 0.290 & 0.696 & 0.855 & 0.297 & 0.710 & 0.857 \\
     &  & 4000 & 0.877 & 0.968 & 0.991 & 0.880 & 0.971 & 0.992 & 0.890 & 0.974 & 0.993 & 0.900 & 0.976 & 0.995 & 0.903 & 0.977 & 0.996 \\
    \bottomrule
    \end{tabular}
    \end{adjustbox}
\end{table}

\section{Empirical Applications} \label{sec:empirical}
In addition to the numerical evidence in Section \ref{sec:simulation}, we illustrate the performance of the proposed test in empirical applications. 
The two empirical applications considered here, Islamic election and high-school attendance, represent canonical sharp and fuzzy RD designs in political and education economics. 
Applying our testing procedure to these well-studied datasets allows us to assess whether treatment effect heterogeneity is driven by unobserved characteristics. In each application, we implement the test in two steps. 
We first examine unconditional homogeneity at the cutoff and then test the conditional restriction after introducing the key covariates from the original study. 
Comparing the two results reveals whether the observed heterogeneity is attributable to covariates or to unobserved factors. 
Throughout this section, rejection and non-rejection are interpreted in terms of the identifying and regularity conditions that the test maintains.

\subsection{The treatment effect of the election of an Islamic party} \label{subsec:Meyersson} 
In this subsection, we revisit \cite{meyersson2014}, which studies the consequences of the election of an Islamic party on high-school completion via an RD design.
\cite{meyersson2014} finds that the election of the Islamic party in Turkey causally increased female secular high school completion and reveals an underlying mechanism. 
The paper further demonstrates treatment effect heterogeneity across subpopulations defined by proxy variables for poverty and Islamic conservatism and shows that the treatment effect is larger in more religiously conservative communities.

In this application, the running variable $R$ is the Islamic win margin---the vote share difference between the largest Islamic party and the largest secular party in the 1994 municipal elections---with a cutoff at zero; the binary treatment $D$ denotes assignment to an Islamic mayor. 
The RD design in this case is sharp since $D=1$ when $R\ge0$. The primary outcome $Y$ is the share of women aged 15--20 who had completed a secular high school education in 2000. 
A vector of covariates $X = (X_1, X_2, X_3)'$ includes the Islamic vote share in the 1994 election ($X_1$), the illiteracy rate in 2000 ($X_2$), and the proportion of religious buildings in 1990--2000 ($X_3$).
$X_1$ is a proxy for Islamic conservatism of the municipality; $X_2$ is a proxy variable for poverty; and $X_3$ measures the religious piety of the municipality. 

We use a dataset containing election and census information for Turkish municipalities from \cite{meyersson2014}, comprising 2630 observations. 
The proposed testing procedure is implemented with both local constant and local linear estimators. 
The estimating bandwidth $h_r$ is selected based on \cite{Calonico2014} with an undersmoothing factor $n^{1/5-1/k_1}$, while the testing bandwidth $h = h_r \times n^{1/5-1/k_2}$ is undersmoothed. 
The KS statistic is computed as the supremum over a grid of $L=1000$ or $2000$ points drawn from a normal distribution. 
The number of bootstrap replications and the kernel function are the same as those in Section \ref{sec:simulation}. 
The test results are presented in Table \ref{tab:Meyersson2014}.

\begin{table}[H]
  \centering
  \caption{The bootstrap $p$-values of the proposed test}
  \label{tab:Meyersson2014}
  \setlength{\extrarowheight}{-1pt}
  \begin{adjustbox}{max width=\textwidth, max height=\textheight}
  \begin{tabular}{ccccccccccccc}
  \toprule & \multicolumn{6}{c}{Local Constant Estimator} & \multicolumn{6}{c}{Local Linear Estimator} \\
  \cmidrule(lr){2-7} \cmidrule(lr){8-13} 
   & \multicolumn{3}{c}{$L=1000$} & \multicolumn{3}{c}{$L=2000$} & \multicolumn{3}{c}{$L=1000$} & \multicolumn{3}{c}{$L=2000$} \\
  \cmidrule(lr){2-4} \cmidrule(lr){5-7} \cmidrule(lr){8-10} \cmidrule(lr){11-13}
  $k_1 \backslash k_2$ & 4.25 & 4.50 & 4.75 & 4.25 & 4.50 & 4.75 & 4.25 & 4.50 & 4.75 & 4.25 & 4.50 & 4.75 \\
  \midrule & \multicolumn{12}{l}{\textit{Panel A: Testing without conditioning on covariates}} \\
  4.00 & 0.001 & 0.000 & 0.000 & 0.001 & 0.000 & 0.000 & 0.001 & 0.000 & 0.000 & 0.001 & 0.000 & 0.000 \\
  4.25 & 0.000 & 0.000 & 0.000 & 0.000 & 0.000 & 0.000 & 0.000 & 0.000 & 0.000 & 0.000 & 0.000 & 0.000 \\
  4.50 & 0.000 & 0.000 & 0.000 & 0.000 & 0.000 & 0.000 & 0.000 & 0.000 & 0.000 & 0.000 & 0.000 & 0.000 \\
  \midrule & \multicolumn{12}{l}{\textit{Panel B: Testing conditional on $X_1$}} \\
  4.00 & 0.000 & 0.000 & 0.000 & 0.000 & 0.000 & 0.000 & 0.000 & 0.000 & 0.000 & 0.000 & 0.000 & 0.000 \\
  4.25 & 0.000 & 0.000 & 0.000 & 0.000 & 0.000 & 0.000 & 0.000 & 0.000 & 0.000 & 0.000 & 0.000 & 0.000 \\
  4.50 & 0.000 & 0.000 & 0.000 & 0.000 & 0.000 & 0.000 & 0.000 & 0.000 & 0.000 & 0.000 & 0.000 & 0.000 \\
  \midrule & \multicolumn{12}{l}{\textit{Panel C: Testing conditional on the covariate vector $X$}} \\
  4.00 & 0.012 & 0.003 & 0.002 & 0.012 & 0.003 & 0.002 & 0.012 & 0.003 & 0.002 & 0.012 & 0.003 & 0.002 \\
  4.25 & 0.002 & 0.001 & 0.001 & 0.002 & 0.001 & 0.001 & 0.002 & 0.001 & 0.001 & 0.002 & 0.001 & 0.001 \\
  4.50 & 0.001 & 0.001 & 0.000 & 0.001 & 0.001 & 0.001 & 0.001 & 0.001 & 0.001 & 0.001 & 0.001 & 0.001 \\
  \bottomrule
  \end{tabular}
  \end{adjustbox}
\end{table}

Table \ref{tab:Meyersson2014} summarizes the test results. 
At the 1\% significance level, Panel A rejects unconditional homogeneity, while Panel B rejects the conditional null given the Islamic vote share. 
Thus, detectable treatment effect heterogeneity remains after conditioning on $X_1$.
This conclusion is stable across the reported tuning-parameter choices and complements the heterogeneity analysis in \cite{meyersson2014}. 
Notably, the null is rejected even in Panel C after controlling for the full covariate vector, indicating that those covariates alone do not fully account for the heterogeneous response to the election of an Islamic party. 
Because the conditioning sets differ in dimension across panels, the corresponding statistics are not on a common scale and cannot be used to rank how much of the heterogeneity each covariate set accounts for. 
The results establish that unobserved heterogeneity remains under every conditioning set considered, not that the covariates contribute nothing. 
Other characteristics appear to play a role in determining female secular high school completion rates, suggesting a more complex underlying mechanism. 
We also examine specifications conditioning on each covariate individually (results available upon request): controlling for $X_2$ (illiteracy rate) or $X_3$ (religious buildings) alone yields qualitative conclusions similar to those in Panel B, with $p$-values near zero, confirming that no single observable dimension plausibly exhausts the treatment effect heterogeneity in this setting.

\subsection{The treatment effect of attending a better high school} \label{subsec:popeleches} 
In this subsection, we revisit \cite{Pop-Eleches2013}, which studies Romanian administrative and survey data. 
In this study, students can be admitted to high school if their transition score exceeds the admission cutoff.
\cite{Pop-Eleches2013} concludes that attending a more selective high school improves average academic performance on the Baccalaureate exam. 

Our focus is on whether the treatment effect of attending a better school varies across students and whether it remains heterogeneous after accounting for peer quality. 
In this case, the running variable $R$ is the distance between a student's score and the high school admission cutoff, and the binary treatment $D$ is attendance at the more selective high school with a higher cutoff in a given town. 
The covariate $X$ is peer quality (average score of the student's class), and the outcome $Y$ is the student's Baccalaureate exam grade upon graduation. 
This setting constitutes a fuzzy RD design because not all students whose scores exceed the admission cutoff choose to attend the more selective school. 
Following \cite{Pop-Eleches2013}, we restrict the sample to towns with only two schools to avoid the complexity of multiple cutoffs, resulting in 23507 observations. 
The test settings remain the same as those in Section \ref{subsec:Meyersson}, and results are reported in Table \ref{tab:popeleches2013}. 

\begin{table}[H]
    \centering
    \caption{The bootstrap $p$-values of the proposed test}
    \label{tab:popeleches2013}
    \setlength{\extrarowheight}{-1pt}
    \begin{adjustbox}{max width=\textwidth, max height=\textheight}
    \begin{tabular}{ccccccccccccc}
    \toprule & \multicolumn{6}{c}{Local Constant Estimator} & \multicolumn{6}{c}{Local Linear Estimator} \\
    \cmidrule(lr){2-7} \cmidrule(lr){8-13} 
     & \multicolumn{3}{c}{$L=1000$} & \multicolumn{3}{c}{$L=2000$} & \multicolumn{3}{c}{$L=1000$} & \multicolumn{3}{c}{$L=2000$} \\
    \cmidrule(lr){2-4} \cmidrule(lr){5-7} \cmidrule(lr){8-10} \cmidrule(lr){11-13}
    $k_1 \backslash k_2$ & 4.25 & 4.50 & 4.75 & 4.25 & 4.50 & 4.75 & 4.25 & 4.50 & 4.75 & 4.25 & 4.50 & 4.75 \\
    \midrule & \multicolumn{12}{l}{\textit{Panel A: Testing without conditioning on $X$}} \\
    4.00 & 0.146 & 0.056 & 0.023 & 0.147 & 0.056 & 0.023 & 0.146 & 0.055 & 0.023 & 0.146 & 0.056 & 0.023 \\
    4.25 & 0.052 & 0.016 & 0.003 & 0.052 & 0.016 & 0.003 & 0.052 & 0.016 & 0.003 & 0.052 & 0.016 & 0.003 \\
    4.50 & 0.016 & 0.003 & 0.001 & 0.016 & 0.003 & 0.001 & 0.016 & 0.003 & 0.001 & 0.016 & 0.003 & 0.001 \\
    \midrule & \multicolumn{12}{l}{\textit{Panel B: Testing conditional on $X$}} \\
    4.00 & 0.186 & 0.239 & 0.220 & 0.282 & 0.347 & 0.318 & 0.943 & 0.932 & 0.960 & 0.944 & 0.934 & 0.956 \\
    4.25 & 0.672 & 0.616 & 0.588 & 0.513 & 0.398 & 0.360 & 0.870 & 0.915 & 0.946 & 0.926 & 0.945 & 0.973 \\
    4.50 & 0.497 & 0.416 & 0.421 & 0.481 & 0.378 & 0.344 & 0.984 & 0.990 & 0.992 & 0.976 & 0.985 & 0.988 \\
    \bottomrule
    \end{tabular}
    \end{adjustbox}
\end{table}

First, we test whether the treatment effect is homogeneous among all individuals in Panel A of Table \ref{tab:popeleches2013}. 
The proposed test rejects the null of homogeneity at the 10\% significance level in most bandwidth configurations, indicating heterogeneity in individual treatment effects.
We then test whether such heterogeneity can be fully explained by the covariate $X$ in Panel B. 
The local constant estimator yields $p$-values ranging from 0.186 to 0.672, so the conditional null is not rejected at the 10\% level, while the local linear estimator yields $p$-values from 0.870 to 0.992, a clear failure to reject at any conventional level. 
This discrepancy between the two estimators warrants discussion. The local constant estimator is known to have a larger boundary bias at the cutoff in RD settings, which enters the first-stage estimation of $\tau(X)$ and propagates into the test statistic through the estimation effect characterized in Lemma~\ref{Lemma1}. 
The local linear estimator, by contrast, achieves a smaller boundary bias, yielding a more accurate estimate of the CLATE. 
The substantially larger $p$-values obtained under the local linear estimator are therefore more reliable for inference in this application.
Under either estimator, the heterogeneity detected in the unconditional analysis (Panel A) is no longer detected after conditioning on $X$ (Panel B). 
Peer quality, therefore, accounts for the treatment effect heterogeneity that the unconditional test detects. 
This finding indicates that peer quality is a relevant summary of detectable heterogeneity in this application and is informative for targeting.

Taken together, the two applications illustrate two qualitatively distinct conclusions that the proposed test can support. 
In the Islamic-election application, significant unobserved heterogeneity persists even after conditioning on covariates, suggesting that latent characteristics are additional drivers of heterogeneous treatment response. 
In the school-quality application, conditioning on peer quality accounts for the detectable heterogeneity in the unconditional test, consistent with peer quality serving as a summary of that heterogeneity. 
These conclusions demonstrate the practical value of the test: it distinguishes settings in which observables leave residual heterogeneity from those in which they account for the heterogeneity. 
We emphasize that the empirical results are interpreted as diagnostics under the maintained assumptions rather than as proof that a covariate set is sufficient.

\section{Conclusion}\label{sec:conclusion}
This paper proposes a novel testing method to detect unobserved heterogeneity in RD designs. 
The proposed test transforms the null of no unobserved heterogeneity into the continuity of an imputed treated outcome at the cutoff. 
An ICM restriction based on characteristic functions is then used to construct the test statistics.
We establish the limiting behavior of the proposed test under the null, the fixed alternative, and a sequence of local alternatives approaching the null at the rate $1/\sqrt{nh}$. 
A straightforward multiplier bootstrap procedure provides feasible critical values that are theoretically valid and computationally efficient.
Numerical simulations and empirical applications show that the test controls size under the null and has increasing power against the alternatives considered.

\let\oldthebibliography\thebibliography
\renewenvironment{thebibliography}[1]{%
  \oldthebibliography{#1}%
  \linespread{1.1}\selectfont   
}{%
  \endlist
}

\setlength{\bibsep}{3pt plus 1pt minus 1pt}
\putbib
\end{bibunit}

\newpage
\begin{bibunit}
\section{Appendix}
\subsection{Notations}
Let $\mathcal{F}_i$ denote the $\sigma$-field generated by $\{X_i,Y_i,D_i,R_i,W_i\}$ including information on individual $i$ and let $\mathcal{F}^\ast_i$ denote the $\sigma$-field generated by $\{ X_i,Y_i,D_i,R_i,W_i,V_i\}$, which is a larger $\sigma$-field containing $\mathcal{F}_i$ and the external multiplier variable $V_i$. Let $\mathcal{F}_{ij}$ be the $\sigma$-field generated by $\mathcal{F}_i$ and $\mathcal{F}_j$, and $\mathcal{F}^\ast_{ij}$ be the $\sigma$-field generated by $\mathcal{F}^\ast_i$ and $\mathcal{F}^\ast_j$. Before the proofs of the lemmas and theorems, we introduce some notation. 

The Hoeffding projection is defined as follows: let $\mathcal{G} = \{g(\zeta_i, \cdots, \zeta_m)\}$ be a class of functions of $m$ variables where i.i.d. variables $\{\zeta_i\}_{i=1}^n$ are defined on the common probability space $(B,\mathcal{B}, Q)$, such that its envelope $G = \sup_{g\in\mathcal{G}}|g|$ is measurable (otherwise take $G$ as the least measurable majorant of $\sup_{g\in\mathcal{G}}|g|$). Below, $P$ denotes the same probability $Q$; the Dirac notation $\delta_{\zeta_i}$ is unrelated to the cutoff sign $\delta_i$.
With the notation $Q_1 \times \cdots \times Q_m g = \int g d\left( Q_1 \times \cdots \times Q_m  \right)$, the Hoeffding projections of $g: B^m \mapsto \mathbb{R}$ are defined as $$ \pi_k g = (\delta_{\zeta_1} - Q ) \times \cdots \times (\delta_{\zeta_k} - Q ) \times Q^{m-k} g \quad k=1, \cdots, m, $$ where $\delta_{\zeta_i}(g) = g(\zeta_i)$ denotes the Dirac measure at $\zeta_i$.
If $g$ is symmetric in its entries, these projections induce the Hoeffding decomposition 
\begin{align*}
    U_n^{(m)}g - Q^m g =& \sum_{k=1}^m \binom{m}{k} U_n^{(k)}(\pi_k g), \\
    \text{where } \quad U_n^{(r)}(g) =& \frac{1}{n(n-1)\cdots(n-r+1)} \sum_{1\le i_1\ne\cdots\ne i_r \le n} g(\zeta_{i_1}, \cdots, \zeta_{i_r}).
\end{align*}
Asymmetric functions can be symmetrized by rearranging the entries. Denote the symmetrized function of an asymmetric function $C_{ij}$ by $g_{C_{ij}}$. Then $U_n^{(2)}(C_{ij}) = U_n^{(2)}(g_{C_{ij}})$ because $$ U_n^{(2)}(C_{ij}) = \frac{1}{n(n-1)} \sum_{i\ne j} C_{ij} = \frac{1}{n(n-1)} \sum_{i\ne j} \frac{C_{ij}+C_{ji}}{2} = \frac{1}{n(n-1)} \sum_{i\ne j} g_{C_{ij}} = U_n^{(2)}(g_{C_{ij}}). $$
Then the Hoeffding decomposition of $U_n^{(2)}(C_{ij})$ is $$ U_n^{(2)}(C_{ij}) = U_n^{(2)}(g_{C_{ij}}) = P^2 g_{C_{ij}} + 2 U_n^{(1)}(\pi_1 g_{C_{ij}}) + U_n^{(2)}(\pi_2 g_{C_{ij}}). $$ The first component is $P^2 g_{C_{ij}} = \mathbb{E} G_{C_{ij}} = \mathbb{E} ( C_{ij}+C_{ji} )/2 = \mathbb{E} C_{ij}$. The second component is 
\begin{align*}
    2 U_n^{(1)}(\pi_1 g_{C_{ij}}) =& 2 \frac{1}{n}\sum_{i=1}^n \left[ \mathbb{E} ( g_{C_{ij}} | \mathcal{F}_i ) - P^2 g_{C_{ij}} \right]\\
    =& \frac{1}{n}\sum_{i=1}^n 2\mathbb{E} \left[ \frac{C_{ij}+C_{ji}}{2} \bigg| \mathcal{F}_i \right] - 2 \mathbb{E} C_{ij} \\
    =& \frac{1}{n}\sum_{i=1}^n \mathbb{E} ( C_{ij} | \mathcal{F}_i ) + \frac{1}{n}\sum_{j=1}^n \mathbb{E} ( C_{ij} | \mathcal{F}_j ) - 2 \mathbb{E} C_{ij}.
\end{align*}
Then the Hoeffding decomposition of $U_n^{(2)}(C_{ij})$ can be written as $$U_n^{(2)}(C_{ij}) = \frac{1}{n}\sum_{i=1}^n\mathbb{E} ( C_{ij} | \mathcal{F}_i ) + \frac{1}{n}\sum_{j=1}^n \mathbb{E} ( C_{ij} | \mathcal{F}_j ) - \mathbb{E} C_{ij} + U_n^{(2)}(\pi_2 g_{C_{ij}}). $$
In a similar way, the Hoeffding decomposition of $U_n^{(3)}(C_{ijk})$ is 
\begin{align*}
    U_n^{(3)}(C_{ijk}) =& \frac{1}{n}\sum_{i=1}^n \mathbb{E}(C_{ijk}|\mathcal{F}_i) + \frac{1}{n}\sum_{j=1}^n \mathbb{E}(C_{ijk}|\mathcal{F}_j) + \frac{1}{n}\sum_{k=1}^n \mathbb{E}(C_{ijk}|\mathcal{F}_k) \\ &- 2\mathbb{E} C_{ijk} + 3 U_n^{(2)}(\pi_2 g_{C_{ijk}}) + U_n^{(3)}(\pi_3 g_{C_{ijk}}). 
\end{align*}
Expanding $U_n^{(2)}(\pi_2 g_{C_{ijk}})$, $U_n^{(3)}(C_{ijk})$ can be further written as
\begin{align*}
    U_n^{(3)}(C_{ijk}) =& \frac{1}{n(n-1)}\sum_{i\ne j}^n\mathbb{E} ( C_{ijk} | \mathcal{F}_{ij} ) + \frac{1}{n(n-1)}\sum_{i\ne k}^n \mathbb{E} ( C_{ijk} | \mathcal{F}_{ik} ) + \frac{1}{n(n-1)}\sum_{j\ne k}^n \mathbb{E} ( C_{ijk} | \mathcal{F}_{jk} ) \\
    &- \frac{1}{n}\sum_{i=1}^n \mathbb{E}(C_{ijk}|\mathcal{F}_i) - \frac{1}{n}\sum_{j=1}^n \mathbb{E} (C_{ijk}|\mathcal{F}_j) - \frac{1}{n}\sum_{k=1}^n \mathbb{E} (C_{ijk}|\mathcal{F}_k)
    + \mathbb{E} C_{ijk} + U_n^{(3)}(\pi_3 g_{C_{ijk}}),
\end{align*}
where $\mathbb{E}(\cdot|\mathcal{F}_{ij})$, $\mathbb{E}(\cdot|\mathcal{F}_{ik})$ and $\mathbb{E}(\cdot|\mathcal{F}_{jk})$ are abbreviations for $\mathbb{E}(\cdot|\mathcal{F}_i, \mathcal{F}_j)$, $\mathbb{E}(\cdot|\mathcal{F}_i, \mathcal{F}_k)$ and $\mathbb{E}(\cdot|\mathcal{F}_j, \mathcal{F}_k)$.

Moreover, we use the notation $A\lesssim B$ to express that the left-hand side term $A$ is bounded by a positive constant times the right-hand side $B$. The abbreviations for indicator functions $\mathbf{1}_+(R) = \mathbf{1}(R\ge0)$ and $\mathbf{1}_-(R) = \mathbf{1}(R<0)$ are used for notational simplicity throughout the Appendix.  

The local constant and local linear estimators of $$\tau(x) = \frac{\mu(x,0^+)-\mu(x,0^-)}{p(x,0^+)-p(x,0^-)}$$ and $$\kappa(w,x) = \frac{\varphi_{WD|XR}(w, 0|x, 0^+)-\varphi_{WD|XR}(w, 0|x, 0^-)}{p(x,0^+)-p(x,0^-)}$$ are introduced here.

By Assumption A1, $f(x,0^+) = f(x,0^-) = f(x,0)$ for any $x\in\mathcal{X}$. Then the local constant estimators $\hat{p}(X_i,0^\pm)$ and $\hat\mu(X_i,0^\pm)$ can be written as
\begin{align*}
    \hat\mu(X_i,0^+) =& \left[ \frac{1}{n-1} \sum_{j\ne i} Y_j K_{h_x}(X_j-X_i) K_{h_r}(R_j) \mathbf{1}_+(R_j) \right] \bigg/ \left[ \frac{1}{n-1} \sum_{j\ne i} K_{h_x}(X_j-X_i) K_{h_r}(R_j) \right], \\
    \hat\mu(X_i,0^-) =& \left[ \frac{1}{n-1} \sum_{j\ne i} Y_j K_{h_x}(X_j-X_i) K_{h_r}(R_j) \mathbf{1}_-(R_j) \right] \bigg/ \left[ \frac{1}{n-1} \sum_{j\ne i} K_{h_x}(X_j-X_i) K_{h_r}(R_j) \right]; \\
    \hat{p}(X_i,0^+) =& \left[ \frac{1}{n-1} \sum_{j\ne i} D_j K_{h_x}(X_j-X_i) K_{h_r}(R_j) \mathbf{1}_+(R_j) \right] \bigg/ \left[ \frac{1}{n-1} \sum_{j\ne i} K_{h_x}(X_j-X_i) K_{h_r}(R_j) \right], \\
    \hat{p}(X_i,0^-) =& \left[ \frac{1}{n-1} \sum_{j\ne i} D_j K_{h_x}(X_j-X_i) K_{h_r}(R_j) \mathbf{1}_-(R_j) \right] \bigg/ \left[ \frac{1}{n-1} \sum_{j\ne i} K_{h_x}(X_j-X_i) K_{h_r}(R_j) \right].
\end{align*}
Plugging them into $\hat\tau(X_i) = \frac{\hat\mu(X_i,0^+)-\hat\mu(X_i,0^-)}{\hat{p}(X_i,0^+)-\hat{p}(X_i,0^-)}$, the local constant estimator $\hat\tau(X_i)$ can be written as $$\hat\tau(X_i) = \frac{\frac{1}{n-1} \sum_{j\ne i} Y_j K_{h_x}(X_j-X_i) K_{h_r}(R_j) \delta_j}{\frac{1}{n-1} \sum_{j\ne i} D_j K_{h_x}(X_j-X_i) K_{h_r}(R_j)\delta_j} $$

Local linear estimators for the conditional means are
\begin{align*}
    \hat\mu(X_i, 0^+) =& e_0' \arg\min_{\mathbf{b}} \sum_{j\ne i} \left( Y_j - \mathbf{b}' \mathbf{r}(R_j) \right)^2 K_{h_x}(X_j-X_i) K_{h_r}(R_j) \mathbf{1}_+(R_j), \\
    \hat\mu(X_i, 0^-) =& e_0' \arg\min_{\mathbf{b}} \sum_{j\ne i} \left( Y_j - \mathbf{b}' \mathbf{r}(R_j) \right)^2 K_{h_x}(X_j-X_i) K_{h_r}(R_j) \mathbf{1}_-(R_j); \\
    \hat p(X_i, 0^+) =& e_0' \arg\min_{\mathbf{b}} \sum_{j\ne i} \left( D_j - \mathbf{b}' \mathbf{r}(R_j) \right)^2 K_{h_x}(X_j-X_i) K_{h_r}(R_j) \mathbf{1}_+(R_j), \\
    \hat p(X_i, 0^-) =& e_0' \arg\min_{\mathbf{b}} \sum_{j\ne i} \left( D_j - \mathbf{b}' \mathbf{r}(R_j) \right)^2 K_{h_x}(X_j-X_i) K_{h_r}(R_j) \mathbf{1}_-(R_j),
\end{align*}
where $e_0 = (1,0)'$ and $\mathbf{r}(R) = (1, R/h_r)'$. Denote the following matrices and vectors:
\begin{align*}
    B_{n+}(X_i) =& \frac{1}{n-1} \sum_{j\ne i} \mathbf{r}(R_j)\mathbf{r}(R_j)' K_{h_x}(X_j-X_i)K_{h_r}(R_j) \mathbf{1}_+(R_j), \\
    B_{n-}(X_i) =& \frac{1}{n-1} \sum_{j\ne i} \mathbf{r}(R_j)\mathbf{r}(R_j)' K_{h_x}(X_j-X_i)K_{h_r}(R_j) \mathbf{1}_-(R_j); \\
    A_{Zn+}(X_i) =& \frac{1}{n-1} \sum_{j\ne i} \mathbf{r}(R_j)Z_j K_{h_x}(X_j-X_i)K_{h_r}(R_j) \mathbf{1}_+(R_j), \\
    A_{Zn-}(X_i) =& \frac{1}{n-1} \sum_{j\ne i} \mathbf{r}(R_j)Z_j K_{h_x}(X_j-X_i)K_{h_r}(R_j) \mathbf{1}_-(R_j),
\end{align*}
where $Z$ is a generic variable for $Y$ or $D$. Then $\hat\mu$ and $\hat{p}$ can be written as $$ \hat\mu(X_i, 0^\pm) = e_0' B_{n\pm}^{-1}(X_i) A_{Yn\pm}(X_i), \quad \hat{p}(X_i, 0^\pm) = e_0' B_{n\pm}^{-1}(X_i) A_{Dn\pm}(X_i). $$

\subsection{Preliminary results}
The basic asymptotic properties of the local constant and the local linear estimators, which will be frequently used in the proofs, are collected below.
Before that, we introduce the expansion $B_n^{-1} A_n$:
\begin{align*}
    B_n^{-1} A_n &= \left( B^{-1} - B^{-1} (B_n-B) B_n^{-1} \right) \left( A + A_n - A \right) \\ 
    &= B^{-1}A + B^{-1} (A_n-A) - B^{-1} (B_n-B) B_n^{-1}A_n.
\end{align*}
Iterating this equation yields
\begin{align*}
    & B_n^{-1} A_n - B^{-1}A \\ 
    =& B^{-1} (A_n-A) - B^{-1} (B_n-B) B_n^{-1}A_n \\
    =& B^{-1} (A_n-A) - B^{-1} (B_n-B) \left( B^{-1}A + B^{-1} (A_n-A) - B^{-1} (B_n-B) B_n^{-1}A_n \right) \\
    =& B^{-1} (A_n-A) - B^{-1} (B_n-B) B^{-1}A - B^{-1} (B_n-B) B^{-1} (A_n-A) \\
    &+ B^{-1} (B_n-B) B^{-1} (B_n-B) \left( B^{-1}A + B^{-1} (A_n-A) - B^{-1} (B_n-B) B_n^{-1}A_n \right) \\
    =& B^{-1} A_n - B^{-1} B_n B^{-1}A - B^{-1} (B_n-B) B^{-1} (A_n-A) + B^{-1} (B_n-B) B^{-1} (B_n-B) B^{-1}A \\
    &+ B^{-1} (B_n-B) B^{-1} (B_n-B) B^{-1} (A_n-A) - B^{-1} (B_n-B) B^{-1} (B_n-B) B^{-1} (B_n-B) B_n^{-1}A_n.
\end{align*}
The scalar version of the expansion $\hat{a}/\hat{b}$ is $$ \frac{\hat{a}}{\hat{b}} - \frac{a}{b} = \frac{\hat{a}-a}{b} - \frac{a}{b} \frac{\hat{b}-b}{b} - \frac{\hat{a}-a}{b} \frac{\hat{b}-b}{b} + \frac{a}{b} \left( \frac{\hat{b}-b}{b} \right)^2 + \frac{\hat{a}-a}{b} \left( \frac{\hat{b}-b}{b} \right)^2 - \frac{\hat{a}}{\hat{b}} \left( \frac{\hat{b}-b}{b} \right)^3, $$ where the first-order term can also be written as $\frac{\hat{a}-a}{b} - \frac{a}{b} \frac{\hat{b}-b}{b} = \frac{\hat{a}}{b} - \frac{a}{b} \frac{\hat{b}}{b}$. 

\textbf{Lemma S.A1} \quad 
For the local constant estimator, the numerator and denominator have the following convergence results:
\begin{align*}
    m_\mu(x) =& \mathbb{E}[Y K_{h_x}(X-x) K_{h_r}(R) \delta] - \frac{1}{2} f(x,0) \Delta_\mu(x) = O(h_r) + O(h_x^l), \\
    m_p(x) =& \mathbb{E}[D K_{h_x}(X-x) K_{h_r}(R) \delta] - \frac{1}{2} f(x,0) \Delta_p(x) = O(h_r) + O(h_x^l), \\
    m_\varphi(w,x) =& \mathbb{E}[(1-D) e^{\mathrm{i}wW} K_{h_x}(X-x) K_{h_r}(R) \delta] - \frac{1}{2} f(x,0) \Delta_\varphi(w,x) = O(h_r) + O(h_x^l),
\end{align*}
where the bias term is of order $O(h_r) + O(h_x^l)$, $\Delta_g(x) := g(x,0^+) - g(x,0^-)$ for a generic function $g(x,r)$, and $\Delta_\varphi(w,x) := \varphi(w,0|x,0^+) - \varphi(w,0|x,0^-)$ denotes the difference in the function $\varphi_{WD|XR}(w,0|x,r)$; henceforth, $\varphi(w,0|x,r)$ abbreviates $\varphi_{WD|XR}(w,0|x,r)$.
$f(x,0)$ is the abbreviation for $f_{XR}(x,0)$.
The result can be further expanded as $$ \mathbb{E}[g(X) Z K_{h_x}(X-x) K_{h_r}(R) \delta] - \frac{1}{2} f(x,0) g(x) \Delta_{m_Z}(x) = O(h_r) + O(h_x^l) $$ for a generic variable $Z$.
\begin{proof}
$$\mathbb{E} \left[ Y K_{h_x}(X-x) K_{h_r}(R)\delta \right] = \mathbb{E} \left[ Y K_{h_x}(X-x) K_{h_r}(R) \mathbf{1}_+(R) \right] - \mathbb{E} \left[ Y K_{h_x}(X-x) K_{h_r}(R)\mathbf{1}_-(R) \right].$$ By the law of iterated expectations, the first term on the right-hand side of the equation can be written as
\begin{align*}
    \mathbb{E}\left[ Y K_{h_x}(X-x) K_{h_r}(R)\mathbf{1}_+(R) \right] =& \mathbb{E}\left[ \mathbb{E} \left[ Y K_{h_x}(X-x) K_{h_r}(R)\mathbf{1}_+(R) |X,R \right] \right] \\
    =& \mathbb{E}\left[ \mu(X,R) K_{h_r}(R)\mathbf{1}_+(R) K_{h_x}(X-x) \right] 
\end{align*}
Expand this conditional expectation in integral form with the change of variables $X-x = uh_x$ and $R = vh_r$, and define the one-sided kernel constants $\nu_\rho^+ = \int_0^\infty v^\rho K(v)dv$, $\nu_\rho^- = \int_{-\infty}^0 v^\rho K(v)dv$ for $\rho\ge0$, and $\nu_l = \int v^l K(v)dv$. Then the expectation is
\begin{align*}
    & \mathbb{E}\left[ Y K_{h_x}(X-x) K_{h_r}(R)\mathbf{1}_+(R) \right] \\
    =& \iint_0^\infty \mu(x+uh_x,vh_r) K(v) K(u) f(x+uh_x,vh_r) dudv \\
    =& \frac{1}{2} \mu(x,0^+) f(x,0^+) + h_r \nu_1^+ \frac{\partial}{\partial r}\left( \mu(x,0^+)f(x,0^+)\right) + \frac{h_x^l \nu_l}{l!} \frac{\partial^l}{\partial x^l} \left( \mu(x,0^+)f(x,0^+)\right) + o(h_r) + o(h_x^l). 
\end{align*}
The equation comes from a Taylor expansion of the function $\mu(x,r)f(x,r)$ and the property of the kernel function $K(\cdot)$. Following the same steps, 
\begin{align*}
    & \mathbb{E}\left[ Y K_{h_x}(X-x) K_{h_r}(R)\mathbf{1}_-(R) \right] \\
    =& \iint_{-\infty}^0 \mu(x+uh_x,vh_r) K(v) K(u) f(x+uh_x,vh_r) dudv \\
    =& \frac{1}{2} \mu(x,0^-) f(x,0^-) + h_r\nu_1^- \frac{\partial}{\partial r}\left( \mu(x,0^-)f(x,0^-)\right) + \frac{h_x^l \nu_l}{l!} \frac{\partial^l}{\partial x^l} \left( \mu(x,0^-)f(x,0^-)\right) + o(h_r) + o(h_x^l). 
\end{align*}
By Assumption A1, $f(x,0^+) = f(x,0^-) = f(x,0)$ and $\mu(x,0^+) f(x,0^+) - \mu(x,0^-) f(x,0^-) = f(x,0) \Delta_\mu(x)$. Then 
\begin{align*}
    & \mathbb{E}\left[ Y K_{h_x}(X-x) K_{h_r}(R) \delta \right] - \frac{1}{2} f(x,0) \Delta_\mu(x) \\
    =& h_r \left[ \nu_1^+ \frac{\partial}{\partial r}\left( \mu(x,0^+)f(x,0^+)\right) - \nu_1^- \frac{\partial}{\partial r}\left( \mu(x,0^-)f(x,0^-)\right) \right] + o(h_r) \\&+ \left[ \frac{h_x^l \nu_l}{l!} \frac{\partial^l}{\partial x^l} \left( \mu(x,0^+)f(x,0^+)\right) - \frac{h_x^l \nu_l}{l!} \frac{\partial^l}{\partial x^l} \left( \mu(x,0^-)f(x,0^-)\right) \right] + o(h_x^l), 
\end{align*}
and the bias term is of order $h_r$ and $h_x^l$. Applying this result to the conditional expectation scenario, 
\begin{align*}
    m_\mu(X_i) =& \mathbb{E}\left[ Y_j K_{h_x}(X_j-X_i) K_{h_r}(R_j) \delta_j | X_i \right] - \frac{1}{2} f(X_i,0) \Delta_\mu(X_i) \\
    =& h_r \left[ \nu_1^+ \frac{\partial}{\partial r}\left( \mu(X_i,0^+)f(X_i,0^+)\right) - \nu_1^- \frac{\partial}{\partial r}\left( \mu(X_i,0^-)f(X_i,0^-)\right) \right] + o_p(h_r) \\&+ \frac{h_x^l \nu_l}{l!} \left[ \frac{\partial^l}{\partial x^l} \left( \mu(X_i,0^+)f(X_i,0^+)\right) - \frac{\partial^l}{\partial x^l} \left( \mu(X_i,0^-)f(X_i,0^-)\right) \right] + o_p(h_x^l).
\end{align*}
Replacing $Y$ with $D$ yields the analogous result:
\begin{align*}
    m_p(X_i) =& \mathbb{E}\left[ D_j K_{h_x}(X_j-X_i) K_{h_r}(R_j) \delta_j | X_i \right] - \frac{1}{2} f(X_i,0) \Delta_p(X_i) \\
    =& h_r \left[ \nu_1^+ \frac{\partial}{\partial r}\left( p(X_i,0^+)f(X_i,0^+)\right) - \nu_1^- \frac{\partial}{\partial r}\left( p(X_i,0^-)f(X_i,0^-)\right) \right] + o_p(h_r) \\&+ \frac{h_x^l \nu_l}{l!} \left[ \frac{\partial^l}{\partial x^l} \left( p(X_i,0^+)f(X_i,0^+)\right) - \frac{\partial^l}{\partial x^l} \left( p(X_i,0^-)f(X_i,0^-)\right) \right] + o_p(h_x^l).
\end{align*}
Replacing $Y$ with $(1-D) e^{\mathrm{i}wW}$ yields the analogous result:
\begin{align*}
    m_\varphi(w, X_i) =& \mathbb{E}\left[ (1-D_j) e^{\mathrm{i}wW_j} K_{h_x}(X_j-X_i) K_{h_r}(R_j) \delta_j | X_i \right] - \frac{1}{2} f(X_i,0) \Delta_\varphi(w, X_i) \\
    =& h_r \left[ \nu_1^+ \frac{\partial}{\partial r}\left( \varphi(w,0|X_i,0^+)f(X_i,0^+)\right) - \nu_1^- \frac{\partial}{\partial r}\left( \varphi(w,0|X_i,0^-)f(X_i,0^-)\right) \right] + o_p(h_r) \\&+ \frac{h_x^l \nu_l}{l!} \left[ \frac{\partial^l}{\partial x^l} \left( \varphi(w,0|X_i,0^+)f(X_i,0^+)\right) - \frac{\partial^l}{\partial x^l} \left( \varphi(w,0|X_i,0^-)f(X_i,0^-)\right) \right] + o_p(h_x^l).
\end{align*}
For the last result in Lemma S.A1,
\begin{align*}
    & \mathbb{E}\left[g(X) Z K_{h_x}(X-x) K_{h_r}(R)\mathbf{1}_+(R) \right] \\
    =& \iint_0^\infty g(x+uh_x) m_Z(x+uh_x,vh_r) K(v) K(u) f(x+uh_x,vh_r) dudv \\
    =& \frac{1}{2} g(x) m_Z(x,0^+) f(x,0^+) + h_r \nu_1^+ \frac{\partial}{\partial r}\left( g(x) m_Z(x,0^+)f(x,0^+)\right) \\&+ \frac{h_x^l \nu_l}{l!} \frac{\partial^l}{\partial x^l} \left( g(x) m_Z(x,0^+)f(x,0^+)\right) + o(h_r) + o(h_x^l). 
\end{align*}
Then
\begin{align*}
    & \mathbb{E}\left[g(X) Z K_{h_x}(X-x) K_{h_r}(R) \delta \right] - \frac{1}{2} f(x,0) g(x) \Delta_{m_Z}(x) \\
    =& h_r \left[ \nu_1^+ \frac{\partial}{\partial r}\left( g(x) m_Z(x,0^+)f(x,0^+)\right) - \nu_1^- \frac{\partial}{\partial r}\left( g(x) m_Z(x,0^-)f(x,0^-)\right) \right] + o(h_r) \\&+ \left[ \frac{h_x^l \nu_l}{l!} \frac{\partial^l}{\partial x^l} \left( g(x) m_Z(x,0^+)f(x,0^+)\right) - \frac{h_x^l \nu_l}{l!} \frac{\partial^l}{\partial x^l} \left( g(x) m_Z(x,0^-)f(x,0^-)\right) \right] + o(h_x^l), 
\end{align*}
and the bias term is of order $h_r$ and $h_x^l$. Applying this result to the conditional expectation scenario, 
\begin{align*}
    & \mathbb{E}\left[g(X_j) Z_j K_{h_x}(X_j-X_i) K_{h_r}(R_j) \delta_j | X_i \right] - \frac{1}{2} f(X_i,0) g(X_i) \Delta_{m_Z}(X_i) \\
    =& h_r \left[ \nu_1^+ \frac{\partial}{\partial r}\left( g(X_i) m_Z(X_i,0^+)f(X_i,0^+)\right) - \nu_1^- \frac{\partial}{\partial r}\left( g(X_i) m_Z(X_i,0^-)f(X_i,0^-)\right) \right] + o_p(h_r) \\&+ \frac{h_x^l \nu_l}{l!} \left[ \frac{\partial^l}{\partial x^l} \left( g(X_i) m_Z(X_i,0^+)f(X_i,0^+)\right) - \frac{\partial^l}{\partial x^l} \left( g(X_i) m_Z(X_i,0^-)f(X_i,0^-)\right) \right] + o_p(h_x^l).
\end{align*}
\end{proof}

For the local linear estimator, denote 
\begin{align*}
    B_\pm(x) =& f_X(x) \mathbb{E}[\mathbf{r}(R) \mathbf{r}(R)' K_{h_r}(R) \mathbf{1}_\pm(R)|X=x], \\
    A_{Z\pm}(x) =& f_X(x) \mathbb{E}[\mathbf{r}(R) Z K_{h_r}(R) \mathbf{1}_\pm(R)|X=x].
\end{align*}

\textbf{Lemma S.A2} \quad 
For a generic variable $Z$, the local linear estimator of its conditional expectation implies that $$ e_0' B_\pm^{-1}(X_i) A_{Z\pm}(X_i) = e_0' \left( \begin{array}{c} m_Z(X_i, 0^\pm) \\ h_r \frac{\partial}{\partial r}m_Z(X_i, 0^\pm) \end{array} \right) + O_p(h_r^2) = m_Z(X_i, 0^\pm) + O_p(h_r^2). $$
\begin{proof}
The definition of $B_\pm(x)$ and $A_{Z\pm}(x)$ shows that the smoothing bias of $X$ is $O(h_x^l)$:
\begin{align*}
    \mathbb{E}[\mathbf{r}(R) \mathbf{r}(R)' K_{h_x}(X-x) K_{h_r}(R) \mathbf{1}_\pm(R)] =& B_\pm(x) + O(h_x^l), \\
    \mathbb{E}[\mathbf{r}(R) Z K_{h_x}(X-x) K_{h_r}(R) \mathbf{1}_\pm(R)] =& A_{Z\pm}(x) + O(h_x^l).
\end{align*}
Let $m_Z(x,r)$ denote $\mathbb{E}[Z|X=x, R=r]$. Its Taylor expansion at the point $(x, 0^+)$ is $$ m_Z(x, R_i) = m_Z(x, 0^+) + R_i \frac{\partial}{\partial r}m_Z(x, 0^+) + \frac{1}{2} R_i^2 \frac{\partial^2}{\partial r^2}m_Z(x, 0^+) + \frac{1}{3!} R_i^3 \frac{\partial^3}{\partial r^3}m_Z(x, \bar{R}_i), $$ where $\bar{R}_i$ lies between $0$ and $R_i$. Then $A_{Z+}(X_i)$ can be rewritten as
\begin{align*}
    A_{Z+}(x) =& f_X(x) \mathbb{E} \left[ \mathbf{r}(R) m_Z(X,R) K_{h_r}(R) \mathbf{1}_+(R) | X=x \right] \\
    =& f_X(x) \mathbb{E} \left[ \mathbf{r}(R) \mathbf{r}(R)' \left( \begin{array}{c} m_Z(x, 0^+) \\ h_r \frac{\partial}{\partial r}m_Z(x, 0^+) \end{array} \right) K_{h_r}(R) \mathbf{1}_+(R) | X=x \right] \\
    &+ \frac{1}{2} f_X(x) \mathbb{E} \left[ \mathbf{r}(R) R^2 \frac{\partial^2}{\partial r^2}m_Z(x, 0^+) K_{h_r}(R) \mathbf{1}_+(R) | X=x \right] (1+o(1)). 
\end{align*}
The first term of the conditional expectation equals $$\mathbb{E} \left[ \mathbf{r}(R) \mathbf{r}(R)' K_{h_r}(R) \mathbf{1}_+(R) | X=x \right] \left( \begin{array}{c} m_Z(x, 0^+) \\ h_r \frac{\partial}{\partial r}m_Z(x, 0^+) \end{array} \right) = B_+(x) \left( \begin{array}{c} m_Z(x, 0^+) \\ h_r \frac{\partial}{\partial r}m_Z(x, 0^+) \end{array} \right). $$ The second term in the formula of $A_{Z+}(x)$ is $O(h_r^2)$ since it equals
\begin{align*}
& \frac{1}{2} \frac{\partial^2}{\partial r^2}m_Z(x, 0^+) \mathbb{E} \left[ \mathbf{r}(R) R^2 K_{h_r}(R) \mathbf{1}_+(R) | X=x \right] \\
=& \frac{h_r^2}{2} \frac{\partial^2}{\partial r^2}m_Z(x, 0^+) \int_0^\infty (1, v)' f(x,vh_r) v^2K(v)dv \\
=& \frac{h_r^2}{2} \frac{\partial^2}{\partial r^2}m_Z(x, 0^+) f(x, 0) \left(  \nu_2^+, \nu_3^+ \right)' \left( 1+o(1) \right) = O(h_r^2),
\end{align*}
In summary, $A_{Z+}(X_i)$ can be written as $ A_{Z+}(X_i) = B_+(X_i) \left( \begin{array}{c} m_Z(X_i, 0^+) \\ h_r \frac{\partial}{\partial r}m_Z(X_i, 0^+) \end{array} \right) + O_p(h_r^2)$. Furthermore, on either side of the cutoff, the estimation effect of the conditional expectation function is $$e_0' B_\pm^{-1}(X_i) A_{Z\pm}(X_i) = e_0' \left( \begin{array}{c} m_Z(X_i, 0^\pm) \\ h_r \frac{\partial}{\partial r}m_Z(X_i, 0^\pm) \end{array} \right) + O_p(h_r^2) = m_Z(X_i, 0^\pm) + O_p(h_r^2). $$
\end{proof}

By the definition of $\Delta_g(x) = g(x,0^+) - g(x,0^-)$ for a generic function $g(x,r)$, including its estimator $\hat{g}$, $\tau(x)$ can be written as $\tau(x) = \Delta_\mu(x)/\Delta_p(x)$ and $\hat\tau(x)$ can be written as $\hat\tau(x) = \hat\Delta_\mu(x)/\hat\Delta_p(x)$.

\subsection{Proof of Proposition \ref{prop1}}
The null hypothesis of no unobserved treatment effect heterogeneity is that, at the cutoff, $$ \mathbb{H}_0: g(1,0,X,\cdot) -g(0,0,X,\cdot) = \tau(X). $$
It holds if and only if $g$ at the cutoff is additively separable in $\epsilon$ with respect to $D$, i.e., $$ g(D,0,X,\epsilon) = m(D,X) + \nu(X,\epsilon) ,$$ where $m: \mathcal{S}_{DX} \mapsto \mathbb{R}$ and $\nu: \mathcal{S}_{X\epsilon} \mapsto \mathbb{R}$.
The validity of this statement is straightforward. For the ``if" part, $\tau(x) = m(1,x) - m(0,x)$. For the ``only if" part, $g(d,0,x,\epsilon) = d \times \left[ g(1,0,x,\epsilon) -g(0,0,x,\epsilon) \right] + g(0,0,x,\epsilon)$. The first term $d\times\tau(x)$ can be identified as $m(d,x)$ and the second term $g(0,0,x,\epsilon)$ can be identified as $\nu(x,\epsilon)$.

\subsection{Proof of Proposition \ref{prop2}}
Assumptions A1-A5 establish the ``only if" part of the proposition. By definition, $W = Y + (1-D)\tau(X)$ with $\tau(x) = \frac{\mu(x,0^+) - \mu(x,0^-)}{p(x,0^+) - p(x,0^-)}$. Since $Y = g(D,R,X,\epsilon) = g(0,R,X,\epsilon) + D[ g(1,R,X,\epsilon) - g(0,R,X,\epsilon) ]$, 
\begin{align*}
    \mu(x,0^+) =& \mathbb{E} \left[ g(0,R,X,\epsilon) \mid X=x, R=0^+ \right] \\
    &+ \mathbb{E} \left[ D[g(1,R,X,\epsilon) - g(0,R,X,\epsilon)] \mid X=x, R=0^+ \right], \\
    \text{and } \quad \mu(x,0^-) =& \mathbb{E} \left[ g(0,R,X,\epsilon) \mid X=x, R=0^- \right] \\
    &+ \mathbb{E} \left[ D[g(1,R,X,\epsilon) - g(0,R,X,\epsilon)] \mid X=x, R=0^- \right].
\end{align*}
By Assumption A4, the function $g$ is continuous at the cutoff $R=0$. Then $$\mathbb{E} \left[ g(0,R,X,\epsilon) \mid X=x, R=0^+ \right] = \mathbb{E} \left[ g(0,R,X,\epsilon) \mid X=x, R=0^- \right] = \mathbb{E} \left[ g(0,0,X,\epsilon) | X=x \right]. $$
Furthermore, 
\begin{align*}
    \mu(x,0^+) - \mu(x,0^-) =& \mathbb{E} \left[ D[g(1,0,X,\epsilon) - g(0,0,X,\epsilon)] \mid X=x, R=0^+ \right] \\&- \mathbb{E} \left[ D[g(1,0,X,\epsilon) - g(0,0,X,\epsilon)] \mid X=x, R=0^- \right].
\end{align*}
Under the null $\mathbb{H}_0$, $g(1,0,X,\cdot) - g(0,0,X,\cdot) = \tau(X)$.
The above expectation can be expressed as
\begin{align*}
    \mu(x,0^+) - \mu(x,0^-) =& \mathbb{E} \left[ D \tau(X) \mid X=x, R=0^+ \right] -\mathbb{E} \left[ D \tau(X) \mid X=x, R=0^- \right] \\
    =& \tau(x) \left[ \mathbb{E} \left[ D \mid X=x, R=0^+ \right] -\mathbb{E} \left[ D \mid X=x, R=0^- \right] \right] \\
    =& \tau(x) \left[ p(x,0^+) - p(x,0^-) \right].
\end{align*}
At the cutoff, $Y_1-Y_0=\frac{\mu(X,0^+) - \mu(X,0^-)}{p(X,0^+) - p(X,0^-)}$, and $W = Y + (1-D) (Y_1-Y_0) = Y_1$. Therefore, its conditional density function is continuous at the cutoff.

For the ``if" part of the proposition, suppose the conditional density function of $W$ given $X$ and $R$ is continuous at $r=0$. 
Then its conditional CDF is continuous at the cutoff: $$ \lim_{r\downarrow 0} \mathbb{P}(W \le y \mid X, R=r) = \lim_{r\uparrow 0} \mathbb{P}(W \le y \mid X, R=r). $$ Since the joint probability can be decomposed as the sum of the sub-event probabilities, the equation can be written as
\begin{align*}
    & \lim_{r\downarrow 0}\mathbb{P}(Y\le y, D=1 \mid X=x, R=r) - \lim_{r\uparrow 0}\mathbb{P}(Y\le y, D=1 \mid X=x, R=r) \\
    =& \lim_{r\uparrow 0} \mathbb{P}(Y\le y-\tau(X), D=0 \mid X=x, R=r) - \lim_{r\downarrow 0} \mathbb{P}(Y\le y-\tau(X), D=0 \mid X=x, R=r). 
\end{align*}
Define
\begin{align*}
    \Delta_0(u,x) =& \lim_{r\uparrow 0} \mathbb{P}(\nu(X,\epsilon)\le u, D=0 \mid X=x, R=r) - \lim_{r\downarrow 0} \mathbb{P}(\nu(X,\epsilon)\le u, D=0 \mid X=x, R=r), \\
    \Delta_1(u,x) =& \lim_{r\downarrow 0} \mathbb{P}(\nu(X,\epsilon)\le u, D=1 \mid X=x, R=r) - \lim_{r\uparrow 0} \mathbb{P}(\nu(X,\epsilon)\le u, D=1 \mid X=x, R=r). 
\end{align*}
By Assumptions A1 and A3, we have 
$$\Delta_0(u,x) = \Delta_1(u,x) = \mathbb{P}(\nu(X,\epsilon)\le u, \eta\in\mathcal{C}_x \mid X=x, R=0) .$$
It is strictly monotone in $u\in\mathcal{S}_{\nu(X,\epsilon) \mid X=x, \eta\in\mathcal{C}_x }$. Furthermore, Assumption A5 guarantees the invariance of the support $\mathcal{S}_{\nu(X,\epsilon) \mid X=x, \eta\in\mathcal{C}_x } = \mathcal{S}_{\nu(X,\epsilon) \mid X=x}$. Combining all the results, we have 
\begin{align*}
    & \lim_{r\downarrow 0} \mathbb{P}(Y\le y, D=1 \mid X=x, R=r) - \lim_{r\uparrow 0} \mathbb{P}(Y\le y, D=1 \mid X=x, R=r) \\
    =& \Delta_1( \tilde{g}^{-1}(1,X,y), x) = \Delta_0( \tilde{g}^{-1}(1,X,y), x) \\
    =& \lim_{r\uparrow 0} \mathbb{P}\left( Y\le \tilde{g}(0,X,\tilde{g}^{-1}(1,X,y)), D=0 \mid X=x, R=r \right) \\
    &- \lim_{r\downarrow 0} \mathbb{P} \left( Y\le \tilde{g}(0,X,\tilde{g}^{-1}(1,X,y)), D=0 \mid X=x, R=r \right),
\end{align*}
where $\tilde{g}^{-1}(1,x,\cdot)$ is the inverse function of $\tilde{g}(1,x,\cdot)$, which is strictly monotonic given Assumption A4. Both sides of the equation are strictly monotone in $y \in \mathcal{S}_{\tilde{g}(1,X,V) \mid X=x}$ since $\Delta_d(\cdot, x)$ is strictly monotone. Combining all the results, we have
$$ y-\tau(x) = \tilde{g}(0,x,\tilde{g}^{-1}(1,x,y) ), \quad\quad \forall x\in\mathcal{X}, y\in \mathcal{S}_{\tilde{g}(1,x,\cdot) \mid X=x} .$$
Using $y=\tilde{g}(1,x,u)$ for some $u\in \mathcal{S}_{\nu(X,\epsilon) \mid X=x}$ as a change of variables, the equation becomes 
$$\tilde{g}(1,x,u) - \tau(x) = \tilde{g}(0,x,u). $$
That means, at the cutoff, the function $g$ satisfies $$ g(1,0,x,\epsilon) - \tau(x) = g(0,0,x,\epsilon). $$ The ``if" part of the proposition is proved.

\subsection{Proof of Lemma \ref{Lemma1}}
The estimation effect of $\hat{U}_n(w,x)$ is $$ \hat{U}_n(w,x) - U_n(w,x) = \frac{1}{n} \sum_{i=1}^n (e^{\mathrm{i} w\hat{W}_i}-e^{\mathrm{i} wW_i}) e^{\mathrm{i} x'X_i} K_h(R_i) \delta_i. $$
By a Taylor expansion, with $\bar{W}_i \in (\hat{W}_i, W_i)$ (or $(W_i, \hat{W}_i)$), $$ e^{\mathrm{i} w\hat{W}_i} - e^{\mathrm{i} wW_i} = \mathrm{i} we^{\mathrm{i} wW_i} (\hat{W}_i-W_i) + \frac{(\mathrm{i} w)^2}{2} e^{\mathrm{i} wW_i} (\hat{W}_i-W_i)^2 + \frac{(\mathrm{i} w)^3}{3!} e^{\mathrm{i} w\bar{W}_i} (\hat{W}_i-W_i)^3. $$
By the definition of $W$, $\hat{W}_i-W_i = (1-D_i) (\hat\tau(X_i)-\tau(X_i))$ where $1-D_i$ is binary. Then
\begin{align*}
    \hat{U}_n(w,x) - U_n(w,x) =& \frac{\mathrm{i}w}{n} \sum_{i=1}^n e^{\mathrm{i} (wW_i+x'X_i)} (1-D_i) (\hat\tau(X_i) - \tau(X_i)) K_h(R_i) \delta_i \\
    &- \frac{w^2}{2} \frac{1}{n} \sum_{i=1}^n e^{\mathrm{i} (wW_i+x'X_i)} (1-D_i) (\hat\tau(X_i) - \tau(X_i))^2 K_h(R_i) \delta_i \\
    &- \frac{\mathrm{i} w^3}{6} \frac{1}{n} \sum_{i=1}^n e^{\mathrm{i} (w\bar{W}_i+x'X_i)} (1-D_i) (\hat\tau(X_i) - \tau(X_i))^3 K_h(R_i) \delta_i.
\end{align*}
The third term is of smaller order than the first and second terms due to the cubic term $(\hat\tau(X_i) - \tau(X_i))^3$. Denote the first and second terms as
\begin{align*}
    C_n(w,x) =& \frac{\mathrm{i}w}{n} \sum_{i=1}^n e^{\mathrm{i} (wW_i+x'X_i)} (1-D_i) (\hat\tau(X_i) - \tau(X_i)) K_h(R_i) \delta_i \\
    &- \frac{w^2}{2} \frac{1}{n} \sum_{i=1}^n e^{\mathrm{i} (wW_i+x'X_i)} (1-D_i) (\hat\tau(X_i) - \tau(X_i))^2 K_h(R_i) \delta_i,
\end{align*}
where the treatment effect $\tau(X_i) = \Delta_\mu(X_i) / \Delta_p(X_i)$ can be estimated by either the local constant or the local linear method.

\subsubsection{The local constant estimator}
For the local constant estimator $$ \hat\tau(X_i) = \frac{\frac{1}{n-1}\sum_{j\ne i}Y_j K_{h_x}(X_j-X_i) K_{h_r}(R_j)\delta_j}{\frac{1}{n-1}\sum_{j\ne i}D_j K_{h_x}(X_j-X_i) K_{h_r}(R_j)\delta_j} = \frac{\hat{f}(X_i,0) \Delta_{\hat\mu}(X_i)}{\hat{f}(X_i,0) \Delta_{\hat{p}}(X_i)}, $$ where $\hat{f}(X_i,0)$ denotes the leave-one-out kernel estimator of $f_{XR}(X_i,0)$ based on $K_{h_x}K_{h_r}$. Since $\tau(x)$ can be written as $$\tau(x) = \frac{\mu(x,0^+)-\mu(x,0^-)}{p(x,0^+)-p(x,0^-)} = \frac{\frac{1}{2}f(x,0)\left(\mu(x,0^+)-\mu(x,0^-)\right)}{\frac{1}{2}f(x,0)\left(p(x,0^+)-p(x,0^-)\right)} = \frac{\frac{1}{2}f(x,0) \Delta_\mu(x)}{\frac{1}{2}f(x,0) \Delta_p(x)}, $$ using the expansion of $\hat{a}/\hat{b}$ in the preliminary results, the estimation effect of $\tau(X)$ can be decomposed in terms of $\hat{f}(X_i,0) \Delta_{\hat\mu}(X_i) - \frac{1}{2} f(X_i,0) \Delta_\mu(X_i)$ and $\hat{f}(X_i,0) \Delta_{\hat{p}}(X_i) - \frac{1}{2} f(X_i,0) \Delta_p(X_i)$:
\begin{align*}
    \hat\tau(X_i) - \tau(X_i) =& \frac{\hat{f}(X_i,0) \Delta_{\hat\mu}(X_i) - \frac{1}{2} f(X_i,0) \Delta_\mu(X_i)}{\frac{1}{2} f(X_i,0) \Delta_p(X_i)} - \tau(X_i) \left( \frac{\hat{f}(X_i,0) \Delta_{\hat{p}}(X_i)}{\frac{1}{2} f(X_i,0) \Delta_p(X_i)} -1 \right) \\
    &- \frac{\hat{f}(X_i,0) \Delta_{\hat\mu}(X_i) - \frac{1}{2} f(X_i,0) \Delta_\mu(X_i)}{\frac{1}{2} f(X_i,0) \Delta_p(X_i)} \left( \frac{\hat{f}(X_i,0) \Delta_{\hat{p}}(X_i)}{\frac{1}{2} f(X_i,0) \Delta_p(X_i)} -1 \right) \\
    &+ \tau(X_i) \left( \frac{\hat{f}(X_i,0) \Delta_{\hat{p}}(X_i)}{\frac{1}{2} f(X_i,0) \Delta_p(X_i)} -1 \right)^2 - \hat\tau(X_i) \left( \frac{\hat{f}(X_i,0) \Delta_{\hat{p}}(X_i)}{\frac{1}{2} f(X_i,0) \Delta_p(X_i)} -1 \right)^3 \\
    &+ \frac{\hat{f}(X_i,0) \Delta_{\hat\mu}(X_i) - \frac{1}{2} f(X_i,0) \Delta_\mu(X_i)}{\frac{1}{2} f(X_i,0) \Delta_p(X_i)} \left( \frac{\hat{f}(X_i,0) \Delta_{\hat{p}}(X_i)}{\frac{1}{2} f(X_i,0) \Delta_p(X_i)} -1 \right)^2,
\end{align*}
where the last two terms are of smaller order than the first four terms, and the leading terms can be written as
\begin{align*}
    & \frac{\hat{f}(X_i,0) \Delta_{\hat\mu}(X_i)}{\frac{1}{2} f(X_i,0) \Delta_p(X_i)} - \tau(X_i) \frac{\hat{f}(X_i,0) \Delta_{\hat p}(X_i)}{\frac{1}{2} f(X_i,0) \Delta_p(X_i)} \\&- \left( \frac{\hat{f}(X_i,0) \Delta_{\hat\mu}(X_i)}{\frac{1}{2} f(X_i,0) \Delta_p(X_i)} - \tau(X_i) \right) \left( \frac{\hat{f}(X_i,0) \Delta_{\hat{p}}(X_i)}{\frac{1}{2} f(X_i,0) \Delta_p(X_i)} - 1 \right) + \tau(X_i) \left( \frac{\hat{f}(X_i,0) \Delta_{\hat{p}}(X_i)}{\frac{1}{2} f(X_i,0) \Delta_p(X_i)} - 1 \right)^2.
\end{align*}
Similarly, the leading terms in the expansion of $(\hat\tau(X_i) - \tau(X_i))^2$ are
\begin{align*}
    & \left( \frac{\hat{f}(X_i,0) \Delta_{\hat\mu}(X_i)}{\frac{1}{2} f(X_i,0) \Delta_p(X_i)} - \tau(X_i) \right)^2 + \tau^2(X_i) \left( \frac{\hat{f}(X_i,0) \Delta_{\hat{p}}(X_i)}{\frac{1}{2} f(X_i,0) \Delta_p(X_i)} - 1 \right)^2 \\
    &-2 \tau(X_i) \left( \frac{\hat{f}(X_i,0) \Delta_{\hat\mu}(X_i)}{\frac{1}{2} f(X_i,0) \Delta_p(X_i)} - \tau(X_i) \right) \left( \frac{\hat{f}(X_i,0) \Delta_{\hat{p}}(X_i)}{\frac{1}{2} f(X_i,0) \Delta_p(X_i)} - 1 \right)
\end{align*}

With these results, the leading terms in $C_n(w,x)$ can be decomposed as $C_n(w,x) = C_{1n}(w,x) - C_{2n}(w,x) - C_{3n}(w,x) + C_{4n}(w,x) - C_{5n}(w,x) - C_{6n}(w,x) + C_{7n}(w,x)$, where the elements are 
\begin{align*}
    C_{1n}(w,x) =& \frac{\mathrm{i}w}{n} \sum_{i=1}^n e^{\mathrm{i} (wW_i+x'X_i)} (1-D_i) \frac{\hat f(X_i,0) \Delta_{\hat\mu}(X_i)}{\frac{1}{2} f(X_i,0) \Delta_p(X_i)} K_h(R_i) \delta_i \\
    C_{2n}(w,x) =& \frac{\mathrm{i}w}{n} \sum_{i=1}^n e^{\mathrm{i} (wW_i+x'X_i)} (1-D_i) \frac{\hat f(X_i,0) \Delta_{\hat p}(X_i)}{\frac{1}{2} f(X_i,0) \Delta_p(X_i)} \tau(X_i) K_h(R_i) \delta_i \\
    C_{3n}(w,x) =& \frac{\mathrm{i}w}{n} \sum_{i=1}^n e^{\mathrm{i} (wW_i+x'X_i)} (1-D_i) \left( \frac{\hat{f}(X_i,0) \Delta_{\hat\mu}(X_i)}{\frac{1}{2} f(X_i,0) \Delta_p(X_i)} - \tau(X_i) \right) \left( \frac{\hat{f}(X_i,0) \Delta_{\hat{p}}(X_i)}{\frac{1}{2} f(X_i,0) \Delta_p(X_i)} - 1 \right) \\& K_h(R_i) \delta_i \\
    C_{4n}(w,x) =& \frac{\mathrm{i}w}{n} \sum_{i=1}^n e^{\mathrm{i} (wW_i+x'X_i)} (1-D_i) \left( \frac{\hat{f}(X_i,0) \Delta_{\hat{p}}(X_i)}{\frac{1}{2} f(X_i,0) \Delta_p(X_i)} - 1 \right)^2 \tau(X_i) K_h(R_i) \delta_i \\
    C_{5n}(w,x) =& \frac{w^2}{2n} \sum_{i=1}^n e^{\mathrm{i} (wW_i+x'X_i)} (1-D_i) \left( \frac{\hat{f}(X_i,0) \Delta_{\hat\mu}(X_i)}{\frac{1}{2} f(X_i,0) \Delta_p(X_i)} - \tau(X_i) \right)^2 K_h(R_i) \delta_i \\
    C_{6n}(w,x) =& \frac{w^2}{2n} \sum_{i=1}^n e^{\mathrm{i} (wW_i+x'X_i)} (1-D_i) \tau^2(X_i) \left( \frac{\hat{f}(X_i,0) \Delta_{\hat{p}}(X_i)}{\frac{1}{2} f(X_i,0) \Delta_p(X_i)} - 1 \right)^2 K_h(R_i) \delta_i \\
    C_{7n}(w,x) =& \frac{w^2}{n} \sum_{i=1}^n e^{\mathrm{i} (wW_i+x'X_i)} (1-D_i) \left( \frac{\hat{f}(X_i,0) \Delta_{\hat\mu}(X_i)}{\frac{1}{2} f(X_i,0) \Delta_p(X_i)} - \tau(X_i) \right) \left( \frac{\hat{f}(X_i,0) \Delta_{\hat{p}}(X_i)}{\frac{1}{2} f(X_i,0) \Delta_p(X_i)} - 1 \right) \\& \tau(X_i) K_h(R_i) \delta_i.
\end{align*}
The first two terms $C_{1n}(w,x)$ and $C_{2n}(w,x)$ are second-order $U$-processes. The other terms $C_{3n}(w,x)$--$C_{7n}(w,x)$ are combinations of second-order $U$-processes and third-order $U$-processes. $U$-processes $C_{1n}(w,x) = U_n^{(2)}(C_{1n,ij})$ and $C_{2n}(w,x) = U_n^{(2)}(C_{2n,ij})$ are composed of elements 
\begin{align*}
    C_{1n,ij} =& \mathrm{i}w e^{\mathrm{i} (wW_i+x'X_i)} \frac{1-D_i}{\frac{1}{2} f(X_i,0) \Delta_p(X_i)} Y_j K_{h_x}(X_j-X_i) K_{h_r}(R_j)\delta_j K_h(R_i) \delta_i, \\
    C_{2n,ij} =& \mathrm{i}w e^{\mathrm{i} (wW_i+x'X_i)} \frac{\tau(X_i)(1-D_i)}{\frac{1}{2} f(X_i,0) \Delta_p(X_i)} D_j K_{h_x}(X_j-X_i) K_{h_r}(R_j)\delta_j K_h(R_i) \delta_i. 
\end{align*}
For notational simplicity, henceforth, we omit the indices $(w,x)$ from elements in the $U$-processes. $C_{1n}(w,x)$ and $C_{2n}(w,x)$ have Hoeffding decompositions
\begin{align*}
    C_{1n}(w,x) =& \frac{1}{n}\sum_{i=1}^n \mathbb{E} (C_{1n,ij}|\mathcal{F}_i) + \frac{1}{n}\sum_{j=1}^n \mathbb{E}(C_{1n,ij}|\mathcal{F}_j) - \mathbb{E}C_{1n,ij} + U_n^{(2)}(\pi_2 g_{C_{1n}}), \\
    C_{2n}(w,x) =& \frac{1}{n}\sum_{i=1}^n \mathbb{E}(C_{2n,ij}|\mathcal{F}_i) + \frac{1}{n}\sum_{j=1}^n \mathbb{E}(C_{2n,ij}|\mathcal{F}_j) - \mathbb{E}C_{2n,ij} + U_n^{(2)}(\pi_2 g_{C_{2n}}).
\end{align*}
The conditional expectation terms $\mathbb{E} (C_{1n,ij}|\mathcal{F}_i)$ and $\mathbb{E} (C_{2n,ij}|\mathcal{F}_i)$ are 
\begin{align*}
    \mathbb{E}(C_{1n,ij}|\mathcal{F}_i) = \mathrm{i}w e^{\mathrm{i} (wW_i+x'X_i)} \frac{1-D_i}{\frac{1}{2} f(X_i,0) \Delta_p(X_i)} K_h(R_i) \delta_i \mathbb{E}\left[ Y_j K_{h_x}(X_j-X_i) K_{h_r}(R_j)\delta_j |X_i \right], \\
    \mathbb{E}(C_{2n,ij}|\mathcal{F}_i) = \mathrm{i}w e^{\mathrm{i} (wW_i+x'X_i)} \frac{\tau(X_i)(1-D_i)}{\frac{1}{2} f(X_i,0) \Delta_p(X_i)} K_h(R_i) \delta_i \mathbb{E}\left[ D_j K_{h_x}(X_j-X_i) K_{h_r}(R_j)\delta_j |X_i \right], 
\end{align*}
Following Lemma S.A1, $\mathbb{E}(C_{1n,ij}|\mathcal{F}_i)$ and $\mathbb{E}(C_{2n,ij}|\mathcal{F}_i)$ can be expressed as
\begin{align*}
    \mathbb{E}(C_{1n,ij}|\mathcal{F}_i) =& \mathrm{i}w e^{\mathrm{i} (wW_i+x'X_i)} (1-D_i) \frac{\frac{1}{2} f(X_i,0) \Delta_\mu(X_i)}{\frac{1}{2} f(X_i,0) \Delta_p(X_i)} K_h(R_i) \delta_i + O_p(h_r) + O_p(h_x^l) \\
    =& \mathrm{i}w e^{\mathrm{i} (wW_i+x'X_i)} (1-D_i) \tau(X_i) K_h(R_i) \delta_i + O_p(h_r) + O_p(h_x^l); \\
    \mathbb{E}(C_{2n,ij}|\mathcal{F}_i) =& \mathrm{i}w e^{\mathrm{i} (wW_i+x'X_i)} (1-D_i) \tau(X_i) \frac{\frac{1}{2} f(X_i,0) \Delta_p(X_i)}{\frac{1}{2} f(X_i,0) \Delta_p(X_i)} K_h(R_i) \delta_i + O_p(h_r) + O_p(h_x^l) \\
    =& \mathrm{i}w e^{\mathrm{i} (wW_i+x'X_i)} (1-D_i) \tau(X_i) K_h(R_i) \delta_i + O_p(h_r) + O_p(h_x^l).
\end{align*}
The sums of the small-order terms are bounded asymptotically since 
\begin{align*}
    & \mathbb{E}\sup_{(w,x)\in\Omega} \bigg| \frac{1}{n}\sum_{i=1}^n \mathrm{i}w e^{\mathrm{i} (wW_i+x'X_i)} \frac{1-D_i}{\frac{1}{2} f(X_i,0) \Delta_p(X_i)} h_r \bigg( \nu_1^+ \frac{\partial}{\partial r}\left( \mu(X_i,0^+)f(X_i,0^+)\right) \\&- \nu_1^- \frac{\partial}{\partial r}\left( \mu(X_i,0^-)f(X_i,0^-)\right) \bigg) K_h(R_i) \delta_i \bigg| \\
    \le& \mathbb{E}\sup_{(w,x)\in\Omega} \frac{1}{n}\sum_{i=1}^n \left| \mathrm{i}w e^{\mathrm{i} (wW_i+x'X_i)} \frac{1-D_i}{\frac{1}{2} f(X_i,0) \Delta_p(X_i)} h_r \nu_1^+ \frac{\partial}{\partial r}\left( \mu(X_i,0^+)f(X_i,0^+)\right) K_h(R_i) \delta_i \right| \\
    &+ \mathbb{E}\sup_{(w,x)\in\Omega} \frac{1}{n}\sum_{i=1}^n \left| \mathrm{i}w e^{\mathrm{i} (wW_i+x'X_i)} \frac{1-D_i}{\frac{1}{2} f(X_i,0) \Delta_p(X_i)} h_r \nu_1^- \frac{\partial}{\partial r}\left( \mu(X_i,0^-)f(X_i,0^-)\right) K_h(R_i) \delta_i \right| \\
    \le& \mathbb{E}\sup_{(w,x)\in\Omega} \left| \mathrm{i}w e^{\mathrm{i} (wW+x'X)} \frac{1-D}{\frac{1}{2} f(X,0) \Delta_p(X)} h_r \nu_1^+ \frac{\partial}{\partial r}\left( \mu(X,0^+)f(X,0^+)\right) K_h(R) \delta \right| \\
    &+ \mathbb{E}\sup_{(w,x)\in\Omega} \left| \mathrm{i}w e^{\mathrm{i} (wW+x'X)} \frac{1-D}{\frac{1}{2} f(X,0) \Delta_p(X)} h_r \nu_1^- \frac{\partial}{\partial r}\left( \mu(X,0^-)f(X,0^-)\right) K_h(R) \delta \right| \\
    \lesssim& \mathbb{E} \left| h_r \frac{\partial}{\partial r}\left( \mu(X,0^+)f(X,0^+)\right) K_h(R) \right| + \mathbb{E} \left| h_r \frac{\partial}{\partial r}\left( \mu(X,0^-)f(X,0^-)\right) K_h(R) \right| = O(h_r); \\
    & \mathbb{E}\sup_{(w,x)\in\Omega} \bigg| \frac{1}{n}\sum_{i=1}^n \mathrm{i}w e^{\mathrm{i} (wW_i+x'X_i)} \frac{1-D_i}{\frac{1}{2} f(X_i,0) \Delta_p(X_i)} \frac{h_x^l\nu_l}{l!} \frac{\partial^l}{\partial x^l} \left( f(X_i,0) \Delta_\mu(X_i) \right) K_h(R_i) \delta_i \bigg| \\
    \le& \mathbb{E}\sup_{(w,x)\in\Omega} \left| \mathrm{i}w e^{\mathrm{i} (wW+x'X)} \frac{1-D}{\frac{1}{2} f(X,0) \Delta_p(X)} \frac{h_x^l\nu_l}{l!} \frac{\partial^l}{\partial x^l} \left( f(X,0) \Delta_\mu(X) \right) K_h(R) \delta \right| \\
    \lesssim& \mathbb{E} \left| h_x^l \frac{\partial^l}{\partial x^l} \left( f(X,0) \Delta_\mu(X) \right) K_h(R) \right| = O(h_x^l);
\end{align*}
The proof also applies to the other side of the cutoff and applies to the residual term from a Taylor expansion of the function $p(x,r)f(x,r)$. 
Then $\frac{1}{n} \sum_{i=1}^n \mathbb{E}(C_{1n,ij}|\mathcal{F}_i) - \frac{1}{n} \sum_{i=1}^n \mathbb{E}(C_{2n,ij}|\mathcal{F}_i)$ is uniformly bounded by $\frac{1}{\sqrt{nh}}$ given Assumption B1. The total expectations are asymptotically equal with $\mathbb{E}C_{1n,ij} - \mathbb{E}C_{2n,ij} = O(h_r) + O(h_x^l)$, which is also uniformly bounded by $\frac{1}{\sqrt{nh}}$ given Assumption B1.

The conditional expectations given $\mathcal{F}_j$ are
\begin{align*}
    \mathbb{E}(C_{1n,ij}|\mathcal{F}_j) =& \mathrm{i}w \mathbb{E}\left[ e^{\mathrm{i} x'X_i} \frac{(1-D_i)e^{\mathrm{i}wW_i}}{\frac{1}{2} f(X_i,0) \Delta_p(X_i)} K_{h_x}(X_j-X_i) K_h(R_i) \delta_i \bigg| X_j \right] Y_j K_{h_r}(R_j)\delta_j, \\
    \mathbb{E}(C_{2n,ij}|\mathcal{F}_j) =& \mathrm{i}w \mathbb{E}\left[ e^{\mathrm{i} x'X_i} \frac{(1-D_i)e^{\mathrm{i}wW_i}}{\frac{1}{2} f(X_i,0) \Delta_p(X_i)} \tau(X_i) K_{h_x}(X_j-X_i) K_h(R_i) \delta_i \bigg| X_j \right] D_j K_{h_r}(R_j)\delta_j .
\end{align*}
Applying Lemma S.A1, the conditional expectation terms for variable $Z_i=(1-D_i)e^{\mathrm{i}wW_i}$ with functions $g_1(X_i)=\frac{e^{\mathrm{i} x'X_i}}{\frac{1}{2} f(X_i,0) \Delta_p(X_i)}$ and $g_2(X_i)=\frac{\tau(X_i) e^{\mathrm{i} x'X_i}}{\frac{1}{2} f(X_i,0) \Delta_p(X_i)}$ are 
\begin{align*}
    \mathbb{E}\left[ g_1(X_i) (1-D_i)e^{\mathrm{i}wW_i} K_{h_x}(X_j-X_i) K_h(R_i) \delta_i \bigg| X_j \right] =& e^{\mathrm{i} x'X_j} \frac{\frac{1}{2} f(X_j,0) \Delta_\varphi(w,X_j)}{\frac{1}{2} f(X_j,0) \Delta_p(X_j)} + O_p(h + h_x^l), \\
    \mathbb{E}\left[ g_2(X_i) (1-D_i)e^{\mathrm{i}wW_i} K_{h_x}(X_j-X_i) K_h(R_i) \delta_i \bigg| X_j \right] =& e^{\mathrm{i} x'X_j} \frac{\frac{1}{2} f(X_j,0) \Delta_\varphi(w,X_j)}{\frac{1}{2} f(X_j,0) \Delta_p(X_j)} \tau(X_j) + O_p(h + h_x^l).
\end{align*}
Therefore, $\mathbb{E}(C_{1n,ij}|\mathcal{F}_j)$ and $\mathbb{E}(C_{2n,ij}|\mathcal{F}_j)$ can be rewritten as
\begin{align*}
    \mathbb{E}(C_{1n,ij}|\mathcal{F}_j) =& \mathrm{i}w e^{\mathrm{i}x'X_j} \kappa(w,X_j) Y_j K_{h_r}(R_j)\delta_j + O_p(h) + O_p(h_x^l), \\
    \mathbb{E}(C_{2n,ij}|\mathcal{F}_j) =& \mathrm{i}w e^{\mathrm{i}x'X_j} \kappa(w,X_j) \tau(X_j) D_j K_{h_r}(R_j)\delta_j + O_p(h) + O_p(h_x^l) .
\end{align*}
The summations of the $O_p(h)$ and $O_p(h_x^l)$ terms are uniformly bounded by $\frac{1}{\sqrt{nh}}$ since 
\begin{align*}
    & \mathbb{E}\sup_{(w,x)\in\Omega} \bigg| \frac{1}{n}\sum_{j=1}^n \mathrm{i}w \frac{h_x^l \nu_l}{l!} \frac{\partial^l}{\partial x^l} \left( g(X_j) \frac{1}{2} f(X_j,0) \Delta_\varphi(w,X_j) \right) Y_j K_{h_r}(R_j) \delta_j \bigg| \\
    \le & \mathbb{E}\sup_{(w,x)\in\Omega} \bigg| \mathrm{i}w \frac{h_x^l \nu_l}{l!} \frac{\partial^l}{\partial x^l} \left( g(X) \frac{1}{2} f(X,0) \Delta_\varphi(w,X) \right) Y K_{h_r}(R) \delta \bigg| \\
    \lesssim& \mathbb{E} \left| h_x^l \frac{\partial^l}{\partial x^l} \left( g(X) f(X,0) \Delta_\varphi(w,X) \right) Y K_{h_r}(R) \right| = O(h_x^l); \\
    & \mathbb{E}\sup_{(w,x)\in\Omega} \bigg| \frac{1}{n}\sum_{j=1}^n \mathrm{i}w h \bigg( \nu_1^+ \frac{\partial}{\partial r} \left( g(X_j) f(X_j,0) \varphi(w,0|X_j,0^+) \right) \\&- \nu_1^- \frac{\partial}{\partial r} \left( g(X_j) f(X_j,0) \varphi(w,0|X_j,0^-) \right) \bigg)  Y_j K_{h_r}(R_j) \delta_j \bigg| \\
    \le& \mathbb{E}\sup_{(w,x)\in\Omega} \left| \mathrm{i}w h \nu_1^+ \frac{\partial}{\partial r} \left( g(X) f(X,0) \varphi(w,0|X,0^+) \right) Y K_{h_r}(R) \delta \right| \\
    &+ \mathbb{E}\sup_{(w,x)\in\Omega} \left| \mathrm{i}w h \nu_1^- \frac{\partial}{\partial r} \left( g(X) f(X,0) \varphi(w,0|X,0^-) \right) Y K_{h_r}(R) \delta \right| \\
    \lesssim& \mathbb{E} \left| h \frac{\partial}{\partial r} \left( g(X) f(X,0) \varphi(w,0|X,0^+) \right) Y K_{h_r}(R)\right| \\&+ \mathbb{E} \left| h \frac{\partial}{\partial r} \left( g(X) f(X,0) \varphi(w,0|X,0^-) \right) Y K_{h_r}(R) \right| = O(h); 
\end{align*}

For the high-order term in the decomposition of $C_{1n}(w,x)$, $U_n^{(2)}(\pi_2 g_{C_{1n}})$ is composed of the symmetrized function $g_{C_{1n}} := \frac{1}{2} (C_{1n,ij} + C_{1n,ji})$. $g_{C_{1n}}$ is of VC-type and has envelope $G_{C_{1n}} = \sup_{(w,x)\in\Omega}|g_{C_{1n}}|$. The second moment of the envelope $G_{C_{1n}}$ can be derived by a fundamental inequality
\begin{align*}
    G_{C_{1n}} =& \sup_{(w,x)\in\Omega} \left| \frac{1}{2} (C_{1n,ij} + C_{1n,ji}) \right| \le \frac{1}{2} \sup_{(w,x)\in\Omega} \left|C_{1n,ij}\right| + \frac{1}{2} \sup_{(w,x)\in\Omega}\left|C_{1n,ji}\right| \\
    G_{C_{1n}}^2 \le& \left( \frac{1}{2} \sup_{(w,x)\in\Omega} \left|C_{1n,ij}\right| + \frac{1}{2} \sup_{(w,x)\in\Omega}\left|C_{1n,ji}\right| \right)^2 \\
    =& \frac{1}{2} \sup_{(w,x)\in\Omega} \left|C_{1n,ij}\right|^2 + \frac{1}{2} \sup_{(w,x)\in\Omega}\left|C_{1n,ji}\right|^2 = \sup_{(w,x)\in\Omega} \left|C_{1n,ij}\right|^2. \\
    \mathbb{E} G_{C_{1n}}^2 \le& \mathbb{E} \sup_{(w,x)\in\Omega} \left|C_{1n,ij}\right|^2 \\
    =& \mathbb{E} \sup_{(w,x)\in\Omega} \bigg| \mathrm{i}w e^{\mathrm{i} (wW_i+x'X_i)} \frac{1-D_i}{\frac{1}{2} f(X_i,0) \Delta_p(X_i)} Y_j K_{h_x}(X_j-X_i) K_{h_r}(R_j) K_h(R_i) \delta_i \bigg|^2 \\
    \lesssim& \mathbb{E} \left[ Y_j K_{h_x}(X_j-X_i) K_{h_r}(R_j) K_h(R_i) \right]^2 = O(\frac{1}{hh_rh_x^d}). 
\end{align*}
The last equation is guaranteed by Assumption B5. Following Proposition 4 in \cite{Delgado2001}:
\begin{align*}
    \mathbb{E} \sup_{(w,x)\in\Omega} \left| \frac{1}{n^{2/2}} \sum_{i\ne j} (\pi_2 g_{C_{1n}}) \right|^2 &= \mathbb{E} \sup_{(w,x)\in\Omega} \left| (n-1) U_n^{(2)}(\pi_2 g_{C_{1n}}) \right|^2 \lesssim \mathbb{E} G_{C_{1n}}^2. \\
    \mathbb{E} \sup_{(w,x)\in\Omega} \left| U_n^{(2)}(\pi_2 g_{C_{1n}}) \right|^2 &\lesssim \frac{\mathbb{E} G_{C_{1n}}^2}{(n-1)^2} = O(\frac{1}{n^2hh_rh_x^d}) = o(\frac{1}{nh}). 
\end{align*}
Thus, $\sup_{(w,x)\in\Omega} \left| U_n^{(2)}(\pi_2 g_{C_{1n}}) \right| = o_p(\frac{1}{\sqrt{nh}})$. Similarly, the high-order term $U_n^{(2)}(\pi_2 g_{C_{2n}})$ in the decomposition of $C_{2n}(w,x)$ is bounded given the second moment of the envelope $\mathbb{E} G_{C_{2n}}^2 \le \mathbb{E} \sup_{(w,x)\in\Omega} \left|C_{2n,ij}\right|^2$:
\begin{align*}
    \mathbb{E} G_{C_{2n}}^2 \le& \mathbb{E} \sup_{(w,x)\in\Omega} \bigg| \mathrm{i}w e^{\mathrm{i} (wW_i+x'X_i)} \frac{\tau(X_i)(1-D_i)}{\frac{1}{2} f(X_i,0) \Delta_p(X_i)} D_j K_{h_x}(X_j-X_i) K_{h_r}(R_j)\delta_j K_h(R_i) \delta_i \bigg|^2 \\
    \lesssim& \mathbb{E} \left[ \tau(X_i) K_{h_x}(X_j-X_i) K_{h_r}(R_j) K_h(R_i) \right]^2 = O(\frac{1}{hh_rh_x^d}).
\end{align*}
Following Proposition 4 in \cite{Delgado2001}, $U_n^{(2)}(\pi_2 g_{C_{2n}})$ is uniformly bounded by $\frac{1}{\sqrt{nh}}$ over $(w,x)\in\Omega$ since $$\mathbb{E} \sup_{(w,x)\in\Omega} \left| U_n^{(2)}(\pi_2 g_{C_{2n}}) \right|^2 \lesssim \frac{\mathbb{E} G_{C_{2n}}^2}{(n-1)^2} = O(\frac{1}{n^2hh_rh_x^d}) = o(\frac{1}{nh}). $$

Summarizing the aforementioned results for $C_{1n}(w,x)$ and $C_{2n}(w,x)$, 
\begin{align*}
    & C_{1n}(w,x) - C_{2n}(w,x) \\
    =& \frac{1}{n} \sum_{j=1}^n \mathrm{i}w e^{\mathrm{i}x'X_j} \kappa(w,X_j) \left( Y_j - D_j\tau(X_j) \right) K_{h_r}(R_j)\delta_j + O_p(h_r) + O_p(h_x^l) + o_p(\frac{1}{\sqrt{nh}}) \\
    =& \frac{1}{n} \sum_{j=1}^n \mathrm{i}w e^{\mathrm{i}x'X_j} \kappa(w,X_j) \left( Y_j - D_j\tau(X_j) \right) K_{h_r}(R_j)\delta_j + o_p(\frac{1}{\sqrt{nh}}).
\end{align*}

For the third term in the leading terms of $C_n(w,x)$, $$ C_{3n}(w,x) = \frac{\mathrm{i}w}{n} \sum_{i=1}^n e^{\mathrm{i} (wW_i+x'X_i)} (1-D_i) \left( \frac{\hat{f}(X_i,0) \Delta_{\hat\mu}(X_i)}{\frac{1}{2} f(X_i,0) \Delta_p(X_i)} - \tau(X_i) \right) \left( \frac{\hat{f}(X_i,0) \Delta_{\hat{p}}(X_i)}{\frac{1}{2} f(X_i,0) \Delta_p(X_i)} - 1 \right) K_h(R_i) \delta_i, $$ by expanding the components $\frac{\hat{f}(X_i,0) \Delta_{\hat\mu}(X_i)}{\frac{1}{2} f(X_i,0) \Delta_p(X_i)} - \tau(X_i)$ and $\frac{\hat{f}(X_i,0) \Delta_{\hat{p}}(X_i)}{\frac{1}{2} f(X_i,0) \Delta_p(X_i)} - 1$ in the summation, it can be written as $$C_{3n}(w,x) = \frac{1}{n-1} C_{31n}(w,x) + \frac{n-2}{n-1} C_{32n}(w,x), $$ where $C_{31n}(w,x) = U_n^{(2)}(C_{31n,ij})$ is a second-order $U$-process and $C_{32n}(w,x) = U_n^{(3)}(C_{32n,ijk})$ is a third-order $U$-process with elements 
\begin{align*}
    C_{31n,ij} =& \mathrm{i}w e^{\mathrm{i} (wW_i+x'X_i)} (1-D_i) \left( \frac{Y_j K_{h_x}(X_j-X_i) K_{h_r}(R_j) \delta_j}{\frac{1}{2} f(X_i,0) \Delta_p(X_i)} - \tau(X_i) \right) \\& \left( \frac{D_j K_{h_x}(X_j-X_i) K_{h_r}(R_j) \delta_j}{\frac{1}{2} f(X_i,0) \Delta_p(X_i)} -1 \right) K_h(R_i) \delta_i, \\
    C_{32n,ijk} =& \mathrm{i}w e^{\mathrm{i} (wW_i+x'X_i)} (1-D_i) \left( \frac{Y_j K_{h_x}(X_j-X_i) K_{h_r}(R_j) \delta_j}{\frac{1}{2} f(X_i,0) \Delta_p(X_i)} - \tau(X_i) \right) \\& \left( \frac{D_k K_{h_x}(X_k-X_i) K_{h_r}(R_k) \delta_k}{\frac{1}{2} f(X_i,0) \Delta_p(X_i)} -1 \right) K_h(R_i) \delta_i. 
\end{align*}
The Hoeffding decomposition of $U$-process $C_{31n}(w,x)$ is $$ C_{31n}(w,x) = \mathbb{E} C_{31n,ij} + 2U_n^{(1)}(\pi_1 g_{C_{31n}}) + U_n^{(2)}(\pi_2 g_{C_{31n}}). $$ The expectation $\mathbb{E} C_{31n,ij}$ is $O(\frac{1}{h_x^dh_r})$ and $\frac{1}{n-1} \mathbb{E} C_{31n,ij} = O(\frac{1}{nh_x^dh_r}) = o(\frac{1}{\sqrt{nh}})$. For the high-order terms, the second-order moment of envelope $G_{C_{31n}}$ is bounded by
\begin{align*}
    \mathbb{E} G_{C_{31n}}^2 \le& \mathbb{E} \sup_{(w,x)\in\Omega} \left| \mathrm{i}w e^{\mathrm{i} (wW_i+x'X_i)} (1-D_i) \left( \frac{Y_j K_{h_x}(X_j-X_i) K_{h_r}(R_j) \delta_j}{\frac{1}{2} f(X_i,0) \Delta_p(X_i)} - \tau(X_i) \right) \right. \\& \left. \left( \frac{D_j K_{h_x}(X_j-X_i) K_{h_r}(R_j) \delta_j}{\frac{1}{2} f(X_i,0) \Delta_p(X_i)} -1 \right) K_h(R_i) \delta_i \right|^2 \\
    \lesssim& \mathbb{E} \left| Y_j K_{h_x}^2(X_j-X_i) K_{h_r}^2(R_j) K_h(R_i) \right|^2 + \mathbb{E} \left| Y_j K_{h_x}(X_j-X_i) K_{h_r}(R_j) K_h(R_i) \right|^2 \\&+ \mathbb{E} \left| K_{h_x}(X_j-X_i) K_{h_r}(R_j) K_h(R_i) \right|^2 + \mathbb{E} \left| \tau(X_i) K_h(R_i) \right|^2 \\
    =& O(h_x^{-3d} h_r^{-3} h^{-1}) + O(h_x^{-d} h_r^{-1} h^{-1}) + O(h^{-1}) = O(h_x^{-3d} h_r^{-3} h^{-1}).
\end{align*}
Following Proposition 4 in \cite{Delgado2001}, the first- and second-order terms in the $U$-process are uniformly bounded since
\begin{align*}
    \mathbb{E} \sup_{(w,x)\in\Omega} \left\Vert \frac{2}{n-1} U_n^{(1)}(\pi_1 g_{C_{31n}}) \right\Vert^2 &\lesssim \frac{\mathbb{E} G_{C_{31n}}^2}{n(n-1)^2} = O(\frac{1}{n^3h_x^{3d}h_r^3h}) = o(\frac{1}{nh}). \\
    \mathbb{E} \sup_{(w,x)\in\Omega} \left\Vert \frac{1}{n-1} U_n^{(2)}(\pi_2 g_{C_{31n}}) \right\Vert^2 &\lesssim \frac{\mathbb{E} G_{C_{31n}}^2}{(n-1)^4} = O(\frac{1}{n^4h_x^{3d}h_r^3h}) = o(\frac{1}{nh}).
\end{align*}
Then it can be concluded that $\sup_{(w,x)\in\Omega} \left\Vert \frac{C_{31n}(w,x)}{n-1} \right\Vert = o_p\left( (nh)^{-1/2} \right)$. 

The Hoeffding decomposition of $U$-process $C_{32n}(w,x)$ is \begin{align*}
    C_{32n}(w,x) =& \frac{1}{n} \sum_{i=1}^n \mathbb{E} \left( C_{32n,ijk} | \mathcal{F}_i \right) + \frac{1}{n} \sum_{j=1}^n \mathbb{E} \left( C_{32n,ijk} | \mathcal{F}_j \right) + \frac{1}{n} \sum_{k=1}^n \mathbb{E} \left( C_{32n,ijk} | \mathcal{F}_k \right) \\ &-2 \mathbb{E} C_{32n,ijk} + 3 U_n^{(2)} (\pi_2 g_{C_{32n}}) + U_n^{(3)} (\pi_3 g_{C_{32n}}).
\end{align*}
The conditional expectation terms are
\begin{align*}
    & \mathbb{E} \left( C_{32n,ijk} | \mathcal{F}_i \right) = \mathrm{i}w e^{\mathrm{i} (wW_i+x'X_i)} (1-D_i) \mathbb{E} \left[ \left( \frac{Y_j K_{h_x}(X_j-X_i) K_{h_r}(R_j) \delta_j}{\frac{1}{2} f(X_i,0) \Delta_p(X_i)} - \tau(X_i) \right) \right. \\& \left. \left( \frac{D_k K_{h_x}(X_k-X_i) K_{h_r}(R_k) \delta_k}{\frac{1}{2} f(X_i,0) \Delta_p(X_i)} -1 \right) \bigg|X_i \right] K_h(R_i) \delta_i \\
    =& \mathrm{i}w e^{\mathrm{i} (wW_i+x'X_i)} \frac{(1-D_i) m_\mu(X_i) m_p(X_i)}{\left( \frac{1}{2} f(X_i,0) \Delta_p(X_i) \right)^2} K_h(R_i) \delta_i = [O_p(h_x^l) + O_p(h_r)]^2 = O_p(h_x^{2l}) + O_p(h_r^2); 
\end{align*}
\begin{align*}
    & \mathbb{E} \left( C_{32n,ijk} | \mathcal{F}_j \right) = \mathbb{E} \left( \mathbb{E} (C_{32n,ijk}|\mathcal{F}_i, \mathcal{F}_j) | \mathcal{F}_j \right) \\
    =& \mathbb{E} \bigg[ \mathrm{i}w e^{\mathrm{i} (wW_i+x'X_i)} (1-D_i) K_h(R_i) \delta_i \left( \frac{Y_j K_{h_x}(X_j-X_i) K_{h_r}(R_j) \delta_j}{\frac{1}{2} f(X_i,0) \Delta_p(X_i)} - \tau(X_i) \right) \\& \mathbb{E} \left( \frac{D_k K_{h_x}(X_k-X_i) K_{h_r}(R_k) \delta_k}{\frac{1}{2} f(X_i,0) \Delta_p(X_i)} -1 \bigg|X_i \right) \bigg| \mathcal{F}_j \bigg] \\
    =& \mathbb{E} \bigg[ \mathrm{i}w e^{\mathrm{i} (wW_i+x'X_i)} (1-D_i) \left( \frac{Y_j K_{h_x}(X_j-X_i) K_{h_r}(R_j) \delta_j}{\frac{1}{2} f(X_i,0) \Delta_p(X_i)} - \tau(X_i) \right) \frac{m_p(X_i) K_h(R_i) \delta_i}{\frac{1}{2} f(X_i,0) \Delta_p(X_i)} \bigg| \mathcal{F}_j \bigg] \\=& O_p(h_x^l) + O_p(h_r); \\
    & \mathbb{E} \left( C_{32n,ijk} | \mathcal{F}_k \right) = \mathbb{E} \left( \mathbb{E} (C_{32n,ijk}|\mathcal{F}_i, \mathcal{F}_k) | \mathcal{F}_k \right) \\
    =& \mathbb{E} \bigg[ \mathrm{i}w e^{\mathrm{i} (wW_i+x'X_i)} (1-D_i) \left( \frac{D_k K_{h_x}(X_k-X_i) K_{h_r}(R_k) \delta_k}{\frac{1}{2} f(X_i,0) \Delta_p(X_i)} -1 \right) \frac{m_\mu(X_i) K_h(R_i) \delta_i}{\frac{1}{2} f(X_i,0) \Delta_p(X_i)} \bigg| \mathcal{F}_k \bigg] \\=& O_p(h_x^l) + O_p(h_r).
\end{align*}
Then the total expectation is $\mathbb{E}[\mathbb{E}(C_{32n,ijk}|\mathcal{F}_i)] = O(h_x^{2l}) + O(h_r^2)$. The second-order moment of envelope $G_{C_{32n}}$ is bounded by
\begin{align*}
    \mathbb{E} G_{C_{32n}}^2 \le& \mathbb{E} \sup_{(w,x)\in\Omega} \left| \mathrm{i}w e^{\mathrm{i} (wW_i+x'X_i)} (1-D_i) \left( \frac{Y_j K_{h_x}(X_j-X_i) K_{h_r}(R_j) \delta_j}{\frac{1}{2} f(X_i,0) \Delta_p(X_i)} - \tau(X_i) \right) \right. \\& \left. \left( \frac{D_k K_{h_x}(X_k-X_i) K_{h_r}(R_k) \delta_k}{\frac{1}{2} f(X_i,0) \Delta_p(X_i)} -1 \right) K_h(R_i) \delta_i \right|^2 \\
    \lesssim& \mathbb{E} \left| Y_j K_{h_x}(X_j-X_i) K_{h_x}(X_k-X_i) K_{h_r}(R_j) K_{h_r}(R_k) K_h(R_i) \right|^2 + \mathbb{E} \left| \tau(X_i) K_h(R_i) \right|^2 \\
    &+ \mathbb{E} \left| Y_j K_{h_x}(X_j-X_i) K_{h_r}(R_j) K_h(R_i) \right|^2 + \mathbb{E} \left| K_{h_x}(X_k-X_i) K_{h_r}(R_k) K_h(R_i) \right|^2 \\
    =& O(h_x^{-2d} h_r^{-2} h^{-1}) + O(h_x^{-d} h_r^{-1} h^{-1}) + O(h^{-1}) = O(h_x^{-2d} h_r^{-2} h^{-1}).
\end{align*}
Following Proposition 4 in \cite{Delgado2001}, the second- and third-order terms in the $U$-process are uniformly bounded since
\begin{align*}
    \mathbb{E} \sup_{(w,x)\in\Omega} \left\Vert 3 \frac{n-2}{n-1} U_n^{(2)}(\pi_2 g_{C_{32n}}) \right\Vert^2 &\lesssim \frac{(n-2)^2 \mathbb{E} G_{C_{32n}}^2}{(n-1)^4} = O(\frac{1}{n^2h_x^{2d}h_r^2h}) = o(\frac{1}{nh}), \\
    \mathbb{E} \sup_{(w,x)\in\Omega} \left\Vert \frac{n-2}{n-1} U_n^{(3)}(\pi_3 g_{C_{32n}}) \right\Vert^2 &\lesssim \frac{n \mathbb{E} G_{C_{32n}}^2}{(n-1)^4} = O(\frac{1}{n^3h_x^{2d}h_r^2h}) = o(\frac{1}{nh}).
\end{align*}
Then it can be concluded that $\sup_{(w,x)\in\Omega} \left\Vert \frac{n-2}{n-1} C_{32n}(w,x) \right\Vert = o_p\left( (nh)^{-1/2} \right)$. Furthermore, $\sup_{(w,x)\in\Omega} \left\Vert C_{3n}(w,x) \right\Vert = o_p\left( (nh)^{-1/2} \right)$.

For the fourth term in the leading terms of $C_n(w,x)$, $$ C_{4n}(w,x) = \frac{\mathrm{i}w}{n} \sum_{i=1}^n e^{\mathrm{i} (wW_i+x'X_i)} (1-D_i) \left( \frac{\hat{f}(X_i,0) \Delta_{\hat{p}}(X_i)}{\frac{1}{2} f(X_i,0) \Delta_p(X_i)} - 1 \right)^2 \tau(X_i) K_h(R_i) \delta_i, $$ by expanding $\frac{\hat{f}(X_i,0) \Delta_{\hat{p}}(X_i)}{\frac{1}{2} f(X_i,0) \Delta_p(X_i)} - 1$ in the summation, it can be written as $$C_{4n}(w,x) = \frac{1}{n-1} C_{41n}(w,x) + \frac{n-2}{n-1} C_{42n}(w,x), $$ where $C_{41n}(w,x) = U_n^{(2)}(C_{41n,ij})$ is a second-order $U$-process and $C_{42n}(w,x) = U_n^{(3)}(C_{42n,ijk})$ is a third-order $U$-process with elements
\begin{align*}
    C_{41n,ij} =& \mathrm{i}w e^{\mathrm{i} (wW_i+x'X_i)} (1-D_i) \left( \frac{D_j K_{h_x}(X_j-X_i) K_{h_r}(R_j) \delta_j}{\frac{1}{2} f(X_i,0) \Delta_p(X_i)} -1 \right)^2 \tau(X_i) K_h(R_i) \delta_i, \\
    C_{42n,ijk} =& \mathrm{i}w e^{\mathrm{i} (wW_i+x'X_i)} (1-D_i) \left( \frac{D_j K_{h_x}(X_j-X_i) K_{h_r}(R_j) \delta_j}{\frac{1}{2} f(X_i,0) \Delta_p(X_i)} - 1 \right) \\& \left( \frac{D_k K_{h_x}(X_k-X_i) K_{h_r}(R_k) \delta_k}{\frac{1}{2} f(X_i,0) \Delta_p(X_i)} -1 \right) \tau(X_i) K_h(R_i) \delta_i. 
\end{align*}
The Hoeffding decomposition of $U$-process $C_{41n}(w,x)$ is $$ C_{41n}(w,x) = \mathbb{E} C_{41n,ij} + 2U_n^{(1)}(\pi_1 g_{C_{41n}}) + U_n^{(2)}(\pi_2 g_{C_{41n}}). $$ The expectation $\mathbb{E} C_{41n,ij}$ is $O(\frac{1}{h_x^dh_r})$ and $\frac{1}{n-1} \mathbb{E} C_{41n,ij} = O(\frac{1}{nh_x^dh_r}) = o(\frac{1}{\sqrt{nh}})$. For the high-order terms, the second-order moment of envelope $G_{C_{41n}}$ is bounded by
\begin{align*}
    \mathbb{E} G_{C_{41n}}^2 \le& \mathbb{E} \sup_{(w,x)\in\Omega} \left| \mathrm{i}w e^{\mathrm{i} (wW_i+x'X_i)} (1-D_i) \left( \frac{D_j K_{h_x}(X_j-X_i) K_{h_r}(R_j) \delta_j}{\frac{1}{2} f(X_i,0) \Delta_p(X_i)} -1 \right)^2 \tau(X_i) K_h(R_i) \delta_i \right|^2 \\
    \lesssim& \mathbb{E} \left| \tau(X_i) K_{h_x}^2(X_j-X_i) K_{h_r}^2(R_j) K_h(R_i) \right|^2 + \mathbb{E} \left| \tau(X_i) K_{h_x}(X_j-X_i) K_{h_r}(R_j) K_h(R_i) \right|^2 \\&+ \mathbb{E} \left| \tau(X_i) K_h(R_i) \right|^2 \\
    =& O(h_x^{-3d} h_r^{-3} h^{-1}) + O(h_x^{-d} h_r^{-1} h^{-1}) + O(h^{-1}) = O(h_x^{-3d} h_r^{-3} h^{-1}).
\end{align*}
Following Proposition 4 in \cite{Delgado2001}, the first- and second-order terms in the $U$-process are uniformly bounded since
\begin{align*}
    \mathbb{E} \sup_{(w,x)\in\Omega} \left\Vert \frac{2}{n-1} U_n^{(1)}(\pi_1 g_{C_{41n}}) \right\Vert^2 &\lesssim \frac{\mathbb{E} G_{C_{41n}}^2}{n(n-1)^2} = O(\frac{1}{n^3h_x^{3d}h_r^3h}) = o(\frac{1}{nh}). \\
    \mathbb{E} \sup_{(w,x)\in\Omega} \left\Vert \frac{1}{n-1} U_n^{(2)}(\pi_2 g_{C_{41n}}) \right\Vert^2 &\lesssim \frac{\mathbb{E} G_{C_{41n}}^2}{(n-1)^4} = O(\frac{1}{n^4h_x^{3d}h_r^3h}) = o(\frac{1}{nh}).
\end{align*}
Then it can be concluded that $\sup_{(w,x)\in\Omega} \left\Vert \frac{C_{41n}(w,x)}{n-1} \right\Vert = o_p\left( (nh)^{-1/2} \right)$. 

The Hoeffding decomposition of $U$-process $C_{42n}(w,x)$ is \begin{align*}
    C_{42n}(w,x) =& \frac{1}{n} \sum_{i=1}^n \mathbb{E} \left( C_{42n,ijk} | \mathcal{F}_i \right) + \frac{1}{n} \sum_{j=1}^n \mathbb{E} \left( C_{42n,ijk} | \mathcal{F}_j \right) + \frac{1}{n} \sum_{k=1}^n \mathbb{E} \left( C_{42n,ijk} | \mathcal{F}_k \right) \\ &-2 \mathbb{E} C_{42n,ijk} + 3 U_n^{(2)} (\pi_2 g_{C_{42n}}) + U_n^{(3)} (\pi_3 g_{C_{42n}}).
\end{align*}
The conditional expectation terms are
\begin{align*}
    & \mathbb{E} \left( C_{42n,ijk} | \mathcal{F}_i \right) = \mathrm{i}w e^{\mathrm{i} (wW_i+x'X_i)} (1-D_i) \mathbb{E} \left[ \left( \frac{D_j K_{h_x}(X_j-X_i) K_{h_r}(R_j) \delta_j}{\frac{1}{2} f(X_i,0) \Delta_p(X_i)} - 1 \right) \right. \\& \left. \left( \frac{D_k K_{h_x}(X_k-X_i) K_{h_r}(R_k) \delta_k}{\frac{1}{2} f(X_i,0) \Delta_p(X_i)} -1 \right) \bigg|X_i \right] \tau(X_i) K_h(R_i) \delta_i \\
    =& \mathrm{i}w e^{\mathrm{i} (wW_i+x'X_i)} \frac{(1-D_i) m_p^2(X_i)}{\left( \frac{1}{2} f(X_i,0) \Delta_p(X_i) \right)^2} \tau(X_i) K_h(R_i) \delta_i \\
    =& [O_p(h_x^l) + O_p(h_r)] [O_p(h_x^l) + O_p(h_r)] = O_p(h_x^{2l}) + O_p(h_r^2); \\
    & \mathbb{E} \left( C_{42n,ijk} | \mathcal{F}_j \right) = \mathbb{E} \left( \mathbb{E} (C_{42n,ijk}|\mathcal{F}_i, \mathcal{F}_j) | \mathcal{F}_j \right) \\
    =& \mathbb{E} \bigg[ \mathrm{i}w e^{\mathrm{i} (wW_i+x'X_i)} (1-D_i) \tau(X_i) K_h(R_i) \delta_i \left( \frac{D_j K_{h_x}(X_j-X_i) K_{h_r}(R_j) \delta_j}{\frac{1}{2} f(X_i,0) \Delta_p(X_i)} - 1 \right) \\& \mathbb{E} \left( \frac{D_k K_{h_x}(X_k-X_i) K_{h_r}(R_k) \delta_k}{\frac{1}{2} f(X_i,0) \Delta_p(X_i)} -1 \bigg|X_i \right) \bigg| \mathcal{F}_j \bigg] \\
    =& \mathbb{E} \bigg[ \mathrm{i}w e^{\mathrm{i} (wW_i+x'X_i)} (1-D_i) \left( \frac{D_j K_{h_x}(X_j-X_i) K_{h_r}(R_j) \delta_j}{\frac{1}{2} f(X_i,0) \Delta_p(X_i)} - 1 \right) \frac{m_p(X_i) \tau(X_i) K_h(R_i) \delta_i}{\frac{1}{2} f(X_i,0) \Delta_p(X_i)} \bigg| \mathcal{F}_j \bigg] \\=& O_p(h_x^l) + O_p(h_r); \\
    & \mathbb{E} \left( C_{42n,ijk} | \mathcal{F}_k \right) = \mathbb{E} \left( \mathbb{E} (C_{42n,ijk}|\mathcal{F}_i, \mathcal{F}_k) | \mathcal{F}_k \right) \\
    =& \mathbb{E} \bigg[ \mathrm{i}w e^{\mathrm{i} (wW_i+x'X_i)} (1-D_i) \left( \frac{D_k K_{h_x}(X_k-X_i) K_{h_r}(R_k) \delta_k}{\frac{1}{2} f(X_i,0) \Delta_p(X_i)} -1 \right) \frac{m_p(X_i) K_h(R_i) \delta_i}{\frac{1}{2} f(X_i,0) \Delta_p(X_i)} \bigg| \mathcal{F}_k \bigg] \\=& O_p(h_x^l) + O_p(h_r).
\end{align*}
Then the total expectation is $\mathbb{E}[\mathbb{E}(C_{42n,ijk}|\mathcal{F}_i)] = O(h_x^{2l}) + O(h_r^2)$. The second-order moment of envelope $G_{C_{42n}}$ is bounded by

\begin{align*}
    \mathbb{E} G_{C_{42n}}^2 \le& \mathbb{E} \sup_{(w,x)\in\Omega} \left| \mathrm{i}w e^{\mathrm{i} (wW_i+x'X_i)} (1-D_i) \left( \frac{D_j K_{h_x}(X_j-X_i) K_{h_r}(R_j) \delta_j}{\frac{1}{2} f(X_i,0) \Delta_p(X_i)} - 1 \right) \right. \\& \left. \left( \frac{D_k K_{h_x}(X_k-X_i) K_{h_r}(R_k) \delta_k}{\frac{1}{2} f(X_i,0) \Delta_p(X_i)} -1 \right) \tau(X_i) K_h(R_i) \delta_i \right|^2 \\
    \lesssim& \mathbb{E} \left| \tau(X_i) K_{h_x}(X_j-X_i) K_{h_x}(X_k-X_i) K_{h_r}(R_j) K_{h_r}(R_k) K_h(R_i) \right|^2 \\
    &+ \mathbb{E} \left| \tau(X_i) \tau(X_i) K_h(R_i) \right|^2 + \mathbb{E} \left| \tau(X_i) K_{h_x}(X_k-X_i) K_{h_r}(R_k) K_h(R_i) \right|^2 \\
    =& O(h_x^{-2d} h_r^{-2} h^{-1}) + O(h_x^{-d} h_r^{-1} h^{-1}) + O(h^{-1}) = O(h_x^{-2d} h_r^{-2} h^{-1}).
\end{align*}
Following Proposition 4 in \cite{Delgado2001}, the second- and third-order terms in the $U$-process are uniformly bounded since
\begin{align*}
    \mathbb{E} \sup_{(w,x)\in\Omega} \left\Vert 3 \frac{n-2}{n-1} U_n^{(2)}(\pi_2 g_{C_{42n}}) \right\Vert^2 &\lesssim \frac{(n-2)^2 \mathbb{E} G_{C_{42n}}^2}{(n-1)^4} = O(\frac{1}{n^2h_x^{2d}h_r^2h}) = o(\frac{1}{nh}), \\
    \mathbb{E} \sup_{(w,x)\in\Omega} \left\Vert \frac{n-2}{n-1} U_n^{(3)}(\pi_3 g_{C_{42n}}) \right\Vert^2 &\lesssim \frac{n \mathbb{E} G_{C_{42n}}^2}{(n-1)^4} = O(\frac{1}{n^3h_x^{2d}h_r^2h}) = o(\frac{1}{nh}).
\end{align*}
Then it can be concluded that $\sup_{(w,x)\in\Omega} \left\Vert \frac{n-2}{n-1} C_{42n}(w,x) \right\Vert = o_p\left( (nh)^{-1/2} \right)$. Furthermore, $\sup_{(w,x)\in\Omega} \left\Vert C_{4n}(w,x) \right\Vert = o_p\left( (nh)^{-1/2} \right)$.

$C_{5n}(w,x)$--$C_{7n}(w,x)$ can be written as $$C_{sn}(w,x) = \frac{1}{n-1} C_{s1n}(w,x) + \frac{n-2}{n-1} C_{s2n}(w,x), \quad s=5,6,7, $$ where $C_{s1n}(w,x) = U_n^{(2)}(C_{s1n,ij})$ is a second-order $U$-process and $C_{s2n}(w,x) = U_n^{(3)}(C_{s2n,ijk})$ is a third-order $U$-process. 

The Hoeffding decompositions of the $U$-processes $C_{s1n}(w,x)$ and $C_{s2n}(w,x)$ are
\begin{align*}
    C_{s1n}(w,x) =& \mathbb{E} C_{s1n,ij} + 2U_n^{(1)}(\pi_1 g_{C_{s1n}}) + U_n^{(2)}(\pi_2 g_{C_{s1n}}), \\
    C_{s2n}(w,x) =& \frac{1}{n} \sum_{i=1}^n \mathbb{E} \left( C_{s2n,ijk} | \mathcal{F}_i \right) + \frac{1}{n} \sum_{j=1}^n \mathbb{E} \left( C_{s2n,ijk} | \mathcal{F}_j \right) + \frac{1}{n} \sum_{k=1}^n \mathbb{E} \left( C_{s2n,ijk} | \mathcal{F}_k \right) \\ &-2 \mathbb{E} C_{s2n,ijk} + 3 U_n^{(2)} (\pi_2 g_{C_{s2n}}) + U_n^{(3)} (\pi_3 g_{C_{s2n}}).    
\end{align*}
The conditional expectation terms in the Hoeffding decomposition of $C_{s2n}(w,x)$ are
\begin{align*}
    & \mathbb{E} \left( C_{s2n,ijk} | \mathcal{F}_i \right) = [O_p(h_x^l) + O_p(h_r)] [O_p(h_x^l) + O_p(h_r)] = O_p(h_x^{2l}) + O_p(h_r^2); \\
    & \mathbb{E} \left( C_{s2n,ijk} | \mathcal{F}_j \right) = \mathbb{E} \left( \mathbb{E} (C_{s2n,ijk}|\mathcal{F}_i, \mathcal{F}_j) | \mathcal{F}_j \right) = O_p(h_x^l) + O_p(h_r); \\
    & \mathbb{E} \left( C_{s2n,ijk} | \mathcal{F}_k \right) = \mathbb{E} \left( \mathbb{E} (C_{s2n,ijk}|\mathcal{F}_i, \mathcal{F}_k) | \mathcal{F}_k \right) = O_p(h_x^l) + O_p(h_r).
\end{align*}
Then $\mathbb{E} C_{s2n,ijk} = \mathbb{E}[\mathbb{E}(C_{s2n,ijk}|\mathcal{F}_i)] = O(h_x^{2l}) + O(h_r^2) = o(\frac{1}{\sqrt{nh}})$. Also, the other expectation $\mathbb{E} C_{s1n,ij} = O(\frac{1}{h_x^dh_r})$ and $\frac{1}{n-1}\mathbb{E} C_{s1n,ij} = O(\frac{1}{nh_x^dh_r}) = o(\frac{1}{\sqrt{nh}})$.
For the high-order terms in the Hoeffding decompositions, we investigate the second moments of their envelopes, $\mathbb{E}G_{C_{s1n}}^2$ and $\mathbb{E}G_{C_{s2n}}^2$, which are bounded by 
\begin{align*}
    \mathbb{E} G_{C_{s1n}}^2 \lesssim& O(h_x^{-3d} h_r^{-3} h^{-1}) + O(h_x^{-d} h_r^{-1} h^{-1}) + O(h^{-1}) = O(h_x^{-3d} h_r^{-3} h^{-1}), \\
    \mathbb{E} G_{C_{s2n}}^2 \lesssim& O(h_x^{-2d} h_r^{-2} h^{-1}) + O(h_x^{-d} h_r^{-1} h^{-1}) + O(h^{-1}) = O(h_x^{-2d} h_r^{-2} h^{-1}).
\end{align*}
Following Proposition 4 in \cite{Delgado2001}, the high-order terms in the Hoeffding decompositions are uniformly bounded since
\begin{align*}
    \mathbb{E} \sup_{(w,x)\in\Omega} \left| \frac{2}{n-1} U_n^{(1)}(\pi_1 g_{C_{s1n}}) \right|^2 &\lesssim \frac{\mathbb{E} G_{C_{s1n}}^2}{n(n-1)^2} = O(\frac{1}{n^3h_x^{3d}h_r^3h}) = o(\frac{1}{nh}), \\
    \mathbb{E} \sup_{(w,x)\in\Omega} \left| \frac{1}{n-1} U_n^{(2)}(\pi_2 g_{C_{s1n}}) \right|^2 &\lesssim \frac{\mathbb{E} G_{C_{s1n}}^2}{(n-1)^4} = O(\frac{1}{n^4h_x^{3d}h_r^3h}) = o(\frac{1}{nh}); \\
    \mathbb{E} \sup_{(w,x)\in\Omega} \left| 3 \frac{n-2}{n-1} U_n^{(2)}(\pi_2 g_{C_{s2n}}) \right|^2 &\lesssim \frac{(n-2)^2 \mathbb{E} G_{C_{s2n}}^2}{(n-1)^4} = O(\frac{1}{n^2h_x^{2d}h_r^2h}) = o(\frac{1}{nh}), \\
    \mathbb{E} \sup_{(w,x)\in\Omega} \left| \frac{n-2}{n-1} U_n^{(3)}(\pi_3 g_{C_{s2n}}) \right|^2 &\lesssim \frac{n \mathbb{E} G_{C_{s2n}}^2}{(n-1)^4} = O(\frac{1}{n^3h_x^{2d}h_r^2h}) = o(\frac{1}{nh}). 
\end{align*}
Then it can be concluded that $\frac{1}{n-1} C_{s1n}(w,x)$ and $\frac{n-2}{n-1} C_{s2n}(w,x)$ are uniformly bounded by $\frac{1}{\sqrt{nh}}$ over $(w,x)\in\Omega$. Furthermore, $\sup_{(w,x)\in\Omega} \left| C_{sn}(w,x) \right| = o_p(\frac{1}{\sqrt{nh}})$ for $s=5,6,7$.

Combining all the results, 
\begin{align*}
    \hat{U}_{n}(w,x) - U_{n}(w,x) =& C_{1n}(w,x) - C_{2n}(w,x) + o_p(\frac{1}{\sqrt{nh}}) \\
    =& \frac{1}{n} \sum_{j=1}^n \mathrm{i}w e^{\mathrm{i}x'X_j} \kappa(w,X_j) \left( Y_j - D_j\tau(X_j) \right) K_{h_r}(R_j)\delta_j + o_p(\frac{1}{\sqrt{nh}}).
\end{align*}

\subsubsection{The local linear estimator}
By the expansion of $\hat{a}/\hat{b}$, $\hat\tau(X) - \tau(X)$ can be decomposed as 
\begin{align*}
    \hat\tau(X_i) - \tau(X_i) =& \frac{\Delta_{\hat\mu}(X_i)}{\Delta_p(X_i)} - \tau(X_i) \frac{\Delta_{\hat{p}}(X_i)}{\Delta_p(X_i)} - \frac{\Delta_{\hat\mu}(X_i)-\Delta_\mu(X_i)}{\Delta_p(X_i)} \frac{\Delta_{\hat{p}}(X_i)-\Delta_p(X_i)}{\Delta_p(X_i)} \\&+ \tau(X_i) \left( \frac{\Delta_{\hat{p}}(X_i) - \Delta_p(X_i)}{\Delta_p(X_i)}\right)^2 + \text{high-order terms}.
\end{align*}
For $\Delta_{\hat\mu}(X_i) - \Delta_\mu(X_i)$ and $\Delta_{\hat{p}}(X_i)-\Delta_p(X_i)$, the estimation effect of the local linear estimator can be further expanded as
\begin{align*}
    & \Delta_{\hat{m}_Z}(X_i) - \Delta_{m_Z}(X_i) \\
    =& e_0' \left( B_{n+}^{-1}(X_i) A_{Zn+}(X_i) - B_{n-}^{-1}(X_i) A_{Zn-}(X_i) \right) - e_0' \left( B_+^{-1}(X_i) A_{Z+}(X_i) - B_-^{-1}(X_i) A_{Z-}(X_i) \right) \\
    &+ e_0' \left( B_+^{-1}(X_i) A_{Z+}(X_i) - B_-^{-1}(X_i) A_{Z-}(X_i) \right) - (m_Z(X_i, 0^+) - m_Z(X_i, 0^-)) \\
    =& e_0' \left( B_{n+}^{-1}(X_i) A_{Zn+}(X_i) - B_+^{-1}(X_i) A_{Z+}(X_i) \right) - e_0' \left( B_{n-}^{-1}(X_i) A_{Zn-}(X_i) - B_-^{-1}(X_i) A_{Z-}(X_i) \right) \\
    &+ (e_0' B_+^{-1}(X_i) A_{Z+}(X_i) - m_Z(X_i, 0^+)) - ( e_0' B_-^{-1}(X_i) A_{Z-}(X_i) - m_Z(X_i, 0^-))
\end{align*} 
for a generic variable $Z$ representing $Y$ or $D$. By Lemma S.A2, the last two terms in the last equation are of order $h_r^2$. The first two terms in the last equation can be expanded by the expression of $B_n^{-1}A_n$, where the leading terms are $$B^{-1} A_n - B^{-1} B_n B^{-1}A - B^{-1} (B_n-B) B^{-1} (A_n-A) + B^{-1} (B_n-B) B^{-1} (B_n-B) B^{-1}A.$$ 

Following the same decomposition as that for the local constant estimator, the main term of $\hat{U}_n(w,x)- U_n(w,x)$ can be denoted as $C_n(w,x) = C_{1n}(w,x) - C_{2n}(w,x) - C_{3n}(w,x) + C_{4n}(w,x) - C_{5n}(w,x) - C_{6n}(w,x) + C_{7n}(w,x)$, where the components are 
\begin{align*}
    C_{1n}(w,x) =& \frac{\mathrm{i}w}{n} \sum_{i=1}^n e^{\mathrm{i} (wW_i+x'X_i)} (1-D_i) \frac{\Delta_{\hat\mu}(X_i)}{\Delta_p(X_i)} K_h(R_i) \delta_i \\
    C_{2n}(w,x) =& \frac{\mathrm{i}w}{n} \sum_{i=1}^n e^{\mathrm{i} (wW_i+x'X_i)} (1-D_i) \tau(X_i) \frac{\Delta_{\hat{p}}(X_i)}{\Delta_p(X_i)} K_h(R_i) \delta_i \\
    C_{3n}(w,x) =& \frac{\mathrm{i}w}{n} \sum_{i=1}^n e^{\mathrm{i} (wW_i+x'X_i)} (1-D_i) \frac{\Delta_{\hat\mu}(X_i)-\Delta_\mu(X_i)}{\Delta_p(X_i)} \frac{\Delta_{\hat{p}}(X_i)-\Delta_p(X_i)}{\Delta_p(X_i)} K_h(R_i) \delta_i \\
    C_{4n}(w,x) =& \frac{\mathrm{i}w}{n} \sum_{i=1}^n e^{\mathrm{i} (wW_i+x'X_i)} (1-D_i) \tau(X_i) \left( \frac{\Delta_{\hat{p}}(X_i) - \Delta_p(X_i)}{\Delta_p(X_i)}\right)^2 K_h(R_i) \delta_i \\
    C_{5n}(w,x) =& \frac{w^2}{2n} \sum_{i=1}^n e^{\mathrm{i} (wW_i+x'X_i)} (1-D_i) \left( \frac{\Delta_{\hat\mu}(X_i)-\Delta_\mu(X_i)}{\Delta_p(X_i)} \right)^2 K_h(R_i) \delta_i \\
    C_{6n}(w,x) =& \frac{w^2}{2n} \sum_{i=1}^n e^{\mathrm{i} (wW_i+x'X_i)} (1-D_i) \tau^2(X_i) \left( \frac{\Delta_{\hat{p}}(X_i) - \Delta_p(X_i)}{\Delta_p(X_i)}\right)^2 K_h(R_i) \delta_i \\
    C_{7n}(w,x) =& \frac{w^2}{n} \sum_{i=1}^n e^{\mathrm{i} (wW_i+x'X_i)} (1-D_i) \frac{\Delta_{\hat\mu}(X_i)-\Delta_\mu(X_i)}{\Delta_p(X_i)} \frac{\Delta_{\hat{p}}(X_i)-\Delta_p(X_i)}{\Delta_p(X_i)} \tau(X_i) K_h(R_i) \delta_i.
\end{align*}

With Lemma S.A2, by expanding $\Delta_{\hat\mu}(X_i) - \Delta_\mu(X_i)$, $C_{1n}(w,x)$ can be expressed as $C_{1n}(w,x) = C_{1n+}(w,x) - C_{1n-}(w,x) + O_p(h_r^2)$, where
\begin{align*}
    C_{1n+}(w,x) =& \frac{\mathrm{i}w}{n} \sum_{i=1}^n e^{\mathrm{i} (wW_i+x'X_i)} \frac{1-D_i}{\Delta_p(X_i)} e_0' \left( B_{n+}^{-1}(X_i)A_{Yn+}(X_i) - B_+^{-1}(X_i)A_{Y+}(X_i) \right) K_h(R_i) \delta_i, \\    
    C_{1n-}(w,x) =& \frac{\mathrm{i}w}{n} \sum_{i=1}^n e^{\mathrm{i} (wW_i+x'X_i)} \frac{1-D_i}{\Delta_p(X_i)} e_0' \left( B_{n-}^{-1}(X_i)A_{Yn-}(X_i) - B_-^{-1}(X_i)A_{Y-}(X_i) \right) K_h(R_i) \delta_i.
\end{align*}
The leading terms of $C_{1n+}(w,x)$ can be written as $C_{11n+}(w,x) - C_{12n+}(w,x) - C_{13n+}(w,x) + C_{14n+}(w,x)$, where 
\begin{align*}
    C_{11n+}(w,x) =& \frac{\mathrm{i}w}{n} \sum_{i=1}^n e^{\mathrm{i} (wW_i+x'X_i)} \frac{1-D_i}{\Delta_p(X_i)} e_0' B_+^{-1}(X_i) A_{Yn+}(X_i) K_h(R_i) \delta_i, \\
    C_{12n+}(w,x) =& \frac{\mathrm{i}w}{n} \sum_{i=1}^n e^{\mathrm{i} (wW_i+x'X_i)} \frac{1-D_i}{\Delta_p(X_i)} e_0' B_+^{-1}(X_i) B_{n+}(X_i) B_+^{-1}(X_i) A_{Y+}(X_i) K_h(R_i) \delta_i, \\
    C_{13n+}(w,x) =& \frac{\mathrm{i}w}{n} \sum_{i=1}^n e^{\mathrm{i} (wW_i+x'X_i)} \frac{1-D_i}{\Delta_p(X_i)} e_0' B_+^{-1}(X_i) \left( B_{n+}(X_i) - B_+(X_i) \right) \\& B_+^{-1}(X_i) \left( A_{Yn+}(X_i) - A_{Y+}(X_i) \right) K_h(R_i) \delta_i, \\
    C_{14n+}(w,x) =& \frac{\mathrm{i}w}{n} \sum_{i=1}^n e^{\mathrm{i} (wW_i+x'X_i)} \frac{1-D_i}{\Delta_p(X_i)} e_0' B_+^{-1}(X_i) \left( B_{n+}(X_i) - B_+(X_i) \right) \\& B_+^{-1}(X_i) \left( B_{n+}(X_i) - B_+(X_i) \right) B_+^{-1}(X_i) A_{Y+}(X_i) K_h(R_i) \delta_i.
\end{align*}
where $C_{11n+}(w,x)$ and $C_{12n+}(w,x)$ are second-order $U$-processes, $C_{13n+}(w,x)$ and $C_{14n+}(w,x)$ are combinations of second-order and third-order $U$-processes derived from the second-order term in the expansion of $\hat{a}/\hat{b}$.
The Hoeffding decompositions of $U$-processes $C_{11n+}(w,x)$ and $C_{12n+}(w,x)$ are
\begin{align*}
   C_{11n+}(w,x) =& \frac{1}{n} \sum_{i=1}^n \mathbb{E}(C_{11n+,ij}|\mathcal{F}_i) + \frac{1}{n} \sum_{j=1}^n \mathbb{E}(C_{11n+,ij}|\mathcal{F}_j) - \mathbb{E}C_{11n+,ij} + U_n^{(2)}(\pi_2 g_{C_{11n+}}), \\
   C_{12n+}(w,x) =& \frac{1}{n} \sum_{i=1}^n \mathbb{E}(C_{12n+,ij}|\mathcal{F}_i) + \frac{1}{n} \sum_{j=1}^n \mathbb{E}(C_{12n+,ij}|\mathcal{F}_j) - \mathbb{E}C_{12n+,ij} + U_n^{(2)}(\pi_2 g_{C_{12n+}}), 
\end{align*}
where the elements are
\begin{align*}
   C_{11n+,ij} =& \mathrm{i}w e^{\mathrm{i} (wW_i+x'X_i)} \frac{1-D_i}{\Delta_p(X_i)} e_0' B_+^{-1}(X_i) \mathbf{r}(R_j)Y_j K_{h_x}(X_j-X_i) K_{h_r}(R_j)\mathbf{1}_+(R_j) K_h(R_i) \delta_i, \\
   C_{12n+,ij} =& \mathrm{i}w e^{\mathrm{i} (wW_i+x'X_i)} \frac{1-D_i}{\Delta_p(X_i)} e_0' B_+^{-1}(X_i) \mathbf{r}(R_j)\mathbf{r}(R_j)' B_+^{-1}(X_i) A_{Y+}(X_i) \\& K_{h_x}(X_j-X_i) K_{h_r}(R_j)\mathbf{1}_+(R_j) K_h(R_i) \delta_i.
\end{align*}
The conditional expectation $\mathbb{E}(C_{11n+,ij}|\mathcal{F}_i)$ is
$$ \mathbb{E}(C_{11n+,ij}|\mathcal{F}_i) = \mathrm{i}w e^{\mathrm{i} (wW_i+x'X_i)} \frac{1-D_i}{\Delta_p(X_i)} e_0' B_+^{-1}(X_i) K_h(R_i) \delta_i \mathbb{E}\left[ \mathbf{r}(R_j) Y_j K_{h_x}(X_j-X_i) K_{h_r}(R_j)\mathbf{1}_+(R_j) |X_i \right]. $$
By Lemma S.A2, $$ \mathbb{E}(C_{11n+,ij}|\mathcal{F}_i) = \mathrm{i}w e^{\mathrm{i} (wW_i+x'X_i)} \frac{1-D_i}{\Delta_p(X_i)} e_0' B_+^{-1}(X_i) A_{Y+}(X_i) K_h(R_i) \delta_i + O_p(h_x^l). $$
Also by Lemma S.A2, the conditional expectation term $\mathbb{E}(C_{12n+,ij}|\mathcal{F}_i)$ is 
\begin{align*}
    \mathbb{E}(C_{12n+,ij}|\mathcal{F}_i) =& \mathrm{i}w e^{\mathrm{i} (wW_i+x'X_i)} \frac{1-D_i}{\Delta_p(X_i)} e_0' B_+^{-1}(X_i) \mathbb{E}\left[ \mathbf{r}(R_j) \mathbf{r}(R_j)' K_{h_x}(X_j-X_i) K_{h_r}(R_j)\mathbf{1}_+(R_j) |X_i \right] \\& B_+^{-1}(X_i) A_{Y+}(X_i) K_h(R_i) \delta_i \\
    =& \mathrm{i}w e^{\mathrm{i} (wW_i+x'X_i)} \frac{1-D_i}{\Delta_p(X_i)} e_0' B_+^{-1}(X_i) B_+(X_i) B_+^{-1}(X_i) A_{Y+}(X_i) K_h(R_i) \delta_i + O_p(h_x^l) \\
    =& \mathrm{i}w e^{\mathrm{i} (wW_i+x'X_i)} \frac{1-D_i}{\Delta_p(X_i)} e_0' B_+^{-1}(X_i) A_{Y+}(X_i) K_h(R_i) \delta_i + O_p(h_x^l) .
\end{align*}
Note that $\mathbb{E}(C_{11n+,ij}|\mathcal{F}_i) - \mathbb{E}(C_{12n+,ij}|\mathcal{F}_i) = O_p(h_x^l)$ and $\mathbb{E}C_{11n+,ij} - \mathbb{E}C_{12n+,ij} = O(h_x^l)$. 
By Lemma S.A2, the conditional expectations $\mathbb{E}(C_{11n+,ij}|\mathcal{F}_j)$ and $\mathbb{E}(C_{12n+,ij}|\mathcal{F}_j)$ are
\begin{align*}
    & \mathbb{E}(C_{11n+,ij}|\mathcal{F}_j) \\
    =& \mathrm{i}w \mathbb{E}\left[ \frac{e^{\mathrm{i} x'X_i}}{\Delta_p(X_i)} e_0' B_+^{-1}(X_i) (1-D_i)e^{\mathrm{i}wW_i} K_{h_x}(X_j-X_i) K_h(R_i) \delta_i \bigg| X_j \right] \mathbf{r}(R_j) Y_j  K_{h_r}(R_j)\mathbf{1}_+(R_j) \\
    =& \mathrm{i}w e^{\mathrm{i}x'X_j} \kappa(w,X_j) \frac{f(X_j, 0^+)}{2} e_0' B_+^{-1}(X_j) \mathbf{r}(R_j) Y_j K_{h_r}(R_j)\mathbf{1}_+(R_j) + O_p(h) + O_p(h_x^l); \\
    & \mathbb{E}(C_{12n+,ij}|\mathcal{F}_j) \\
    =& \mathrm{i}w \mathbb{E} \bigg[ \frac{e^{\mathrm{i} x'X_i}}{\Delta_p(X_i)} e_0' B_+^{-1}(X_i) \mathbf{r}(R_j)\mathbf{r}(R_j)' B_+^{-1}(X_i) A_{Y+}(X_i) (1-D_i)e^{\mathrm{i}wW_i} K_{h_x}(X_j-X_i) K_h(R_i) \delta_i \bigg| X_j, R_j \bigg] \\& K_{h_r} (R_j) \mathbf{1}_+(R_j) \\
    =& \mathrm{i}w e^{\mathrm{i}x'X_j} \kappa(w,X_j) \frac{f(X_j, 0^+)}{2} e_0' B_+^{-1}(X_j) \mathbf{r}(R_j) \mathbf{r}(R_j)' B_+^{-1}(X_j) A_{Y+}(X_j) K_{h_r}(R_j)\mathbf{1}_+(R_j) + O_p(h) + O_p(h_x^l).
\end{align*}
For the high-order terms in the Hoeffding decompositions of $U$-processes $C_{11n+}(w,x)$ and $C_{12n+}(w,x)$, $U_n^{(2)}(\pi_2 g_{C_{11n+}})$ and $U_n^{(2)}(\pi_2 g_{C_{12n+}})$, the symmetrized functions $g_{C_{11n+}} := \frac{1}{2} (C_{11n+,ij} + C_{11n+,ji})$ and $g_{C_{12n+}} := \frac{1}{2} (C_{12n+,ij} + C_{12n+,ji})$ are of VC-type and have envelopes $G_{C_{11n+}} = \sup_{(w,x)\in\Omega}|g_{C_{11n+}}|$ and $G_{C_{12n+}} = \sup_{(w,x)\in\Omega}|g_{C_{12n+}}|$ with bounded second moments $G_{C_{11n+}}^2 \le \sup_{(w,x)\in\Omega} \left|C_{11n+,ij}\right|^2$ and $G_{C_{12n+}}^2 \le \sup_{(w,x)\in\Omega} \left|C_{12n+,ij}\right|^2$. Given Assumption B5:
\begin{align*}
    \mathbb{E} G_{C_{11n+}}^2 \le& \mathbb{E} \sup_{(w,x)\in\Omega} \bigg| \mathrm{i}w e^{\mathrm{i} (wW_i+x'X_i)} \frac{1-D_i}{\Delta_p(X_i)} e_0' B_+^{-1}(X_i) \mathbf{r}(R_j) Y_j K_{h_x}(X_j-X_i) K_{h_r}(R_j)\mathbf{1}_+(R_j) K_h(R_i) \delta_i \bigg|^2 \\
    \lesssim& \mathbb{E} \left[ e_0' B_+^{-1}(X_i) \mathbf{r}(R_j) Y_j K_{h_x}(X_j-X_i) K_{h_r}(R_j) K_h(R_i) \right]^2 = O(\frac{1}{hh_rh_x^d}). \\
    \mathbb{E} G_{C_{12n+}}^2 \le& \mathbb{E} \sup_{(w,x)\in\Omega} \bigg| \mathrm{i}w e^{\mathrm{i} (wW_i+x'X_i)} \frac{1-D_i}{\Delta_p(X_i)} e_0' B_+^{-1}(X_i) \mathbf{r}(R_j) \mathbf{r}(R_j)' \\& B_+^{-1}(X_i) A_{Y+}(X_i) K_{h_x}(X_j-X_i) K_{h_r}(R_j)\mathbf{1}_+(R_j) K_h(R_i) \delta_i \bigg|^2 \\
    \lesssim& \mathbb{E} \left[ e_0' B_+^{-1}(X_i) \mathbf{r}(R_j) \mathbf{r}(R_j)' B_+^{-1}(X_i) A_{Y+}(X_i) K_{h_x}(X_j-X_i) K_{h_r}(R_j) K_h(R_i) \right]^2 = O(\frac{1}{hh_rh_x^d}).
\end{align*}
Following Proposition 4 in \cite{Delgado2001}:
\begin{align*}
    \mathbb{E} \sup_{(w,x)\in\Omega} \left| U_n^{(2)}(\pi_2 g_{C_{11n+}}) \right|^2 &\lesssim \frac{\mathbb{E} G_{C_{11n+}}^2}{(n-1)^2} = O(\frac{1}{n^2hh_rh_x^d}) = o(\frac{1}{nh}). \\
    \mathbb{E} \sup_{(w,x)\in\Omega} \left| U_n^{(2)}(\pi_2 g_{C_{12n+}}) \right|^2 &\lesssim \frac{\mathbb{E} G_{C_{12n+}}^2}{(n-1)^2} = O(\frac{1}{n^2hh_rh_x^d}) = o(\frac{1}{nh}). 
\end{align*}
Thus, $U_n^{(2)}(\pi_2 g_{C_{11n+}})$ and $U_n^{(2)}(\pi_2 g_{C_{12n+}})$ are uniformly bounded by $\frac{1}{\sqrt{nh}}$. 

Summarizing all results of $C_{11n+}(w,x)$ and $C_{12n+}(w,x)$, it follows that
\begin{align*}    
    C_{11n+}(w,x) - C_{12n+}(w,x) =& \frac{\mathrm{i}w}{2n} \sum_{j=1}^n e^{\mathrm{i}x'X_j} \kappa(w,X_j) f(X_j, 0) e_0' B_+^{-1}(X_j) \mathbf{r}(R_j) \\& \left( Y_j - \mathbf{r}(R_j)' B_+^{-1}(X_j) A_{Y+}(X_j) \right) K_{h_r}(R_j)\mathbf{1}_+(R_j) + o_p(\frac{1}{\sqrt{nh}}). 
\end{align*}
Let $Y_j = \mathbb{E}(Y_j|X_j,R_j) + u_{Yj}$, where $u_{Yj}$ is the zero-mean nonparametric residual with finite variance. Then the term $Y_j - \mathbf{r}(R_j)' B_+^{-1}(X_j) A_{Y+}(X_j)$ becomes 
\begin{align*}
    Y_j - \mathbf{r}(R_j)' B_+^{-1}(X_j) A_{Y+}(X_j) 
    =& u_{Yj} + \mu(X_j, R_j) - \left( \mu(X_j, 0^+) + R_j \frac{\partial}{\partial r} \mu(X_j, 0^+) \right) + O_p(h_r^2) \\
    =& u_{Yj} + \frac{R_j^2}{2} \frac{\partial^2}{\partial r^2} \mu(X_j, 0^+) (1 + o_p(1)) = u_{Yj} + O_p(h_r^2).
\end{align*}
Then $C_{11n+}(w,x) - C_{12n+}(w,x)$ can be written as $$ C_{11n+}(w,x) - C_{12n+}(w,x) = \frac{\mathrm{i}w}{2n} \sum_{j=1}^n e^{\mathrm{i}x'X_j} \kappa(w,X_j) f(X_j, 0) e_0' B_+^{-1}(X_j) \mathbf{r}(R_j) u_{Yj} K_{h_r}(R_j)\mathbf{1}_+(R_j) + o_p(\frac{1}{\sqrt{nh}}). $$

For the high-order terms $C_{13n+}(w,x)$ and $C_{14n+}(w,x)$, they can be expressed as $C_{13n+}(w,x) = \frac{1}{n-1} C_{131n+}(w,x) + \frac{n-2}{n-1} C_{132n+}(w,x)$ and $C_{14n+}(w,x) = \frac{1}{n-1} C_{141n+}(w,x) + \frac{n-2}{n-1} C_{142n+}(w,x)$, where the components $C_{131n+}(w,x)$, $C_{132n+}(w,x)$, $C_{141n+}(w,x)$ and $C_{142n+}(w,x)$ are $U$-processes with Hoeffding decompositions
\begin{align*}
    C_{131n+}(w,x) =& U_n^{(2)}(g_{C_{131n+}}) = \mathbb{E} C_{131n+,ij} + 2U_n^{(1)}(\pi_1 g_{C_{131n+}}) + U_n^{(2)}(\pi_2 g_{C_{131n+}}), \\
    C_{132n+}(w,x) =& U_n^{(3)}(g_{C_{132n+}}) = \frac{1}{n} \sum_{i=1}^n \mathbb{E} \left( C_{132n+,ijk} | \mathcal{F}_i \right) + \frac{1}{n} \sum_{j=1}^n \mathbb{E} \left( C_{132n+,ijk} | \mathcal{F}_j \right) + \frac{1}{n} \sum_{k=1}^n \mathbb{E} \left( C_{132n+,ijk} | \mathcal{F}_k \right) \\ &-2 \mathbb{E} C_{132n+,ijk} + 3 U_n^{(2)} (\pi_2 g_{C_{132n+}}) + U_n^{(3)} (\pi_3 g_{C_{132n+}}); \\
    C_{141n+}(w,x) =& U_n^{(2)}(g_{C_{141n+}}) = \mathbb{E} C_{141n+,ij} + 2U_n^{(1)}(\pi_1 g_{C_{141n+}}) + U_n^{(2)}(\pi_2 g_{C_{141n+}}), \\
    C_{142n+}(w,x) =& U_n^{(3)}(g_{C_{142n+}}) = \frac{1}{n} \sum_{i=1}^n \mathbb{E} \left( C_{142n+,ijk} | \mathcal{F}_i \right) + \frac{1}{n} \sum_{j=1}^n \mathbb{E} \left( C_{142n+,ijk} | \mathcal{F}_j \right) + \frac{1}{n} \sum_{k=1}^n \mathbb{E} \left( C_{142n+,ijk} | \mathcal{F}_k \right) \\ &-2 \mathbb{E} C_{142n+,ijk} + 3 U_n^{(2)} (\pi_2 g_{C_{142n+}}) + U_n^{(3)} (\pi_3 g_{C_{142n+}}).
\end{align*}
The elements of these $U$-processes are
\begin{align*}
    C_{131n+,ij} =& \mathrm{i}w e^{\mathrm{i} (wW_i+x'X_i)} \frac{1-D_i}{\Delta_p(X_i)} e_0' B_+^{-1}(X_i) \left[ \mathbf{r}(R_j) \mathbf{r}(R_j)' K_{h_x}(X_j-X_i) K_{h_r}(R_j) \mathbf{1}_+(R_j) - B_+(X_i) \right] \\& B_+^{-1}(X_i) \left[ \mathbf{r}(R_j) Y_j K_{h_x}(X_j-X_i) K_{h_r}(R_j) \mathbf{1}_+(R_j) - A_{Y+}(X_i) \right] K_h(R_i) \delta_i, \\
    C_{132n+,ijk} =& \mathrm{i}w e^{\mathrm{i} (wW_i+x'X_i)} \frac{1-D_i}{\Delta_p(X_i)} e_0' B_+^{-1}(X_i) \left[ \mathbf{r}(R_j) \mathbf{r}(R_j)' K_{h_x}(X_j-X_i) K_{h_r}(R_j) \mathbf{1}_+(R_j) - B_+(X_i) \right] \\& B_+^{-1}(X_i) \left[ \mathbf{r}(R_k) Y_k K_{h_x}(X_k-X_i) K_{h_r}(R_k) \mathbf{1}_+(R_k) - A_{Y+}(X_i) \right] K_h(R_i) \delta_i, \\
    C_{141n+,ij} =& \mathrm{i}w e^{\mathrm{i} (wW_i+x'X_i)} \frac{1-D_i}{\Delta_p(X_i)} e_0' B_+^{-1}(X_i) \left[ \mathbf{r}(R_j) \mathbf{r}(R_j)' K_{h_x}(X_j-X_i) K_{h_r}(R_j) \mathbf{1}_+(R_j) - B_+(X_i) \right] \\& B_+^{-1}(X_i) \left[ \mathbf{r}(R_j) \mathbf{r}(R_j)' K_{h_x}(X_j-X_i) K_{h_r}(R_j) \mathbf{1}_+(R_j) - B_+(X_i) \right] B_+^{-1}(X_i) A_{Y+}(X_i) K_h(R_i) \delta_i, \\
    C_{142n+,ijk} =& \mathrm{i}w e^{\mathrm{i} (wW_i+x'X_i)} \frac{1-D_i}{\Delta_p(X_i)} e_0' B_+^{-1}(X_i) \left[ \mathbf{r}(R_j) \mathbf{r}(R_j)' K_{h_x}(X_j-X_i) K_{h_r}(R_j) \mathbf{1}_+(R_j) - B_+(X_i) \right] \\& B_+^{-1}(X_i) \left[ \mathbf{r}(R_k) \mathbf{r}(R_k)' K_{h_x}(X_k-X_i) K_{h_r}(R_k) \mathbf{1}_+(R_k) - B_+(X_i) \right] B_+^{-1}(X_i) A_{Y+}(X_i) K_h(R_i) \delta_i.
\end{align*}

For $C_{131n+}(w,x)$ and $C_{141n+}(w,x)$, the total expectations of the elements are $\mathbb{E} C_{131n+,ij} = O(\frac{1}{h_xh_r})$ and $\mathbb{E} C_{141n+,ij} = O(\frac{1}{h_xh_r})$. Then $\frac{1}{n-1} \mathbb{E} C_{131n+,ij} = O(\frac{1}{nh_xh_r}) = o(\frac{1}{\sqrt{nh}})$ and $\frac{1}{n-1} \mathbb{E} C_{141n+,ij} = O(\frac{1}{nh_xh_r}) = o(\frac{1}{\sqrt{nh}})$. The conditional expectation terms in the Hoeffding decompositions are
\begin{align*}
    \mathbb{E} \left( C_{132n+,ijk} | \mathcal{F}_i \right) =& \mathrm{i}w e^{\mathrm{i} (wW_i+x'X_i)} \frac{1-D_i}{\Delta_p(X_i)} e_0' B_+^{-1}(X_i) K_h(R_i) \delta_i \\
    & \mathbb{E} \left[ \mathbf{r}(R_j) \mathbf{r}(R_j)' K_{h_x}(X_j-X_i) K_{h_r}(R_j) \mathbf{1}_+(R_j) - B_+(X_i) |X_i \right] \\& B_+^{-1}(X_i) \mathbb{E} \left[ \mathbf{r}(R_k) Y_k K_{h_x}(X_k-X_i) K_{h_r}(R_k) \mathbf{1}_+(R_k) - A_{Y+}(X_i) |X_i \right] \\
    =& \mathrm{i}w e^{\mathrm{i} (wW_i+x'X_i)} \frac{1-D_i}{\Delta_p(X_i)} e_0' B_+^{-1}(X_i) K_h(R_i) \delta_i O_p(h_x^l) B_+^{-1}(X_i) O_p(h_x^l) = O_p(h_x^{2l}) ,\\
    \mathbb{E} \left( C_{132n+,ijk} | \mathcal{F}_j \right) =& \mathbb{E} [\mathbb{E} \left( C_{132n+,ijk} | \mathcal{F}_i, \mathcal{F}_j \right) | \mathcal{F}_j] = \mathrm{i}w e^{\mathrm{i} (wW_i+x'X_i)} \frac{1-D_i}{\Delta_p(X_i)} e_0' B_+^{-1}(X_i) K_h(R_i) \delta_i \\
    & \mathbb{E} \left[ \mathbf{r}(R_j) \mathbf{r}(R_j)' K_{h_x}(X_j-X_i) K_{h_r}(R_j) \mathbf{1}_+(R_j) - B_+(X_i) |X_i \right]  = O_p(h_x^l) ,\\
    \mathbb{E} \left( C_{132n+,ijk} | \mathcal{F}_k \right) =& \mathbb{E} [\mathbb{E} \left( C_{132n+,ijk} | \mathcal{F}_i, \mathcal{F}_k \right) | \mathcal{F}_k] = \mathrm{i}w e^{\mathrm{i} (wW_i+x'X_i)} \frac{1-D_i}{\Delta_p(X_i)} e_0' B_+^{-1}(X_i) K_h(R_i) \delta_i \\
    & O_p(h_x^l) B_+^{-1}(X_i) \mathbb{E} \left[ \mathbf{r}(R_k) Y_k K_{h_x}(X_k-X_i) K_{h_r}(R_k) \mathbf{1}_+(R_k) - A_{Y+}(X_i) |X_i \right] = O_p(h_x^l) ;\\
    \mathbb{E} \left( C_{142n+,ijk} | \mathcal{F}_i \right) =& \mathrm{i}w e^{\mathrm{i} (wW_i+x'X_i)} \frac{1-D_i}{\Delta_p(X_i)} e_0' B_+^{-1}(X_i) K_h(R_i) \delta_i \\
    & \mathbb{E} \left[ \mathbf{r}(R_j) \mathbf{r}(R_j)' K_{h_x}(X_j-X_i) K_{h_r}(R_j) \mathbf{1}_+(R_j) - B_+(X_i) |X_i \right] B_+^{-1}(X_i) \\
    & \mathbb{E} \left[ \mathbf{r}(R_k) \mathbf{r}(R_k)' K_{h_x}(X_k-X_i) K_{h_r}(R_k) \mathbf{1}_+(R_k) - B_+(X_i) |X_i \right] B_+^{-1}(X_i) A_{Y+}(X_i) \\
    =&  \mathrm{i}w e^{\mathrm{i} (wW_i+x'X_i)} \frac{1-D_i}{\Delta_p(X_i)} e_0' B_+^{-1}(X_i) O_p(h_x^l) B_+^{-1}(X_i) O_p(h_x^l) B_+^{-1}(X_i) A_{Y+}(X_i) K_h(R_i) \delta_i \\=& O_p(h_x^{2l}) ,\\
    \mathbb{E} \left( C_{142n+,ijk} | \mathcal{F}_j \right) =& \mathbb{E} [\mathbb{E} \left( C_{142n+,ijk} | \mathcal{F}_i, \mathcal{F}_j \right) | \mathcal{F}_j] = \mathrm{i}w e^{\mathrm{i} (wW_i+x'X_i)} \frac{1-D_i}{\Delta_p(X_i)} e_0' B_+^{-1}(X_i) K_h(R_i) \delta_i \\
    & \mathbb{E} \left[ \mathbf{r}(R_j) \mathbf{r}(R_j)' K_{h_x}(X_j-X_i) K_{h_r}(R_j) \mathbf{1}_+(R_j) - B_+(X_i) |X_i \right] B_+^{-1}(X_i) \\
    & O_p(h_x^l) B_+^{-1}(X_i) A_{Y+}(X_i) = O_p(h_x^l) ,\\
    \mathbb{E} \left( C_{142n+,ijk} | \mathcal{F}_k \right) =& \mathbb{E} [\mathbb{E} \left( C_{142n+,ijk} | \mathcal{F}_i, \mathcal{F}_k \right) | \mathcal{F}_k] = \mathrm{i}w e^{\mathrm{i} (wW_i+x'X_i)} \frac{1-D_i}{\Delta_p(X_i)} e_0' B_+^{-1}(X_i) K_h(R_i) \delta_i \\
    & O_p(h_x^l) B_+^{-1}(X_i) \mathbb{E} \left[ \mathbf{r}(R_k) \mathbf{r}(R_k)' K_{h_x}(X_k-X_i) K_{h_r}(R_k) \mathbf{1}_+(R_k) - B_+(X_i) |X_i \right] \\& B_+^{-1}(X_i) A_{Y+}(X_i) = O_p(h_x^l).
\end{align*}

The second-order moments of the envelopes $G_{C_{131n+}}$ and $G_{C_{141n+}}$ are bounded by 
\begin{align*}
    \mathbb{E} G_{C_{131n+}}^2 \le& \mathbb{E} \sup_{(w,x)\in\Omega} \bigg| \mathrm{i}w e^{\mathrm{i} (wW_i+x'X_i)} \frac{1-D_i}{\Delta_p(X_i)} e_0' B_+^{-1}(X_i) [ \mathbf{r}(R_j) \mathbf{r}(R_j)' K_{h_x}(X_j-X_i) K_{h_r}(R_j) \mathbf{1}_+(R_j) \\&- B_+(X_i)]  B_+^{-1}(X_i) \left[ \mathbf{r}(R_j) Y_j K_{h_x}(X_j-X_i) K_{h_r}(R_j) \mathbf{1}_+(R_j) - A_{Y+}(X_i) \right] K_h(R_i) \delta_i \bigg|^2 \\
    \lesssim& O(\frac{1}{h_x^{3d}h_r^3h}). \\
    \mathbb{E} G_{C_{141n+}}^2 \le& \mathbb{E} \sup_{(w,x)\in\Omega} \bigg| \mathrm{i}w e^{\mathrm{i} (wW_i+x'X_i)} \frac{1-D_i}{\Delta_p(X_i)} e_0' B_+^{-1}(X_i) [ \mathbf{r}(R_j) \mathbf{r}(R_j)' K_{h_x}(X_j-X_i) K_{h_r}(R_j) \mathbf{1}_+(R_j) \\&- B_+(X_i) ] B_+^{-1}(X_i) \left[ \mathbf{r}(R_j) \mathbf{r}(R_j)' K_{h_x}(X_j-X_i) K_{h_r}(R_j) \mathbf{1}_+(R_j) - B_+(X_i) \right] \\& B_+^{-1}(X_i) A_{Y+}(X_i) K_h(R_i) \delta_i \bigg|^2 \lesssim O(\frac{1}{h_x^{3d}h_r^3h}).
\end{align*}
Following Proposition 4 in \cite{Delgado2001}, the first- and second-order terms in the $U$-process are uniformly bounded since
\begin{align*}
    \mathbb{E} \sup_{(w,x)\in\Omega} \left\Vert \frac{2}{n-1} U_n^{(1)}(\pi_1 g_{C_{131n+}}) \right\Vert^2 &\lesssim \frac{\mathbb{E} G_{C_{131n+}}^2}{n(n-1)^2} = O(\frac{1}{n^3h_x^{3d}h_r^3h}) = o(\frac{1}{nh}). \\
    \mathbb{E} \sup_{(w,x)\in\Omega} \left\Vert \frac{1}{n-1} U_n^{(2)}(\pi_2 g_{C_{131n+}}) \right\Vert^2 &\lesssim \frac{\mathbb{E} G_{C_{131n+}}^2}{(n-1)^4} = O(\frac{1}{n^4h_x^{3d}h_r^3h}) = o(\frac{1}{nh}); \\
    \mathbb{E} \sup_{(w,x)\in\Omega} \left\Vert \frac{2}{n-1} U_n^{(1)}(\pi_1 g_{C_{141n+}}) \right\Vert^2 &\lesssim \frac{\mathbb{E} G_{C_{141n+}}^2}{n(n-1)^2} = O(\frac{1}{n^3h_x^{3d}h_r^3h}) = o(\frac{1}{nh}). \\
    \mathbb{E} \sup_{(w,x)\in\Omega} \left\Vert \frac{1}{n-1} U_n^{(2)}(\pi_2 g_{C_{141n+}}) \right\Vert^2 &\lesssim \frac{\mathbb{E} G_{C_{141n+}}^2}{(n-1)^4} = O(\frac{1}{n^4h_x^{3d}h_r^3h}) = o(\frac{1}{nh}).
\end{align*}

The second-order moments of envelopes $G_{C_{132n+}}$ and $G_{C_{142n+}}$ are bounded by
\begin{align*}
    \mathbb{E} G_{C_{132n+}}^2 \le& \mathbb{E} \sup_{(w,x)\in\Omega} \bigg| \mathrm{i}w e^{\mathrm{i} (wW_i+x'X_i)} \frac{1-D_i}{\Delta_p(X_i)} e_0' B_+^{-1}(X_i) [ \mathbf{r}(R_j) \mathbf{r}(R_j)' K_{h_x}(X_j-X_i) K_{h_r}(R_j) \mathbf{1}_+(R_j) \\&- B_+(X_i)]  B_+^{-1}(X_i) \left[ \mathbf{r}(R_k) Y_k K_{h_x}(X_k-X_i) K_{h_r}(R_k) \mathbf{1}_+(R_k) - A_{Y+}(X_i) \right] K_h(R_i) \delta_i \bigg|^2 \\
    \lesssim& O(\frac{1}{h_x^{2d}h_r^2h}). \\
    \mathbb{E} G_{C_{142n+}}^2 \le& \mathbb{E} \sup_{(w,x)\in\Omega} \bigg| \mathrm{i}w e^{\mathrm{i} (wW_i+x'X_i)} \frac{1-D_i}{\Delta_p(X_i)} e_0' B_+^{-1}(X_i) [ \mathbf{r}(R_j) \mathbf{r}(R_j)' K_{h_x}(X_j-X_i) K_{h_r}(R_j) \mathbf{1}_+(R_j) \\&- B_+(X_i) ] B_+^{-1}(X_i) \left[ \mathbf{r}(R_k) \mathbf{r}(R_k)' K_{h_x}(X_k-X_i) K_{h_r}(R_k) \mathbf{1}_+(R_k) - B_+(X_i) \right] \\& B_+^{-1}(X_i) A_{Y+}(X_i) K_h(R_i) \delta_i \bigg|^2 \lesssim O(\frac{1}{h_x^{2d}h_r^2h}).
\end{align*}
Following Proposition 4 in \cite{Delgado2001}, the second- and third-order terms in the $U$-process are uniformly bounded since
\begin{align*}
    \mathbb{E} \sup_{(w,x)\in\Omega} \left\Vert 3 \frac{n-2}{n-1} U_n^{(2)}(\pi_2 g_{C_{132n+}}) \right\Vert^2 &\lesssim \frac{(n-2)^2 \mathbb{E} G_{C_{132n+}}^2}{(n-1)^4} = O(\frac{1}{n^2h_x^{2d}h_r^2h}) = o(\frac{1}{nh}), \\
    \mathbb{E} \sup_{(w,x)\in\Omega} \left\Vert \frac{n-2}{n-1} U_n^{(3)}(\pi_3 g_{C_{132n+}}) \right\Vert^2 &\lesssim \frac{n \mathbb{E} G_{C_{132n+}}^2}{(n-1)^4} = O(\frac{1}{n^3h_x^{2d}h_r^2h}) = o(\frac{1}{nh}); \\
    \mathbb{E} \sup_{(w,x)\in\Omega} \left\Vert 3 \frac{n-2}{n-1} U_n^{(2)}(\pi_2 g_{C_{142n+}}) \right\Vert^2 &\lesssim \frac{(n-2)^2 \mathbb{E} G_{C_{142n+}}^2}{(n-1)^4} = O(\frac{1}{n^2h_x^{2d}h_r^2h}) = o(\frac{1}{nh}), \\
    \mathbb{E} \sup_{(w,x)\in\Omega} \left\Vert \frac{n-2}{n-1} U_n^{(3)}(\pi_3 g_{C_{142n+}}) \right\Vert^2 &\lesssim \frac{n \mathbb{E} G_{C_{142n+}}^2}{(n-1)^4} = O(\frac{1}{n^3h_x^{2d}h_r^2h}) = o(\frac{1}{nh}).
\end{align*}
Then it can be concluded that $C_{13n+}(w,x)$ and $C_{14n+}(w,x)$ are uniformly bounded by $\frac{1}{\sqrt{nh}}$. 
Summarizing all the results about $C_{11n+}(w,x)$--$C_{14n+}(w,x)$, $C_{1n+}(w,x)$ can be written as $$ C_{1n+}(w,x) = \frac{\mathrm{i}w}{2n} \sum_{j=1}^n e^{\mathrm{i}x'X_j} \kappa(w,X_j) f(X_j, 0) e_0' B_+^{-1}(X_j) \mathbf{r}(R_j) u_{Yj} K_{h_r}(R_j)\mathbf{1}_+(R_j) + o_p(\frac{1}{\sqrt{nh}}). $$ Its counterpart on the negative side, $C_{1n-}(w,x)$ can be shown to equal $$ C_{1n-}(w,x) = \frac{\mathrm{i}w}{2n} \sum_{j=1}^n e^{\mathrm{i}x'X_j} \kappa(w,X_j) f(X_j, 0) e_0' B_-^{-1}(X_j) \mathbf{r}(R_j) u_{Yj} K_{h_r}(R_j)\mathbf{1}_-(R_j) + o_p(\frac{1}{\sqrt{nh}}) $$ by the same arguments as for $C_{1n+}(w,x)$ by replacing the positive signal with the negative signal in matrix functional $B$, vector functional $A$, and scalar indicator function $\mathbf{1}_\pm(R)$. Then 
\begin{align*}
    C_{1n}(w,x) =& \frac{\mathrm{i}w}{2n} \sum_{j=1}^n e^{\mathrm{i}x'X_j} \kappa(w,X_j) f(X_j, 0) u_{Yj} K_{h_r}(R_j) \\& e_0' \left( B_+^{-1}(X_j) \mathbf{1}_+(R_j) - B_-^{-1}(X_j) \mathbf{1}_-(R_j) \right) \mathbf{r}(R_j) + o_p(\frac{1}{\sqrt{nh}}).
\end{align*} 

Using the same decomposition as for $C_{1n}(w,x)$, and replacing $Y_j$ with $D_j\tau(X_j)$ in the linear representation, it can be shown that 
\begin{align*}
    C_{2n}(w,x) =& \frac{\mathrm{i}w}{2n} \sum_{j=1}^n e^{\mathrm{i}x'X_j} \kappa(w,X_j) f(X_j, 0) \tau(X_j) u_{Dj} K_{h_r}(R_j) \\& e_0' \left( B_+^{-1}(X_j) \mathbf{1}_+(R_j) - B_-^{-1}(X_j) \mathbf{1}_-(R_j) \right) \mathbf{r}(R_j) + o_p(\frac{1}{\sqrt{nh}}), 
\end{align*} 
where $u_{Dj} = D_j - p(X_j,R_j)$ is the conditional residual of $D_j$.

For $C_{3n}(w,x)$, multiplying the decompositions of $(\Delta_{\hat\mu}(X_i)-\Delta_\mu(X_i))$ and $(\Delta_{\hat{p}}(X_i)-\Delta_p(X_i))$ yields
\begin{align*}
    & C_{3n}(w,x) \\
    =& \frac{\mathrm{i}w}{n} \sum_{i=1}^n e^{\mathrm{i}(wW_i + x'X_i)} \frac{(1-D_i)}{\Delta_p^2(X_i)} \bigg[ e_0' \left( B_{n+}^{-1}(X_i) A_{Yn+}(X_i) - B_{n-}^{-1}(X_i) A_{Yn-}(X_i) \right) - \Delta_\mu(X_i) \bigg] \\ 
    & \bigg[ e_0' \left( B_{n+}^{-1}(X_i) A_{Dn+}(X_i) - B_{n-}^{-1}(X_i) A_{Dn-}(X_i) \right) - \Delta_p(X_i) \bigg] K_h(R_i) \delta_i \\
    =& \frac{\mathrm{i}w}{n} \sum_{i=1}^n e^{\mathrm{i}(wW_i + x'X_i)} \frac{(1-D_i)}{\Delta_p^2(X_i)} K_h(R_i) \delta_i \\
    & \bigg[ e_0' \left( B_{n+}^{-1}(X_i) A_{Yn+}(X_i) - B_{n-}^{-1}(X_i) A_{Yn-}(X_i) \right) - e_0' \left( B_+^{-1}(X_i) A_{Y+}(X_i) - B_-^{-1}(X_i) A_{Y-}(X_i) \right) \\
    &+ e_0' \left( B_+^{-1}(X_i) A_{Y+}(X_i) - B_-^{-1}(X_i) A_{Y-}(X_i) \right) - \left( \mu(X_i,0^+) - \mu(X_i,0^-) \right)\bigg] \\
    & \bigg[ e_0' \left( B_{n+}^{-1}(X_i) A_{Dn+}(X_i) - B_{n-}^{-1}(X_i) A_{Dn-}(X_i) \right) - e_0' \left( B_+^{-1}(X_i) A_{D+}(X_i) - B_-^{-1}(X_i) A_{D-}(X_i) \right) \\
    &+ e_0' \left( B_+^{-1}(X_i) A_{D+}(X_i) - B_-^{-1}(X_i) A_{D-}(X_i) \right) - \left( p(X_i,0^+) - p(X_i,0^-) \right)\bigg] \\
    =& \frac{\mathrm{i}w}{n} \sum_{i=1}^n e^{\mathrm{i}(wW_i + x'X_i)} \frac{(1-D_i)}{\Delta_p^2(X_i)} K_h(R_i) \delta_i \\
    & e_0' \bigg[ \left( B_{n+}^{-1}(X_i) A_{Yn+}(X_i) - B_{n-}^{-1}(X_i) A_{Yn-}(X_i) \right) - \left( B_+^{-1}(X_i) A_{Y+}(X_i) - B_-^{-1}(X_i) A_{Y-}(X_i) \right) \bigg] \\
    & e_0' \bigg[ \left( B_{n+}^{-1}(X_i) A_{Dn+}(X_i) - B_{n-}^{-1}(X_i) A_{Dn-}(X_i) \right) - \left( B_+^{-1}(X_i) A_{D+}(X_i) - B_-^{-1}(X_i) A_{D-}(X_i) \right) \bigg] + O_p(h_r^2).
\end{align*}
The last equality comes from the property of the local linear estimator that the approximation error for the conditional mean function $m_Z(X_i,0^\pm)$ is $O_p(h_r^2)$, which is uniformly bounded by $\frac{1}{\sqrt{nh}}$. Applying the expansion of $B_n^{-1}A_n$, $C_{3n}(w,x)$ can be further expressed as
\begin{align*}
    & C_{3n}(w,x) = \frac{\mathrm{i}w}{n} \sum_{i=1}^n e^{\mathrm{i}(wW_i + x'X_i)} \frac{(1-D_i)}{\Delta_p^2(X_i)} K_h(R_i) \delta_i \\
    & e_0' \bigg[ B_+^{-1}(X_i) \left( A_{Yn+}(X_i) -  A_{Y+}(X_i) \right) - B_+^{-1}(X_i) \left( B_{n+}(X_i) -B_+(X_i) \right) B_+^{-1}(X_i) A_{Y+}(X_i) \\
    &- B_-^{-1}(X_i) \left( A_{Yn-}(X_i) -  A_{Y-}(X_i) \right) + B_-^{-1}(X_i) \left( B_{n-}(X_i) -B_-(X_i) \right) B_-^{-1}(X_i) A_{Y-}(X_i) \bigg] \\
    & e_0' \bigg[ B_+^{-1}(X_i) \left( A_{Dn+}(X_i) -  A_{D+}(X_i) \right) - B_+^{-1}(X_i) \left( B_{n+}(X_i) -B_+(X_i) \right) B_+^{-1}(X_i) A_{D+}(X_i) \\
    &- B_-^{-1}(X_i) \left( A_{Dn-}(X_i) -  A_{D-}(X_i) \right) + B_-^{-1}(X_i) \left( B_{n-}(X_i) -B_-(X_i) \right) B_-^{-1}(X_i) A_{D-}(X_i) \bigg] + o_p(\frac{1}{\sqrt{nh}}).
\end{align*}
Combining these terms in the multiplication, $C_{3n}(w,x)$ can be written as 
\begin{align*}
    C_{3n}(w,x) =& C_{301n}(w,x) - C_{302n}(w,x) - C_{303n}(w,x) + C_{304n}(w,x) \\
    &- C_{305n}(w,x) + C_{306n}(w,x) + C_{307n}(w,x) - C_{308n}(w,x) \\
    &- C_{309n}(w,x) + C_{310n}(w,x) + C_{311n}(w,x) - C_{312n}(w,x) \\
    &+ C_{313n}(w,x) - C_{314n}(w,x) - C_{315n}(w,x) + C_{316n}(w,x) + o_p(\frac{1}{\sqrt{nh}}),
\end{align*}
where the first component is 
\begin{align*}
    & C_{301n}(w,x) \\
    =& \frac{\mathrm{i}w}{n} \sum_{i=1}^n e^{\mathrm{i}(wW_i + x'X_i)} \frac{(1-D_i)}{\Delta_p^2(X_i)} K_h(R_i) \delta_i e_0' B_+^{-1}(X_i) \left( A_{Yn+}(X_i) -  A_{Y+}(X_i) \right) e_0' B_+^{-1}(X_i) \\& \left( A_{Dn+}(X_i) -  A_{D+}(X_i) \right) \\
    =& \frac{\mathrm{i}w}{n(n-1)^2} \sum_{i=1}^n \sum_{j\ne i} \sum_{k\ne i} e^{\mathrm{i}(wW_i + x'X_i)} \frac{(1-D_i)}{\Delta_p^2(X_i)} e_0' B_+^{-1}(X_i) \\& \left[ \mathbf{r}(R_j) Y_j K_{h_x}(X_j-X_i) K_{h_r}(R_j) \mathbf{1}_+(R_j) - A_{Y+}(X_i) \right]  e_0' B_+^{-1}(X_i) \\& \left[ \mathbf{r}(R_k) D_k K_{h_x}(X_k-X_i) K_{h_r}(R_k) \mathbf{1}_+(R_k) - A_{D+}(X_i) \right] K_h(R_i) \delta_i, 
\end{align*}
which can be written as $$C_{301n}(w,x) = \frac{1}{n-1} C_{3011n}(w,x) + \frac{n-2}{n-1} C_{3012n}(w,x), $$ where $C_{3011n}(w,x) = U_n^{(2)}(C_{3011n,ij})$ is a second-order $U$-process and $C_{3012n}(w,x) = U_n^{(3)}(C_{3012n,ijk})$ is a third-order $U$-process with elements 
\begin{align*}
    C_{3011n,ij} =& \mathrm{i}w e^{\mathrm{i}(wW_i + x'X_i)} \frac{(1-D_i)}{\Delta_p^2(X_i)} e_0' B_+^{-1}(X_i) \left[ \mathbf{r}(R_j) Y_j K_{h_x}(X_j-X_i) K_{h_r}(R_j) \mathbf{1}_+(R_j) - A_{Y+}(X_i) \right] \\& e_0' B_+^{-1}(X_i) \left[ \mathbf{r}(R_j) D_j K_{h_x}(X_j-X_i) K_{h_r}(R_j) \mathbf{1}_+(R_j) - A_{D+}(X_i) \right] K_h(R_i) \delta_i, \\
    C_{3012n,ijk} =& \mathrm{i}w e^{\mathrm{i}(wW_i + x'X_i)} \frac{(1-D_i)}{\Delta_p^2(X_i)} e_0' B_+^{-1}(X_i) \left[ \mathbf{r}(R_j) Y_j K_{h_x}(X_j-X_i) K_{h_r}(R_j) \mathbf{1}_+(R_j) - A_{Y+}(X_i) \right] \\& e_0' B_+^{-1}(X_i) \left[ \mathbf{r}(R_k) D_k K_{h_x}(X_k-X_i) K_{h_r}(R_k) \mathbf{1}_+(R_k) - A_{D+}(X_i) \right] K_h(R_i) \delta_i, 
\end{align*}
The Hoeffding decompositions of the $U$-processes $C_{3011n}(w,x)$ and $C_{3012n}(w,x)$ are
\begin{align*}
    C_{3011n}(w,x) =& \mathbb{E} C_{3011n,ij} + 2U_n^{(1)}(\pi_1 g_{C_{3011n}}) + U_n^{(2)}(\pi_2 g_{C_{3011n}}), \\
    C_{3012n}(w,x) =& \frac{1}{n} \sum_{i=1}^n \mathbb{E} \left( C_{3012n,ijk} | \mathcal{F}_i \right) + \frac{1}{n} \sum_{j=1}^n \mathbb{E} \left( C_{3012n,ijk} | \mathcal{F}_j \right) + \frac{1}{n} \sum_{k=1}^n \mathbb{E} \left( C_{3012n,ijk} | \mathcal{F}_k \right) \\ &-2 \mathbb{E} C_{3012n,ijk} + 3 U_n^{(2)} (\pi_2 g_{C_{3012n}}) + U_n^{(3)} (\pi_3 g_{C_{3012n}}).    
\end{align*}
The expectation in the Hoeffding decomposition of $C_{3011n}(w,x)$ is $\mathbb{E} C_{3011n,ij} = O(\frac{1}{h_x^dh_r})$ and $\frac{1}{n-1} \mathbb{E} C_{3011n,ij} = O(\frac{1}{nh_x^dh_r}) = o(\frac{1}{\sqrt{nh}})$. 
The conditional expectation terms in the Hoeffding decomposition of $C_{3012n}(w,x)$ are
\begin{align*}
    & \mathbb{E} \left( C_{3012n,ijk} | \mathcal{F}_i \right) = O_p(h_x^{2l}); \\
    & \mathbb{E} \left( C_{3012n,ijk} | \mathcal{F}_j \right) = \mathbb{E} \left( \mathbb{E} (C_{3012n,ijk}|\mathcal{F}_i, \mathcal{F}_j) | \mathcal{F}_j \right) = O_p(h_x^l); \\
    & \mathbb{E} \left( C_{3012n,ijk} | \mathcal{F}_k \right) = \mathbb{E} \left( \mathbb{E} (C_{3012n,ijk}|\mathcal{F}_i, \mathcal{F}_k) | \mathcal{F}_k \right) = O_p(h_x^l).
\end{align*}
Then $\mathbb{E} C_{3012n,ijk} = \mathbb{E}[\mathbb{E}(C_{3012n,ijk}|\mathcal{F}_i)] = O(h_x^{2l}) + O(h_r^2) = o(\frac{1}{\sqrt{nh}})$. 
For the high-order terms in the Hoeffding decompositions, we investigate the second moments of their envelopes, $\mathbb{E}G_{C_{3011n}}^2 \lesssim O(h_x^{-3d} h_r^{-3} h^{-1})$ and $\mathbb{E}G_{C_{3012n}}^2 \lesssim O(h_x^{-2d} h_r^{-2} h^{-1})$.
Following Proposition 4 in \cite{Delgado2001}, the high-order terms in the Hoeffding decompositions are uniformly bounded since
\begin{align*}
    \mathbb{E} \sup_{(w,x)\in\Omega} \left| \frac{2}{n-1} U_n^{(1)}(\pi_1 g_{C_{3011n}}) \right|^2 &\lesssim \frac{\mathbb{E} G_{C_{3011n}}^2}{n(n-1)^2} = O(\frac{1}{n^3h_x^{3d}h_r^3h}) = o(\frac{1}{nh}), \\
    \mathbb{E} \sup_{(w,x)\in\Omega} \left| \frac{1}{n-1} U_n^{(2)}(\pi_2 g_{C_{3011n}}) \right|^2 &\lesssim \frac{\mathbb{E} G_{C_{3011n}}^2}{(n-1)^4} = O(\frac{1}{n^4h_x^{3d}h_r^3h}) = o(\frac{1}{nh}); \\
    \mathbb{E} \sup_{(w,x)\in\Omega} \left| 3 \frac{n-2}{n-1} U_n^{(2)}(\pi_2 g_{C_{3012n}}) \right|^2 &\lesssim \frac{(n-2)^2 \mathbb{E} G_{C_{3012n}}^2}{(n-1)^4} = O(\frac{1}{n^2h_x^{2d}h_r^2h}) = o(\frac{1}{nh}), \\
    \mathbb{E} \sup_{(w,x)\in\Omega} \left| \frac{n-2}{n-1} U_n^{(3)}(\pi_3 g_{C_{3012n}}) \right|^2 &\lesssim \frac{n \mathbb{E} G_{C_{3012n}}^2}{(n-1)^4} = O(\frac{1}{n^3h_x^{2d}h_r^2h}) = o(\frac{1}{nh}). 
\end{align*}
Then it can be concluded that $\frac{1}{n-1} C_{3011n}(w,x)$ and $\frac{n-2}{n-1} C_{3012n}(w,x)$ are uniformly bounded by $\frac{1}{\sqrt{nh}}$ over $(w,x)\in\Omega$. Furthermore, $\sup_{(w,x)\in\Omega}$ $|C_{301n}(w,x)| = o_p(\frac{1}{\sqrt{nh}})$. 

Other components can be similarly written as a combination of second-order and third-order $U$-processes $$C_{sn}(w,x) = \frac{1}{n-1} C_{s1n}(w,x) + \frac{n-2}{n-1} C_{s2n}(w,x),$$ whose Hoeffding decompositions are
\begin{align*}
    C_{s1n}(w,x) =& \mathbb{E} C_{s1n,ij} + 2U_n^{(1)}(\pi_1 g_{C_{s1n}}) + U_n^{(2)}(\pi_2 g_{C_{s1n}}), \\
    C_{s2n}(w,x) =& \frac{1}{n} \sum_{i=1}^n \mathbb{E} \left( C_{s2n,ijk} | \mathcal{F}_i \right) + \frac{1}{n} \sum_{j=1}^n \mathbb{E} \left( C_{s2n,ijk} | \mathcal{F}_j \right) + \frac{1}{n} \sum_{k=1}^n \mathbb{E} \left( C_{s2n,ijk} | \mathcal{F}_k \right) \\ &-2 \mathbb{E} C_{s2n,ijk} + 3 U_n^{(2)} (\pi_2 g_{C_{s2n}}) + U_n^{(3)} (\pi_3 g_{C_{s2n}}).
\end{align*}
for $s = 302, \cdots, 316$.
The expectation in the Hoeffding decomposition of $C_{s1n}(w,x)$ is $\mathbb{E} C_{s1n,ij} = O(\frac{1}{h_x^dh_r})$ and $\frac{1}{n-1} \mathbb{E} C_{s1n,ij} = O(\frac{1}{nh_x^dh_r}) = o(\frac{1}{\sqrt{nh}})$. The conditional expectation terms in the Hoeffding decomposition of $C_{s2n}(w,x)$ are
\begin{align*}
    & \mathbb{E} \left( C_{s2n,ijk} | \mathcal{F}_i \right) = O_p(h_x^{2l}); \\
    & \mathbb{E} \left( C_{s2n,ijk} | \mathcal{F}_j \right) = \mathbb{E} \left( \mathbb{E} (C_{s2n,ijk}|\mathcal{F}_i, \mathcal{F}_j) | \mathcal{F}_j \right) = O_p(h_x^l); \\
    & \mathbb{E} \left( C_{s2n,ijk} | \mathcal{F}_k \right) = \mathbb{E} \left( \mathbb{E} (C_{s2n,ijk}|\mathcal{F}_i, \mathcal{F}_k) | \mathcal{F}_k \right) = O_p(h_x^l).
\end{align*}
Then $\mathbb{E} C_{s2n,ijk} = \mathbb{E}[\mathbb{E}(C_{s2n,ijk}|\mathcal{F}_i)] = O(h_x^{2l}) + O(h_r^2) = o(\frac{1}{\sqrt{nh}})$. 
For the high-order terms in the Hoeffding decompositions, we investigate the second moments of their envelopes, $\mathbb{E}G_{C_{s1n}}^2 \lesssim O(h_x^{-3d} h_r^{-3} h^{-1})$ and $\mathbb{E}G_{C_{s2n}}^2 \lesssim O(h_x^{-2d} h_r^{-2} h^{-1})$.
Following Proposition 4 in \cite{Delgado2001}, the high-order terms in the Hoeffding decompositions are uniformly bounded since
\begin{align*}
    \mathbb{E} \sup_{(w,x)\in\Omega} \left| \frac{2}{n-1} U_n^{(1)}(\pi_1 g_{C_{s1n}}) \right|^2 &\lesssim \frac{\mathbb{E} G_{C_{s1n}}^2}{n(n-1)^2} = O(\frac{1}{n^3h_x^{3d}h_r^3h}) = o(\frac{1}{nh}), \\
    \mathbb{E} \sup_{(w,x)\in\Omega} \left| \frac{1}{n-1} U_n^{(2)}(\pi_2 g_{C_{s1n}}) \right|^2 &\lesssim \frac{\mathbb{E} G_{C_{s1n}}^2}{(n-1)^4} = O(\frac{1}{n^4h_x^{3d}h_r^3h}) = o(\frac{1}{nh}); \\
    \mathbb{E} \sup_{(w,x)\in\Omega} \left| 3 \frac{n-2}{n-1} U_n^{(2)}(\pi_2 g_{C_{s2n}}) \right|^2 &\lesssim \frac{(n-2)^2 \mathbb{E} G_{C_{s2n}}^2}{(n-1)^4} = O(\frac{1}{n^2h_x^{2d}h_r^2h}) = o(\frac{1}{nh}), \\
    \mathbb{E} \sup_{(w,x)\in\Omega} \left| \frac{n-2}{n-1} U_n^{(3)}(\pi_3 g_{C_{s2n}}) \right|^2 &\lesssim \frac{n \mathbb{E} G_{C_{s2n}}^2}{(n-1)^4} = O(\frac{1}{n^3h_x^{2d}h_r^2h}) = o(\frac{1}{nh}). 
\end{align*}
Then it can be concluded that $\frac{1}{n-1} C_{s1n}(w,x)$ and $\frac{n-2}{n-1} C_{s2n}(w,x)$ are uniformly bounded by $\frac{1}{\sqrt{nh}}$ over $(w,x)\in\Omega$ for any subscript $s$. Furthermore, $\sup_{(w,x)\in\Omega}$ $|C_{sn}(w,x)| = o_p(\frac{1}{\sqrt{nh}})$ for $s=302, \cdots, 316$. Then $\sup_{(w,x)\in\Omega}$ $|C_{3n}(w,x)| = o_p(\frac{1}{\sqrt{nh}})$.

$C_{4n}(w,x)$--$C_{7n}(w,x)$ can be decomposed in the same way by expanding the multiplication of the interaction and quadratic terms among $\Delta_{\hat\mu}(X_i) - \Delta_\mu(X_i)$ and $\Delta_{\hat{p}}(X_i)-\Delta_p(X_i)$ yielding the same results that $\sup_{(w,x)\in\Omega}$ $|C_{sn}(w,x)| = o_p(\frac{1}{\sqrt{nh}})$ for $s=4,5,6,7$.
To summarize, $\hat{U}_n(w,x) - U_n(w,x)$ can be rewritten as $$\hat{U}_n(w,x) - U_n(w,x) = C_{1n}(w,x) - C_{2n}(w,x), $$ which is further expressed as
\begin{align*}
    \hat{U}_n(w,x) - U_n(w,x) =& \frac{\mathrm{i}w}{2n} \sum_{j=1}^n e^{\mathrm{i}x'X_j} \kappa(w,X_j) f(X_j, 0) \left( u_{Yj} - u_{Dj} \tau(X_j) \right) K_{h_r}(R_j) \\& e_0' \left( B_+^{-1}(X_j) \mathbf{1}_+(R_j) - B_-^{-1}(X_j) \mathbf{1}_-(R_j) \right) \mathbf{r}(R_j) + o_p(\frac{1}{\sqrt{nh}}).
\end{align*}

\subsection{Proof of Theorem \ref{thm1}}
Let $U_{1n}(w,x) = \frac{1}{n} \sum_{i=1}^n f_{1i}(w,x)$ and $U_{2n}(w,x) = \frac{1}{n} \sum_{i=1}^n f_{2i}(w,x)$, where
\begin{align*}
    f_{1i}(w,x) =& e^{\mathrm{i} \left( x'X_i + wW_i \right)} K_h(R_i) \delta_i + \mathrm{i} w e^{\mathrm{i} x'X_i} \kappa(w,X_i) (Y_i-D_i\tau(X_i)) K_{h_r}(R_i) \delta_i. \\
    f_{2i}(w,x) =& e^{\mathrm{i} \left( x'X_i + wW_i \right)} K_h(R_i) \delta_i + \frac{\mathrm{i}w}{2} e^{\mathrm{i} x'X_i} \kappa(w,X_i) f(X_i,0) (u_{Yi} - u_{Di}\tau(X_i)) K_{h_r}(R_i) \\
    & e_0' \left( \mathbf{1}_+(R_i)B_+^{-1}(X_i) - \mathbf{1}_-(R_i) B_-^{-1}(X_i) \right) \mathbf{r}(R_i).
\end{align*}
The conditional expectations of functions $f_{1i}$ and $f_{2i}$ given $X_i$ and $R_i$ are
\begin{align*}
    & \mathbb{E}( f_{1i}(w,x) | X_i, R_i ) \\
    =& e^{\mathrm{i} x'X_i} \mathbb{E}(e^{\mathrm{i} wW_i}|X_i, R_i) K_h(R_i) \delta_i + \mathrm{i} w e^{\mathrm{i} x'X_i} \kappa(w,X_i) \left( \mathbb{E}(Y_i|X_i, R_i) - \mathbb{E}(D_i|X_i,R_i) \tau(X_i) \right) K_{h_r}(R_i) \delta_i, \\
    =& e^{\mathrm{i} x'X_i} \varphi_{W|XR}(w|X_i, R_i) K_h(R_i) \delta_i + \mathrm{i} w e^{\mathrm{i} x'X_i} \kappa(w,X_i) \left( \mu(X_i,R_i) - p(X_i,R_i) \tau(X_i) \right) K_{h_r}(R_i) \delta_i. \\
    & \mathbb{E}( f_{2i}(w,x) | X_i, R_i ) \\=& e^{\mathrm{i} x'X_i} \varphi_{W|XR}(w|X_i, R_i) K_h(R_i) \delta_i + \frac{\mathrm{i}w}{2} e_0' \left( \mathbf{1}_+(R_i)B_+^{-1}(X_i) - \mathbf{1}_-(R_i) B_-^{-1}(X_i) \right) \mathbf{r}(R_i) \\
    & e^{\mathrm{i} x'X_i} \kappa(w,X_i) f(X_i,0) \left( \mathbb{E}(u_{Yi}|X_i, R_i) - \mathbb{E}(u_{Di}|X_i,R_i) \tau(X_i) \right) K_{h_r}(R_i) \\
    =& e^{\mathrm{i} x'X_i} \varphi_{W|XR}(w|X_i, R_i) K_h(R_i) \delta_i.
\end{align*}
The second equation of $\mathbb{E}(f_{2i}(w,x) | X_i, R_i)$ comes from the zero-mean residuals $u_{Yi}$ and $u_{Di}$. 
Note that from the proof of Lemma S.A1, $$ \mathbb{E}\left[ e^{\mathrm{i} x'X_i} \varphi_{W|XR}(w|X_i, R_i) K_h(R_i) \delta_i | X_i\right] = \frac{1}{2} e^{\mathrm{i} x'X_i} f_{R|X}(0|X_i) \Delta_\varphi(w,X_i) + O_p(h) $$
with a slight abuse of notation, $\Delta_\varphi(w,X_i) = \varphi_{W|XR}(w|X_i, 0^+)-\varphi_{W|XR}(w|X_i, 0^-)$ here.
Then $\mathbb{E}(f_{2i}(w,x)|X_i) = \frac{1}{2} e^{\mathrm{i} x'X_i} f_{R|X}(0|X_i) \Delta_\varphi(w,X_i) + O_p(h)$ and 
\begin{align*}
    \mathbb{E}( f_{1i}(w,x) | X_i) 
    =& \frac{1}{2} e^{\mathrm{i} x'X_i} f_{R|X}(0|X_i) \bigg[  \Delta_\varphi(w,X_i) + O_p(h) \bigg] - \mathrm{i} w e^{\mathrm{i} x'X_i} \kappa(w,X_i) \\& \bigg[ \frac{1}{2} \left[ \left( \mu(X_i, 0^+) - p(X_i, 0^+) \tau(X_i) \right) - \left( \mu(X_i, 0^-) - p(X_i, 0^-) \tau(X_i) \right) \right] + O_p(h_r) \bigg] \\
    =& \frac{1}{2} e^{\mathrm{i} x'X_i} f_{R|X}(0|X_i) \Delta_\varphi(w,X_i) + O_p(h) + O_p(h_r) .
\end{align*}
The second equality holds because $\left( \mu(X_i, 0^+) - p(X_i, 0^+) \tau(X_i) \right) - \left( \mu(X_i, 0^-) - p(X_i, 0^-) \tau(X_i) \right) = \Delta_\mu(X_i) - \Delta_p(X_i) \tau(X_i) =0$.

Denote $g_i(w,x) = \frac{1}{2} e^{\mathrm{i} x'X_i} f_{R|X}(0|X_i) \Delta_\varphi(w,X_i)$. The difference between $g_i(w,x)$ and $\mathbb{E}(f_{1i}(w,x)|X_i)$ or $\mathbb{E}(f_{2i}(w,x)|X_i)$ is asymptotically negligible under Assumption B1 or B1'.
It is straightforward to show that $f$ and $g$ are VC-type with a square-integrable envelope. Then the sums and differences of VC-type functions are also VC-type with a square-integrable envelope. Therefore, $f\in \left\{ f_{wx}: (w,x)\in\Omega \right\}$ and $g \in \left\{ g_{wx}: (w,x)\in\Omega \right\}$ are Donsker function classes. 

When Assumption B1 holds for $\hat{U}_n(w,x)$ with the local constant estimator $\hat\tau(X_i)$, $\sqrt{nh}\hat{U}_n(w,x)$ can be written as $$ \sqrt{nh}\hat{U}_n(w,x) = \sqrt{\frac{h}{n}} \sum_{i=1}^n f_{1i}(w,x) + o_p(1) = \sqrt{\frac{h}{n}} \sum_{i=1}^n g_i(w,x) + \sqrt{\frac{h}{n}} \sum_{i=1}^n g_{1i}(w,x) + o_p(1), $$ where $g_{1i}(w,x) = f_{1i}(w,x) - g_i(w,x)$.

When Assumption B1' holds for $\hat{U}_n(w,x)$ with the local linear estimator $\hat\tau(X_i)$, $\sqrt{nh} \hat{U}_n(w,x)$ can be written as $$ \sqrt{nh} \hat{U}_n(w,x) = \sqrt{\frac{h}{n}} \sum_{i=1}^n f_{2i}(w,x) + o_p(1) = \sqrt{\frac{h}{n}} \sum_{i=1}^n g_i(w,x) + \sqrt{\frac{h}{n}} \sum_{i=1}^n g_{2i}(w,x) + o_p(1), $$ where $g_{2i}(w,x) = f_{2i}(w,x) - g_i(w,x)$.

Furthermore, when $h=o(h_r)$ holds while Assumption B1 or B1' holds, the estimation effect is asymptotically negligible regardless of whether the local constant estimator or the local linear estimator is applied to calculate $\hat{U}_n(w,x)$. In this case, let $U_n(w,x) = \frac{1}{n} \sum_{i=1}^n f_{0i}(w,x)$, where $f_{0i}(w,x) = e^{\mathrm{i} \left( x'X_i + wW_i \right)} K_h(R_i) \delta_i$ and $\sqrt{nh}\hat{U}_n(w,x)$ can be written as $$ \sqrt{nh}\hat{U}_n(w,x) = \sqrt{\frac{h}{n}} \sum_{i=1}^n f_{0i}(w,x) + o_p(1) = \sqrt{\frac{h}{n}} \sum_{i=1}^n g_i(w,x) + \sqrt{\frac{h}{n}} \sum_{i=1}^n g_{0i}(w,x) + o_p(1), $$ where $g_{0i}(w,x) = f_{0i}(w,x) - g_i(w,x) = e^{\mathrm{i} \left( x'X_i + wW_i \right)} K_h(R_i) \delta_i - \frac{1}{2} e^{\mathrm{i} x'X_i} f_{R|X}(0|X_i) \Delta_\varphi(w,X_i)$.

Under $\mathbb{H}_0$, $ \varphi_{W|XR}(w|X_i, 0^+) - \varphi_{W|XR}(w|X_i, 0^-)=0$ and $g_i(w,x)=0$ a.s. By the Donsker theorem, $\sqrt{\frac{h}{n}}\sum_{i=1}^n g_{1i}(w,x)$ and $\sqrt{\frac{h}{n}}\sum_{i=1}^n g_{2i}(w,x)$ converge weakly: $$ \sqrt{\frac{h}{n}}\sum_{i=1}^n g_{1i}(w,x) \Longrightarrow U_{\infty,1}(w,x) \quad\quad\quad \sqrt{\frac{h}{n}}\sum_{i=1}^n g_{2i}(w,x) \Longrightarrow U_{\infty,2}(w,x), $$ where $U_{\infty,1}(w,x)$ and $U_{\infty,2}(w,x)$ are zero-mean Gaussian processes. 
Their covariance kernels can be derived by the convergence of the covariance of $\sqrt{\frac{h}{n}}\sum_{i=1}^n g_{1i}(w,x)$ and $\sqrt{\frac{h}{n}}\sum_{i=1}^n g_{2i}(w,x)$.
For the function $g_{1i}$, the covariance of $\sqrt{\frac{h}{n}} \sum_{i=1}^n g_{1i}(w,x)$ is 
\begin{align*}
    & \operatorname{Cov} \left( \sqrt{\frac{h}{n}} \sum_{i=1}^n g_{1i}(w_1,x_1), \sqrt{\frac{h}{n}} \sum_{i=1}^n g_{1i}(w_2,x_2) \right) = h \mathbb{E} \left[ g_{1i}(w_1, x_1) \bar{g}_{1i}(w_2,x_2) \right] \\
    =& h \mathbb{E} \bigg[ e^{\mathrm{i} (x_1-x_2)'X} \bigg\{ e^{\mathrm{i} (w_1-w_2)W} K_h^2(R) \delta^2 + w_1w_2 \kappa(w_1, X) \kappa(-w_2, X) ( Y-D\tau(X) )^2 K_{h_r}^2(R) \delta^2 \\
    &+ \mathrm{i} \bigg( w_1 e^{-\mathrm{i} w_2W} \kappa(w_1, X) - w_2 e^{\mathrm{i} w_1W} \kappa(-w_2, X) \bigg) (Y-D\tau(X)) K_h(R) K_{h_r}(R) \delta^2 \bigg\} \bigg],
\end{align*}
which converges to 
\begin{align*}
    & \mathcal{K}_1(w_1, x_1; w_2, x_2) = R_K \int e^{\mathrm{i} (x_1-x_2)'x} \varphi_{W|XR}(w_1-w_2|x, 0) f_{XR}(x, 0) dx \\
    &+ \mathrm{i}w_1 \int K(v)K(v/c)dv \int e^{\mathrm{i} (x_1-x_2)'x} \mathbb{E} \left[ e^{-\mathrm{i} w_2W} (Y-D\tau(X)) |X=x, R=0 \right] \kappa(w_1, x) f_{XR}(x, 0) dx \\
    &- \mathrm{i}w_2 \int K(v)K(v/c)dv \int e^{\mathrm{i} (x_1-x_2)'x} \mathbb{E} \left[ e^{\mathrm{i} w_1W} (Y-D\tau(X)) |X=x, R=0 \right] \kappa(-w_2, x) f_{XR}(x, 0) dx \\
    &+ c w_1w_2 R_K \int e^{\mathrm{i} (x_1-x_2)'x} \kappa(w_1, x) \kappa(-w_2, x) \mathbb{E} \left[ (Y-D\tau(X))^2 |X=x, R=0 \right] f_{XR}(x, 0) dx
\end{align*}
with $c=\lim_{n\to\infty} \frac{h}{h_r}$ and $R_K = \int K^2(v)dv$.
Note that when $c=0$, $\int K(v)K(v/c)dv=0$ because the kernel function is defined on the compact support. In this case, $\mathcal{K}_1$ reduces to $$\mathcal{K}_0(w_1, x_1; w_2, x_2) := R_K \int e^{\mathrm{i} (x_1-x_2)'x} \varphi_{W|XR}(w_1-w_2|x, 0) f_{XR}(x, 0) dx.$$
The covariance of $\sqrt{\frac{h}{n}} \sum_{i=1}^n g_{2i}(w,x)$ is 
\begin{align*}
    & \operatorname{Cov} \left( \sqrt{\frac{h}{n}} \sum_{i=1}^n g_{2i}(w_1,x_1), \sqrt{\frac{h}{n}} \sum_{i=1}^n g_{2i}(w_2,x_2) \right) = h \mathbb{E} \left[ g_{2i}(w_1, x_1) \bar{g}_{2i}(w_2,x_2) \right] \\
    =& h \mathbb{E} \bigg[ e^{\mathrm{i} (x_1-x_2)'X} \bigg\{ e^{\mathrm{i} (w_1-w_2)W} K_h^2(R) \delta^2 + \frac{w_1w_2}{4} \kappa(w_1, X) \kappa(-w_2, X) (u_Y-u_D\tau(X))^2 K_{h_r}^2(R) f^2(X,0) \\
    & \left( e_0' \left( \mathbf{1}_+(R)B_+^{-1}(X) - \mathbf{1}_-(R) B_-^{-1}(X) \right) \mathbf{r}(R) \right)^2 \\
    &+ \frac{\mathrm{i}}{2} \left( w_1  e^{-\mathrm{i} w_2W} \kappa(w_1,X) - w_2  e^{\mathrm{i} w_1W} \kappa(-w_2,X) \right) (u_Y-u_D\tau(X)) K_h(R) K_{h_r}(R) \delta \\
    & e_0' \left( \mathbf{1}_+(R)B_+^{-1}(X) - \mathbf{1}_-(R) B_-^{-1}(X) \right) \mathbf{r}(R) f(X,0)
    \bigg\} \bigg],
\end{align*}
which converges to 
\begin{align*}
    & \mathcal{K}_2(w_1, x_1; w_2, x_2) = R_K \int e^{\mathrm{i} (x_1-x_2)'x} \varphi_{W|XR}(w_1-w_2|x, 0) f_{XR}(x,0)dx \\
    &+ \frac{\mathrm{i} w_1}{4} \int K(v)K(v/c)dv \int e^{\mathrm{i} (x_1-x_2)'x} \kappa(w_1, x) \bigg[ \mathbb{E} \left( e^{-\mathrm{i} w_2W} (u_Y-u_D\tau(X)) \mid X=x, R=0^+ \right) \\& e_0' B_+^{-1}(x) \mathbf{r}(0) + \mathbb{E} \left( e^{-\mathrm{i} w_2W} (u_Y-u_D\tau(X)) \mid X=x, R=0^- \right) e_0' B_-^{-1}(x) \mathbf{r}(0) \bigg] f_{XR}^2(x,0)dx \\
    &- \frac{\mathrm{i} w_2}{4} \int K(v)K(v/c)dv \int e^{\mathrm{i} (x_1-x_2)'x} \kappa(-w_2, x) \bigg[ \mathbb{E} \left( e^{\mathrm{i} w_1W} (u_Y-u_D\tau(X)) \mid X=x, R=0^+ \right) \\& e_0' B_+^{-1}(x) \mathbf{r}(0) + \mathbb{E} \left( e^{\mathrm{i} w_1W} (u_Y-u_D\tau(X)) \mid X=x, R=0^- \right) e_0' B_-^{-1}(x) \mathbf{r}(0) \bigg] f_{XR}^2(x,0)dx \\
    &+ c \frac{w_1w_2}{4} \frac{R_K}{2} \int e^{\mathrm{i} (x_1-x_2)'x} \kappa(w_1, x) \kappa(-w_2, x) \bigg[ \mathbb{E}\left( (u_Y-u_D\tau(X))^2 \mid X=x, R=0^+ \right) \\& \left( e_0' B_+^{-1}(x) \mathbf{r}(0) \right)^2 + \mathbb{E}\left( (u_Y-u_D\tau(X))^2 \mid X=x, R=0^- \right) \left( e_0' B_-^{-1}(x) \mathbf{r}(0) \right)^2 \bigg] f_{XR}^3(x,0)dx.
\end{align*}
Note that when $c=0$, $\int K(v)K(v/c)dv=0$ and $\mathcal{K}_2$ reduces to $\mathcal{K}_0$.

When $h=o(h_r)$, $\sqrt{\frac{h}{n}}\sum_{i=1}^n f_{0i}(w,x) \Longrightarrow U_{\infty,0}(w,x), $ where $U_{\infty,0}(w,x)$ is a zero-mean Gaussian process, and its covariance kernel can be derived by the convergence of the covariance of $\sqrt{\frac{h}{n}}\sum_{i=1}^n f_{0i}(w,x)$:
$$\operatorname{Cov} \left( \sqrt{\frac{h}{n}} \sum_{i=1}^n g_{0i}(w_1,x_1), \sqrt{\frac{h}{n}} \sum_{i=1}^n g_{0i}(w_2,x_2) \right) = h \mathbb{E} \left[ e^{\mathrm{i} (x_1-x_2)'X_i} e^{\mathrm{i} (w_1-w_2)W_i} K_h^2(R_i) \delta_i^2 \right], $$
which converges to $\mathcal{K}_0(w_1, x_1; w_2, x_2)$. 

\subsection{Proof of Theorem \ref{thm2}}
Under $\mathbb{H}_1$, it can be derived that
\begin{align*}
    & \mathbb{E} \left[ e^{\mathrm{i}x'X} \varphi_{W|XR}(w|X,R) K_h(R) \delta \right] = \mathbb{E} \left[ \mathbb{E} \left[ e^{\mathrm{i}x'X} \varphi_{W|XR}(w|X,R) K_h(R) \delta \mid X \right] \right] \\
    =& \mathbb{E} \left[ e^{\mathrm{i} x'X} \Delta_\varphi(w,X) f_{R|X}(0|X) + O_p(h) \right] \\
    =& \int e^{\mathrm{i} x'\bar{x}} \Delta_\varphi(w,\bar{x}) f_{XR}(\bar{x},0) d\bar{x} + O(h) =: \Gamma(w,x) + O(h), 
\end{align*}
where $\Gamma(w,x) = \int e^{\mathrm{i} x'\bar{x}} \Delta_\varphi(w,\bar{x}) f_{XR}(\bar{x},0) d\bar{x} \ne 0$. For $\hat{U}_n(w,x)$ with the local constant and local linear estimators,
\begin{align*}
    & \sup_{(w,x)\in\Omega} \left| \hat{U}_n(w,x) - \Gamma(w,x) \right| \\
    \le& \sup_{(w,x)\in\Omega} \left| \hat{U}_n(w,x) - U_{1n}(w,x) \right| + \sup_{(w,x)\in\Omega} \left| U_{1n}(w,x) - \mathbb{E} f_{1i}(w,x) \right| + \sup_{(w,x)\in\Omega} \left| \mathbb{E} f_{1i}(w,x) - \Gamma(w,x) \right| \\=& o_p(1). 
\end{align*}
From Lemma \ref{Lemma1}, the first term is the estimation effect of order $o_p(\frac{1}{\sqrt{nh}})$.
The second term is $o_p(1)$ by the ULLN: the envelope of the main part satisfies $\mathbb{E}\sup_{(w,x)\in\Omega}|e^{\mathrm{i}(x'X_i+wW_i)}K_h(R_i)\delta_i|\le\mathbb{E}|K_h(R)|<\infty$, while the estimation-effect term in $f_{1i}$ is $o_p(1)$ uniformly on $\Omega$ by Lemma \ref{Lemma1} and has mean $O(h_r)=o(1)$ because $\Delta_\mu-\Delta_p\tau=0$.
The third term is $o(1)$ by the display above, and the same argument applies to $U_{2n}(w,x)$ and $U_{0n}(w,x)$.
Then $\hat{U}_n(w,x)$ converges to a nonzero expectation uniformly on $\Omega$ and furthermore $\sqrt{nh}\hat{U}_n(w,x)$ diverges as $n\to\infty$ under $\mathbb{H}_1$ for $j=1,2,0$.

\subsection{Proof of Theorem \ref{thm3}}
Under $\mathbb{H}_{1n}$, $g_i(w,x) = \frac{1}{\sqrt{nh}} e^{\mathrm{i} x'X_i} f_{R|X}(0|X_i) \lambda(w,X_i)$ and $$\sqrt{\frac{h}{n}} \sum_{i=1}^n g_i(w,x) = \frac{1}{n} \sum_{i=1}^n e^{\mathrm{i} x'X_i} f_{R|X}(0|X_i) \lambda(w,X_i), $$ which converges to $\Lambda(w,x) = \mathbb{E} \left[ \lambda(w,X) e^{\mathrm{i} x'X} f_{R|X}(0|X) \right] = \int e^{\mathrm{i} x'\bar{x}} \lambda(w,\bar{x}) f_{XR}(\bar{x},0) d\bar{x} \ne 0$ uniformly on $\Omega$ by ULLN since $\mathbb{E} \sup_{(w,x)\in\Omega} \left| e^{\mathrm{i} x'X} f_{R|X}(0|X) \lambda(w,X) \right| < \infty $. 

When Assumption B1 holds for $\hat{U}_n(w,x)$ with the local constant estimator $\hat\tau(X_i)$, $$ \sqrt{nh}\hat{U}_n(w,x) = \sqrt{\frac{h}{n}} \sum_{i=1}^n g_{1i}(w,x) + \Lambda(w,x) + o_p(1) \Longrightarrow U_{\infty,1}(w,x) + \Lambda(w,x). $$
The covariance of $\sqrt{\frac{h}{n}} \sum_{i=1}^n g_{1i}(w,x)$ is
\begin{align*}
    & \operatorname{Cov} \left( \sqrt{\frac{h}{n}} \sum_{i=1}^n g_{1i}(w_1,x_1), \sqrt{\frac{h}{n}} \sum_{i=1}^n g_{1i}(w_2,x_2) \right) = h \mathbb{E} \left[ g_{1i}(w_1, x_1) \bar{g}_{1i}(w_2,x_2) \right] \\
    =& h \mathbb{E}\bigg[ e^{\mathrm{i} (x_1-x_2)'X} \left\{ e^{\mathrm{i} (w_1-w_2)W} K_h^2(R) \delta^2 + w_1w_2 \kappa(w_1, X) \kappa(-w_2, X) ( Y-D\tau(X) )^2 K_{h_r}^2(R) \delta^2 \right. \\
    &+ \mathrm{i} \bigg( w_1 e^{-\mathrm{i} w_2W} \kappa(w_1, X) - w_2 e^{\mathrm{i} w_1W} \kappa(-w_2, X) \bigg) (Y-D\tau(X)) K_h(R) K_{h_r}(R) \delta^2 \\
    &+ \mathrm{i} \frac{f_{R|X}(0|X)}{\sqrt{nh}} \bigg( w_2 \kappa(-w_2, X) \lambda(w_1, X) - w_1 \kappa(w_1, X) \lambda(-w_2, X) \bigg) \left(Y-D\tau(X)\right) K_{h_r}(R) \delta \\
    &- \frac{f_{R|X}(0|X)}{\sqrt{nh}} \left( \lambda(w_1,X) e^{-\mathrm{i} w_2W} + \lambda(-w_2,X) e^{\mathrm{i} w_1W} \right) K_h(R) \delta + \frac{f^2_{R|X}(0|X)}{nh} \lambda(w_1,X) \lambda(-w_2,X) \bigg\} \bigg],
\end{align*}
which converges to $\mathcal{K}_1(w_1, x_1; w_2, x_2)$ since the last three terms with $\frac{1}{\sqrt{nh}}$ and $\frac{1}{nh}$ approach 0 as $n\to\infty$. Then $\sqrt{nh} \hat{U}_n(w,x)$ converges to the Gaussian process $U_{\infty,1}(w,x)$ plus a nonzero shift term $\Lambda(w,x)$.

When Assumption B1' holds for $\hat{U}_n(w,x)$ with the local linear estimator $\hat\tau(X_i)$, $$ \sqrt{nh}\hat{U}_n(w,x) = \sqrt{\frac{h}{n}} \sum_{i=1}^n g_{2i}(w,x) + \Lambda(w,x) + o_p(1) \Longrightarrow U_{\infty,2}(w,x) + \Lambda(w,x). $$
The covariance of $\sqrt{\frac{h}{n}} \sum_{i=1}^n g_{2i}(w,x)$ is
\begin{align*}
    & \operatorname{Cov} \left( \sqrt{\frac{h}{n}} \sum_{i=1}^n g_{2i}(w_1,x_1), \sqrt{\frac{h}{n}} \sum_{i=1}^n g_{2i}(w_2,x_2) \right) = h \mathbb{E} \left[ g_{2i}(w_1, x_1) \bar{g}_{2i}(w_2,x_2) \right] \\
    =&  h \mathbb{E} \bigg[ e^{\mathrm{i} (x_1-x_2)'X} \bigg\{ e^{\mathrm{i} (w_1-w_2)W} K_h^2(R) \delta^2 + \frac{w_1w_2}{4} \kappa(w_1, X) \kappa(-w_2, X) (u_Y-u_D\tau(X))^2 K_{h_r}^2(R) f^2(X,0) \\
    & \left( e_0' \left( \mathbf{1}_+(R)B_+^{-1}(X) - \mathbf{1}_-(R) B_-^{-1}(X) \right) \mathbf{r}(R) \right)^2 + \frac{\mathrm{i}}{2} \left( w_1  e^{-\mathrm{i} w_2W} \kappa(w_1,X) - w_2  e^{\mathrm{i} w_1W} \kappa(-w_2,X) \right) \\
    & e_0' \left( \mathbf{1}_+(R)B_+^{-1}(X) - \mathbf{1}_-(R) B_-^{-1}(X) \right) \mathbf{r}(R) f(X,0) (u_Y-u_D\tau(X)) K_h(R) K_{h_r}(R) \delta \\ 
    &- \frac{f_{R|X}(0|X)}{\sqrt{nh}} \left( \lambda(w_1,X) e^{-\mathrm{i} w_2W} + \lambda(-w_2,X) e^{\mathrm{i} w_1W} \right) K_h(R) \delta + \frac{f^2_{R|X}(0|X)}{nh} \lambda(w_1,X) \lambda(-w_2,X) \bigg\} \bigg],
\end{align*}
which converges to $\mathcal{K}_2(w_1, x_1; w_2, x_2)$ since the last two terms with $\frac{1}{\sqrt{nh}}$ and $\frac{1}{nh}$ approach 0 as $n\to\infty$. Then $\sqrt{nh} \hat{U}_n(w,x)$ converges to the Gaussian process $U_{\infty,2}(w,x)$ plus a nonzero shift term $\Lambda(w,x)$.

Moreover, if $h=o(h_r)$ and Assumption B1 or B1' holds, $$ \sqrt{nh}\hat{U}_n(w,x) = \sqrt{\frac{h}{n}} \sum_{i=1}^n g_{0i}(w,x) + \Lambda(w,x) + o_p(1) \Longrightarrow U_{\infty,0}(w,x) + \Lambda(w,x). $$
The covariance of $\sqrt{\frac{h}{n}} \sum_{i=1}^n g_{0i}(w,x)$ is
\begin{align*}
    & \operatorname{Cov} \left( \sqrt{\frac{h}{n}} \sum_{i=1}^n g_{0i}(w_1,x_1), \sqrt{\frac{h}{n}} \sum_{i=1}^n g_{0i}(w_2,x_2) \right) = h \mathbb{E} \left[ g_{0i}(w_1, x_1) \bar{g}_{0i}(w_2,x_2) \right] \\
    =& h \mathbb{E} \bigg[ e^{\mathrm{i} (x_1-x_2)'X} \bigg\{ e^{\mathrm{i} (w_1-w_2)W} K_h^2(R) \delta^2 - \frac{f_{R|X}(0|X)}{\sqrt{nh}} \left( \lambda(w_1,X) e^{-\mathrm{i} w_2W} + \lambda(-w_2,X) e^{\mathrm{i} w_1W} \right) K_h(R) \delta \\
    &+ \frac{f^2_{R|X}(0|X)}{nh} \lambda(w_1,X) \lambda(-w_2,X) \bigg\} \bigg],
\end{align*}
which converges to $\mathcal{K}_0(w_1, x_1; w_2, x_2)$ because the last two terms with $\frac{1}{\sqrt{nh}}$ and $\frac{1}{nh}$ approach 0 as $n\to\infty$. Then $\sqrt{nh} \hat{U}_n(w,x)$ converges to the Gaussian process $U_{\infty,0}(w,x)$ plus a nonzero shift term $\Lambda(w,x)$.

\subsection{Proof of Corollary \ref{corollary}}
\begin{proof}
Lemma \ref{Lemma1} shows that $\sup_{(w,x)\in\Omega} \left| \sqrt{nh}\hat{U}_n(w,x) - \sqrt{\frac{h}{n}} \sum_{i=1}^n f_{ji}(w,x) \right| = o_p(1)$ for $j=0,1,2$ in each case described before under $\mathbb{H}_0$, which means that $\sqrt{nh}\hat{U}_n(w,x)$ converges to Gaussian processes $U_{\infty,j}(w,x)$ uniformly on $\Omega$ under $\mathbb{H}_0$ for $j=0,1,2$.
The continuous mapping theorem can be applied to deduce the convergence of the statistics. Two simple examples are introduced in this paper.
The Kolmogorov--Smirnov (KS) statistic is $$ \mathrm{KS}_n = \sup_{(w,x)\in\Omega} \max\left\{ \left| \mathrm{Re} (\sqrt{nh} \hat{U}_n(w,x))\right|, \left| \mathrm{Im} (\sqrt{nh} \hat{U}_n(w,x))\right| \right\}. $$
For the Cram\'{e}r--von Mises (CvM) statistic, a fast-decaying weighting function $G(w,x)$ can be used to derive the expression of the CvM statistic. 
One example we used in this paper is an exponential weighting function: $G(w,x) = \int_{-\infty}^w \int_{-\infty}^x e^{-\frac{1}{2}(\bar{w}^2+ \lVert \bar{x} \rVert ^2)} d\bar{w}d\bar{x}$ with $g(w,x) = \frac{\partial^2 G(w,x)}{\partial w \partial x'} = e^{-\frac{1}{2}(w^2 + \lVert x \rVert^2)}$, which has finite total mass. 
For every $\varepsilon>0$, there exists a compact set $\Omega \subset \mathbb{R}^{d+1}$ such that $|\int_{\mathbb{R}^{d+1} \setminus \Omega} G(dw,dx)| \le \varepsilon$. Theorem~\ref{thm1} is understood to hold on every compact subset of $\mathbb{R}^{d+1}$.
The expression of the CvM statistic with this weighting function is 
\begin{align*}
    \mathrm{CvM}_n &= \int \left| \sqrt{nh} \hat{U}_n(w,x) \right|^2 G(dw, dx) \\
    &= \int \frac{1}{nh} \sum_{i=1}^n \sum_{j=1}^n e^{\mathrm{i} (w\hat{W}_i+x'X_i)} K(\frac{R_i}{h}) \delta_i e^{-\mathrm{i}(w\hat{W}_j+x'X_j)} K(\frac{R_j}{h}) \delta_j g(w,x) dwdx \\
    &= \frac{1}{nh} \sum_{i=1}^n \sum_{j=1}^n  K(\frac{R_i}{h})  K(\frac{R_j}{h}) \delta_i \delta_j \int 
    e^{\mathrm{i} w(\hat{W}_i-\hat{W}_j) - \frac{1}{2} w^2}dw \int e^{\mathrm{i} x'(X_i-X_j) - \frac{1}{2} \lVert x \rVert^2} dx \\
    &= \frac{1}{nh} \sum_{i=1}^n \sum_{j=1}^n e^{-\frac{(\hat{W}_i-\hat{W}_j)^2 + \lVert X_i-X_j \rVert^2}{2}} K(\frac{R_i}{h}) K(\frac{R_j}{h}) \delta_i \delta_j,
\end{align*}
where the constant factor $(2\pi)^{\frac{d+1}{2}}$ from the Gaussian integrals is absorbed into the weighting measure $G(dw,dx)$ and cancels out in the test statistic; the expression above is therefore proportional to the $L_2$-norm and equivalent for testing purposes.

Their convergence can be obtained by the continuous mapping theorem (CMT) (Theorem 1.3.6 in \cite{vdv1996}). 
For CvM-type statistics,  
\begin{align*}
    \mathrm{CvM}_n &= \int \left| \sqrt{nh} \hat{U}_n(w,x) \right|^2 G(dw, dx) \\
    &= \int_{\Omega} \left| \sqrt{nh} \hat{U}_n(w,x) \right|^2 G(dw, dx) + \int_{\mathbb{R}^{d+1} \setminus \Omega} \left| \sqrt{nh} \hat{U}_n(w,x) \right|^2 G(dw, dx).
\end{align*}
For the first part, Theorem~\ref{thm1} and the continuous mapping theorem for the map $f\mapsto\int_\Omega\lvert f\rvert^2 G(dw,dx)$ on $\ell^\infty(\Omega)$ give $$\int_{\Omega} \left| \sqrt{nh} \hat{U}_n(w,x) \right|^2 G(dw, dx) = \int_{\Omega} \left| U_{\infty,j}(w,x) \right|^2 G(dw,dx) + o_p(1). $$
For the tail, given any fixed small $\varepsilon>0$, there exists a compact $\Omega$ such that $$ \int_{\mathbb{R}^{d+1} \setminus \Omega} G(dw, dx) \le \varepsilon. $$ 
Note that $\left| e^{\mathrm{i}(wW+x'X)} \right|\le 1$ and $\left| e^{\mathrm{i}x'X} \left( e^{\mathrm{i}w\hat{W}} - e^{\mathrm{i}wW}\right) \right|\le 2$ still hold off $\Omega$. 
Then $$\sup_{(w,x)} \mathbb{E} \left| \sqrt{nh} U_n(w,x) \right|^2 \lesssim h^{-1} \mathbb{E} \left[ K(R/h) \right]^2 <\infty$$ and $$\sup_{(w,x)} \mathbb{E} \left| \sqrt{nh} \left( \hat{U}_n - U_n \right) (w,x) \right|^2 \lesssim 4 h^{-1} \mathbb{E} \left[ K(R/h) \right]^2 <\infty.$$
By the Cauchy–Schwarz inequality, 
\begin{align*}
    \sup_{(w,x)} \mathbb{E} \left| \sqrt{nh} \hat{U}_n(w,x) \right|^2 \le 2\sup_{(w,x)} \mathbb{E} \left| \sqrt{nh} U_n(w,x) \right|^2 + 2 \sup_{(w,x)} \mathbb{E} \left| \sqrt{nh} \left( \hat{U}_n - U_n \right) (w,x) \right|^2 \le C < \infty 
\end{align*}
for a constant C independent of $n$. 
Then for every $\varepsilon>0$, one can choose a compact set $\Omega$ such that $$\mathbb{E} \int_{\mathbb{R}^{d+1} \setminus \Omega} \left| \sqrt{nh} \hat{U}_n(w,x) \right|^2 G(dw, dx) \le C \int_{\mathbb{R}^{d+1} \setminus \Omega} G(dw, dx) \le C\varepsilon. $$ 
Then Markov's inequality yields $$ \int_{\mathbb{R}^{d+1} \setminus \Omega} \left| \sqrt{nh} \hat{U}_n(w,x) \right|^2 G(dw,dx) = O_p(\varepsilon). $$
The same bound holds for the limiting Gaussian process $U_{\infty,j}$, which is well-defined through its covariance kernel.

Fixing such an $\Omega$ and applying the continuous mapping theorem gives
\begin{align*}
    \mathrm{CvM}_n &= \int_{\Omega} \left| \sqrt{nh} \hat{U}_n(w,x) \right|^2 G(dw, dx) + \int_{\mathbb{R}^{d+1} \setminus \Omega} \left| \sqrt{nh} \hat{U}_n(w,x) \right|^2 G(dw, dx) \\
    &= \int_\Omega \left| U_{\infty,j}(w,x) \right|^2 G(dw,dx) + o_p(1) + O_p(\varepsilon).
\end{align*}
Letting $\varepsilon\downarrow 0$, $$ \mathrm{CvM}_n \overset{d}{\rightarrow}
\int\left| U_{\infty,j}(w,x) \right|^2 G(dw,dx). $$
For KS-type statistics, by the continuous mapping theorem, for $j=0,1,2$, 
\begin{align*}
  \mathrm{KS}_n =& \sup_{(w,x)\in\Omega} \max \left\{ \left| \mathrm{Re} \left(\sqrt{nh} \hat{U}_n(w,x) \right) \right|, \left| \mathrm{Im} \left(\sqrt{nh} \hat{U}_n(w,x) \right) \right| \right\} \\
  =& \sup_{(w,x)\in\Omega} \max \left\{ \left| \mathrm{Re} \left( U_{\infty,j}(w,x) \right) \right|, \left| \mathrm{Im} \left( U_{\infty,j}(w,x) \right) \right| \right\} + o_p(1). 
\end{align*}
Hence, $\mathrm{KS}_n \overset{d}{\rightarrow} \sup_{(w,x)\in\Omega} \max \left\{ \left| \mathrm{Re} \left( U_{\infty,j}(w,x) \right) \right|, \left| \mathrm{Im} \left( U_{\infty,j}(w,x) \right) \right| \right\}$ and $\mathrm{CvM}_n \overset{d}{\rightarrow} \int \left| U_{\infty,j}(w,x) \right|^2 \\ G(dw,dx)$.
\end{proof}

\subsection{Proof of Lemma \ref{Lemma2}}
\subsubsection{(i)}
The difference between $\hat{U}_n^\ast(w,x)$ and $U_n^\ast(w,x)$ is
\begin{equation}
\begin{aligned}
    & \hat{U}_n^\ast(w,x) - U_n^\ast(w,x) \\
    =& \frac{1}{n} \sum_{i=1}^n V_i \bigg( e^{\mathrm{i} w\hat{W}_i} - e^{\mathrm{i} wW_i} \bigg) e^{\mathrm{i} x'X_i} K_h(R_i) \delta_i \\
    &+ \frac{\mathrm{i}w}{n} \sum_{i=1}^n V_i e^{\mathrm{i} x'X_i} \bigg( \hat\kappa(w, X_i) (Y_i-D_i \hat\tau(X_i)) - \kappa(w, X_i) (Y_i-D_i\tau(X_i)) \bigg)  K_{h_r}(R_i) \delta_i \\=:& \text{I} + \text{II}. 
\label{eq1}
\end{aligned}
\end{equation}
Term I of \eqref{eq1} can be rewritten as the sum of two parts using a Taylor expansion of $e^{\mathrm{i} w\hat{W}_i}$ at the point $W_i$: 
\begin{align*}
    \text{I} =& \frac{\mathrm{i}w}{n} \sum_{i=1}^n V_i e^{\mathrm{i} (wW_i+x'X_i)} (1-D_i) ( \hat\tau(X_i) - \tau(X_i) )K_h(R_i) \delta_i \\
    &- \frac{w^2}{2n} \sum_{i=1}^n V_i e^{\mathrm{i} (wW_i+x'X_i)} (1-D_i) (\hat\tau(X_i) - \tau(X_i))^2 K_h(R_i) \delta_i \\
    &- \frac{\mathrm{i}w^3}{6n} \sum_{i=1}^n V_i e^{\mathrm{i} (wW_i+x'X_i)} (1-D_i) (\hat\tau(X_i) - \tau(X_i))^3 K_h(R_i) \delta_i, 
\end{align*}
where $C_{1n}^\ast(w,x) = \frac{\mathrm{i}w}{n} \sum_{i=1}^n V_i e^{\mathrm{i} (wW_i+x'X_i)} (1-D_i) (\hat\tau(X_i) - \tau(X_i)) K_h(R_i) \delta_i - \frac{w^2}{2n} \sum_{i=1}^n V_i e^{\mathrm{i} (wW_i+x'X_i)}$ $(1-D_i) (\hat\tau(X_i) - \tau(X_i))^2 K_h(R_i) \delta_i$. The last term in the expression is of smaller order than the second term of $C_{1n}^\ast(w,x)$.
The asymptotic behavior of $C_{1n}^\ast(w,x)$ can be analyzed following the proof of Lemma \ref{Lemma1}. The slight difference is that some of the corresponding conditional expectations and total expectations in the Hoeffding decompositions are exactly 0 due to the independence of the zero-mean multipliers $\{V_i\}_{i=1}^n$ from other variables. 

With these results, the leading terms in $C_{1n}^\ast(w,x)$ can be decomposed as $C_{1n}^\ast(w,x) = C_{11n}^\ast(w,x) - C_{12n}^\ast(w,x) - C_{13n}^\ast(w,x) + C_{14n}^\ast(w,x) - C_{15n}^\ast(w,x) - C_{16n}^\ast(w,x) + C_{17n}^\ast(w,x)$, where the elements are 
\begin{align*}
    C_{11n}^\ast(w,x) =& \frac{\mathrm{i}w}{n} \sum_{i=1}^n V_i e^{\mathrm{i} (wW_i+x'X_i)} (1-D_i) \frac{\hat f(X_i,0) \Delta_{\hat\mu}(X_i)}{\frac{1}{2} f(X_i,0) \Delta_p(X_i)} K_h(R_i) \delta_i, \\
    C_{12n}^\ast(w,x) =& \frac{\mathrm{i}w}{n} \sum_{i=1}^n V_i e^{\mathrm{i} (wW_i+x'X_i)} (1-D_i) \frac{\hat f(X_i,0) \Delta_{\hat p}(X_i)}{\frac{1}{2} f(X_i,0) \Delta_p(X_i)} \tau(X_i) K_h(R_i) \delta_i, \\
    C_{13n}^\ast(w,x) =& \frac{\mathrm{i}w}{n} \sum_{i=1}^n V_i e^{\mathrm{i} (wW_i+x'X_i)} (1-D_i) \left( \frac{\hat{f}(X_i,0) \Delta_{\hat\mu}(X_i)}{\frac{1}{2} f(X_i,0) \Delta_p(X_i)} - \tau(X_i) \right) \left( \frac{\hat{f}(X_i,0) \Delta_{\hat{p}}(X_i)}{\frac{1}{2} f(X_i,0) \Delta_p(X_i)} - 1 \right) \\& K_h(R_i) \delta_i, \\
    C_{14n}^\ast(w,x) =& \frac{\mathrm{i}w}{n} \sum_{i=1}^n V_i e^{\mathrm{i} (wW_i+x'X_i)} (1-D_i) \left( \frac{\hat{f}(X_i,0) \Delta_{\hat{p}}(X_i)}{\frac{1}{2} f(X_i,0) \Delta_p(X_i)} - 1 \right)^2 \tau(X_i) K_h(R_i) \delta_i, \\
    C_{15n}^\ast(w,x) =& \frac{w^2}{2n} \sum_{i=1}^n V_i e^{\mathrm{i} (wW_i+x'X_i)} (1-D_i) \left( \frac{\hat{f}(X_i,0) \Delta_{\hat\mu}(X_i)}{\frac{1}{2} f(X_i,0) \Delta_p(X_i)} - \tau(X_i) \right)^2 K_h(R_i) \delta_i, \\
    C_{16n}^\ast(w,x) =& \frac{w^2}{2n} \sum_{i=1}^n V_i e^{\mathrm{i} (wW_i+x'X_i)} (1-D_i) \left( \frac{\hat{f}(X_i,0) \Delta_{\hat{p}}(X_i)}{\frac{1}{2} f(X_i,0) \Delta_p(X_i)} - 1 \right)^2 \tau^2(X_i) K_h(R_i) \delta_i, \\
    C_{17n}^\ast(w,x) =& \frac{w^2}{n} \sum_{i=1}^n V_i e^{\mathrm{i} (wW_i+x'X_i)} (1-D_i) \left( \frac{\hat{f}(X_i,0) \Delta_{\hat\mu}(X_i)}{\frac{1}{2} f(X_i,0) \Delta_p(X_i)} - \tau(X_i) \right) \left( \frac{\hat{f}(X_i,0) \Delta_{\hat{p}}(X_i)}{\frac{1}{2} f(X_i,0) \Delta_p(X_i)} - 1 \right) \\& \tau(X_i) K_h(R_i) \delta_i.
\end{align*}
The first two terms $C_{11n}^\ast(w,x)$ and $C_{12n}^\ast(w,x)$ are second-order $U$-processes. The other terms $C_{13n}^\ast(w,x)$--$C_{17n}^\ast(w,x)$ are combinations of second-order $U$-processes and third-order $U$-processes. $C_{11n}^\ast(w,x)$ and $C_{12n}^\ast(w,x)$ have Hoeffding decompositions
\begin{align*}
    C_{11n}^\ast(w,x) =& \frac{1}{n}\sum_{i=1}^n \mathbb{E} (C_{11n,ij}^\ast|\mathcal{F}_i^\ast) + \frac{1}{n}\sum_{j=1}^n \mathbb{E} (C_{11n,ij}^\ast|\mathcal{F}_j^\ast) - \mathbb{E} C_{11n,ij}^\ast + U_n^{(2)}(\pi_2 g_{C_{11n}^\ast}), \\
    C_{12n}^\ast(w,x) =& \frac{1}{n}\sum_{i=1}^n \mathbb{E}(C_{12n,ij}^\ast|\mathcal{F}_i^\ast) + \frac{1}{n}\sum_{j=1}^n \mathbb{E} (C_{12n,ij}^\ast|\mathcal{F}_j^\ast) - \mathbb{E} C_{12n,ij}^\ast + U_n^{(2)}(\pi_2 g_{C_{12n}^\ast}), 
\end{align*}
where the elements are 
\begin{align*}
    C_{11n,ij}^\ast =& \mathrm{i}w V_i e^{\mathrm{i} (wW_i+x'X_i)} \frac{1-D_i}{\frac{1}{2} f(X_i,0) \Delta_p(X_i)} Y_j K_{h_x}(X_j-X_i) K_{h_r}(R_j)\delta_j K_h(R_i) \delta_i, \\
    C_{12n,ij}^\ast =& \mathrm{i}w V_i e^{\mathrm{i} (wW_i+x'X_i)} \frac{\tau(X_i) (1-D_i)}{\frac{1}{2} f(X_i,0) \Delta_p(X_i)} D_j K_{h_x}(X_j-X_i) K_{h_r}(R_j)\delta_j K_h(R_i) \delta_i. 
\end{align*}
The conditional expectations given $\mathcal{F}_j$ and total expectations are exactly 0, i.e., $\mathbb{E}(C_{\cdot n,ij}^\ast |\mathcal{F}_j^\ast) = \mathbb{E}C_{\cdot n,ij}^\ast = 0$. Then the decompositions can be simplified as 
$$ C_{11n}^\ast(w,x) - C_{12n}^\ast(w,x) = \frac{1}{n}\sum_{i=1}^n \mathbb{E} (C_{11n,ij}^\ast|\mathcal{F}_i^\ast) + U_n^{(2)}(\pi_2 g_{C_{11n}^\ast}) - \frac{1}{n}\sum_{i=1}^n \mathbb{E}(C_{12n,ij}^\ast|\mathcal{F}_i^\ast) - U_n^{(2)}(\pi_2 g_{C_{12n}^\ast}), $$ where the conditional expectations are
\begin{align*}
    \mathbb{E}(C_{11n,ij}^\ast|\mathcal{F}_i^\ast) =& \mathrm{i}w V_i e^{\mathrm{i} (wW_i+x'X_i)} \frac{1-D_i}{\frac{1}{2} f(X_i,0) \Delta_p(X_i)} K_h(R_i) \delta_i \mathbb{E}\left[ Y_j K_{h_x}(X_j-X_i) K_{h_r}(R_j)\delta_j |X_i \right], \\
    \mathbb{E}(C_{12n,ij}^\ast|\mathcal{F}_i^\ast) =& \mathrm{i}w V_i e^{\mathrm{i} (wW_i+x'X_i)} \frac{\tau(X_i)(1-D_i)}{\frac{1}{2} f(X_i,0) \Delta_p(X_i)} K_h(R_i) \delta_i \mathbb{E}\left[ D_j K_{h_x}(X_j-X_i) K_{h_r}(R_j)\delta_j | X_i \right] .
\end{align*}
By Lemma S.A1, they are
\begin{align*}
    \mathbb{E}(C_{11n,ij}^\ast|\mathcal{F}_i^\ast) =& \mathrm{i}w V_i e^{\mathrm{i} (wW_i+x'X_i)} (1-D_i) \tau(X_i) K_h(R_i) \delta_i + O_p(h_r) + O_p(h_x^l) ,\\
    \mathbb{E}(C_{12n,ij}^\ast|\mathcal{F}_i^\ast) =& \mathrm{i}w V_i e^{\mathrm{i} (wW_i+x'X_i)} (1-D_i) \tau(X_i) K_h(R_i) \delta_i + O_p(h_r) + O_p(h_x^l) .
\end{align*}
Then $\frac{1}{n} \sum_{i=1}^n \mathbb{E}(C_{11n,ij}^\ast|\mathcal{F}_i^\ast) - \frac{1}{n} \sum_{i=1}^n \mathbb{E}(C_{12n,ij}^\ast|\mathcal{F}_i^\ast) = O_p(h_r) + O_p(h_x^l)$, which is uniformly bounded by $\frac{1}{\sqrt{nh}}$. For the high-order residual terms in the Hoeffding decompositions, $U_n^{(2)}(\pi_2 g_{C_{11n}^\ast})$ and $U_n^{(2)}(\pi_2 g_{C_{12n}^\ast})$ are constructed on the symmetrized functions $g_{C_{11n}^\ast} := \frac{1}{2} (C_{11n,ij}^\ast + C_{11n,ji}^\ast)$ and $g_{C_{12n}^\ast} := \frac{1}{2} (C_{12n,ij}^\ast + C_{12n,ji}^\ast)$ with envelopes $G_{C_{11n}^\ast} = \sup_{(w,x)\in\Omega} |g_{C_{11n}^\ast}|$ and $G_{C_{12n}^\ast} = \sup_{(w,x)\in\Omega} |g_{C_{12n}^\ast}|$. Then the second moments of the envelopes are bounded by $\mathbb{E} \sup_{(w,x)\in\Omega} |C_{11n,ij}^\ast|^2$ and $\mathbb{E} \sup_{(w,x)\in\Omega} |C_{12n,ij}^\ast|^2$:
\begin{align*}
    \mathbb{E} G_{C_{11n}^\ast}^2 \le& \mathbb{E} \sup_{(w,x)\in\Omega} \bigg| \mathrm{i}w V_i e^{\mathrm{i} (wW_i+x'X_i)} \frac{1-D_i}{\frac{1}{2} f(X_i,0) \Delta_p(X_i)} Y_j K_{h_x}(X_j-X_i) K_{h_r}(R_j) K_h(R_i) \delta_i \bigg|^2 \\
    \lesssim& \mathbb{E} \left[ Y_j K_{h_x}(X_j-X_i) K_{h_r}(R_j) K_h(R_i) \right]^2 = O(\frac{1}{hh_rh_x^d}); \\
    \mathbb{E} G_{C_{12n}^\ast}^2 \le& \mathbb{E} \sup_{(w,x)\in\Omega} \bigg| \mathrm{i}w V_i e^{\mathrm{i} (wW_i+x'X_i)} \frac{\tau(X_i) (1-D_i)}{\frac{1}{2} f(X_i,0) \Delta_p(X_i)} D_j K_{h_x}(X_j-X_i) K_{h_r}(R_j)\delta_j K_h(R_i) \delta_i \bigg|^2 \\
    \lesssim& \mathbb{E} \left[ \tau(X_i) K_{h_x}(X_j-X_i) K_{h_r}(R_j) K_h(R_i) \right]^2 = O(\frac{1}{hh_rh_x^d}). 
\end{align*}
Following Proposition 4 in \cite{Delgado2001},
\begin{align*}
    \mathbb{E} \sup_{(w,x)\in\Omega} \left| U_n^{(2)}(\pi_2 g_{C_{11n}^\ast}) \right|^2 &\lesssim \frac{\mathbb{E} G_{C_{11n}^\ast}^2}{(n-1)^2} = O(\frac{1}{n^2hh_rh_x^d}) = o(\frac{1}{nh}). \\
    \mathbb{E} \sup_{(w,x)\in\Omega} \left| U_n^{(2)}(\pi_2 g_{C_{12n}^\ast}) \right|^2 &\lesssim \frac{\mathbb{E} G_{C_{12n}^\ast}^2}{(n-1)^2} = O(\frac{1}{n^2hh_rh_x^d}) = o(\frac{1}{nh}). 
\end{align*}
Then $U_n^{(2)}(\pi_2 g_{C_{11n}^\ast})$ and $U_n^{(2)}(\pi_2 g_{C_{12n}^\ast})$ are uniformly bounded by $\frac{1}{\sqrt{nh}}$ over $\Omega$ and $C_{11n}^\ast(w,x) - C_{12n}^\ast(w,x) = o_p(\frac{1}{\sqrt{nh}})$.

Term II in \eqref{eq1} can be rewritten as 
\begin{equation}
\begin{aligned}
    \text{II} =& \frac{\mathrm{i}w}{n} \sum_{i=1}^n V_i e^{\mathrm{i} x'X_i} (\hat\kappa(w, X_i)-\kappa(w, X_i)) (Y_i-D_i\tau(X_i)) K_{h_r}(R_i) \delta_i \\
    &+ \frac{\mathrm{i}w}{n} \sum_{i=1}^n V_i D_i \kappa(w, X_i) (\hat\tau(X_i) - \tau(X_i)) e^{\mathrm{i} x'X_i} K_{h_r}(R_i) \delta_i \\
    &+ \frac{\mathrm{i}w}{n} \sum_{i=1}^n V_i e^{\mathrm{i} x'X_i} D_i (\hat\kappa(w, X_i)-\kappa(w, X_i)) (\hat\tau(X_i) - \tau(X_i)) K_{h_r}(R_i) \delta_i. \\
    =& C_{2n}^\ast(w,x) + C_{3n}^\ast(w,x) + C_{4n}^\ast(w,x) .
\label{II}
\end{aligned}
\end{equation}
The first term, $C_{2n}^\ast(w,x)$, arises from estimating the function $\kappa$ with $$\hat\kappa(w,X_i) = \frac{\frac{1}{n-1} \sum_{j\ne i} e^{\mathrm{i}w\hat{W}_j} (1-D_j) K_{h_x}(X_j-X_i) K_{h_r}(R_j) \delta_j}{\frac{1}{n-1} \sum_{j\ne i} D_j K_{h_x}(X_j-X_i) K_{h_r}(R_j) \delta_j} = \frac{\hat{f}(X_i, 0) \Delta_{\hat\varphi}(w, X_i)}{\hat{f}(X_i, 0) \Delta_{\hat{p}}(X_i)}.$$ 
Using the expansion of $\hat{a}/\hat{b}$, $\hat\kappa(w,X_i) - \kappa(w,X_i)$ can be expanded as 
\begin{align*}
    \hat\kappa(w,X_i) - \kappa(w,X_i) =& \frac{\hat{f}(X_i,0) \Delta_{\hat\varphi}(w,X_i)}{\frac{1}{2} f(X_i,0) \Delta_p(X_i)} - \kappa(w,X_i) \frac{\hat{f}(X_i,0) \Delta_{\hat{p}}(X_i)}{\frac{1}{2} f(X_i,0) \Delta_p(X_i)} \\
    &- \left( \frac{\hat{f}(X_i,0) \Delta_{\hat\varphi}(w,X_i)}{\frac{1}{2} f(X_i,0) \Delta_p(X_i)} - \kappa(w,X_i) \right) \left( \frac{\hat{f}(X_i,0) \Delta_{\hat{p}}(X_i)}{\frac{1}{2} f(X_i,0) \Delta_p(X_i)} - 1 \right) \\
    &+ \kappa(w,X_i) \left( \frac{\hat{f}(X_i,0) \Delta_{\hat{p}}(X_i)}{\frac{1}{2} f(X_i,0) \Delta_p(X_i)} -1 \right)^2 - \hat\kappa(w,X_i) \left( \frac{\hat{f}(X_i,0) \Delta_{\hat{p}}(X_i)}{\frac{1}{2} f(X_i,0) \Delta_p(X_i)} -1 \right)^3 \\
    &+ \left( \frac{\hat{f}(X_i,0) \Delta_{\hat\varphi}(w,X_i)}{\frac{1}{2} f(X_i,0) \Delta_p(X_i)} - \kappa(w,X_i) \right) \left( \frac{\hat{f}(X_i,0) \Delta_{\hat{p}}(X_i)}{\frac{1}{2} f(X_i,0) \Delta_p(X_i)} -1 \right)^2,
\end{align*}
where the last two terms are asymptotically negligible compared to the first four terms.
Then the leading terms of $C_{2n}^\ast(w,x)$ can be expanded as $C_{2n}^\ast(w,x) = C_{21n}^\ast(w,x) - C_{22n}^\ast(w,x) - C_{23n}^\ast(w,x) + C_{24n}^\ast(w,x)$, where the components are
\begin{align*}
    C_{21n}^\ast(w,x) =& \frac{\mathrm{i}w}{n} \sum_{i=1}^n V_i e^{\mathrm{i} x'X_i} \frac{\hat{f}(X_i, 0) \Delta_{\hat\varphi}(w, X_i)}{\frac{1}{2} f(X_i,0) \Delta_p(X_i)} (Y_i-D_i\tau(X_i)) K_{h_r}(R_i) \delta_i \\
    =& \frac{\mathrm{i}w}{n(n-1)} \sum_{i\ne j} V_i e^{\mathrm{i} x'X_i} \frac{Y_i-D_i\tau(X_i)}{\frac{1}{2} f(X_i,0) \Delta_p(X_i)} e^{\mathrm{i}w\hat{W}_j} (1-D_j) K_{h_x}(X_j-X_i) K_{h_r}(R_j) \delta_j K_{h_r}(R_i) \delta_i, 
\end{align*}
\begin{align*}
    C_{22n}^\ast(w,x) =& \frac{\mathrm{i}w}{n} \sum_{i=1}^n V_i e^{\mathrm{i} x'X_i} \kappa(w,X_i) \frac{\hat{f}(X_i, 0) \Delta_{\hat{p}}(X_i)}{\frac{1}{2} f(X_i,0) \Delta_p(X_i)} (Y_i-D_i\tau(X_i)) K_{h_r}(R_i) \delta_i \\
    =& \frac{\mathrm{i}w}{n(n-1)} \sum_{i\ne j} V_i e^{\mathrm{i} x'X_i} \kappa(w,X_i) \frac{Y_i-D_i\tau(X_i)}{\frac{1}{2} f(X_i,0) \Delta_p(X_i)} D_j K_{h_x}(X_j-X_i) K_{h_r}(R_j) \delta_j K_{h_r}(R_i) \delta_i, \\
    C_{23n}^\ast(w,x) =& \frac{\mathrm{i}w}{n(n-1)^2} \sum_{i\ne j, i\ne k} V_i e^{\mathrm{i} x'X_i} \left( \frac{(1-D_j) e^{\mathrm{i}w\hat{W}_j} K_{h_x}(X_j-X_i) K_{h_r}(R_j) \delta_j}{\frac{1}{2} f(X_i,0) \Delta_p(X_i)} - \kappa(w,X_i) \right) \\& \left( \frac{D_k K_{h_x}(X_k-X_i) K_{h_r}(R_k) \delta_k}{\frac{1}{2} f(X_i,0) \Delta_p(X_i)} - 1 \right) (Y_i-D_i\tau(X_i)) K_{h_r}(R_i) \delta_i, \\
    C_{24n}^\ast(w,x) =& \frac{\mathrm{i}w}{n(n-1)^2} \sum_{i\ne j, i\ne k} V_i e^{\mathrm{i} x'X_i} \kappa(w,X_i) \left( \frac{D_j K_{h_x}(X_j-X_i) K_{h_r}(R_j) \delta_j}{\frac{1}{2} f(X_i,0) \Delta_p(X_i)} - 1 \right) \\& \left( \frac{D_k K_{h_x}(X_k-X_i) K_{h_r}(R_k) \delta_k}{\frac{1}{2} f(X_i,0) \Delta_p(X_i)} - 1 \right) (Y_i-D_i\tau(X_i)) K_{h_r}(R_i) \delta_i.
\end{align*}
For $C_{21n}^\ast(w,x)$, due to the existence of $\hat{W}_j$, we separate it into the infeasible term $$C_{211n}^\ast(w,x) =\frac{\mathrm{i}w}{n(n-1)} \sum_{i\ne j} V_i e^{\mathrm{i} x'X_i} \frac{Y_i-D_i\tau(X_i)}{\frac{1}{2} f(X_i,0) \Delta_p(X_i)} e^{\mathrm{i}wW_j} (1-D_j) K_{h_x}(X_j-X_i) K_{h_r}(R_j) \delta_j K_{h_r}(R_i) \delta_i $$ and the estimation-effect term of $W_j$ as $$\frac{\mathrm{i}w}{n(n-1)} \sum_{i\ne j} V_i e^{\mathrm{i} x'X_i} \frac{Y_i-D_i\tau(X_i)}{\frac{1}{2} f(X_i,0) \Delta_p(X_i)} \left( e^{\mathrm{i}w\hat{W}_j} - e^{\mathrm{i}wW_j} \right) (1-D_j) K_{h_x}(X_j-X_i) K_{h_r}(R_j) \delta_j K_{h_r}(R_i) \delta_i. $$
By a Taylor expansion, the estimation effect of $e^{\mathrm{i}w\hat{W}_j}$ can be further expanded as $$\frac{(\mathrm{i}w)^2}{n(n-1)} \sum_{i\ne j} V_i e^{\mathrm{i} x'X_i} \frac{Y_i-D_i\tau(X_i)}{\frac{1}{2} f(X_i,0) \Delta_p(X_i)} e^{\mathrm{i}wW_j} (1-D_j)^2 (\hat\tau(X_j)-\tau(X_j)) K_{h_x}(X_j-X_i) K_{h_r}(R_j) \delta_j K_{h_r}(R_i) \delta_i$$ plus $$\frac{(\mathrm{i}w)^3}{2n(n-1)} \sum_{i\ne j} V_i e^{\mathrm{i} x'X_i} \frac{Y_i-D_i\tau(X_i)}{\frac{1}{2} f(X_i,0) \Delta_p(X_i)} e^{\mathrm{i}w\bar{W}_j} (1-D_j)^3 (\hat\tau(X_j)-\tau(X_j))^2 K_{h_x}(X_j-X_i) K_{h_r}(R_j) \delta_j K_{h_r}(R_i) \delta_i, $$
where the first leading term is defined as
\begin{align*}
    C_{212n}^\ast(w,x) =& \frac{(\mathrm{i}w)^2}{n(n-1)} \sum_{i\ne j} V_i e^{\mathrm{i} x'X_i} \frac{Y_i-D_i\tau(X_i)}{\frac{1}{2} f(X_i,0) \Delta_p(X_i)} e^{\mathrm{i}wW_j} (1-D_j)^2 (\hat\tau(X_j)-\tau(X_j)) K_{h_x}(X_j-X_i) \\& K_{h_r}(R_j) \delta_j K_{h_r}(R_i) \delta_i, 
\end{align*} 
and the latter term is of smaller order than $C_{212n}^\ast(w,x)$.

$U$-process $C_{211n}^\ast(w,x) = U_n^{(2)}(C_{211n,ij}^\ast)$ with element $C_{211n,ij}^\ast = \mathrm{i}w V_i e^{\mathrm{i} x'X_i} \frac{Y_i-D_i\tau(X_i)}{\frac{1}{2} f(X_i,0) \Delta_p(X_i)} e^{\mathrm{i}wW_j}$ \\ $(1-D_j) K_{h_x}(X_j-X_i) K_{h_r}(R_j) \delta_j K_{h_r}(R_i) \delta_i$ has Hoeffding decomposition $$C_{211n}^\ast(w,x) = \frac{1}{n} \sum_{i=1}^n \mathbb{E}(C_{211n,ij}^\ast|\mathcal{F}_i^\ast) + U_n^{(2)}(\pi_2 g_{C_{211n}^\ast}), $$
where the conditional expectation term is 
\begin{align*}
    \mathbb{E}(C_{211n,ij}^\ast|\mathcal{F}_i^\ast) =& \mathrm{i}w V_i e^{\mathrm{i} x'X_i} \frac{Y_i-D_i\tau(X_i)}{\frac{1}{2} f(X_i,0) \Delta_p(X_i)} \mathbb{E}[ e^{\mathrm{i}wW_j} (1-D_j) K_{h_x}(X_j-X_i) K_{h_r}(R_j) \delta_j |X_i] K_{h_r}(R_i) \delta_i, \\
    =& \mathrm{i}w V_i e^{\mathrm{i} x'X_i} \kappa(w,X_i) (Y_i-D_i\tau(X_i)) K_{h_r}(R_i) \delta_i + O_p(h_r) + O_p(h_x^l).
\end{align*}
Then $C_{211n}^\ast(w,x)$ can be written as $$ C_{211n}^\ast(w,x) = \frac{1}{n} \sum_{i=1}^n \mathrm{i}w V_i e^{\mathrm{i} x'X_i} \kappa(w,X_i) (Y_i-D_i\tau(X_i)) K_{h_r}(R_i) \delta_i + U_n^{(2)}(\pi_2 g_{C_{211n}^\ast}) + O_p(h_r) + O_p(h_x^l). $$ 
    
For the high-order term $U_n^{(2)}(\pi_2 g_{C_{211n}^\ast})$ in the Hoeffding decomposition, $g_{C_{211n}^\ast} := \frac{1}{2} (C_{211n,ij}^\ast + C_{211n,ji}^\ast)$ is the symmetrized function with envelope $G_{C_{211n}^\ast} = \sup_{(w,x)\in\Omega} |g_{C_{211n}^\ast}|$. The second moment of the envelope is bounded by $\mathbb{E} \sup_{(w,x)\in\Omega} |C_{211n,ij}^\ast|^2$:
\begin{align*}
    \mathbb{E} G_{C_{211n}^\ast}^2 \le& \mathbb{E} \sup_{(w,x)\in\Omega} \bigg| V_i e^{\mathrm{i} x'X_i} \frac{Y_i-D_i\tau(X_i)}{\frac{1}{2} f(X_i,0) \Delta_p(X_i)} e^{\mathrm{i}wW_j} (1-D_j) K_{h_x}(X_j-X_i) K_{h_r}(R_j) \delta_j K_{h_r}(R_i) \delta_i \bigg|^2 \\
    \lesssim& \mathbb{E} \left[ (Y_i-D_i\tau(X_i)) K_{h_x}(X_j-X_i) K_{h_r}(R_j) K_{h_r}(R_i) \right]^2 = O(\frac{1}{h_r^2h_x^d}).
\end{align*}
Following Proposition 4 in \cite{Delgado2001}, $$\mathbb{E} \sup_{(w,x)\in\Omega} \left| U_n^{(2)}(\pi_2 g_{C_{211n}^\ast}) \right|^2 \lesssim \frac{\mathbb{E} G_{C_{211n}^\ast}^2}{(n-1)^2} = O(\frac{1}{n^2h_r^2h_x^d}) = o(\frac{1}{nh}). $$ Then $U_n^{(2)}(\pi_2 g_{C_{211n}^\ast})$ is uniformly bounded by $\frac{1}{\sqrt{nh}}$ over $(w,x)\in\Omega$ and $C_{211n}^\ast(w,x)$ can be written as $$ C_{211n}^\ast(w,x) = \frac{\mathrm{i}w}{n} \sum_{i=1}^n V_i e^{\mathrm{i} x'X_i} \kappa(w,X_i) (Y_i-D_i\tau(X_i)) K_{h_r}(R_i) \delta_i + o_p(\frac{1}{\sqrt{nh}}).$$

Expanding $\hat\tau(X_j)-\tau(X_j)$, $C_{212n}^\ast(w,x)$ can be written as
\begin{align*}
    C_{212n}^\ast(w,x) =& \frac{(\mathrm{i}w)^2}{n(n-1)} \sum_{i\ne j} V_i e^{\mathrm{i} x'X_i} \frac{Y_i-D_i\tau(X_i)}{\frac{1}{2} f(X_i,0) \Delta_p(X_i)} e^{\mathrm{i}wW_j} (1-D_j) \left( \frac{\hat{f}(X_j,0) \Delta_{\hat\mu}(X_j)}{\frac{1}{2} f(X_j,0) \Delta_p(X_j)} \right. \\&- \left. \tau(X_j) \frac{\hat{f}(X_j,0) \Delta_{\hat{p}}(X_j)}{\frac{1}{2} f(X_j,0) \Delta_p(X_j)} \right) K_{h_x}(X_j-X_i) K_{h_r}(R_j) \delta_j K_{h_r}(R_i) \delta_i + o_p(\frac{1}{\sqrt{nh}}), 
\end{align*} 
where the high-order terms in the expansion of $\hat\tau(X_i) - \tau(X_i)$ are of smaller order than the first-order terms in the expansion. Further expanding $\hat{f}\Delta_{\hat\mu}$ and $\hat{f}\Delta_{\hat{p}}$ in the numerator, $C_{212n}^\ast(w,x)$ can be decomposed as 
\begin{align*}
    C_{212n}^\ast(w,x) =& \frac{1}{n-1} C_{2121n}^\ast(w,x) + \frac{n-2}{n-1} C_{2122n}^\ast(w,x) - \frac{n-2}{n-1} C_{2123n}^\ast(w,x) + o_p(\frac{1}{\sqrt{nh}}),
\end{align*} 
where the components $C_{2121n}^\ast(w,x) = U_n^{(2)} (C_{2121n,ij}^\ast)$, $C_{2122n}^\ast(w,x) = U_n^{(3)} (C_{2122n,ijk}^\ast)$, $C_{2123n}^\ast(w,x) = U_n^{(3)} (C_{2123n,ijk}^\ast)$ are second- and third-order $U$-processes with elements 
\begin{align*}
    C_{2121n,ij}^\ast =& (\mathrm{i}w)^2 V_i e^{\mathrm{i} x'X_i} \frac{Y_i-D_i\tau(X_i)}{\frac{1}{2} f(X_i,0) \Delta_p(X_i)} \frac{e^{\mathrm{i}wW_j} (1-D_j)}{\frac{1}{2} f(X_j,0) \Delta_p(X_j)} (Y_i-D_i\tau(X_j)) \\& K_{h_x}^2(X_i-X_j) K_{h_r}(R_j) \delta_j K_{h_r}^2(R_i) \delta_i^2 \\
    C_{2122n,ijk}^\ast =& (\mathrm{i}w)^2 V_i e^{\mathrm{i} x'X_i} \frac{Y_i-D_i\tau(X_i)}{\frac{1}{2} f(X_i,0) \Delta_p(X_i)} \frac{e^{\mathrm{i}wW_j} (1-D_j)}{\frac{1}{2} f(X_j,0) \Delta_p(X_j)} Y_k \\& K_{h_x}(X_k-X_j) K_{h_x}(X_j-X_i) K_{h_r}(R_k) \delta_k K_{h_r}(R_j) \delta_j K_{h_r}(R_i) \delta_i \\
    C_{2123n,ijk}^\ast =& (\mathrm{i}w)^2 V_i e^{\mathrm{i} x'X_i} \frac{Y_i-D_i\tau(X_i)}{\frac{1}{2} f(X_i,0) \Delta_p(X_i)} \frac{e^{\mathrm{i}wW_j} (1-D_j)}{\frac{1}{2} f(X_j,0) \Delta_p(X_j)} \tau(X_j) D_k \\& K_{h_x}(X_k-X_j) K_{h_x}(X_j-X_i) K_{h_r}(R_k) \delta_k K_{h_r}(R_j) \delta_j K_{h_r}(R_i) \delta_i.
\end{align*}
Their Hoeffding decompositions are
\begin{align*}
     C_{2121n}^\ast(w,x) =& 2U_n^{(1)} (\pi_1 C_{2121n,ij}^\ast) + U_n^{(2)} (\pi_2 C_{2121n,ij}^\ast), \\
     C_{2122n}^\ast(w,x) =& \frac{1}{n} \sum_{i=1}^n \mathbb{E}(C_{2122n,ijk}^\ast |\mathcal{F}_i^\ast) + 3U_n^{(2)} (\pi_2 C_{2122n,ijk}^\ast) + U_n^{(3)} (\pi_3 C_{2122n,ijk}^\ast), \\ 
     C_{2123n}^\ast(w,x) =& \frac{1}{n} \sum_{i=1}^n \mathbb{E}(C_{2123n,ijk}^\ast |\mathcal{F}_i^\ast) + 3U_n^{(2)} (\pi_2 C_{2123n,ijk}^\ast) + U_n^{(3)} (\pi_3 C_{2123n,ijk}^\ast),  
\end{align*}
For the $U$-process $C_{2121n}^\ast(w,x)$, $g_{C_{2121n}^\ast} := \frac{1}{2} (C_{2121n,ij}^\ast + C_{2121n,ji}^\ast)$ is the symmetrized function with envelope $G_{C_{2121n}^\ast} = \sup_{(w,x)\in\Omega} |g_{C_{2121n}^\ast}|$, whose second moment is bounded by $\mathbb{E} \sup_{(w,x)\in\Omega} |C_{2121n,ij}^\ast|^2$:
\begin{align*}
    \mathbb{E} G_{C_{2121n}^\ast}^2 \le& \mathbb{E} \sup_{(w,x)\in\Omega} \bigg| (\mathrm{i}w)^2 V_i e^{\mathrm{i} x'X_i} \frac{Y_i-D_i\tau(X_i)}{\frac{1}{2} f(X_i,0) \Delta_p(X_i)} \frac{e^{\mathrm{i}wW_j} (1-D_j)}{\frac{1}{2} f(X_j,0) \Delta_p(X_j)} (Y_i-D_i\tau(X_j)) \\& K_{h_x}^2(X_i-X_j) K_{h_r}(R_j) \delta_j K_{h_r}^2(R_i) \delta_i^2 \bigg|^2 \\
    \lesssim& \mathbb{E} \left[ (Y_i-D_i\tau(X_i)) (Y_i-D_i\tau(X_j)) K_{h_x}^2(X_j-X_i) K_{h_r}(R_j) K_{h_r}^2(R_i) \right]^2 = O(\frac{1}{h_r^4h_x^{3d}}).
\end{align*}
Following Proposition 4 in \cite{Delgado2001}, 
\begin{align*}
    \mathbb{E} \sup_{(w,x)\in\Omega} \left| \frac{2}{n-1} U_n^{(1)} (\pi_1 C_{2121n,ij}^\ast) \right|^2 \lesssim& \frac{\mathbb{E} G_{C_{2121n}^\ast}^2}{n(n-1)^2} = O(\frac{1}{n^3h_r^4h_x^{3d}}) = o(\frac{1}{nh}). \\
    \mathbb{E} \sup_{(w,x)\in\Omega} \left| \frac{1}{n-1} U_n^{(2)} (\pi_2 C_{2121n,ij}^\ast) \right|^2 \lesssim& \frac{\mathbb{E} G_{C_{2121n}^\ast}^2}{(n-1)^4} = O(\frac{1}{n^4h_r^4h_x^{3d}}) = o(\frac{1}{nh}). 
\end{align*}
Then $\frac{2}{n-1} U_n^{(1)} (\pi_1 C_{2121n,ij}^\ast)$ and $\frac{1}{n-1} U_n^{(2)} (\pi_2 C_{2121n,ij}^\ast)$ are uniformly bounded by $\frac{1}{\sqrt{nh}}$ over $(w,x)\in\Omega$ and $\frac{1}{n-1} C_{2121n}^\ast(w,x)$ is $o_p(\frac{1}{\sqrt{nh}})$ uniformly over $(w,x)\in\Omega$.
The conditional expectation terms in the decompositions of $C_{2122n}^\ast(w,x)$ and $C_{2123n}^\ast(w,x)$ are
\begin{align*}
    & \mathbb{E}(C_{2122n,ijk}^\ast | \mathcal{F}_i^\ast) = \mathbb{E} \left[ \mathbb{E}(C_{2122n,ijk}^\ast | \mathcal{F}_{ij}^\ast) |\mathcal{F}_i^\ast \right] \\
    =& (\mathrm{i}w)^2 V_i e^{\mathrm{i} x'X_i} \frac{Y_i-D_i\tau(X_i)}{\frac{1}{2} f(X_i,0) \Delta_p(X_i)} \mathbb{E} \left[ e^{\mathrm{i}wW_j} (1-D_j) K_{h_x}(X_j-X_i) K_{h_r}(R_j) \delta_j | X_j \right] K_{h_r}(R_i) \delta_i \\& (1+O_p(h_r)+O_p(h_x^l)) \\
    =& (\mathrm{i}w)^2 V_i e^{\mathrm{i} x'X_i} (Y_i-D_i\tau(X_i)) \kappa(w,X_i) K_{h_r}(R_i) \delta_i (1+O_p(h_r)+O_p(h_x^l)) .\\
    & \mathbb{E}(C_{2123n,ijk}^\ast|\mathcal{F}_i^\ast) = \mathbb{E} \left[ \mathbb{E}(C_{2123n,ijk}^\ast | \mathcal{F}_{ij}^\ast) |\mathcal{F}_i^\ast \right] \\
    =& (\mathrm{i}w)^2 V_i e^{\mathrm{i} x'X_i} \frac{Y_i-D_i\tau(X_i)}{\frac{1}{2} f(X_i,0) \Delta_p(X_i)} \mathbb{E} \left[ e^{\mathrm{i}wW_j} (1-D_j) \tau(X_j) K_{h_x}(X_j-X_i) K_{h_r}(R_j) \delta_j | X_j \right] K_{h_r}(R_i) \delta_i \\& (1+O_p(h_r)+O_p(h_x^l)) \\
    =& (\mathrm{i}w)^2 V_i e^{\mathrm{i} x'X_i} (Y_i-D_i\tau(X_i)) \kappa(w,X_i) \tau(X_i) K_{h_r}(R_i) \delta_i (1+O_p(h_r)+O_p(h_x^l)).
\end{align*}

For the high-order terms $U_n^{(2)}(\pi_2 g_{C_{2122n}^\ast})$, $U_n^{(2)}(\pi_2 g_{C_{2123n}^\ast})$, $U_n^{(3)}(\pi_3 g_{C_{2122n}^\ast})$ and $U_n^{(3)}(\pi_3 g_{C_{2123n}^\ast})$ in the Hoeffding decompositions, $g_{C_{2122n}^\ast} := \frac{1}{3!} \sum_{[3!]} C_{2122n,ijk}^\ast$ and $g_{C_{2123n}^\ast} := \frac{1}{3!} \sum_{[3!]} C_{2123n,ijk}^\ast$ are the symmetrized functions with envelopes $G_{C_{2122n}^\ast} = \sup_{(w,x)\in\Omega} |g_{C_{2122n}^\ast}|$ and $G_{C_{2123n}^\ast} = \sup_{(w,x)\in\Omega} |g_{C_{2123n}^\ast}|$, whose second moments are bounded by $\mathbb{E} \sup_{(w,x)\in\Omega} |C_{2122n,ij}^\ast|^2$ and $\mathbb{E} \sup_{(w,x)\in\Omega} |C_{2123n,ij}^\ast|^2$:
\begin{align*}
    \mathbb{E} G_{C_{2122n}^\ast}^2 \le& \mathbb{E} \sup_{(w,x)\in\Omega} \bigg| (\mathrm{i}w)^2 V_i e^{\mathrm{i} x'X_i} \frac{Y_i-D_i\tau(X_i)}{\frac{1}{2} f(X_i,0) \Delta_p(X_i)} \frac{e^{\mathrm{i}wW_j} (1-D_j)}{\frac{1}{2} f(X_j,0) \Delta_p(X_j)} Y_k K_{h_x}(X_k-X_j) \\& K_{h_x}(X_j-X_i) K_{h_r}(R_k) \delta_k K_{h_r}(R_j) \delta_j K_{h_r}(R_i) \delta_i \bigg|^2 \\
    \lesssim& \mathbb{E} \left[ (Y_i-D_i\tau(X_i)) Y_k K_{h_x}(X_k-X_j) K_{h_x}(X_j-X_i) K_{h_r}(R_k) K_{h_r}(R_j) K_{h_r}(R_i) \right]^2 \\=& O(\frac{1}{h_r^3h_x^{2d}}). 
\end{align*}
\begin{align*}
    \mathbb{E} G_{C_{2123n}^\ast}^2 \le& \mathbb{E} \sup_{(w,x)\in\Omega} \bigg| (\mathrm{i}w)^2 V_i e^{\mathrm{i} x'X_i} \frac{Y_i-D_i\tau(X_i)}{\frac{1}{2} f(X_i,0) \Delta_p(X_i)} \frac{e^{\mathrm{i}wW_j} (1-D_j)}{\frac{1}{2} f(X_j,0) \Delta_p(X_j)} \tau(X_j) D_k K_{h_x}(X_k-X_j) \\& K_{h_x}(X_j-X_i) K_{h_r}(R_k) \delta_k K_{h_r}(R_j) \delta_j K_{h_r}(R_i) \delta_i \bigg|^2 \\
    \lesssim& \mathbb{E} \left[ (Y_i-D_i\tau(X_i)) \tau(X_j) K_{h_x}(X_k-X_j) K_{h_x}(X_j-X_i) K_{h_r}(R_k) K_{h_r}(R_j) K_{h_r}(R_i) \right]^2 \\=& O(\frac{1}{h_r^3h_x^{2d}}).
\end{align*}
Following Proposition 4 in \cite{Delgado2001}, 
\begin{align*}
    \mathbb{E} \sup_{(w,x)\in\Omega} \left| 3\frac{n-2}{n-1} U_n^{(2)}(\pi_2 g_{C_{2122n}^\ast}) \right|^2 \lesssim \frac{\mathbb{E} G_{C_{2122n}^\ast}^2}{(n-1)^2} = O(\frac{1}{n^2h_r^3h_x^{2d}}) = o(\frac{1}{nh}). \\
    \mathbb{E} \sup_{(w,x)\in\Omega} \left| 3\frac{n-2}{n-1} U_n^{(2)}(\pi_2 g_{C_{2123n}^\ast}) \right|^2 \lesssim \frac{\mathbb{E} G_{C_{2123n}^\ast}^2}{(n-1)^2} = O(\frac{1}{n^2h_r^3h_x^{2d}}) = o(\frac{1}{nh}). \\
    \mathbb{E} \sup_{(w,x)\in\Omega} \left| \frac{n-2}{n-1} U_n^{(3)}(\pi_3 g_{C_{2122n}^\ast}) \right|^2 \lesssim \frac{n\mathbb{E} G_{C_{2122n}^\ast}^2}{(n-1)^4} = O(\frac{1}{n^3h_r^3h_x^{2d}}) = o(\frac{1}{nh}). \\
    \mathbb{E} \sup_{(w,x)\in\Omega} \left| \frac{n-2}{n-1} U_n^{(3)}(\pi_3 g_{C_{2123n}^\ast}) \right|^2 \lesssim \frac{n\mathbb{E} G_{C_{2123n}^\ast}^2}{(n-1)^4} = O(\frac{1}{n^3h_r^3h_x^{2d}}) = o(\frac{1}{nh}). 
\end{align*}
Then $U_n^{(3)}(\pi_3 g_{C_{2122n}^\ast})$ and $U_n^{(3)}(\pi_3 g_{C_{2123n}^\ast})$ are uniformly bounded by $\frac{1}{\sqrt{nh}}$ over $(w,x)\in\Omega$. 
Summarizing the above results, $\sup_{(w,x)\in\Omega} \left| C_{212n}^\ast(w,x) \right| = o_p(\frac{1}{\sqrt{nh}})$ and $$ C_{21n}^\ast(w,x) = \frac{\mathrm{i}w}{n} \sum_{i=1}^n V_i e^{\mathrm{i} x'X_i} \kappa(w,X_i) (Y_i-D_i\tau(X_i)) K_{h_r}(R_i) \delta_i + o_p(\frac{1}{\sqrt{nh}}).$$

Following steps similar to those in the proof of Lemma \ref{Lemma1}, $U$-process $C_{22n}^\ast(w,x) = U_n^{(2)}(C_{22n,ij}^\ast)$ with element $C_{22n,ij}^\ast = V_i e^{\mathrm{i} x'X_i} \kappa(w,X_i) \frac{Y_i-D_i\tau(X_i)}{\frac{1}{2} f(X_i,0) \Delta_p(X_i)} D_j K_{h_x}(X_j-X_i) K_{h_r}(R_j) \delta_j K_{h_r}(R_i) \delta_i$ has Hoeffding decomposition $$C_{22n}^\ast(w,x) = \frac{1}{n} \sum_{i=1}^n \mathbb{E}(C_{22n,ij}^\ast|\mathcal{F}_i^\ast) + U_n^{(2)}(\pi_2 g_{C_{22n}^\ast}), $$
where the conditional expectation term is 
\begin{align*}
    \mathbb{E}(C_{22n,ij}^\ast|\mathcal{F}_i^\ast) =& \mathrm{i}w V_i e^{\mathrm{i} x'X_i} \frac{Y_i-D_i\tau(X_i)}{\frac{1}{2} f(X_i,0) \Delta_p(X_i)} \mathbb{E}[ e^{\mathrm{i}wW_j} (1-D_j) K_{h_x}(X_j-X_i) K_{h_r}(R_j) \delta_j |X_i] K_{h_r}(R_i) \delta_i, \\
    =& \mathrm{i}w V_i e^{\mathrm{i} x'X_i} \kappa(w,X_i) (Y_i-D_i\tau(X_i)) K_{h_r}(R_i) \delta_i + O_p(h_r) + O_p(h_x^l).
\end{align*}
Then $C_{22n}^\ast(w,x)$ can be written as $$ C_{22n}^\ast(w,x) = \frac{1}{n} \sum_{i=1}^n \mathrm{i}w V_i e^{\mathrm{i} x'X_i} \kappa(w,X_i) (Y_i-D_i\tau(X_i)) K_{h_r}(R_i) \delta_i + U_n^{(2)}(\pi_2 g_{C_{22n}^\ast}) + O_p(h_r) + O_p(h_x^l). $$ 
For the high-order term $U_n^{(2)}(\pi_2 g_{C_{22n}^\ast})$ in the Hoeffding decomposition, $g_{C_{22n}^\ast} := \frac{1}{2} (C_{22n,ij}^\ast + C_{22n,ji}^\ast)$ is the symmetrized function with envelope $G_{C_{22n}^\ast} = \sup_{(w,x)\in\Omega} |g_{C_{22n}^\ast}|$. Then the second moment of the envelope is bounded by $\mathbb{E} \sup_{(w,x)\in\Omega} |C_{22n,ij}^\ast|^2$:
\begin{align*}
    \mathbb{E} G_{C_{22n}^\ast}^2 \le& \mathbb{E} \sup_{(w,x)\in\Omega} \bigg| V_i e^{\mathrm{i} x'X_i} \kappa(w,X_i) \frac{Y_i-D_i\tau(X_i)}{\frac{1}{2} f(X_i,0) \Delta_p(X_i)} D_j K_{h_x}(X_j-X_i) K_{h_r}(R_j) \delta_j K_{h_r}(R_i) \delta_i \bigg|^2 \\
    \lesssim& \mathbb{E} \left[ (Y_i-D_i\tau(X_i)) K_{h_x}(X_j-X_i) K_{h_r}(R_j) K_{h_r}(R_i) \right]^2 = O(\frac{1}{hh_rh_x^d}).
\end{align*}
Following Proposition 4 in \cite{Delgado2001}, $$\mathbb{E} \sup_{(w,x)\in\Omega} \left| U_n^{(2)}(\pi_2 g_{C_{22n}^\ast}) \right|^2 \lesssim \frac{\mathbb{E} G_{C_{22n}^\ast}^2}{(n-1)^2} = O(\frac{1}{n^2hh_rh_x^d}) = o(\frac{1}{nh}). $$ Then $U_n^{(2)}(\pi_2 g_{C_{22n}^\ast})$ is uniformly bounded by $\frac{1}{\sqrt{nh}}$ over $(w,x)\in\Omega$ and $C_{22n}^\ast(w,x)$ can be written as $$ C_{22n}^\ast(w,x) = \frac{1}{n} \sum_{i=1}^n V_i e^{\mathrm{i} x'X_i} \kappa(w,X_i) (Y_i-D_i\tau(X_i)) K_{h_r}(R_i) \delta_i + o_p(\frac{1}{\sqrt{nh}}).$$
The leading term of $C_{22n}^\ast(w,x)$ is exactly equal to that of $C_{211n}^\ast(w,x)$, and $C_{211n}^\ast(w,x) - C_{22n}^\ast(w,x)$ is uniformly bounded by $\frac{1}{\sqrt{nh}}$. To summarize, $\sup_{(w,x)\in\Omega} \left| C_{2n}^\ast(w,x) \right| = o_p(\frac{1}{\sqrt{nh}})$.

For the third term in the decomposition of $C_{2n}^\ast(w,x)$, it can be further decomposed as $C_{23n}^\ast(w,x) = \frac{1}{n-1} C_{231n}^\ast(w,x) + \frac{n-2}{n-1} C_{232n}^\ast(w,x)$
\begin{align*}
    C_{231n}^\ast(w,x) =& \frac{\mathrm{i}w}{n(n-1)} \sum_{i\ne j, i\ne k} V_i e^{\mathrm{i} x'X_i} \left( \frac{(1-D_j) e^{\mathrm{i}w\hat{W}_j} K_{h_x}(X_j-X_i) K_{h_r}(R_j) \delta_j}{\frac{1}{2} f(X_i,0) \Delta_p(X_i)} - \kappa(w,X_i) \right) \\& \left( \frac{D_j K_{h_x}(X_j-X_i) K_{h_r}(R_j) \delta_j}{\frac{1}{2} f(X_i,0) \Delta_p(X_i)} - 1 \right) (Y_i-D_i\tau(X_i)) K_{h_r}(R_i) \delta_i, \\
    C_{232n}^\ast(w,x) =& \frac{\mathrm{i}w}{n(n-1)(n-2)} \sum_{i\ne j, i\ne k} V_i e^{\mathrm{i} x'X_i} \left( \frac{(1-D_j) e^{\mathrm{i}w\hat{W}_j} K_{h_x}(X_j-X_i) K_{h_r}(R_j) \delta_j}{\frac{1}{2} f(X_i,0) \Delta_p(X_i)} - \kappa(w,X_i) \right) \\& \left( \frac{D_k K_{h_x}(X_k-X_i) K_{h_r}(R_k) \delta_k}{\frac{1}{2} f(X_i,0) \Delta_p(X_i)} - 1 \right) (Y_i-D_i\tau(X_i)) K_{h_r}(R_i) \delta_i. 
\end{align*}

Due to the existence of $\hat{W}_j$, we can separate the estimation effect and divide $C_{231n}^\ast(w,x)$ and $C_{232n}^\ast(w,x)$ into several leading parts:
\begin{align*}
    C_{2311n}^\ast(w,x) =& \frac{\mathrm{i}w}{n(n-1)} \sum_{i\ne j} V_i e^{\mathrm{i} x'X_i} \left( \frac{(1-D_j) e^{\mathrm{i}wW_j} K_{h_x}(X_j-X_i) K_{h_r}(R_j) \delta_j}{\frac{1}{2} f(X_i,0) \Delta_p(X_i)} - \kappa(w,X_i) \right) \\& \left( \frac{D_j K_{h_x}(X_j-X_i) K_{h_r}(R_j) \delta_j}{\frac{1}{2} f(X_i,0) \Delta_p(X_i)} - 1 \right) (Y_i-D_i\tau(X_i)) K_{h_r}(R_i) \delta_i, \\
    C_{2312n}^\ast(w,x) =& \frac{(\mathrm{i}w)^2}{n(n-1)} \sum_{i\ne j} V_i e^{\mathrm{i} x'X_i} \frac{(1-D_j) e^{\mathrm{i}wW_j} K_{h_x}(X_j-X_i) K_{h_r}(R_j) \delta_j}{\frac{1}{2} f(X_i,0) \Delta_p(X_i)} (\hat\tau(X_j)-\tau(X_j)) \\& \left( \frac{D_j K_{h_x}(X_j-X_i) K_{h_r}(R_j) \delta_j}{\frac{1}{2} f(X_i,0) \Delta_p(X_i)} - 1 \right) (Y_i-D_i\tau(X_i)) K_{h_r}(R_i) \delta_i, \\
    C_{2321n}^\ast(w,x) =& \frac{\mathrm{i}w}{n(n-1)(n-2)} \sum_{i\ne j\ne k} V_i e^{\mathrm{i} x'X_i} \left( \frac{(1-D_j) e^{\mathrm{i}wW_j} K_{h_x}(X_j-X_i) K_{h_r}(R_j) \delta_j}{\frac{1}{2} f(X_i,0) \Delta_p(X_i)} - \kappa(w,X_i) \right) \\& \left( \frac{D_k K_{h_x}(X_k-X_i) K_{h_r}(R_k) \delta_k}{\frac{1}{2} f(X_i,0) \Delta_p(X_i)} - 1 \right) (Y_i-D_i\tau(X_i)) K_{h_r}(R_i) \delta_i, \\ 
    C_{2322n}^\ast(w,x) =& \frac{(\mathrm{i}w)^2}{n(n-1)(n-2)} \sum_{i\ne j\ne k} V_i e^{\mathrm{i} x'X_i} \frac{(1-D_j) e^{\mathrm{i}wW_j} K_{h_x}(X_j-X_i) K_{h_r}(R_j) \delta_j}{\frac{1}{2} f(X_i,0) \Delta_p(X_i)} \\& (\hat\tau(X_j)-\tau(X_j)) \left( \frac{D_k K_{h_x}(X_k-X_i) K_{h_r}(R_k) \delta_k}{\frac{1}{2} f(X_i,0) \Delta_p(X_i)} - 1 \right) (Y_i-D_i\tau(X_i)) K_{h_r}(R_i) \delta_i. 
\end{align*}
The high-order terms in the Taylor expansions are of smaller order than $C_{2312n}^\ast(w,x)$ and $C_{2332n}^\ast(w,x)$.

$C_{2311n}^\ast(w,x)$ has Hoeffding decomposition $C_{2311n}^\ast(w,x) = 2U_n^{(1)}(\pi_1 g_{C_{2311n}^\ast}) + U_n^{(2)}(\pi_2 g_{C_{2311n}^\ast})$. $g_{C_{2311n}^\ast} := \frac{1}{2} (C_{2311n,ij}^\ast + C_{2311n,ji}^\ast)$ is the symmetrized function with envelope $G_{C_{2311n}^\ast} = \sup_{(w,x)\in\Omega} |g_{C_{2311n}^\ast}|$, whose second moment is bounded by $\mathbb{E} \sup_{(w,x)\in\Omega} |C_{2311n,ij}^\ast|^2$:
\begin{align*}
    \mathbb{E} G_{C_{2311n}^\ast}^2 \le& \mathbb{E} \sup_{(w,x)\in\Omega} \bigg| \mathrm{i}w V_i e^{\mathrm{i} x'X_i} \left( \frac{(1-D_j) e^{\mathrm{i}wW_j} K_{h_x}(X_j-X_i) K_{h_r}(R_j) \delta_j}{\frac{1}{2} f(X_i,0) \Delta_p(X_i)} - \kappa(w,X_i) \right) \\& \left( \frac{D_j K_{h_x}(X_j-X_i) K_{h_r}(R_j) \delta_j}{\frac{1}{2} f(X_i,0) \Delta_p(X_i)} - 1 \right) (Y_i-D_i\tau(X_i)) K_{h_r}(R_i) \delta_i \bigg|^2 \\
    \lesssim& \mathbb{E} \left[ (Y_i-D_i\tau(X_i)) K_{h_x}(X_j-X_i) K_{h_r}(R_j) K_{h_r}(R_i) \right]^2 \\&+ \mathbb{E} \left[ (Y_i-D_i\tau(X_i)) K_{h_r}(R_i) \right]^2 = O(\frac{1}{h_r^2h_x^d}) + O(\frac{1}{h_r}) = O(\frac{1}{h_r^2h_x^d}).
\end{align*}
Following Proposition 4 in \cite{Delgado2001}, 
\begin{align*}
    \mathbb{E} \sup_{(w,x)\in\Omega} \left| \frac{2}{n-1} U_n^{(1)} (\pi_1 C_{2311n,ij}^\ast) \right|^2 \lesssim& \frac{\mathbb{E} G_{C_{2311n}^\ast}^2}{n(n-1)^2} = O(\frac{1}{n^3h_r^2h_x^d}) = o(\frac{1}{nh}). \\
    \mathbb{E} \sup_{(w,x)\in\Omega} \left| \frac{1}{n-1} U_n^{(2)} (\pi_2 C_{2311n,ij}^\ast) \right|^2 \lesssim& \frac{\mathbb{E} G_{C_{2311n}^\ast}^2}{(n-1)^4} = O(\frac{1}{n^4h_r^2h_x^d}) = o(\frac{1}{nh}). 
\end{align*}
Then $\frac{2}{n-1} U_n^{(1)} (\pi_1 C_{2311n,ij}^\ast)$ and $\frac{1}{n-1} U_n^{(2)} (\pi_2 C_{2311n,ij}^\ast)$ are uniformly bounded by $\frac{1}{\sqrt{nh}}$ over $(w,x)\in\Omega$ and $\frac{1}{n-1} C_{2311n}^\ast(w,x)$ is $o_p(\frac{1}{\sqrt{nh}})$ uniformly over $(w,x)\in\Omega$.

For $C_{2312n}^\ast(w,x)$, note that $D_j (1-D_j) =0 $ for the binary variable $D_j$. Then $C_{2312n}^\ast(w,x)$ can be further written as
\begin{align*}
    &C_{2312n}^\ast(w,x) \\
    =& \frac{(\mathrm{i}w)^2}{n(n-1)} \sum_{i\ne j} V_i e^{\mathrm{i} x'X_i} \frac{(1-D_j) e^{\mathrm{i}wW_j} K_{h_x}(X_j-X_i) K_{h_r}(R_j) \delta_j}{\frac{1}{2} f(X_i,0) \Delta_p(X_i)} (\hat\tau(X_j)-\tau(X_j)) \\& \left( \frac{D_j K_{h_x}(X_j-X_i) K_{h_r}(R_j) \delta_j}{\frac{1}{2} f(X_i,0) \Delta_p(X_i)} - 1 \right) (Y_i-D_i\tau(X_i)) K_{h_r}(R_i) \delta_i, \\
    =& \frac{-(\mathrm{i}w)^2}{n(n-1)} \sum_{i\ne j} V_i e^{\mathrm{i} x'X_i} \frac{(1-D_j) e^{\mathrm{i}wW_j} K_{h_x}(X_j-X_i) K_{h_r}(R_j) \delta_j}{\frac{1}{2} f(X_i,0) \Delta_p(X_i)} (\hat\tau(X_j)-\tau(X_j)) (Y_i-D_i\tau(X_i)) K_{h_r}(R_i) \delta_i, 
\end{align*}
By expanding $\hat\tau(X_j) - \tau(X_j)$ to the first-order terms, $C_{2312n}^\ast(w,x)$ can be expressed as $C_{2312n}^\ast(w,x) = \frac{1}{n-1} C_{23121n}^\ast(w,x) - \frac{n-2}{n-1} C_{23122n}^\ast(w,x)$ where $C_{23121n}^\ast(w,x) = U_n^{(2)}(g_{C_{23121n}^\ast})$ is a second-order $U$-process. $C_{23122n}^\ast(w,x) = U_n^{(3)}(g_{C_{23122n}^\ast})$ is a third-order $U$-process. Their Hoeffding decompositions and elements are:
\begin{align*}
    C_{23121n}^\ast(w,x) =& 2U_n^{(1)}(\pi_1 g_{C_{23121n}^\ast}) + U_n^{(2)}(\pi_2 g_{C_{23121n}^\ast}) ,\\
    C_{23122n}^\ast(w,x) =& 3U_n^{(1)}(\pi_1 g_{C_{23122n}^\ast}) + 3U_n^{(2)}(\pi_2 g_{C_{23122n}^\ast}) + U_n^{(3)}(\pi_3 g_{C_{23122n}^\ast}). \\ 
    C_{23121n,ij}^\ast =& (\mathrm{i}w)^2 V_i e^{\mathrm{i} x'X_i} \frac{(1-D_j) e^{\mathrm{i}wW_j} K_{h_x}(X_j-X_i) K_{h_r}(R_j) \delta_j}{\frac{1}{2} f(X_i,0) \Delta_p(X_i)} \left( \frac{D_j K_{h_x}(X_j-X_i) K_{h_r}(R_j) \delta_j}{\frac{1}{2} f(X_i,0) \Delta_p(X_i)} - 1 \right) \\& \frac{K_{h_x}(X_i-X_j) K_{h_r}(R_i) \delta_i}{\frac{1}{2} f(X_j,0) \Delta_p(X_j)} (Y_i-D_i\tau(X_i)) (Y_i-D_i\tau(X_j)) K_{h_r}(R_i) \delta_i, \\
    C_{23122n,ijk}^\ast =& (\mathrm{i}w)^2 V_i e^{\mathrm{i} x'X_i} \frac{(1-D_j) e^{\mathrm{i}wW_j} K_{h_x}(X_j-X_i) K_{h_r}(R_j) \delta_j}{\frac{1}{2} f(X_i,0) \Delta_p(X_i)} \\& \frac{K_{h_x}(X_k-X_j) K_{h_r}(R_k) \delta_k}{\frac{1}{2} f(X_j,0) \Delta_p(X_j)} (Y_i-D_i\tau(X_i)) (Y_k-D_k\tau(X_j)) K_{h_r}(R_i) \delta_i.    
\end{align*}

For the $U$-process $C_{23121n}^\ast(w,x)$, $g_{C_{23121n}^\ast} := \frac{1}{2} (C_{23121n,ij}^\ast + C_{23121n,ji}^\ast)$ is the symmetrized function with envelope $G_{C_{23121n}^\ast} = \sup_{(w,x)\in\Omega} |g_{C_{23121n}^\ast}|$, whose second moment is bounded by $\mathbb{E} \sup_{(w,x)\in\Omega} |C_{23121n,ij}^\ast|^2$:
\begin{align*}
    \mathbb{E} G_{C_{23121n}^\ast}^2 \le& \mathbb{E} \sup_{(w,x)\in\Omega} \bigg| (\mathrm{i}w)^2 V_i e^{\mathrm{i} x'X_i} \frac{(1-D_j) e^{\mathrm{i}wW_j} K_{h_x}(X_j-X_i) K_{h_r}(R_j) \delta_j}{\frac{1}{2} f(X_i,0) \Delta_p(X_i)} \frac{K_{h_x}(X_i-X_j) K_{h_r}(R_i) \delta_i}{\frac{1}{2} f(X_j,0) \Delta_p(X_j)} \\& (Y_i-D_i\tau(X_i)) (Y_i-D_i\tau(X_j)) K_{h_r}(R_i) \delta_i \bigg|^2 \\
    \lesssim& \mathbb{E} \left[ (Y_i-D_i\tau(X_i)) (Y_i-D_i\tau(X_j)) K_{h_x}^2(X_j-X_i) K_{h_r}(R_j) K_{h_r}^2(R_i) \right]^2 = O(\frac{1}{h_r^4h_x^{3d}}).
\end{align*}
Following Proposition 4 in \cite{Delgado2001}, 
\begin{align*}
    \mathbb{E} \sup_{(w,x)\in\Omega} \left| \frac{2}{(n-1)^2} U_n^{(1)} (\pi_1 C_{23121n,ij}^\ast) \right|^2 \lesssim& \frac{\mathbb{E} G_{C_{23121n}^\ast}^2}{n(n-1)^4} = O(\frac{1}{n^5h_r^4h_x^{3d}}) = o(\frac{1}{nh}). \\
    \mathbb{E} \sup_{(w,x)\in\Omega} \left| \frac{1}{(n-1)^2} U_n^{(2)} (\pi_2 C_{23121n,ij}^\ast) \right|^2 \lesssim& \frac{\mathbb{E} G_{C_{23121n}^\ast}^2}{(n-1)^6} = O(\frac{1}{n^6h_r^4h_x^{3d}}) = o(\frac{1}{nh}). 
\end{align*}
Then $\frac{2}{(n-1)^2} U_n^{(1)} (\pi_1 C_{23121n,ij}^\ast)$ and $\frac{1}{(n-1)^2} U_n^{(2)} (\pi_2 C_{23121n,ij}^\ast)$ are uniformly bounded by $\frac{1}{\sqrt{nh}}$ over $(w,x)\in\Omega$ and $\frac{1}{(n-1)^2} C_{23121n}^\ast(w,x)$ is $o_p(\frac{1}{\sqrt{nh}})$ uniformly over $(w,x)\in\Omega$.

For $C_{23122n}^\ast$, the symmetrized function $g_{C_{23122n}^\ast} := \frac{1}{3!} \sum_{[3!]} (C_{23122n,ijk}^\ast )$ has envelope $G_{C_{23122n}^\ast} = \sup_{(w,x)\in\Omega} |g_{C_{23122n}^\ast}|$, whose second moment is bounded by $\mathbb{E} \sup_{(w,x)\in\Omega} |C_{23122n,ij}^\ast|^2$:
\begin{align*}
    \mathbb{E} G_{C_{23122n}^\ast}^2 \le& \mathbb{E} \sup_{(w,x)\in\Omega} \bigg| (\mathrm{i}w)^2 V_i e^{\mathrm{i} x'X_i} \frac{(1-D_j) e^{\mathrm{i}wW_j} K_{h_x}(X_j-X_i) K_{h_r}(R_j) \delta_j}{\frac{1}{2} f(X_i,0) \Delta_p(X_i)} \frac{K_{h_x}(X_k-X_j) K_{h_r}(R_k) \delta_k}{\frac{1}{2} f(X_j,0) \Delta_p(X_j)} \\& (Y_i-D_i\tau(X_i)) (Y_k-D_k\tau(X_j)) K_{h_r}(R_i) \delta_i \bigg|^2 \\
    \lesssim& \mathbb{E} \left[ (Y_i-D_i\tau(X_i)) (Y_k-D_k\tau(X_j)) K_{h_x}(X_j-X_i) K_{h_x}(X_k-X_j) K_{h_r}(R_j) K_{h_r}(R_i) K_{h_r}(R_k) \right]^2 \\=& O(\frac{1}{h_r^3h_x^{2d}}).
\end{align*}
Following Proposition 4 in \cite{Delgado2001}, 
\begin{align*}
    \mathbb{E} \sup_{(w,x)\in\Omega} \left| 3\frac{(n-2)}{(n-1)^2} U_n^{(1)} (\pi_1 g_{C_{23122n}^\ast}) \right|^2 \lesssim& \frac{(n-2)^2 \mathbb{E} G_{C_{23122n}^\ast}^2}{n(n-1)^4} = O(\frac{1}{n^3h_r^3h_x^{2d}}) = o(\frac{1}{nh}). \\
    \mathbb{E} \sup_{(w,x)\in\Omega} \left| 3\frac{(n-2)}{(n-1)^2} U_n^{(2)} (\pi_2 g_{C_{23122n}^\ast}) \right|^2 \lesssim& \frac{\mathbb{E} G_{C_{23122n}^\ast}^2}{(n-1)^4} = O(\frac{1}{n^4h_r^3h_x^{2d}}) = o(\frac{1}{nh}). \\
    \mathbb{E} \sup_{(w,x)\in\Omega} \left| \frac{(n-2)}{(n-1)^2} U_n^{(3)} (\pi_3 g_{C_{23122n}^\ast}) \right|^2 \lesssim& \frac{n \mathbb{E} G_{C_{23122n}^\ast}^2}{(n-1)^6} = O(\frac{1}{n^5h_r^3h_x^{2d}}) = o(\frac{1}{nh}). 
\end{align*}
Then $\frac{(n-2)}{(n-1)^2} U_n^{(2)} (\pi_2 C_{23122n,ij}^\ast)$ and $\frac{(n-2)}{(n-1)^2} U_n^{(3)} (\pi_3 C_{23122n,ij}^\ast)$ are uniformly bounded by $\frac{1}{\sqrt{nh}}$ over $(w,x)\in\Omega$ and $\frac{1}{(n-1)^2} C_{23122n}^\ast(w,x)$ is $o_p(\frac{1}{\sqrt{nh}})$ uniformly over $(w,x)\in\Omega$. Furthermore, $\frac{1}{n-1} C_{231n}^\ast(w,x)$ is uniformly bounded by $\frac{1}{\sqrt{nh}}$ over $\Omega$.

For the third-order $U$-process $C_{2321n}^\ast(w,x)$, its Hoeffding decomposition is $$C_{2321n}^\ast(w,x) = \frac{1}{n} \sum_{i=1}^n \mathbb{E}(C_{2321n,ijk}^\ast|\mathcal{F}_i^\ast) + 3U_n^{(2)}(\pi_2 g_{C_{2321n}^\ast}) + U_n^{(3)}(\pi_3 g_{C_{2321n}^\ast}), $$ where the conditional expectation term is $$ \mathbb{E}(C_{2321n,ijk}^\ast|\mathcal{F}_i^\ast) = \mathrm{i}w V_i e^{\mathrm{i} x'X_i} \frac{m_\kappa(w,X_i)}{\frac{1}{2} f(X_i,0) \Delta_p(X_i)} \frac{m_p(X_i)}{\frac{1}{2} f(X_i,0) \Delta_p(X_i)} (Y_i-D_i\tau(X_i)) K_{h_r}(R_i) \delta_i = O_p(h_x^{2l}). $$
For the high-order terms $U_n^{(2)}(\pi_2 g_{C_{2321n}^\ast})$ and $U_n^{(3)}(\pi_3 g_{C_{2321n}^\ast})$ in the Hoeffding decomposition, $g_{C_{2321n}^\ast} := \frac{1}{3!} \sum_{[3!]} C_{2321n,ijk}^\ast$ is the symmetrized function with envelope $G_{C_{2321n}^\ast} = \sup_{(w,x)\in\Omega} |g_{C_{2321n}^\ast}|$, whose second moment is bounded by $\mathbb{E} \sup_{(w,x)\in\Omega} |C_{2321n,ij}^\ast|^2$:
\begin{align*}
    \mathbb{E} G_{C_{2321n}^\ast}^2 \le& \mathbb{E} \sup_{(w,x)\in\Omega} \bigg| \mathrm{i}wV_i e^{\mathrm{i} x'X_i} \left( \frac{(1-D_j) e^{\mathrm{i}wW_j} K_{h_x}(X_j-X_i) K_{h_r}(R_j) \delta_j}{\frac{1}{2} f(X_i,0) \Delta_p(X_i)} - \kappa(w,X_i) \right) \\& \left( \frac{D_k K_{h_x}(X_k-X_i) K_{h_r}(R_k) \delta_k}{\frac{1}{2} f(X_i,0) \Delta_p(X_i)} - 1 \right) (Y_i-D_i\tau(X_i)) K_{h_r}(R_i) \delta_i \bigg|^2 \\
    \lesssim& \mathbb{E} \left[ (Y_i-D_i\tau(X_i)) K_{h_x}(X_k-X_j) K_{h_x}(X_j-X_i) K_{h_r}(R_k) K_{h_r}(R_j) K_{h_r}(R_i) \right]^2 = O(\frac{1}{h_r^3h_x^{2d}}). 
\end{align*}
Following Proposition 4 in \cite{Delgado2001}, 
\begin{align*}
    \mathbb{E} \sup_{(w,x)\in\Omega} \left| 3\frac{n-2}{n-1} U_n^{(2)}(\pi_2 g_{C_{2321n}^\ast}) \right|^2 \lesssim \frac{\mathbb{E} G_{C_{2321n}^\ast}^2}{(n-1)^2} = O(\frac{1}{n^2h_r^3h_x^{2d}}) = o(\frac{1}{nh}). \\
    \mathbb{E} \sup_{(w,x)\in\Omega} \left| \frac{n-2}{n-1} U_n^{(3)}(\pi_3 g_{C_{2321n}^\ast}) \right|^2 \lesssim \frac{n\mathbb{E} G_{C_{2321n}^\ast}^2}{(n-1)^4} = O(\frac{1}{n^3h_r^3h_x^{2d}}) = o(\frac{1}{nh}). 
\end{align*}
Then $U_n^{(2)}(\pi_2 g_{C_{2321n}^\ast})$ and $U_n^{(3)}(\pi_3 g_{C_{2321n}^\ast})$ are uniformly bounded by $\frac{1}{\sqrt{nh}}$ over $(w,x)\in\Omega$. Furthermore, $C_{2321n}^\ast(w,x)$ is uniformly bounded by $\frac{1}{\sqrt{nh}}$ over $(w,x)\in\Omega$.

For $C_{2322n}^\ast(w,x)$, by expanding $\hat\tau(X_j)-\tau(X_j)$ to the first order, it can be decomposed as $C_{2322n}^\ast(w,x) = \frac{1}{n-1} C_{23221n}^\ast(w,x) + \frac{1}{n-1} C_{23222n}^\ast(w,x) + \frac{n-2}{n-1} C_{23223n}^\ast(w,x)$, where $C_{23221n}^\ast(w,x) = U_n^{(3)}(g_{C_{23221n}^\ast})$ and $C_{23222n}^\ast(w,x) = U_n^{(3)}(g_{C_{23222n}^\ast})$ are two third-order $U$-processes and $C_{23223n}^\ast(w,x) = U_n^{(4)}(g_{C_{23223n}^\ast})$ is a fourth-order $U$-process with elements:
\begin{align*}
    C_{23221n,ijk}^\ast =& (\mathrm{i}w)^2 V_i e^{\mathrm{i} x'X_i} \frac{(1-D_j) e^{\mathrm{i}wW_j} K_{h_x}(X_j-X_i) K_{h_r}(R_j) \delta_j}{\frac{1}{2} f(X_i,0) \Delta_p(X_i)} \frac{K_{h_x}(X_i-X_j) K_{h_r}(R_i) \delta_i}{\frac{1}{2} f(X_j,0) \Delta_p(X_j)} \\& \left( \frac{D_k K_{h_x}(X_k-X_i) K_{h_r}(R_k) \delta_k}{\frac{1}{2} f(X_i,0) \Delta_p(X_i)} - 1 \right) (Y_i-D_i\tau(X_i)) (Y_i-D_i\tau(X_j)) K_{h_r}(R_i) \delta_i, \\
    C_{23222n,ijk}^\ast =& (\mathrm{i}w)^2 V_i e^{\mathrm{i} x'X_i} \frac{(1-D_j) e^{\mathrm{i}wW_j} K_{h_x}(X_j-X_i) K_{h_r}(R_j) \delta_j}{\frac{1}{2} f(X_i,0) \Delta_p(X_i)} \frac{K_{h_x}(X_k-X_j) K_{h_r}(R_k) \delta_k}{\frac{1}{2} f(X_j,0) \Delta_p(X_j)} \\& \left( \frac{D_k K_{h_x}(X_k-X_i) K_{h_r}(R_k) \delta_k}{\frac{1}{2} f(X_i,0) \Delta_p(X_i)} - 1 \right) (Y_i-D_i\tau(X_i)) (Y_k-D_k\tau(X_j)) K_{h_r}(R_i) \delta_i, \\
    C_{23223n,ijkl}^\ast =& (\mathrm{i}w)^2 V_i e^{\mathrm{i} x'X_i} \frac{(1-D_j) e^{\mathrm{i}wW_j} K_{h_x}(X_j-X_i) K_{h_r}(R_j) \delta_j}{\frac{1}{2} f(X_i,0) \Delta_p(X_i)} \frac{K_{h_x}(X_l-X_j) K_{h_r}(R_l) \delta_l}{\frac{1}{2} f(X_j,0) \Delta_p(X_j)} \\& \left( \frac{D_k K_{h_x}(X_k-X_i) K_{h_r}(R_k) \delta_k}{\frac{1}{2} f(X_i,0) \Delta_p(X_i)} - 1 \right) (Y_i-D_i\tau(X_i)) (Y_l-D_l\tau(X_j)) K_{h_r}(R_i) \delta_i,
\end{align*}
Their Hoeffding decompositions are
\begin{align*}
    C_{23221n}^\ast(w,x) =& 3U_n^{(1)}(\pi_1 g_{C_{23221n}^\ast}) + 3U_n^{(2)}(\pi_2 g_{C_{23221n}^\ast}) + U_n^{(3)}(\pi_3 g_{C_{23221n}^\ast}), \\ 
    C_{23222n}^\ast(w,x) =& 3U_n^{(1)}(\pi_1 g_{C_{23222n}^\ast}) + 3U_n^{(2)}(\pi_2 g_{C_{23222n}^\ast}) + U_n^{(3)}(\pi_3 g_{C_{23222n}^\ast}), \\ 
    C_{23223n}^\ast(w,x) =& \frac{1}{n} \sum_{i=1}^n \mathbb{E}(C_{23223n,ijk}^\ast|\mathcal{F}_i^\ast) + \frac{1}{n(n-1)} \sum_{i\ne j} \mathbb{E}(C_{23223n,ijk}^\ast|\mathcal{F}_i^\ast, \mathcal{F}_j^\ast) \\&+ \frac{1}{n(n-1)} \sum_{i\ne k} \mathbb{E}(C_{23223n,ijk}^\ast|\mathcal{F}_i^\ast, \mathcal{F}_k^\ast) + \frac{1}{n(n-1)} \sum_{i\ne l} \mathbb{E}(C_{23223n,ijk}^\ast|\mathcal{F}_i^\ast, \mathcal{F}_l^\ast) \\&+ 6U_n^{(2)}(\pi_2 g_{C_{23223n}^\ast}) + 4U_n^{(3)}(\pi_3 g_{C_{23223n}^\ast}) + U_n^{(4)}(\pi_4 g_{C_{23223n}^\ast}).
\end{align*}
where the conditional expectations are $\mathbb{E}(C_{23223n,ijk}^\ast|\mathcal{F}_i^\ast, \mathcal{F}_j^\ast) = O_p(h_x^{2l})$, $\mathbb{E}(C_{23223n,ijk}^\ast|\mathcal{F}_i^\ast, \mathcal{F}_k^\ast) = O_p(h_x^l)$, $\mathbb{E}(C_{23223n,ijk}^\ast|\mathcal{F}_i^\ast, \mathcal{F}_l^\ast) = O_p(h_x^l)$ and their sums are uniformly bounded by $\frac{1}{\sqrt{nh}}$ on $\Omega$.  
For the high-order terms, the envelopes have second-order moments bounded by
\begin{align*}
    \mathbb{E}G_{C_{23221n}^\ast}^2 \lesssim& \mathbb{E}[K_{h_x}^2(X_j-X_i)K_{h_x}(X_k-X_i)  K_{h_r}^2(R_i) K_{h_r}(R_j) K_{h_r}(R_k)]^2 = O(\frac{1}{h_x^{4d}h_r^5}). \\
    \mathbb{E}G_{C_{23222n}^\ast}^2 \lesssim& \mathbb{E}[K_{h_x}(X_j-X_i)K_{h_x}(X_k-X_i)K_{h_x}(X_j-X_k) K_{h_r}^2(R_i) K_{h_r}(R_j) K_{h_r}^2(R_k)]^2 = O(\frac{1}{h_x^{4d}h_r^5}). \\
    \mathbb{E}G_{C_{23223n}^\ast}^2 \lesssim& \mathbb{E}[K_{h_x}(X_j-X_i)K_{h_x}(X_k-X_i)K_{h_x}(X_j-X_l) K_{h_r}(R_i) K_{h_r}(R_j) K_{h_r}(R_k) K_{h_r}(R_l)]^2 \\=& O(\frac{1}{h_x^{3d}h_r^4}). 
\end{align*}
Following Proposition 4 in \cite{Delgado2001}, 
\begin{align*}
    \mathbb{E} \sup_{(w,x)\in\Omega} \left| \frac{3}{n-1} U_n^{(1)} (\pi_1 g_{C_{23221n}^\ast}) \right|^2 \lesssim& \frac{\mathbb{E} G_{C_{23221n}^\ast}^2}{n(n-1)^2} = O(\frac{1}{n^3h_x^{4d}h_r^5}) = o(\frac{1}{nh}). \\
    \mathbb{E} \sup_{(w,x)\in\Omega} \left| \frac{3}{n-1} U_n^{(2)} (\pi_2 g_{C_{23221n}^\ast}) \right|^2 \lesssim& \frac{\mathbb{E} G_{C_{23221n}^\ast}^2}{(n-1)^4} = O(\frac{1}{n^4h_x^{4d}h_r^5}) = o(\frac{1}{nh}). \\
    \mathbb{E} \sup_{(w,x)\in\Omega} \left| \frac{1}{n-1} U_n^{(3)} (\pi_3 g_{C_{23221n}^\ast}) \right|^2 \lesssim& \frac{n \mathbb{E} G_{C_{23221n}^\ast}^2}{(n-1)^6} = O(\frac{1}{n^5h_x^{4d}h_r^5}) = o(\frac{1}{nh}). \\
    \mathbb{E} \sup_{(w,x)\in\Omega} \left| \frac{3}{n-1} U_n^{(1)} (\pi_1 g_{C_{23222n}^\ast}) \right|^2 \lesssim& \frac{\mathbb{E} G_{C_{23222n}^\ast}^2}{n(n-1)^2} = O(\frac{1}{n^3h_x^{4d}h_r^5}) = o(\frac{1}{nh}). \\
    \mathbb{E} \sup_{(w,x)\in\Omega} \left| \frac{3}{n-1} U_n^{(2)} (\pi_2 g_{C_{23222n}^\ast}) \right|^2 \lesssim& \frac{\mathbb{E} G_{C_{23222n}^\ast}^2}{(n-1)^4} = O(\frac{1}{n^4h_x^{4d}h_r^5}) = o(\frac{1}{nh}). \\
    \mathbb{E} \sup_{(w,x)\in\Omega} \left| \frac{1}{n-1} U_n^{(3)} (\pi_3 g_{C_{23222n}^\ast}) \right|^2 \lesssim& \frac{n \mathbb{E} G_{C_{23222n}^\ast}^2}{(n-1)^6} = O(\frac{1}{n^5h_x^{4d}h_r^5}) = o(\frac{1}{nh}). \\
    \mathbb{E} \sup_{(w,x)\in\Omega} \left| 4U_n^{(3)} (\pi_3 g_{C_{23223n}^\ast}) \right|^2 \lesssim& \frac{n \mathbb{E} G_{C_{23223n}^\ast}^2}{(n-1)^4} = O(\frac{1}{n^3h_x^{3d}h_r^4}) = o(\frac{1}{nh}). \\
    \mathbb{E} \sup_{(w,x)\in\Omega} \left| U_n^{(4)} (\pi_4 g_{C_{23223n}^\ast}) \right|^2 \lesssim& \frac{\mathbb{E} G_{C_{23223n}^\ast}^2}{(n-1)^4} = O(\frac{1}{n^4h_x^{3d}h_r^4}) = o(\frac{1}{nh}). 
\end{align*}
Then all components in $C_{2322n}^\ast(w,x)$ are uniformly bounded by $\frac{1}{\sqrt{nh}}$ on $\Omega$. Combining all the results, $C_{23n}^\ast(w,x)$ is uniformly bounded by $\frac{1}{\sqrt{nh}}$ on $\Omega$. 

For the $U$-process $C_{24n}^\ast(w,x)$, its Hoeffding decomposition is $$C_{24n}^\ast(w,x) = \frac{1}{n} \sum_{i=1}^n \mathbb{E}(C_{24n,ijk}^\ast|\mathcal{F}_i) + 3U_n^{(2)}(\pi_2 g_{C_{24n}^\ast}) + U_n^{(3)}(\pi_3 g_{C_{24n}^\ast}).$$
The conditional expectation term is $$ \mathbb{E} (C_{24n,ijk}^\ast | \mathcal{F}_i^\ast) = (\mathrm{i}w)^2 V_i e^{\mathrm{i} x'X_i} \kappa(w,X_i) \frac{m_p^2(X_i) K_{h_r}(R_i) \delta_i}{\left( \frac{1}{2} f(X_i,0) \Delta_p(X_i) \right)^2} = O_p(h_x^{2l}) + O_p(h_r^2). $$ 
The second-order moment of envelope $G_{C_{24n}^\ast}$ is bounded by
\begin{align*}
    \mathbb{E} G_{C_{24n}^\ast}^2 \le& \mathbb{E} \sup_{(w,x)\in\Omega} \bigg| (\mathrm{i}w)^2 V_i e^{\mathrm{i} x'X_i} \kappa(w,X_i) \left( \frac{D_j K_{h_x}(X_j-X_i) K_{h_r}(R_j) \delta_j}{\frac{1}{2} f(X_i,0) \Delta_p(X_i)} - 1 \right) \\& \left( \frac{D_k K_{h_x}(X_k-X_i) K_{h_r}(R_k) \delta_k}{\frac{1}{2} f(X_i,0) \Delta_p(X_i)} - 1 \right) (Y_i-D_i\tau(X_i)) K_{h_r}(R_i) \delta_i \bigg|^2 \\
    \lesssim& \mathbb{E} \left| Y_j K_{h_x}(X_j-X_i) K_{h_x}(X_k-X_i) K_{h_r}(R_j) K_{h_r}(R_k) K_{h_r}(R_i) \right|^2 = O(h_x^{-2d} h_r^{-3}).
\end{align*}
Following Proposition 4 in \cite{Delgado2001}, 
\begin{align*}
    \mathbb{E} \sup_{(w,x)\in\Omega} \left\Vert 3 U_n^{(2)}(\pi_2 g_{C_{24n}^\ast}) \right\Vert^2 &\lesssim \frac{\mathbb{E} G_{C_{24n}^\ast}^2}{(n-1)^2} = O(\frac{1}{n^2h_x^{2d}h_r^3}) = o(\frac{1}{nh}), \\
    \mathbb{E} \sup_{(w,x)\in\Omega} \left\Vert U_n^{(3)}(\pi_3 g_{C_{24n}^\ast}) \right\Vert^2 &\lesssim \frac{n \mathbb{E} G_{C_{24n}^\ast}^2}{(n-1)^2(n-2)^2} = O(\frac{1}{n^3h_x^{2d}h_r^3}) = o(\frac{1}{nh}).
\end{align*}
Then it can be concluded that $\sup_{(w,x)\in\Omega} \left\Vert C_{24n}^\ast(w,x) \right\Vert = o_p\left( (nh)^{-1/2} \right)$. To summarize, $C_{2n}^\ast(w,x)$ is uniformly bounded by $\frac{1}{\sqrt{nh}}$ on $\Omega$.

The third term, $C_{3n}^\ast(w,x) = \frac{\mathrm{i}w}{n} \sum_{i=1}^n V_i e^{\mathrm{i} x'X_i} D_i \kappa(w, X_i) (\hat\tau(X_i) - \tau(X_i)) K_{h_r}(R_i) \delta_i$ can be analyzed by decomposing $\hat\tau(X_i) - \tau(X_i)$ with the expansion of $\hat{a}/\hat{b}$, where the leading terms are
\begin{align*}
    & \frac{\hat{f}(X_i,0) \Delta_{\hat\mu}(X_i)}{\frac{1}{2} f(X_i,0) \Delta_p(X_i)} - \tau(X_i) \frac{\hat{f}(X_i,0) \Delta_{\hat p}(X_i)}{\frac{1}{2} f(X_i,0) \Delta_p(X_i)} \\&- \left( \frac{\hat{f}(X_i,0) \Delta_{\hat\mu}(X_i)}{\frac{1}{2} f(X_i,0) \Delta_p(X_i)} - \tau(X_i) \right) \left( \frac{\hat{f}(X_i,0) \Delta_{\hat{p}}(X_i)}{\frac{1}{2} f(X_i,0) \Delta_p(X_i)} - 1 \right) + \tau(X_i) \left( \frac{\hat{f}(X_i,0) \Delta_{\hat{p}}(X_i)}{\frac{1}{2} f(X_i,0) \Delta_p(X_i)} - 1 \right)^2.
\end{align*}
Then the leading term of $C_{3n}^\ast(w,x)$ can be written as
$C_{31n}^\ast(w,x) - C_{32n}^\ast(w,x) - C_{33n}^\ast(w,x) + C_{34n}^\ast(w,x)$, where the components are
\begin{align*}
    C_{31n}^\ast(w,x) =& \frac{\mathrm{i}w}{n} \sum_{i=1}^n V_i e^{\mathrm{i} x'X_i} D_i \kappa(w,X_i) \frac{\hat f(X_i,0) \Delta_{\hat\mu}(X_i)}{\frac{1}{2} f(X_i,0) \Delta_p(X_i)} K_{h_r}(R_i) \delta_i, \\
    C_{32n}^\ast(w,x) =& \frac{\mathrm{i}w}{n} \sum_{i=1}^n V_i e^{\mathrm{i} x'X_i} D_i \kappa(w,X_i) \frac{\hat f(X_i,0) \Delta_{\hat p}(X_i)}{\frac{1}{2} f(X_i,0) \Delta_p(X_i)} \tau(X_i) K_{h_r}(R_i) \delta_i, \\  
    C_{33n}^\ast(w,x) =& \frac{\mathrm{i}w}{n} \sum_{i=1}^n V_i e^{\mathrm{i} x'X_i} D_i \kappa(w,X_i) \left( \frac{\hat{f}(X_i,0) \Delta_{\hat\mu}(X_i)}{\frac{1}{2} f(X_i,0) \Delta_p(X_i)} - \tau(X_i) \right) \left( \frac{\hat{f}(X_i,0) \Delta_{\hat{p}}(X_i)}{\frac{1}{2} f(X_i,0) \Delta_p(X_i)} - 1 \right) K_{h_r}(R_i) \delta_i, \\  
    C_{34n}^\ast(w,x) =& \frac{\mathrm{i}w}{n} \sum_{i=1}^n V_i e^{\mathrm{i} x'X_i} D_i \kappa(w,X_i) \left( \frac{\hat{f}(X_i,0) \Delta_{\hat{p}}(X_i)}{\frac{1}{2} f(X_i,0) \Delta_p(X_i)} - 1 \right)^2 \tau(X_i) K_{h_r}(R_i) \delta_i.
\end{align*}
The first two $U$-processes $C_{31n}^\ast(w,x)$ and $C_{32n}^\ast(w,x)$ have Hoeffding decompositions
$$C_{31n}^\ast(w,x) = \frac{1}{n}\sum_{i=1}^n \mathbb{E} (C_{31n,ij}^\ast |\mathcal{F}_i^\ast) + U_n^{(2)}(\pi_2 g_{C_{31n}^\ast}) \, \text{ and } \, C_{32n}^\ast(w,x) = \frac{1}{n}\sum_{i=1}^n \mathbb{E}(C_{32n,ij}^\ast |\mathcal{F}_i^\ast) + U_n^{(2)}(\pi_2 g_{C_{32n}^\ast}),$$ where the elements and conditional expectations are
\begin{align*}
    C_{31n,ij}^\ast =& \mathrm{i}w V_i e^{\mathrm{i} x'X_i} \frac{D_i \kappa(w,X_i)}{\frac{1}{2} f(X_i,0) \Delta_p(X_i)} Y_j K_{h_x}(X_j-X_i) K_{h_r}(R_j)\delta_j K_{h_r}(R_i) \delta_i, \\
    C_{32n,ij}^\ast =& \mathrm{i}w V_i e^{\mathrm{i} x'X_i} \frac{D_i \kappa(w,X_i) \tau(X_i)}{\frac{1}{2} f(X_i,0) \Delta_p(X_i)} D_j K_{h_x}(X_j-X_i) K_{h_r}(R_j)\delta_j K_{h_r}(R_i) \delta_i. \\
    \mathbb{E}(C_{31n,ij}^\ast|\mathcal{F}_i^\ast) =& \mathrm{i}w V_i e^{\mathrm{i} x'X_i} \frac{D_i \kappa(w,X_i)}{\frac{1}{2} f(X_i,0) \Delta_p(X_i)} K_{h_r}(R_i) \delta_i \mathbb{E}\left[ Y_j K_{h_x}(X_j-X_i) K_{h_r}(R_j)\delta_j |X_i \right] \\
    =& \mathrm{i}w V_i e^{\mathrm{i} x'X_i} D_i \kappa(w,X_i)\tau(X_i) K_{h_r}(R_i) \delta_i \left( 1 + O_p(h_r) + O_p(h_x^l) \right), \\
    \mathbb{E}(C_{32n,ij}^\ast|\mathcal{F}_i^\ast) =& \mathrm{i}w V_i e^{\mathrm{i} x'X_i} \frac{D_i \kappa(w,X_i) \tau(X_i)}{\frac{1}{2} f(X_i,0) \Delta_p(X_i)} K_{h_r}(R_i) \delta_i \mathbb{E}\left[ D_j K_{h_x}(X_j-X_i) K_{h_r}(R_j)\delta_j | X_i \right] \\
    =& \mathrm{i}w V_i e^{\mathrm{i} x'X_i} D_i \kappa(w,X_i)\tau(X_i) K_{h_r}(R_i) \delta_i \left( 1 + O_p(h_r) + O_p(h_x^l) \right).
\end{align*}
Then $\frac{1}{n} \sum_{i=1}^n \mathbb{E}(C_{31n,ij}^\ast|\mathcal{F}_i^\ast) - \frac{1}{n} \sum_{i=1}^n \mathbb{E}(C_{32n,ij}^\ast |\mathcal{F}_i^\ast) = O_p(h_r) + O_p(h_x^l) = o_p(\frac{1}{\sqrt{nh}})$. For the high-order residual terms in the Hoeffding decompositions, $U_n^{(2)}(\pi_2 g_{C_{31n}^\ast})$ and $U_n^{(2)}(\pi_2 g_{C_{32n}^\ast})$ are constructed on the symmetrized functions $g_{C_{31n}^\ast} := \frac{1}{2} (C_{31n,ij}^\ast + C_{31n,ji}^\ast)$ and $g_{C_{32n}^\ast} := \frac{1}{2} (C_{32n,ij}^\ast + C_{32n,ji}^\ast)$ with envelopes $G_{C_{31n}^\ast} = \sup_{(w,x)\in\Omega} |g_{C_{31n}^\ast}|$ and $G_{C_{32n}^\ast} = \sup_{(w,x)\in\Omega} |g_{C_{32n}^\ast}|$. The second moments of the envelopes are bounded by $\mathbb{E} \sup_{(w,x)\in\Omega} |C_{31n,ij}^\ast|^2$ and $\mathbb{E} \sup_{(w,x)\in\Omega} |C_{32n,ij}^\ast|^2$:
\begin{align*}
    \mathbb{E} G_{C_{31n}^\ast}^2 \le& \mathbb{E} \sup_{(w,x)\in\Omega} \bigg| \mathrm{i}w V_i e^{\mathrm{i} x'X_i} \frac{D_i \kappa(w,X_i)}{\frac{1}{2} f(X_i,0) \Delta_p(X_i)} Y_j K_{h_x}(X_j-X_i) K_{h_r}(R_j)\delta_j K_{h_r}(R_i) \delta_i \bigg|^2 \\
    \lesssim& \mathbb{E} \left[ Y_j K_{h_x}(X_j-X_i) K_{h_r}(R_j) K_{h_r}(R_i) \right]^2 = O(\frac{1}{hh_rh_x^d}).\\
    \mathbb{E} G_{C_{32n}^\ast}^2 \le& \mathbb{E} \sup_{(w,x)\in\Omega} \bigg| \mathrm{i}w V_i e^{\mathrm{i} x'X_i} \frac{D_i \kappa(w,X_i) \tau(X_i)}{\frac{1}{2} f(X_i,0) \Delta_p(X_i)} D_j K_{h_x}(X_j-X_i) K_{h_r}(R_j)\delta_j K_{h_r}(R_i) \delta_i \bigg|^2 \\
    \lesssim& \mathbb{E} \left[ \tau(X_i) K_{h_x}(X_j-X_i) K_{h_r}(R_j) K_{h_r}(R_i) \right]^2 = O(\frac{1}{hh_rh_x^d}). 
\end{align*}
Following Proposition 4 in \cite{Delgado2001},
\begin{align*}
    \mathbb{E} \sup_{(w,x)\in\Omega} \left| U_n^{(2)}(\pi_2 g_{C_{31n}^\ast}) \right|^2 &\lesssim \frac{\mathbb{E} G_{C_{31n}^\ast}^2}{(n-1)^2} = O(\frac{1}{n^2hh_rh_x^d}) = o(\frac{1}{nh}). \\
    \mathbb{E} \sup_{(w,x)\in\Omega} \left| U_n^{(2)}(\pi_2 g_{C_{32n}^\ast}) \right|^2 &\lesssim \frac{\mathbb{E} G_{C_{32n}^\ast}^2}{(n-1)^2} = O(\frac{1}{n^2hh_rh_x^d}) = o(\frac{1}{nh}).
\end{align*}
Then $U_n^{(2)}(\pi_2 g_{C_{31n}^\ast})$ and $U_n^{(2)}(\pi_2 g_{C_{32n}^\ast})$ are uniformly bounded by $\frac{1}{\sqrt{nh}}$ over $(w,x)\in\Omega$ and $C_{3n}^\ast(w,x)$ is uniformly bounded by $\frac{1}{\sqrt{nh}}$ over $(w,x)\in\Omega$.

For the third term in the leading terms of $C_{3n}^\ast(w,x)$, $C_{33n}^\ast(w,x)$, $$C_{33n}^\ast(w,x) = \frac{1}{n-1} C_{331n}^\ast(w,x) + \frac{n-2}{n-1} C_{332n}^\ast(w,x), $$ where $C_{331n}^\ast(w,x) = U_n^{(2)}(C_{331n,ij}^\ast)$ is a second-order $U$-process and $C_{332n}^\ast(w,x) = U_n^{(3)}(C_{332n,ijk}^\ast)$ is a third-order $U$-process. The elements of the two $U$-processes are 
\begin{align*}
    C_{331n,ij}^\ast =& \mathrm{i}w V_i e^{\mathrm{i} x'X_i} D_i \kappa(w,X_i) \left( \frac{Y_j K_{h_x}(X_j-X_i) K_{h_r}(R_j) \delta_j}{\frac{1}{2} f(X_i,0) \Delta_p(X_i)} - \tau(X_i) \right) \\& \left( \frac{D_j K_{h_x}(X_j-X_i) K_{h_r}(R_j) \delta_j}{\frac{1}{2} f(X_i,0) \Delta_p(X_i)} -1 \right) K_{h_r}(R_i) \delta_i, \\
    C_{332n,ijk}^\ast =& \mathrm{i}w V_i e^{\mathrm{i} x'X_i} D_i \kappa(w,X_i) \left( \frac{Y_j K_{h_x}(X_j-X_i) K_{h_r}(R_j) \delta_j}{\frac{1}{2} f(X_i,0) \Delta_p(X_i)} - \tau(X_i) \right) \\& \left( \frac{D_k K_{h_x}(X_k-X_i) K_{h_r}(R_k) \delta_k}{\frac{1}{2} f(X_i,0) \Delta_p(X_i)} -1 \right)  K_{h_r}(R_i) \delta_i. 
\end{align*}

The Hoeffding decomposition of $U$-process $C_{331n}^\ast(w,x)$ is $$ C_{331n}^\ast(w,x) = \mathbb{E} C_{331n,ij}^\ast + 2U_n^{(1)}(\pi_1 g_{C_{331n}^\ast}) + U_n^{(2)}(\pi_2 g_{C_{331n}^\ast}) = 2U_n^{(1)}(\pi_1 g_{C_{331n}^\ast}) + U_n^{(2)}(\pi_2 g_{C_{331n}^\ast}). $$ The expectation $\mathbb{E} C_{331n,ij}^\ast =0$ because the multipliers $\{V_i\}_{i=1}^n$ have mean zero. The second-order moment of envelope $G_{C_{331n}^\ast}$ is bounded by 
\begin{align*}
    \mathbb{E} G_{C_{331n}^\ast}^2 \le& \mathbb{E} \sup_{(w,x)\in\Omega} \left| \mathrm{i}w V_i e^{\mathrm{i} x'X_i} D_i \kappa(w,X_i) \left( \frac{Y_j K_{h_x}(X_j-X_i) K_{h_r}(R_j) \delta_j}{\frac{1}{2} f(X_i,0) \Delta_p(X_i)} - \tau(X_i) \right) \right. \\& \left. \left( \frac{D_j K_{h_x}(X_j-X_i) K_{h_r}(R_j) \delta_j}{\frac{1}{2} f(X_i,0) \Delta_p(X_i)} -1 \right) K_{h_r}(R_i) \delta_i \right|^2 \\
    \lesssim& \mathbb{E} \left| Y_j K_{h_x}^2(X_j-X_i) K_{h_r}^2(R_j) K_{h_r}(R_i) \right|^2 + \mathbb{E} \left| Y_j K_{h_x}(X_j-X_i) K_{h_r}(R_j) K_{h_r}(R_i) \right|^2 \\&+ \mathbb{E} \left| K_{h_x}(X_j-X_i) K_{h_r}(R_j) K_{h_r}(R_i) \right|^2 + \mathbb{E} \left| \tau(X_i) K_{h_r}(R_i) \right|^2 \\
    =& O(h_x^{-3d} h_r^{-4}) + O(h_x^{-d} h_r^{-2}) + O(h^{-1}) = O(h_x^{-3d} h_r^{-4}).
\end{align*}
Following Proposition 4 in \cite{Delgado2001}, the first- and second-order terms in the $U$-process are uniformly bounded since
\begin{align*}
    \mathbb{E} \sup_{(w,x)\in\Omega} \left\Vert \frac{2}{n-1} U_n^{(1)}(\pi_1 g_{C_{331n}^\ast}) \right\Vert^2 &\lesssim \frac{\mathbb{E} G_{C_{331n}^\ast}^2}{n(n-1)^2} = O(\frac{1}{n^3h_x^{3d}h_r^4}) = o(\frac{1}{nh}). \\
    \mathbb{E} \sup_{(w,x)\in\Omega} \left\Vert \frac{1}{n-1} U_n^{(2)}(\pi_2 g_{C_{331n}^\ast}) \right\Vert^2 &\lesssim \frac{\mathbb{E} G_{C_{331n}^\ast}^2}{(n-1)^4} = O(\frac{1}{n^4h_x^{3d}h_r^4}) = o(\frac{1}{nh}).
\end{align*}
Then it can be concluded that $\sup_{(w,x)\in\Omega} \left\Vert \frac{C_{331n}^\ast(w,x)}{n-1} \right\Vert = o_p\left( (nh)^{-1/2} \right)$. 

The Hoeffding decomposition of $U$-process $C_{332n}^\ast(w,x)$ is 
\begin{align*}
    C_{332n}^\ast(w,x) =& \frac{1}{n} \sum_{i=1}^n \mathbb{E} \left( C_{332n,ijk}^\ast | \mathcal{F}_i \right) + \frac{1}{n} \sum_{j=1}^n \mathbb{E} \left( C_{332n,ijk}^\ast | \mathcal{F}_j \right) + \frac{1}{n} \sum_{k=1}^n \mathbb{E} \left( C_{332n,ijk}^\ast | \mathcal{F}_k \right) \\ &-2 \mathbb{E} C_{332n,ijk}^\ast + 3 U_n^{(2)} (\pi_2 g_{C_{332n}^\ast}) + U_n^{(3)} (\pi_3 g_{C_{332n}^\ast}) \\
    =& \frac{1}{n} \sum_{i=1}^n \mathbb{E} \left( C_{332n,ijk}^\ast | \mathcal{F}_i \right) + 3 U_n^{(2)} (\pi_2 g_{C_{332n}^\ast}) + U_n^{(3)} (\pi_3 g_{C_{332n}^\ast}).
\end{align*}
The conditional expectation term is 
\begin{align*}
    & \mathbb{E} \left( C_{332n,ijk}^\ast | \mathcal{F}_i^\ast \right) = \mathrm{i}w V_i e^{\mathrm{i} x'X_i} D_i \kappa(w,X_i) \mathbb{E} \left[ \left( \frac{Y_j K_{h_x}(X_j-X_i) K_{h_r}(R_j) \delta_j}{\frac{1}{2} f(X_i,0) \Delta_p(X_i)} - \tau(X_i) \right) \right. \\& \left. \left( \frac{D_k K_{h_x}(X_k-X_i) K_{h_r}(R_k) \delta_k}{\frac{1}{2} f(X_i,0) \Delta_p(X_i)} -1 \right) \bigg|X_i \right] K_{h_r}(R_i) \delta_i \\
    =& \mathrm{i}w V_i e^{\mathrm{i} x'X_i} D_i \kappa(w,X_i) \frac{m_\mu(X_i) m_p(X_i)}{\left( \frac{1}{2} f(X_i,0) \Delta_p(X_i) \right)^2} K_{h_r}(R_i) \delta_i \\=& [O_p(h_x^l) + O_p(h_r)] [O_p(h_x^l) + O_p(h_r)] = O_p(h_x^{2l}) + O_p(h_r^2). 
\end{align*}

The second-order moment of envelope $G_{C_{332n}^\ast}$ is bounded by
\begin{align*}
    \mathbb{E} G_{C_{332n}^\ast}^2 \le& \mathbb{E} \sup_{(w,x)\in\Omega} \left| \mathrm{i}w V_i e^{\mathrm{i} x'X_i} D_i \kappa(w,X_i) \left( \frac{Y_j K_{h_x}(X_j-X_i) K_{h_r}(R_j) \delta_j}{\frac{1}{2} f(X_i,0) \Delta_p(X_i)} - \tau(X_i) \right) \right. \\& \left. \left( \frac{D_k K_{h_x}(X_k-X_i) K_{h_r}(R_k) \delta_k}{\frac{1}{2} f(X_i,0) \Delta_p(X_i)} -1 \right)  K_{h_r}(R_i) \delta_i \right|^2 \\
    \lesssim& \mathbb{E} \left| Y_j K_{h_x}(X_j-X_i) K_{h_x}(X_k-X_i) K_{h_r}(R_j) K_{h_r}(R_k) K_{h_r}(R_i) \right|^2 + \mathbb{E} \left| \tau(X_i) K_{h_r}(R_i) \right|^2 \\
    &+ \mathbb{E} \left| Y_j K_{h_x}(X_j-X_i) K_{h_r}(R_j) K_{h_r}(R_i) \right|^2 + \mathbb{E} \left| K_{h_x}(X_k-X_i) K_{h_r}(R_k) K_{h_r}(R_i) \right|^2 \\
    =& O(h_x^{-2d} h_r^{-3}) + O(h_x^{-d} h_r^{-2}) + O(h^{-1}) = O(h_x^{-2d} h_r^{-3}).
\end{align*}
Following Proposition 4 in \cite{Delgado2001}, the second- and third-order terms in the $U$-process are uniformly bounded since
\begin{align*}
    \mathbb{E} \sup_{(w,x)\in\Omega} \left\Vert 3 \frac{n-2}{n-1} U_n^{(2)}(\pi_2 g_{C_{332n}^\ast}) \right\Vert^2 &\lesssim \frac{(n-2)^2 \mathbb{E} G_{C_{332n}^\ast}^2}{(n-1)^4} = O(\frac{1}{n^2h_x^{2d}h_r^3}) = o(\frac{1}{nh}), \\
    \mathbb{E} \sup_{(w,x)\in\Omega} \left\Vert \frac{n-2}{n-1} U_n^{(3)}(\pi_3 g_{C_{332n}^\ast}) \right\Vert^2 &\lesssim \frac{n \mathbb{E} G_{C_{332n}^\ast}^2}{(n-1)^4} = O(\frac{1}{n^3h_x^{2d}h_r^3}) = o(\frac{1}{nh}).
\end{align*}
Then it can be concluded that $\sup_{(w,x)\in\Omega} \left\Vert \frac{n-2}{n-1} C_{332n}^\ast(w,x) \right\Vert = o_p\left( (nh)^{-1/2} \right)$. Furthermore, $\sup_{(w,x)\in\Omega} \left\Vert C_{33n}^\ast(w,x) \right\Vert = o_p\left( (nh)^{-1/2} \right)$.

For the fourth term in the leading terms of $C_{3n}^\ast(w,x)$, $C_{34n}^\ast(w,x)$, $$C_{34n}^\ast(w,x) = \frac{1}{n-1} C_{341n}^\ast(w,x) + \frac{n-2}{n-1} C_{342n}^\ast(w,x), $$ where $C_{341n}^\ast(w,x) = U_n^{(2)}(C_{341n,ij}^\ast)$ is a second-order $U$-process and $C_{342n}^\ast(w,x) = U_n^{(3)}(C_{342n,ijk}^\ast)$ is a third-order $U$-process. The elements of the two $U$-processes are 
\begin{align*}
    C_{341n,ij}^\ast =& \mathrm{i}w V_i e^{\mathrm{i} x'X_i} D_i \kappa(w,X_i) \left( \frac{D_j K_{h_x}(X_j-X_i) K_{h_r}(R_j) \delta_j}{\frac{1}{2} f(X_i,0) \Delta_p(X_i)} -1 \right)^2 \tau(X_i) K_{h_r}(R_i) \delta_i, \\
    C_{342n,ijk}^\ast =& \mathrm{i}w V_i e^{\mathrm{i} x'X_i} D_i \kappa(w,X_i) \left( \frac{D_j K_{h_x}(X_j-X_i) K_{h_r}(R_j) \delta_j}{\frac{1}{2} f(X_i,0) \Delta_p(X_i)} -1 \right) \\& \left( \frac{D_k K_{h_x}(X_k-X_i) K_{h_r}(R_k) \delta_k}{\frac{1}{2} f(X_i,0) \Delta_p(X_i)} -1 \right) \tau(X_i) K_{h_r}(R_i) \delta_i. 
\end{align*}

The Hoeffding decomposition of $U$-process $C_{341n}^\ast(w,x)$ is $$ C_{341n}^\ast(w,x) = 2U_n^{(1)}(\pi_1 g_{C_{341n}^\ast}) + U_n^{(2)}(\pi_2 g_{C_{341n}^\ast}). $$ The expectation $\mathbb{E} C_{341n,ij}^\ast =0$ because the multipliers $\{V_i\}_{i=1}^n$ are zero-mean. The second-order moment of envelope $G_{C_{341n}^\ast}$ is bounded by 
\begin{align*}
    \mathbb{E} G_{C_{341n}^\ast}^2 \le& \mathbb{E} \sup_{(w,x)\in\Omega} \left| \mathrm{i}w V_i e^{\mathrm{i} x'X_i} D_i \kappa(w,X_i) \left( \frac{D_j K_{h_x}(X_j-X_i) K_{h_r}(R_j) \delta_j}{\frac{1}{2} f(X_i,0) \Delta_p(X_i)} -1 \right)^2 \tau(X_i) K_{h_r}(R_i) \delta_i \right|^2 \\
    \lesssim& \mathbb{E} \left| \tau(X_i) K_{h_x}^2(X_j-X_i) K_{h_r}^2(R_j) K_{h_r}(R_i) \right|^2 + \mathbb{E} \left| \tau(X_i) K_{h_x}(X_j-X_i) K_{h_r}(R_j) K_{h_r}(R_i) \right|^2 \\&+ \mathbb{E} \left| \tau(X_i) K_{h_r}(R_i) \right|^2 \\
    =& O(h_x^{-3d} h_r^{-4}) + O(h_x^{-d} h_r^{-2}) + O(h^{-1}) = O(h_x^{-3d} h_r^{-4}).
\end{align*}
Following Proposition 4 in \cite{Delgado2001}, the first- and second-order terms in the $U$-process are uniformly bounded since
\begin{align*}
    \mathbb{E} \sup_{(w,x)\in\Omega} \left\Vert \frac{2}{n-1} U_n^{(1)}(\pi_1 g_{C_{341n}^\ast}) \right\Vert^2 &\lesssim \frac{\mathbb{E} G_{C_{341n}^\ast}^2}{n(n-1)^2} = O(\frac{1}{n^3h_x^{3d}h_r^4}) = o(\frac{1}{nh}). \\
    \mathbb{E} \sup_{(w,x)\in\Omega} \left\Vert \frac{1}{n-1} U_n^{(2)}(\pi_2 g_{C_{341n}^\ast}) \right\Vert^2 &\lesssim \frac{\mathbb{E} G_{C_{341n}^\ast}^2}{(n-1)^4} = O(\frac{1}{n^4h_x^{3d}h_r^4}) = o(\frac{1}{nh}).
\end{align*}
Then it can be concluded that $\sup_{(w,x)\in\Omega} \left\Vert \frac{C_{341n}^\ast(w,x)}{n-1} \right\Vert = o_p\left( (nh)^{-1/2} \right)$. 

The Hoeffding decomposition of $U$-process $C_{342n}^\ast(w,x)$ is $$C_{342n}^\ast(w,x) = \frac{1}{n} \sum_{i=1}^n \mathbb{E} \left( C_{342n,ijk}^\ast | \mathcal{F}_i \right) + 3 U_n^{(2)} (\pi_2 g_{C_{342n}^\ast}) + U_n^{(3)} (\pi_3 g_{C_{342n}^\ast}).$$

The conditional expectation term is $$ \mathbb{E} \left( C_{342n,ijk}^\ast | \mathcal{F}_i^\ast \right) = V_i e^{\mathrm{i} x'X_i} \frac{D_i \kappa(w,X_i) \tau(X_i)}{\left( \frac{1}{2} f(X_i,0) \Delta_p(X_i) \right)^2} m_p^2(X_i) K_{h_r}(R_i) \delta_i = O_p(h_x^{2l}) + O_p(h_r^2). $$
The second-order moment of envelope $G_{C_{342n}^\ast}$ is bounded by
\begin{align*}
    \mathbb{E} G_{C_{342n}^\ast}^2 \le& \mathbb{E} \sup_{(w,x)\in\Omega} \left| V_i e^{\mathrm{i} x'X_i} D_i \kappa(w,X_i) \left( \frac{D_j K_{h_x}(X_j-X_i) K_{h_r}(R_j) \delta_j}{\frac{1}{2} f(X_i,0) \Delta_p(X_i)} -1 \right) \right. \\& \left. \left( \frac{D_k K_{h_x}(X_k-X_i) K_{h_r}(R_k) \delta_k}{\frac{1}{2} f(X_i,0) \Delta_p(X_i)} -1 \right) \tau(X_i) K_{h_r}(R_i) \delta_i \right|^2 \\
    \lesssim& \mathbb{E} \left| \tau(X_i) K_{h_x}(X_j-X_i) K_{h_x}(X_k-X_i) K_{h_r}(R_j) K_{h_r}(R_k) K_{h_r}(R_i) \right|^2 = O(h_x^{-2d} h_r^{-3}).
\end{align*}
Following Proposition 4 in \cite{Delgado2001}, the second- and third-order terms in the $U$-process are uniformly bounded since
\begin{align*}
    \mathbb{E} \sup_{(w,x)\in\Omega} \left\Vert 3 \frac{n-2}{n-1} U_n^{(2)}(\pi_2 g_{C_{342n}^\ast}) \right\Vert^2 &\lesssim \frac{(n-2)^2 \mathbb{E} G_{C_{342n}^\ast}^2}{(n-1)^4} = O(\frac{1}{n^2h_x^{2d}h_r^3}) = o(\frac{1}{nh}), \\
    \mathbb{E} \sup_{(w,x)\in\Omega} \left\Vert \frac{n-2}{n-1} U_n^{(3)}(\pi_3 g_{C_{342n}^\ast}) \right\Vert^2 &\lesssim \frac{n \mathbb{E} G_{C_{342n}^\ast}^2}{(n-1)^4} = O(\frac{1}{n^3h_x^{2d}h_r^3}) = o(\frac{1}{nh}).
\end{align*}
Then it can be concluded that $\sup_{(w,x)\in\Omega} \left\Vert \frac{n-2}{n-1} C_{342n}^\ast(w,x) \right\Vert = o_p\left( (nh)^{-1/2} \right)$. Furthermore, $\sup_{(w,x)\in\Omega} \left\Vert C_{34n}^\ast(w,x) \right\Vert = o_p\left( (nh)^{-1/2} \right)$ and $\sup_{(w,x)\in\Omega} \left\Vert C_{3n}^\ast(w,x) \right\Vert = o_p\left( (nh)^{-1/2} \right)$.

For the last term in \eqref{eq1}, by expanding $\hat\kappa-\kappa$ and $\hat\tau-\tau$, its leading terms can be expressed as
\begin{align*}
    & C_{4n}^\ast(w,x) \\=& \frac{\mathrm{i}w}{n} \sum_{i=1}^n V_i e^{\mathrm{i} x'X_i} D_i \left( \frac{\hat{f}(X_i,0) \Delta_{\hat\varphi}(w,X_i)}{\frac{1}{2} f(X_i,0) \Delta_p(X_i)} - \kappa(w,X_i) \right) \left( \frac{\hat{f}(X_i,0) \Delta_{\hat\mu}(X_i)}{\frac{1}{2} f(X_i,0) \Delta_p(X_i)} - \tau(X_i) \right) K_{h_r}(R_i) \delta_i \\
    &- \frac{\mathrm{i}w}{n} \sum_{i=1}^n V_i e^{\mathrm{i} x'X_i} D_i \left( \frac{\hat{f}(X_i,0) \Delta_{\hat\varphi}(w,X_i)}{\frac{1}{2} f(X_i,0) \Delta_p(X_i)} - \kappa(w,X_i) \right) \left( \frac{\hat{f}(X_i,0) \Delta_{\hat{p}}(X_i)}{\frac{1}{2} f(X_i,0) \Delta_p(X_i)} - 1 \right) \tau(X_i) K_{h_r}(R_i) \delta_i \\
    &- \frac{\mathrm{i}w}{n} \sum_{i=1}^n V_i e^{\mathrm{i} x'X_i} D_i \kappa(w,X_i) \left( \frac{\hat{f}(X_i,0) \Delta_{\hat{p}}(X_i)}{\frac{1}{2} f(X_i,0) \Delta_p(X_i)} - 1 \right) \left( \frac{\hat{f}(X_i,0) \Delta_{\hat\mu}(X_i)}{\frac{1}{2} f(X_i,0) \Delta_p(X_i)} - \tau(X_i) \right) K_{h_r}(R_i) \delta_i \\
    &+ \frac{\mathrm{i}w}{n} \sum_{i=1}^n V_i e^{\mathrm{i} x'X_i} D_i \kappa(w,X_i) \left( \frac{\hat{f}(X_i,0) \Delta_{\hat{p}}(X_i)}{\frac{1}{2} f(X_i,0) \Delta_p(X_i)} - 1 \right) \left( \frac{\hat{f}(X_i,0) \Delta_{\hat{p}}(X_i)}{\frac{1}{2} f(X_i,0) \Delta_p(X_i)} - 1 \right) \tau(X_i) K_{h_r}(R_i) \delta_i \\
    =& C_{41n}^\ast(w,x) - C_{42n}^\ast(w,x) - C_{43n}^\ast(w,x) + C_{44n}^\ast(w,x).    
\end{align*}
The high-order interaction terms in the expansion of $\hat\kappa-\kappa$ and $\hat\tau-\tau$ are asymptotically smaller than the terms $C_{41n}^\ast(w,x)$--$C_{44n}^\ast(w,x)$. 
Note that $C_{41n}^\ast(w,x)$ and $C_{42n}^\ast(w,x)$ have a similar structure to $C_{23n}^\ast(w,x)$, where the difference in the components is that $D_i \left( \frac{\hat{f}(X_i,0) \Delta_{\hat\mu}(X_i)}{\frac{1}{2} f(X_i,0) \Delta_p(X_i)} - \tau(X_i) \right) $ in $C_{41n}^\ast(w,x)$ is replaced by $\left( \frac{\hat{f}(X_i,0) \Delta_{\hat{p}}(X_i)}{\frac{1}{2} f(X_i,0) \Delta_p(X_i)} - 1 \right) (Y_i-D_i\tau(X_i))$ and $D_i\tau(X_i)$ in $C_{42n}^\ast(w,x)$ is replaced by $Y_i-D_i\tau(X_i)$ in $C_{23n}^\ast(w,x)$. Therefore, the same argument as for $C_{23n}^\ast(w,x)$ shows that $C_{41n}^\ast(w,x)$ and $C_{42n}^\ast(w,x)$ are uniformly bounded by $\frac{1}{\sqrt{nh}}$ on $\Omega$. $C_{43n}^\ast(w,x)$ and $C_{44n}^\ast(w,x)$ are exactly the same as $C_{33n}^\ast(w,x)$ and $C_{34n}^\ast(w,x)$, which are uniformly bounded by $\frac{1}{\sqrt{nh}}$. Summarizing the aforementioned results, $\sup_{(w,x)\in\Omega}|C_{4n}^\ast(w,x)| = o_p(\frac{1}{\sqrt{nh}})$.

\subsubsection{(ii)}
The difference between $\hat{U}_n^\ast(w,x)$ and $U_n^\ast(w,x)$ can be expressed by a Taylor expansion of the function $e^{\mathrm{i} w\hat{W}_i}$ at the point $W_i$: 
\begin{align*}
    \hat{U}_n^\ast(w,x) - U_n^\ast(w,x)
    =& \mathrm{i}w\frac{1}{n} \sum_{i=1}^n V_i e^{\mathrm{i} (wW_i+x'X_i)} (1-D_i) ( \hat\tau(X_i) - \tau(X_i) ) K_h(R_i) \delta_i \\
    -& \frac{w^2}{2n} \sum_{i=1}^n V_i e^{\mathrm{i} (wW_i+x'X_i)} (1-D_i) (\hat\tau(X_i) - \tau(X_i))^2 K_h(R_i) \delta_i \\
    -& \frac{\mathrm{i}w^3}{6n} \sum_{i=1}^n V_i e^{\mathrm{i} (w\bar{W}_i+x'X_i)} (1-D_i) (\hat\tau(X_i) - \tau(X_i))^3 K_h(R_i) \delta_i.
\end{align*}
The last component is of smaller order than the first two terms. The first two terms are denoted as
\begin{align*}
    C_n^\ast(w,x) =& \mathrm{i}w\frac{1}{n} \sum_{i=1}^n V_i e^{\mathrm{i} (wW_i+x'X_i)} (1-D_i) ( \hat\tau(X_i) - \tau(X_i) ) K_h(R_i) \delta_i \\
    &- \frac{w^2}{2n} \sum_{i=1}^n V_i e^{\mathrm{i} (wW_i+x'X_i)} (1-D_i) (\hat\tau(X_i) - \tau(X_i))^2 K_h(R_i) \delta_i.
\end{align*}

Following the same decomposition as that for the local constant estimator, the main term of $\hat{U}_n^\ast(w,x)- U_n^\ast(w,x)$ can be denoted as $C_n^\ast(w,x) = C_{1n}^\ast(w,x) - C_{2n}^\ast(w,x) - C_{3n}^\ast(w,x) + C_{4n}^\ast(w,x) - C_{5n}^\ast(w,x)$, where the components are 
\begin{align*}
    C_{1n}^\ast(w,x) =& \frac{\mathrm{i}w}{n} \sum_{i=1}^n V_i e^{\mathrm{i} (wW_i+x'X_i)} (1-D_i) \frac{\Delta_{\hat\mu}(X_i)}{\Delta_p(X_i)} K_h(R_i) \delta_i \\
    C_{2n}^\ast(w,x) =& \frac{\mathrm{i}w}{n} \sum_{i=1}^n V_i e^{\mathrm{i} (wW_i+x'X_i)} (1-D_i) \tau(X_i) \frac{\Delta_{\hat{p}}(X_i)}{\Delta_p(X_i)} K_h(R_i) \delta_i \\
    C_{3n}^\ast(w,x) =& \frac{\mathrm{i}w}{n} \sum_{i=1}^n V_i e^{\mathrm{i} (wW_i+x'X_i)} (1-D_i) \frac{\Delta_{\hat\mu}(X_i)-\Delta_\mu(X_i)}{\Delta_p(X_i)} \frac{\Delta_{\hat{p}}(X_i)-\Delta_p(X_i)}{\Delta_p(X_i)} \left( 1 + \mathrm{i}w \tau(X_i) \right) K_h(R_i) \delta_i \\
    C_{4n}^\ast(w,x) =& \frac{\mathrm{i}w}{n} \sum_{i=1}^n V_i e^{\mathrm{i} (wW_i+x'X_i)} (1-D_i) \left( 1 + \frac{\mathrm{i}w}{2} \tau(X_i) \right) \tau(X_i) \left( \frac{\Delta_{\hat{p}}(X_i) - \Delta_p(X_i)}{\Delta_p(X_i)}\right)^2 K_h(R_i) \delta_i \\
    C_{5n}^\ast(w,x) =& \frac{w^2}{2n} \sum_{i=1}^n V_i e^{\mathrm{i} (wW_i+x'X_i)} (1-D_i) \left( \frac{\Delta_{\hat\mu}(X_i)-\Delta_\mu(X_i)}{\Delta_p(X_i)} \right)^2 K_h(R_i) \delta_i.
\end{align*}

With Lemma S.A2, by expanding $\Delta_{\hat\mu}(X_i) - \Delta_\mu(X_i)$, $C_{1n}^\ast(w,x)$ can be expressed as $C_{1n}^\ast(w,x) = C_{1n+}^\ast(w,x) - C_{1n-}^\ast(w,x) + O_p(h_r^2)$, where
\begin{align*}
    C_{1n+}^\ast(w,x) =& \frac{\mathrm{i}w}{n} \sum_{i=1}^n V_i e^{\mathrm{i} (wW_i+x'X_i)} \frac{1-D_i}{\Delta_p(X_i)} e_0' \left( B_{n+}^{-1}(X_i)A_{Yn+}(X_i) - B_+^{-1}(X_i)A_{Y+}(X_i) \right) K_h(R_i) \delta_i, \\    
    C_{1n-}^\ast(w,x) =& \frac{\mathrm{i}w}{n} \sum_{i=1}^n V_i e^{\mathrm{i} (wW_i+x'X_i)} \frac{1-D_i}{\Delta_p(X_i)} e_0' \left( B_{n-}^{-1}(X_i)A_{Yn-}(X_i) - B_-^{-1}(X_i)A_{Y-}(X_i) \right) K_h(R_i) \delta_i.
\end{align*}
The leading terms of $C_{1n+}^\ast(w,x)$ can be written as $C_{11n+}^\ast(w,x) - C_{12n+}^\ast(w,x) - C_{13n+}^\ast(w,x) + C_{14n+}^\ast(w,x)$, where 
\begin{align*}
    C_{11n+}^\ast(w,x) =& \frac{\mathrm{i}w}{n} \sum_{i=1}^n V_i e^{\mathrm{i} (wW_i+x'X_i)} \frac{1-D_i}{\Delta_p(X_i)} e_0' B_+^{-1}(X_i) A_{Yn+}(X_i) K_h(R_i) \delta_i, \\
    C_{12n+}^\ast(w,x) =& \frac{\mathrm{i}w}{n} \sum_{i=1}^n V_i e^{\mathrm{i} (wW_i+x'X_i)} \frac{1-D_i}{\Delta_p(X_i)} e_0' B_+^{-1}(X_i) B_{n+}(X_i) B_+^{-1}(X_i) A_{Y+}(X_i) K_h(R_i) \delta_i, \\
    C_{13n+}^\ast(w,x) =& \frac{\mathrm{i}w}{n} \sum_{i=1}^n V_i e^{\mathrm{i} (wW_i+x'X_i)} \frac{1-D_i}{\Delta_p(X_i)} e_0' B_+^{-1}(X_i) \left( B_{n+}(X_i) - B_+(X_i) \right) \\& B_+^{-1}(X_i) \left( A_{Yn+}(X_i) - A_{Y+}(X_i) \right) K_h(R_i) \delta_i, \\
    C_{14n+}^\ast(w,x) =& \frac{\mathrm{i}w}{n} \sum_{i=1}^n V_i e^{\mathrm{i} (wW_i+x'X_i)} \frac{1-D_i}{\Delta_p(X_i)} e_0' B_+^{-1}(X_i) \left( B_{n+}(X_i) - B_+(X_i) \right) \\& B_+^{-1}(X_i) \left( B_{n+}(X_i) - B_+(X_i) \right) B_+^{-1}(X_i) A_{Y+}(X_i) K_h(R_i) \delta_i.
\end{align*}
$C_{11n+}^\ast(w,x)$ and $C_{12n+}^\ast(w,x)$ are second-order $U$-processes, $C_{13n+}^\ast(w,x)$ and $C_{14n+}^\ast(w,x)$ are combinations of the second-order and third-order $U$-processes derived from the second-order term in the expansion of $\hat{a}/\hat{b}$.
The Hoeffding decompositions of $U$-processes $C_{11n+}^\ast(w,x)$ and $C_{12n+}^\ast(w,x)$ are
\begin{align*}
   C_{11n+}^\ast(w,x) =& \frac{1}{n} \sum_{i=1}^n \mathbb{E}(C_{11n+,ij}^\ast|\mathcal{F}_i^\ast) + \frac{1}{n} \sum_{j=1}^n \mathbb{E}(C_{11n+,ij}^\ast|\mathcal{F}_j^\ast) - \mathbb{E}C_{11n+,ij}^\ast + U_n^{(2)}(\pi_2 g_{C_{11n+}^\ast}), \\
   C_{12n+}^\ast(w,x) =& \frac{1}{n} \sum_{i=1}^n \mathbb{E}(C_{12n+,ij}^\ast|\mathcal{F}_i^\ast) + \frac{1}{n} \sum_{j=1}^n \mathbb{E}(C_{12n+,ij}^\ast|\mathcal{F}_j^\ast) - \mathbb{E}C_{12n+,ij}^\ast + U_n^{(2)}(\pi_2 g_{C_{12n+}^\ast}), 
\end{align*}
where the terms conditional on $\mathcal{F}_j^\ast$ and the total expectations are exactly zero due to the zero-mean multipliers $\{V_i\}_{i=1}^n$. The Hoeffding decompositions become \begin{align*}
    C_{11n+}^\ast(w,x) =& \frac{1}{n} \sum_{i=1}^n \mathbb{E}(C_{11n+,ij}^\ast|\mathcal{F}_i^\ast) + U_n^{(2)}(\pi_2 g_{C_{11n+}^\ast}) + O_p(h_x^l), \\
    C_{12n+}^\ast(w,x) =& \frac{1}{n} \sum_{i=1}^n \mathbb{E}(C_{12n+,ij}^\ast|\mathcal{F}_i^\ast) + U_n^{(2)}(\pi_2 g_{C_{12n+}^\ast}) + O_p(h_x^l), 
\end{align*}
where the elements are
\begin{align*}
   C_{11n+,ij}^\ast =& \mathrm{i}w V_i e^{\mathrm{i} (wW_i+x'X_i)} \frac{1-D_i}{\Delta_p(X_i)} e_0' B_+^{-1}(X_i) \mathbf{r}(R_j)Y_j K_{h_x}(X_j-X_i) K_{h_r}(R_j)\mathbf{1}_+(R_j) K_h(R_i) \delta_i, \\
   C_{12n+,ij}^\ast =& \mathrm{i}w V_i e^{\mathrm{i} (wW_i+x'X_i)} \frac{1-D_i}{\Delta_p(X_i)} e_0' B_+^{-1}(X_i) \mathbf{r}(R_j)\mathbf{r}(R_j)' B_+^{-1}(X_i) A_{Y+}(X_i) \\& K_{h_x}(X_j-X_i) K_{h_r}(R_j)\mathbf{1}_+(R_j) K_h(R_i) \delta_i;
\end{align*}
and the conditional expectations $\mathbb{E}(C_{11n+,ij}^\ast|\mathcal{F}_i^\ast)$ and $\mathbb{E}(C_{12n+,ij}^\ast|\mathcal{F}_i^\ast)$ are
\begin{align*}
    \mathbb{E}(C_{11n+,ij}^\ast|\mathcal{F}_i^\ast) =& \mathrm{i}w V_i e^{\mathrm{i} (wW_i+x'X_i)} \frac{1-D_i}{\Delta_p(X_i)} e_0' B_+^{-1}(X_i) K_h(R_i) \delta_i \mathbb{E}\left[ \mathbf{r}(R_j) Y_j K_{h_x}(X_j-X_i) K_{h_r}(R_j)\mathbf{1}_+(R_j) |X_i \right], \\
    =& \mathrm{i}w V_i e^{\mathrm{i} (wW_i+x'X_i)} \frac{1-D_i}{\Delta_p(X_i)} e_0' B_+^{-1}(X_i) A_{Y+}(X_i) K_h(R_i) \delta_i + O_p(h_x^l), \\
    \mathbb{E}(C_{12n+,ij}^\ast|\mathcal{F}_i^\ast) =& \mathrm{i}w V_i e^{\mathrm{i} (wW_i+x'X_i)} \frac{1-D_i}{\Delta_p(X_i)} e_0' B_+^{-1}(X_i) \mathbb{E}\left[ \mathbf{r}(R_j) \mathbf{r}(R_j)' K_{h_x}(X_j-X_i) K_{h_r}(R_j)\mathbf{1}_+(R_j) |X_i \right] \\& B_+^{-1}(X_i) A_{Y+}(X_i) K_h(R_i) \delta_i, \\
    =& \mathrm{i}w V_i e^{\mathrm{i} (wW_i+x'X_i)} \frac{1-D_i}{\Delta_p(X_i)} e_0' B_+^{-1}(X_i) A_{Y+}(X_i) K_h(R_i) \delta_i + O_p(h_x^l),
\end{align*}
For the high-order terms in the Hoeffding decompositions of $U$-processes $C_{11n+}^\ast(w,x)$ and $C_{12n+}^\ast(w,x)$, $U_n^{(2)}(\pi_2 g_{C_{11n+}^\ast})$ and $U_n^{(2)}(\pi_2 g_{C_{12n+}^\ast})$, the symmetrized functions $g_{C_{11n+}^\ast} := \frac{1}{2} (C_{11n+,ij}^\ast + C_{11n+,ji}^\ast)$ and $g_{C_{12n+}^\ast} := \frac{1}{2} (C_{12n+,ij}^\ast + C_{12n+,ji}^\ast)$ are of VC-type and have envelopes $G_{C_{11n+}^\ast} = \sup_{(w,x)\in\Omega}|g_{C_{11n+}^\ast}|$ and $G_{C_{12n+}^\ast} = \sup_{(w,x)\in\Omega}|g_{C_{12n+}^\ast}|$ with bounded second moments $G_{C_{11n+}^\ast}^2 \le \sup_{(w,x)\in\Omega} \left|C_{11n+,ij}^\ast\right|^2$ and $G_{C_{12n+}^\ast}^2 \le \sup_{(w,x)\in\Omega} \left|C_{12n+,ij}^\ast\right|^2$. Given Assumption B5:
\begin{align*}
    \mathbb{E} G_{C_{11n+}^\ast}^2 \le& \mathbb{E} \sup_{(w,x)\in\Omega} \bigg| \mathrm{i}w V_i e^{\mathrm{i} (wW_i+x'X_i)} \frac{1-D_i}{\Delta_p(X_i)} e_0' B_+^{-1}(X_i) \mathbf{r}(R_j) Y_j K_{h_x}(X_j-X_i) K_{h_r}(R_j)\mathbf{1}_+(R_j) K_h(R_i) \delta_i \bigg|^2 \\
    \lesssim& \mathbb{E} \left[ e_0' B_+^{-1}(X_i) \mathbf{r}(R_j) Y_j K_{h_x}(X_j-X_i) K_{h_r}(R_j) K_h(R_i) \right]^2 = O(\frac{1}{hh_rh_x^d}). \\
    \mathbb{E} G_{C_{12n+}^\ast}^2 \le& \mathbb{E} \sup_{(w,x)\in\Omega} \bigg| \mathrm{i}w V_i e^{\mathrm{i} (wW_i+x'X_i)} \frac{1-D_i}{\Delta_p(X_i)} e_0' B_+^{-1}(X_i) \mathbf{r}(R_j) \mathbf{r}(R_j)' \\& B_+^{-1}(X_i) A_{Y+}(X_i) K_{h_x}(X_j-X_i) K_{h_r}(R_j)\mathbf{1}_+(R_j) K_h(R_i) \delta_i \bigg|^2 \\
    \lesssim& \mathbb{E} \left[ e_0' B_+^{-1}(X_i) \mathbf{r}(R_j) \mathbf{r}(R_j)' B_+^{-1}(X_i) A_{Y+}(X_i) K_{h_x}(X_j-X_i) K_{h_r}(R_j) K_h(R_i) \right]^2 = O(\frac{1}{hh_rh_x^d}).
\end{align*}
Following Proposition 4 in \cite{Delgado2001}:
\begin{align*}
    \mathbb{E} \sup_{(w,x)\in\Omega} \left| U_n^{(2)}(\pi_2 g_{C_{11n+}^\ast}) \right|^2 &\lesssim \frac{\mathbb{E} G_{C_{11n+}^\ast}^2}{(n-1)^2} = O(\frac{1}{n^2hh_rh_x^d}) = o(\frac{1}{nh}). \\
    \mathbb{E} \sup_{(w,x)\in\Omega} \left| U_n^{(2)}(\pi_2 g_{C_{12n+}^\ast}) \right|^2 &\lesssim \frac{\mathbb{E} G_{C_{12n+}^\ast}^2}{(n-1)^2} = O(\frac{1}{n^2hh_rh_x^d}) = o(\frac{1}{nh}). 
\end{align*}
Thus, $U_n^{(2)}(\pi_2 g_{C_{11n+}^\ast})$ and $U_n^{(2)}(\pi_2 g_{C_{12n+}^\ast})$ are uniformly bounded by $\frac{1}{\sqrt{nh}}$. 
The results for $C_{11n+}^\ast(w,x)$ and $C_{12n+}^\ast(w,x)$ imply that $C_{11n+}^\ast(w,x) - C_{12n+}^\ast(w,x) = O_p(h_x^l) = o_p(\frac{1}{\sqrt{nh}})$.  

For the high-order terms $C_{13n+}^\ast(w,x) = \frac{1}{n-1} C_{131n+}^\ast(w,x) + \frac{n-2}{n-1} C_{132n+}^\ast(w,x)$ and $C_{14n+}^\ast(w,x) = \frac{1}{n-1} C_{141n+}^\ast(w,x) + \frac{n-2}{n-1} C_{142n+}^\ast(w,x)$, where the components $C_{131n+}^\ast(w,x)$, $C_{132n+}^\ast(w,x)$, $C_{141n+}^\ast(w,x)$ and $C_{142n+}^\ast(w,x)$ are $U$-processes with Hoeffding decompositions
\begin{align*}
    C_{131n+}^\ast(w,x) =& U_n^{(2)}(g_{C_{131n+}^\ast}) = 2U_n^{(1)}(\pi_1 g_{C_{131n+}^\ast}) + U_n^{(2)}(\pi_2 g_{C_{131n+}^\ast}), \\
    C_{132n+}^\ast(w,x) =& U_n^{(3)}(g_{C_{132n+}^\ast}) = \frac{1}{n} \sum_{i=1}^n \mathbb{E} \left( C_{132n+,ijk}^\ast | \mathcal{F}_i^\ast \right) + 3 U_n^{(2)} (\pi_2 g_{C_{132n+}^\ast}) + U_n^{(3)} (\pi_3 g_{C_{132n+}^\ast}); \\
    C_{141n+}^\ast(w,x) =& U_n^{(2)}(g_{C_{141n+}^\ast}) = 2U_n^{(1)}(\pi_1 g_{C_{141n+}^\ast}) + U_n^{(2)}(\pi_2 g_{C_{141n+}^\ast}), \\
    C_{142n+}^\ast(w,x) =& U_n^{(3)}(g_{C_{142n+}^\ast}) = \frac{1}{n} \sum_{i=1}^n \mathbb{E} \left( C_{142n+,ijk}^\ast | \mathcal{F}_i^\ast \right) + 3 U_n^{(2)} (\pi_2 g_{C_{142n+}^\ast}) + U_n^{(3)} (\pi_3 g_{C_{142n+}^\ast}).
\end{align*}
The elements of these $U$-processes are
\begin{align*}
    C_{131n+,ij}^\ast =& \mathrm{i}w V_i e^{\mathrm{i} (wW_i+x'X_i)} \frac{1-D_i}{\Delta_p(X_i)} e_0' B_+^{-1}(X_i) \left[ \mathbf{r}(R_j) \mathbf{r}(R_j)' K_{h_x}(X_j-X_i) K_{h_r}(R_j) \mathbf{1}_+(R_j) - B_+(X_i) \right] \\& B_+^{-1}(X_i) \left[ \mathbf{r}(R_j) Y_j K_{h_x}(X_j-X_i) K_{h_r}(R_j) \mathbf{1}_+(R_j) - A_{Y+}(X_i) \right] K_h(R_i) \delta_i, \\
    C_{132n+,ijk}^\ast =& \mathrm{i}w V_i e^{\mathrm{i} (wW_i+x'X_i)} \frac{1-D_i}{\Delta_p(X_i)} e_0' B_+^{-1}(X_i) \left[ \mathbf{r}(R_j) \mathbf{r}(R_j)' K_{h_x}(X_j-X_i) K_{h_r}(R_j) \mathbf{1}_+(R_j) - B_+(X_i) \right] \\& B_+^{-1}(X_i) \left[ \mathbf{r}(R_k) Y_k K_{h_x}(X_k-X_i) K_{h_r}(R_k) \mathbf{1}_+(R_k) - A_{Y+}(X_i) \right] K_h(R_i) \delta_i, \\
    C_{141n+,ij}^\ast =& \mathrm{i}w V_i e^{\mathrm{i} (wW_i+x'X_i)} \frac{1-D_i}{\Delta_p(X_i)} e_0' B_+^{-1}(X_i) \left[ \mathbf{r}(R_j) \mathbf{r}(R_j)' K_{h_x}(X_j-X_i) K_{h_r}(R_j) \mathbf{1}_+(R_j) - B_+(X_i) \right] \\& B_+^{-1}(X_i) \left[ \mathbf{r}(R_j) \mathbf{r}(R_j)' K_{h_x}(X_j-X_i) K_{h_r}(R_j) \mathbf{1}_+(R_j) - B_+(X_i) \right] B_+^{-1}(X_i) A_{Y+}(X_i) K_h(R_i) \delta_i, \\
    C_{142n+,ijk}^\ast =& \mathrm{i}w V_i e^{\mathrm{i} (wW_i+x'X_i)} \frac{1-D_i}{\Delta_p(X_i)} e_0' B_+^{-1}(X_i) \left[ \mathbf{r}(R_j) \mathbf{r}(R_j)' K_{h_x}(X_j-X_i) K_{h_r}(R_j) \mathbf{1}_+(R_j) - B_+(X_i) \right] \\& B_+^{-1}(X_i) \left[ \mathbf{r}(R_k) \mathbf{r}(R_k)' K_{h_x}(X_k-X_i) K_{h_r}(R_k) \mathbf{1}_+(R_k) - B_+(X_i) \right] B_+^{-1}(X_i) A_{Y+}(X_i) K_h(R_i) \delta_i.
\end{align*}
The conditional expectation terms in the Hoeffding decompositions are
\begin{align*}
    \mathbb{E} \left( C_{132n+,ijk}^\ast | \mathcal{F}_i^\ast \right) =& \mathrm{i}w V_i e^{\mathrm{i} (wW_i+x'X_i)} \frac{1-D_i}{\Delta_p(X_i)} e_0' B_+^{-1}(X_i) K_h(R_i) \delta_i O_p(h_x^l) B_+^{-1}(X_i) O_p(h_x^l) = O_p(h_x^{2l}) ;\\
    \mathbb{E} \left( C_{142n+,ijk}^\ast | \mathcal{F}_i^\ast \right) =& \mathrm{i}w V_i e^{\mathrm{i} (wW_i+x'X_i)} \frac{1-D_i}{\Delta_p(X_i)} e_0' B_+^{-1}(X_i) O_p(h_x^l) B_+^{-1}(X_i) O_p(h_x^l) B_+^{-1}(X_i) A_{Y+}(X_i) K_h(R_i) \delta_i \\=& O_p(h_x^{2l}).
\end{align*}

The second-order moments of the envelopes $G_{C_{131n+}}$ and $G_{C_{141n+}}$ are bounded by 
\begin{align*}
    \mathbb{E} G_{C_{131n+}^\ast}^2 \le& \mathbb{E} \sup_{(w,x)\in\Omega} \bigg| \mathrm{i}w V_i e^{\mathrm{i} (wW_i+x'X_i)} \frac{1-D_i}{\Delta_p(X_i)} e_0' B_+^{-1}(X_i) [ \mathbf{r}(R_j) \mathbf{r}(R_j)' K_{h_x}(X_j-X_i) K_{h_r}(R_j) \mathbf{1}_+(R_j) \\&- B_+(X_i)]  B_+^{-1}(X_i) \left[ \mathbf{r}(R_j) Y_j K_{h_x}(X_j-X_i) K_{h_r}(R_j) \mathbf{1}_+(R_j) - A_{Y+}(X_i) \right] K_h(R_i) \delta_i \bigg|^2 \\
    \lesssim& O(\frac{1}{h_x^{3d}h_r^3h}). \\
    \mathbb{E} G_{C_{141n+}^\ast}^2 \le& \mathbb{E} \sup_{(w,x)\in\Omega} \bigg| \mathrm{i}w V_i e^{\mathrm{i} (wW_i+x'X_i)} \frac{1-D_i}{\Delta_p(X_i)} e_0' B_+^{-1}(X_i) [ \mathbf{r}(R_j) \mathbf{r}(R_j)' K_{h_x}(X_j-X_i) K_{h_r}(R_j) \mathbf{1}_+(R_j) \\&- B_+(X_i) ] B_+^{-1}(X_i) \left[ \mathbf{r}(R_j) \mathbf{r}(R_j)' K_{h_x}(X_j-X_i) K_{h_r}(R_j) \mathbf{1}_+(R_j) - B_+(X_i) \right] \\& B_+^{-1}(X_i) A_{Y+}(X_i) K_h(R_i) \delta_i \bigg|^2 \lesssim O(\frac{1}{h_x^{3d}h_r^3h}).
\end{align*}
Following Proposition 4 in \cite{Delgado2001}, the first- and second-order terms in the $U$-process are uniformly bounded since
\begin{align*}
    \mathbb{E} \sup_{(w,x)\in\Omega} \left\Vert \frac{2}{n-1} U_n^{(1)}(\pi_1 g_{C_{131n+}^\ast}) \right\Vert^2 &\lesssim \frac{\mathbb{E} G_{C_{131n+}^\ast}^2}{n(n-1)^2} = O(\frac{1}{n^3h_x^{3d}h_r^3h}) = o(\frac{1}{nh}). \\
    \mathbb{E} \sup_{(w,x)\in\Omega} \left\Vert \frac{1}{n-1} U_n^{(2)}(\pi_2 g_{C_{131n+}^\ast}) \right\Vert^2 &\lesssim \frac{\mathbb{E} G_{C_{131n+}^\ast}^2}{(n-1)^4} = O(\frac{1}{n^4h_x^{3d}h_r^3h}) = o(\frac{1}{nh}); \\
    \mathbb{E} \sup_{(w,x)\in\Omega} \left\Vert \frac{2}{n-1} U_n^{(1)}(\pi_1 g_{C_{141n+}^\ast}) \right\Vert^2 &\lesssim \frac{\mathbb{E} G_{C_{141n+}^\ast}^2}{n(n-1)^2} = O(\frac{1}{n^3h_x^{3d}h_r^3h}) = o(\frac{1}{nh}). \\
    \mathbb{E} \sup_{(w,x)\in\Omega} \left\Vert \frac{1}{n-1} U_n^{(2)}(\pi_2 g_{C_{141n+}^\ast}) \right\Vert^2 &\lesssim \frac{\mathbb{E} G_{C_{141n+}^\ast}^2}{(n-1)^4} = O(\frac{1}{n^4h_x^{3d}h_r^3h}) = o(\frac{1}{nh}).
\end{align*}

The second-order moments of envelopes $G_{C_{132n+}^\ast}$ and $G_{C_{142n+}^\ast}$ are bounded by
\begin{align*}
    \mathbb{E} G_{C_{132n+}^\ast}^2 \le& \mathbb{E} \sup_{(w,x)\in\Omega} \bigg| \mathrm{i}w V_i e^{\mathrm{i} (wW_i+x'X_i)} \frac{1-D_i}{\Delta_p(X_i)} e_0' B_+^{-1}(X_i) [ \mathbf{r}(R_j) \mathbf{r}(R_j)' K_{h_x}(X_j-X_i) K_{h_r}(R_j) \mathbf{1}_+(R_j) \\&- B_+(X_i)]  B_+^{-1}(X_i) \left[ \mathbf{r}(R_k) Y_k K_{h_x}(X_k-X_i) K_{h_r}(R_k) \mathbf{1}_+(R_k) - A_{Y+}(X_i) \right] K_h(R_i) \delta_i \bigg|^2 \\
    \lesssim& O(\frac{1}{h_x^{2d}h_r^2h}). 
\end{align*}
\begin{align*}
    \mathbb{E} G_{C_{142n+}^\ast}^2 \le& \mathbb{E} \sup_{(w,x)\in\Omega} \bigg| \mathrm{i}w V_i e^{\mathrm{i} (wW_i+x'X_i)} \frac{1-D_i}{\Delta_p(X_i)} e_0' B_+^{-1}(X_i) [ \mathbf{r}(R_j) \mathbf{r}(R_j)' K_{h_x}(X_j-X_i) K_{h_r}(R_j) \mathbf{1}_+(R_j) \\&- B_+(X_i) ] B_+^{-1}(X_i) \left[ \mathbf{r}(R_k) \mathbf{r}(R_k)' K_{h_x}(X_k-X_i) K_{h_r}(R_k) \mathbf{1}_+(R_k) - B_+(X_i) \right] \\& B_+^{-1}(X_i) A_{Y+}(X_i) K_h(R_i) \delta_i \bigg|^2 \lesssim O(\frac{1}{h_x^{2d}h_r^2h}).
\end{align*}
Following Proposition 4 in \cite{Delgado2001}, the second- and third-order terms in the $U$-process are uniformly bounded since
\begin{align*}
    \mathbb{E} \sup_{(w,x)\in\Omega} \left\Vert 3 \frac{n-2}{n-1} U_n^{(2)}(\pi_2 g_{C_{132n+}^\ast}) \right\Vert^2 &\lesssim \frac{(n-2)^2 \mathbb{E} G_{C_{132n+}^\ast}^2}{(n-1)^4} = O(\frac{1}{n^2h_x^{2d}h_r^2h}) = o(\frac{1}{nh}), \\
    \mathbb{E} \sup_{(w,x)\in\Omega} \left\Vert \frac{n-2}{n-1} U_n^{(3)}(\pi_3 g_{C_{132n+}^\ast}) \right\Vert^2 &\lesssim \frac{n \mathbb{E} G_{C_{132n+}^\ast}^2}{(n-1)^4} = O(\frac{1}{n^3h_x^{2d}h_r^2h}) = o(\frac{1}{nh}); \\
    \mathbb{E} \sup_{(w,x)\in\Omega} \left\Vert 3 \frac{n-2}{n-1} U_n^{(2)}(\pi_2 g_{C_{142n+}^\ast}) \right\Vert^2 &\lesssim \frac{(n-2)^2 \mathbb{E} G_{C_{142n+}^\ast}^2}{(n-1)^4} = O(\frac{1}{n^2h_x^{2d}h_r^2h}) = o(\frac{1}{nh}), \\
    \mathbb{E} \sup_{(w,x)\in\Omega} \left\Vert \frac{n-2}{n-1} U_n^{(3)}(\pi_3 g_{C_{142n+}^\ast}) \right\Vert^2 &\lesssim \frac{n \mathbb{E} G_{C_{142n+}^\ast}^2}{(n-1)^4} = O(\frac{1}{n^3h_x^{2d}h_r^2h}) = o(\frac{1}{nh}).
\end{align*}
Then it can be concluded that $C_{13n+}^\ast(w,x)$ and $C_{14n+}^\ast(w,x)$ are uniformly bounded by $\frac{1}{\sqrt{nh}}$. 
Summarizing all the results about $C_{11n+}^\ast(w,x)$--$C_{14n+}^\ast(w,x)$, $\sup_{(w,x)\in\Omega}|C_{1n+}^\ast(w,x)|=o_p(\frac{1}{\sqrt{nh}})$. By the same investigation procedure, the counterpart on the negative side, $C_{1n-}^\ast(w,x)$ can be shown to equal $o_p(\frac{1}{\sqrt{nh}})$ uniformly on $\Omega$ by the same argument as for $C_{1n+}^\ast(w,x)$ by replacing the positive signal with the negative signal in matrix functional $B$, vector functional $A$, and scalar indicator function $\mathbf{1}_\pm(R)$. Then $\sup_{(w,x)\in\Omega}|C_{1n}^\ast(w,x)|=o_p(\frac{1}{\sqrt{nh}})$. The same argument with $Y_j$ replaced by $D_j\tau(X_i)$ in the raw processes shows that $\sup_{(w,x)\in\Omega}|C_{2n}^\ast(w,x)|=o_p(\frac{1}{\sqrt{nh}})$.

For $C_{3n}^\ast(w,x)$--$C_{5n}^\ast(w,x)$, multiplying the interaction terms of $\Delta_{\hat\mu}(X_i) - \Delta_\mu(X_i)$ and $\Delta_{\hat{p}}(X_i)-\Delta_p(X_i)$ shows that the approximation error terms $(e_0' B_+^{-1}(X_i) A_{Z+}(X_i) - m_Z(X_i, 0^+))$ and $( e_0' B_-^{-1}(X_i) A_{Z-}(X_i) - m_Z(X_i, 0^-))$ are of order $h_r^2$, which is $o(\frac{1}{\sqrt{nh}})$.
Any component in the leading terms can be written as $$C_{sn\pm}^\ast(w,x) = \frac{1}{n-1} C_{s1n\pm}^\ast(w,x) + \frac{n-2}{n-1} C_{s2n\pm}^\ast(w,x), $$ where $C_{s1n\pm}^\ast(w,x) = U_n^{(2)}(C_{s1n\pm,ij}^\ast)$ is a second-order $U$-process and $C_{s2n\pm}^\ast(w,x) = U_n^{(3)}(C_{s2n\pm,ijk}^\ast)$ is a third-order $U$-process. They have the same structure as the aforementioned processes $C_{131n+}^\ast(w,x)$ and $C_{132n+}^\ast(w,x)$ investigated in detail above. Then the asymptotic property of processes $C_{s1n\pm}^\ast(w,x)$ and $C_{s2n\pm}^\ast(w,x)$ can be investigated in the same way.

The Hoeffding decompositions of the $U$-processes $C_{s1n\pm}(w,x)$ and $C_{s2n\pm}(w,x)$ are
\begin{align*}
    C_{s1n\pm}^\ast(w,x) =& 2U_n^{(1)}(\pi_1 g_{C_{s1n\pm}^\ast}) + U_n^{(2)}(\pi_2 g_{C_{s1n\pm}^\ast}), \\
    C_{s2n\pm}^\ast(w,x) =& \frac{1}{n} \sum_{i=1}^n \mathbb{E} \left( C_{s2n\pm,ijk}^\ast | \mathcal{F}_i^\ast \right) + 3 U_n^{(2)} (\pi_2 g_{C_{s2n\pm}^\ast}) + U_n^{(3)} (\pi_3 g_{C_{s2n\pm}^\ast}), 
\end{align*}
where the conditional expectation terms in the Hoeffding decomposition of $C_{s2n\pm}^\ast(w,x)$ are $\mathbb{E} \left( C_{s2n\pm,ijk}^\ast | \mathcal{F}_i^\ast \right) = O_p(h_x^{2l})$.

For the high-order terms in the Hoeffding decompositions, we investigate the second moments of their envelopes, $\mathbb{E}G_{C_{s1n\pm}^\ast}^2 \lesssim O(h_x^{-3d} h_r^{-3} h^{-1})$ and $\mathbb{E}G_{C_{s2n\pm}^\ast}^2 \lesssim O(h_x^{-2d} h_r^{-2} h^{-1})$.
Following Proposition 4 in \cite{Delgado2001}, the high-order terms in the Hoeffding decompositions are uniformly bounded by
\begin{align*}
    \mathbb{E} \sup_{(w,x)\in\Omega} \left| \frac{2}{n-1} U_n^{(1)}(\pi_1 g_{C_{s1n\pm}^\ast}) \right|^2 &\lesssim \frac{\mathbb{E} G_{C_{s1n\pm}^\ast}^2}{n(n-1)^2} = O(\frac{1}{n^3h_x^{3d}h_r^3h}) = o(\frac{1}{nh}), \\
    \mathbb{E} \sup_{(w,x)\in\Omega} \left| \frac{1}{n-1} U_n^{(2)}(\pi_2 g_{C_{s1n\pm}^\ast}) \right|^2 &\lesssim \frac{\mathbb{E} G_{C_{s1n\pm}^\ast}^2}{(n-1)^4} = O(\frac{1}{n^4h_x^{3d}h_r^3h}) = o(\frac{1}{nh}); \\
    \mathbb{E} \sup_{(w,x)\in\Omega} \left| 3 \frac{n-2}{n-1} U_n^{(2)}(\pi_2 g_{C_{s2n\pm}^\ast}) \right|^2 &\lesssim \frac{(n-2)^2 \mathbb{E} G_{C_{s2n\pm}^\ast}^2}{(n-1)^4} = O(\frac{1}{n^2h_x^{2d}h_r^2h}) = o(\frac{1}{nh}), \\
    \mathbb{E} \sup_{(w,x)\in\Omega} \left| \frac{n-2}{n-1} U_n^{(3)}(\pi_3 g_{C_{s2n\pm}^\ast}) \right|^2 &\lesssim \frac{n \mathbb{E} G_{C_{s2n\pm}^\ast}^2}{(n-1)^4} = O(\frac{1}{n^3h_x^{2d}h_r^2h}) = o(\frac{1}{nh}). 
\end{align*}
Then it can be concluded that $\frac{1}{n-1} C_{s1n\pm}^\ast(w,x)$ and $\frac{n-2}{n-1} C_{s2n\pm}^\ast(w,x)$ are uniformly bounded by $\frac{1}{\sqrt{nh}}$ over $(w,x)\in\Omega$ for any subscript $s$ in the components of $C_{3n}^\ast(w,x)$--$C_{5n}^\ast(w,x)$. Furthermore, $\sup_{(w,x)\in\Omega}$ $|C_{sn}^\ast(w,x)| = o_p(\frac{1}{\sqrt{nh}})$ for $s=3,4,5$. In sum, $$ \sup_{(w,x)\in\Omega} \left| \hat{U}_n^\ast(w,x) - U_n^\ast(w,x) \right| = o_p(\frac{1}{\sqrt{nh}}).$$

\subsection{Proof of Theorem \ref{thm4}}
Let $U_{1n}^\ast(w,x) = \frac{1}{n} \sum_{i=1}^n V_i f_{1i}(w,x)$ and $U_{2n}^\ast(w,x) = \frac{1}{n} \sum_{i=1}^n V_i f_{2i}(w,x)$.
When Assumption B1 holds, $\hat{U}_n^\ast(w,x)$ with the local constant estimator $\hat\tau(X_i)$ can be written as $$ \sqrt{nh} \hat{U}_n^\ast(w,x) = \sqrt{\frac{h}{n}} \sum_{i=1}^n V_i g_{1i}(w,x) + \sqrt{\frac{h}{n}} \sum_{i=1}^n V_i g_i(w,x) + o_p(1). $$
Suppose Assumption B1 holds with the local constant estimator $\hat\tau(X_i)$ in $\hat{U}_n^\ast(w,x)$ or Assumption B1' holds with the local linear estimator $\hat\tau(X_i)$ in $\hat{U}_n^\ast(w,x)$. 
When $h=o(h_r)$ holds, $\hat{U}_n^\ast(w,x)$ can be expressed as $$ \sqrt{nh} \hat{U}_n^\ast(w,x) = \sqrt{\frac{h}{n}} \sum_{i=1}^n V_i g_{0i}(w,x) + \sqrt{\frac{h}{n}} \sum_{i=1}^n V_i g_i(w,x) + o_p(1). $$

\subsubsection{Under the null}
Under $\mathbb{H}_0$, $g_i(w,x)=0$. By the multiplier central limit theorem, the bootstrap version $\hat{U}_n^\ast(w,x)$ converges to the same Gaussian processes $U_{\infty,1}(w,x)$ and $U_{\infty,0}(w,x)$, respectively, as $\sqrt{\frac{h}{n}} \sum_{i=1}^n g_{1i}(w,x)$ and $\sqrt{\frac{h}{n}} \sum_{i=1}^n g_{0i}(w,x)$ do. Then it can be concluded that under $\mathbb{H}_0$, when Assumption B1 holds for the local constant estimator $\hat\tau(X_i)$, $$ \sqrt{nh} \hat{U}_n^\ast(w,x) \Longrightarrow_\ast U_{\infty,1}(w,x), $$ 
Suppose Assumption B1 holds with the local constant estimator $\hat\tau(X_i)$ in $\hat{U}_n^\ast(w,x)$ or Assumption B1' holds with the local linear estimator $\hat\tau(X_i)$ in $\hat{U}_n^\ast(w,x)$. 
When $h=o(h_r)$ holds, $\hat{U}_n^\ast(w,x)$ can be expressed as $$ \sqrt{nh} \hat{U}_n^\ast(w,x) \Longrightarrow_\ast U_{\infty,0}(w,x). $$ 

\subsubsection{Under the fixed alternative}
Under $\mathbb{H}_1$, $g_i(w,x) = e^{\mathrm{i} x'X_i} f_{R|X}(0|X_i) ( \varphi_{W|XR}(w|X_i,0^+) - \varphi_{W|XR}(w|X_i,0^-) )$. Under different conditions, the common component in the expression of $\sqrt{nh} \hat{U}_n^\ast(w,x)$ is $\sqrt{\frac{h}{n}} \sum_{i=1}^n V_i g_i(w,x)$.
Since $g_i(w,x)$ is uniformly bounded, $\frac{1}{\sqrt{n}} \sum_{i=1}^n V_i g_i(w,x)$ converges to a zero-mean Gaussian process and is $O_p(1)$. Then $\sqrt{\frac{h}{n}} \sum_{i=1}^n V_i g_i(w,x) = \sqrt{h} \frac{1}{\sqrt{n}} \sum_{i=1}^n V_i g_i(w,x) = \sqrt{h} O_p(1) = o_p(1)$.

When Assumption B1 holds for $\hat{U}_n^\ast(w,x)$ with the local constant estimator $\hat\tau(X_i)$, $$ \sqrt{nh} \hat{U}_n^\ast(w,x) = \sqrt{\frac{h}{n}} \sum_{i=1}^n V_i g_{1i}(w,x) + o_p(1) \Longrightarrow_\ast U_{\infty,1}(w,x), $$
When $h=o(h_r)$ holds while Assumption B1 holds with the local constant estimator $\hat\tau(X_i)$ in $\hat{U}_n^\ast(w,x)$ or Assumption B1' holds with the local linear estimator $\hat\tau(X_i)$ in $\hat{U}_n^\ast(w,x)$, $\hat{U}_n^\ast(w,x)$ can be expressed as $$ \sqrt{nh} \hat{U}_n^\ast(w,x) = \sqrt{\frac{h}{n}} \sum_{i=1}^n V_i g_{0i}(w,x) + o_p(1) \Longrightarrow_\ast U_{\infty,0}(w,x). $$

$U_{\infty,1}(w,x)$ is a zero-mean Gaussian process with covariance $\mathcal{K}_1(w_1, x_1; w_2, x_2)$, which is the same as the limiting process of $\sqrt{nh}\hat{U}_n(w,x)$ under $\mathbb{H}_0$ with the local constant estimator $\hat\tau(X_i)$ given Assumption B1.

$U_{\infty,0}(w,x)$ is another zero-mean Gaussian process with covariance $\mathcal{K}_0(w_1, x_1; w_2, x_2)$, which is the same as the limiting process of $\sqrt{nh}\hat{U}_n(w,x)$ under $\mathbb{H}_0$ with the local constant estimator $\hat\tau(X_i)$ given Assumption B1 and $h=o(h_r)$, or with the local linear estimator $\hat\tau(X_i)$ given Assumption B1' and $h=o(h_r)$.

\subsubsection{Under local alternatives}

Under $\mathbb{H}_{1n}$, $g_i(w,x) = \frac{1}{\sqrt{nh}} e^{\mathrm{i} x'X_i} f_{R|X}(0|X_i) \lambda(w,X_i)$. The common component in $\sqrt{nh} \hat{U}_n^\ast(w,x)$ is $$ \sqrt{\frac{h}{n}} \sum_{i=1}^n V_i g_i(w,x) = \frac{1}{n} \sum_{i=1}^n V_i e^{\mathrm{i} x'X_i} f_{R|X}(0|X_i) \lambda(w,X_i). $$
Given $\mathbb{E} \sup_{(w,x)\in\Omega} \left| V_i \lambda(w,X_i) e^{\mathrm{i} x'X_i} f_{R|X}(0|X_i) \right| <\infty $, it can be deduced that $\sqrt{\frac{h}{n}} \sum_{i=1}^n V_i g_i(w,x)$ is $o_p(1)$ uniformly over $(w,x)\in\Omega$ by ULLN:
\begin{align*}
& \sup_{(w,x)\in\Omega} \left| \frac{1}{n} \sum_{i=1}^n V_i \lambda(w,X_i) e^{\mathrm{i} x'X_i} f_{R|X}(0|X_i) \right| \\ =& \sup_{(w,x)\in\Omega} \left| \frac{1}{n} \sum_{i=1}^n V_i \lambda(w,X_i) e^{\mathrm{i} x'X_i} f_{R|X}(0|X_i) - \mathbb{E} \left[ V_i \lambda(w,X_i) e^{\mathrm{i} x'X_i} f_{R|X}(0|X_i) \right] \right| = o_p(1).    
\end{align*}
The equality follows from the fact that the multiplier $V_i$ satisfies $$\mathbb{E} \left[ V_i \lambda(w,X_i) e^{\mathrm{i} x'X_i} f_{R|X}(0|X_i) \right] = \mathbb{E} V_i \mathbb{E} \left[ \lambda(w,X_i) e^{\mathrm{i} x'X_i} f_{R|X}(0|X_i) \right] = 0. $$
When Assumption B1 holds for $\hat{U}_n^\ast(w,x)$ with the local constant estimator $\hat\tau(X_i)$, $$ \sqrt{nh} \hat{U}_n^\ast(w,x) = \sqrt{\frac{h}{n}} \sum_{i=1}^n V_i g_{1i}(w,x) + o_p(1) \Longrightarrow_\ast U_{\infty,1}(w,x), $$
When $h=o(h_r)$ holds while Assumption B1 holds with the local constant estimator $\hat\tau(X_i)$ in $\hat{U}_n(w,x)$ or Assumption B1' holds with the local linear estimator $\hat\tau(X_i)$ in $\hat{U}_n(w,x)$, $\hat{U}_n^\ast(w,x)$ can be expressed as $$ \sqrt{nh} \hat{U}_n^\ast(w,x) = \sqrt{\frac{h}{n}} \sum_{i=1}^n V_i g_{0i}(w,x) + o_p(1) \Longrightarrow_\ast U_{\infty,0}(w,x). $$

$U_{\infty,1}(w,x)$ and $U_{\infty,0}(w,x)$ are zero-mean Gaussian processes with covariance $\mathcal{K}_1(w_1, x_1; w_2, x_2)$ and $\mathcal{K}_0(w_1, x_1; w_2, x_2)$, which are the same as the limiting processes of $\sqrt{nh}\hat{U}_n(w,x)$ and $\sqrt{nh}\hat{U}_n^\ast(w,x)$ under the null.

\subsection{Supplementary Numerical Study} 
This section reports supplementary numerical experiments assessing the finite-sample performance of the proposed test with the local linear estimator of the RD treatment effect $\hat\tau(X_i)$. 
The DGPs are the same as those in Section \ref{sec:simulation}, adapted from \cite{Hsu2019, Hsu2021}. 
The estimation methods are the same as those in Section \ref{sec:simulation} and are consistent with the recommendations in \cite{Imbens2012a}, \cite{Calonico2014}, \cite{Calonico2019} and \cite{Hsu2019}. 
Bandwidth selection is also based on \cite{Calonico2014} and \cite{Calonico2019} but is slightly different from that in the main text. The estimating bandwidth $h_x=h_r$ is modified by an under-smoothing factor $n^{1/5-1/{k_1}}$, while the testing bandwidth $h$ is set as $h = h_r \times n^{1/5-1/{k_2}}$, so that $h=o(h_r)$. Other simulation settings are the same as those in Section \ref{sec:simulation}. 
Tables \ref{tab:DGP_null_ll} and \ref{tab:DGP_alt_ll} report the empirical size and power, respectively.

\begin{table}[H]
    \centering
    \caption{Rejection rates under DGPs 1--3 (local linear estimator)}
    \label{tab:DGP_null_ll}
    \setlength{\extrarowheight}{-1pt}
    \begin{adjustbox}{max width=\textwidth, max height=\textheight}
    \begin{tabular}{@{}ccccccccccccccccccccc@{}}
        \toprule
        \multirow{3}{*}{DGP} & \multirow{3}{*}{$k_1$} & $k_2$ & \multicolumn{6}{c}{3} & \multicolumn{6}{c}{3.25} & \multicolumn{6}{c}{3.5} \\
        \cmidrule(lr){4-9} \cmidrule(lr){10-15} \cmidrule(lr){16-21}
        & & type & \multicolumn{3}{c}{KS} & \multicolumn{3}{c}{CvM} & \multicolumn{3}{c}{KS} & \multicolumn{3}{c}{CvM} & \multicolumn{3}{c}{KS} & \multicolumn{3}{c}{CvM} \\
        \cmidrule(lr){4-6} \cmidrule(lr){7-9} \cmidrule(lr){10-12} \cmidrule(lr){13-15} \cmidrule(lr){16-18} \cmidrule(lr){19-21}
        & & $n$ & 1\% & 5\% & 10\% & 1\% & 5\% & 10\% & 1\% & 5\% & 10\% & 1\% & 5\% & 10\% & 1\% & 5\% & 10\% & 1\% & 5\% & 10\% \\
        \midrule
        \multirow{12}{*}{1} & \multirow{4}{*}{3.5} & 500 & 0.010 & 0.050 & 0.091 & 0.009 & 0.047 & 0.092 & 0.011 & 0.049 & 0.092 & 0.009 & 0.050 & 0.092 & 0.013 & 0.051 & 0.096 & 0.012 & 0.051 & 0.093 \\
         & & 1000 & 0.015 & 0.071 & 0.130 & 0.014 & 0.065 & 0.129 & 0.013 & 0.075 & 0.132 & 0.012 & 0.073 & 0.137 & 0.014 & 0.072 & 0.134 & 0.015 & 0.070 & 0.137 \\
         & & 2000 & 0.015 & 0.056 & 0.112 & 0.014 & 0.059 & 0.112 & 0.014 & 0.059 & 0.112 & 0.016 & 0.061 & 0.108 & 0.012 & 0.058 & 0.113 & 0.010 & 0.060 & 0.109 \\
         & & 4000 & 0.008 & 0.049 & 0.107 & 0.007 & 0.046 & 0.103 & 0.005 & 0.055 & 0.107 & 0.005 & 0.048 & 0.108 & 0.007 & 0.052 & 0.107 & 0.009 & 0.048 & 0.103 \\
        \cmidrule{2-21}
         & \multirow{4}{*}{3.75} & 500 & 0.009 & 0.048 & 0.092 & 0.008 & 0.048 & 0.096 & 0.011 & 0.048 & 0.094 & 0.011 & 0.051 & 0.098 & 0.011 & 0.051 & 0.099 & 0.010 & 0.053 & 0.093 \\
         & & 1000 & 0.014 & 0.076 & 0.130 & 0.014 & 0.072 & 0.134 & 0.014 & 0.070 & 0.134 & 0.014 & 0.072 & 0.135 & 0.018 & 0.074 & 0.144 & 0.016 & 0.073 & 0.147 \\
         & & 2000 & 0.016 & 0.057 & 0.108 & 0.016 & 0.057 & 0.107 & 0.012 & 0.055 & 0.104 & 0.010 & 0.056 & 0.107 & 0.012 & 0.055 & 0.110 & 0.010 & 0.061 & 0.115 \\
         & & 4000 & 0.007 & 0.046 & 0.108 & 0.005 & 0.050 & 0.104 & 0.007 & 0.051 & 0.110 & 0.007 & 0.047 & 0.102 & 0.011 & 0.053 & 0.101 & 0.007 & 0.049 & 0.107 \\
        \cmidrule{2-21}
         & \multirow{4}{*}{4} & 500 & 0.009 & 0.049 & 0.098 & 0.009 & 0.053 & 0.103 & 0.008 & 0.053 & 0.098 & 0.010 & 0.054 & 0.094 & 0.007 & 0.056 & 0.100 & 0.009 & 0.061 & 0.095 \\
         & & 1000 & 0.012 & 0.067 & 0.135 & 0.013 & 0.071 & 0.131 & 0.018 & 0.068 & 0.141 & 0.017 & 0.068 & 0.145 & 0.017 & 0.068 & 0.148 & 0.019 & 0.065 & 0.147 \\
         & & 2000 & 0.013 & 0.056 & 0.108 & 0.015 & 0.058 & 0.107 & 0.011 & 0.055 & 0.110 & 0.010 & 0.058 & 0.113 & 0.009 & 0.056 & 0.116 & 0.010 & 0.060 & 0.117 \\
         & & 4000 & 0.005 & 0.050 & 0.107 & 0.005 & 0.049 & 0.103 & 0.011 & 0.051 & 0.105 & 0.008 & 0.050 & 0.103 & 0.011 & 0.054 & 0.108 & 0.010 & 0.051 & 0.107 \\
        \midrule
        \multirow{12}{*}{2} & \multirow{4}{*}{3.5} & 500 & 0.008 & 0.053 & 0.108 & 0.008 & 0.051 & 0.097 & 0.009 & 0.053 & 0.099 & 0.009 & 0.049 & 0.090 & 0.008 & 0.051 & 0.095 & 0.010 & 0.047 & 0.090 \\
         & & 1000 & 0.014 & 0.067 & 0.125 & 0.013 & 0.071 & 0.124 & 0.013 & 0.065 & 0.125 & 0.010 & 0.073 & 0.132 & 0.011 & 0.066 & 0.135 & 0.012 & 0.071 & 0.137 \\
         & & 2000 & 0.014 & 0.053 & 0.118 & 0.013 & 0.056 & 0.115 & 0.012 & 0.060 & 0.113 & 0.010 & 0.059 & 0.113 & 0.012 & 0.058 & 0.114 & 0.009 & 0.060 & 0.112 \\
         & & 4000 & 0.011 & 0.054 & 0.124 & 0.010 & 0.054 & 0.120 & 0.011 & 0.053 & 0.112 & 0.012 & 0.051 & 0.112 & 0.012 & 0.052 & 0.114 & 0.012 & 0.053 & 0.112 \\
        \cmidrule{2-21}
         & \multirow{4}{*}{3.75} & 500 & 0.008 & 0.055 & 0.097 & 0.008 & 0.049 & 0.096 & 0.008 & 0.053 & 0.098 & 0.010 & 0.048 & 0.101 & 0.008 & 0.052 & 0.096 & 0.008 & 0.049 & 0.099 \\
         & & 1000 & 0.014 & 0.068 & 0.122 & 0.013 & 0.069 & 0.127 & 0.010 & 0.069 & 0.136 & 0.009 & 0.067 & 0.137 & 0.014 & 0.074 & 0.140 & 0.014 & 0.068 & 0.143 \\
         & & 2000 & 0.011 & 0.062 & 0.113 & 0.010 & 0.061 & 0.111 & 0.011 & 0.059 & 0.115 & 0.010 & 0.063 & 0.112 & 0.012 & 0.056 & 0.117 & 0.009 & 0.060 & 0.114 \\
         & & 4000 & 0.009 & 0.049 & 0.118 & 0.012 & 0.047 & 0.113 & 0.012 & 0.052 & 0.115 & 0.012 & 0.050 & 0.113 & 0.011 & 0.057 & 0.117 & 0.012 & 0.054 & 0.112 \\
        \cmidrule{2-21}
         & \multirow{4}{*}{4} & 500 & 0.010 & 0.054 & 0.103 & 0.009 & 0.055 & 0.099 & 0.007 & 0.054 & 0.104 & 0.007 & 0.056 & 0.103 & 0.009 & 0.058 & 0.102 & 0.009 & 0.058 & 0.095 \\
         & & 1000 & 0.012 & 0.066 & 0.131 & 0.010 & 0.073 & 0.132 & 0.014 & 0.070 & 0.134 & 0.013 & 0.065 & 0.138 & 0.012 & 0.074 & 0.144 & 0.013 & 0.069 & 0.142 \\
         & & 2000 & 0.011 & 0.063 & 0.117 & 0.012 & 0.063 & 0.114 & 0.011 & 0.058 & 0.113 & 0.010 & 0.061 & 0.110 & 0.013 & 0.057 & 0.117 & 0.013 & 0.057 & 0.118 \\
         & & 4000 & 0.012 & 0.052 & 0.111 & 0.011 & 0.050 & 0.111 & 0.010 & 0.055 & 0.119 & 0.010 & 0.053 & 0.111 & 0.011 & 0.060 & 0.116 & 0.011 & 0.059 & 0.115 \\
        \midrule
        \multirow{12}{*}{3} & \multirow{4}{*}{3.5} & 500 & 0.008 & 0.051 & 0.096 & 0.010 & 0.052 & 0.094 & 0.009 & 0.050 & 0.099 & 0.010 & 0.046 & 0.096 & 0.008 & 0.048 & 0.094 & 0.009 & 0.050 & 0.095 \\
         & & 1000 & 0.014 & 0.069 & 0.123 & 0.013 & 0.067 & 0.125 & 0.014 & 0.068 & 0.124 & 0.012 & 0.065 & 0.130 & 0.012 & 0.070 & 0.137 & 0.013 & 0.070 & 0.136 \\
         & & 2000 & 0.015 & 0.056 & 0.116 & 0.013 & 0.061 & 0.112 & 0.013 & 0.064 & 0.114 & 0.013 & 0.065 & 0.114 & 0.009 & 0.062 & 0.118 & 0.013 & 0.062 & 0.111 \\
         & & 4000 & 0.007 & 0.061 & 0.124 & 0.009 & 0.059 & 0.118 & 0.008 & 0.057 & 0.119 & 0.011 & 0.054 & 0.119 & 0.008 & 0.055 & 0.125 & 0.007 & 0.051 & 0.120 \\
        \cmidrule{2-21}
         & \multirow{4}{*}{3.75} & 500 & 0.009 & 0.050 & 0.098 & 0.008 & 0.046 & 0.100 & 0.009 & 0.049 & 0.101 & 0.008 & 0.045 & 0.101 & 0.009 & 0.049 & 0.098 & 0.008 & 0.050 & 0.100 \\
         & & 1000 & 0.015 & 0.069 & 0.119 & 0.012 & 0.068 & 0.125 & 0.011 & 0.064 & 0.130 & 0.012 & 0.067 & 0.135 & 0.012 & 0.067 & 0.140 & 0.016 & 0.066 & 0.134 \\
         & & 2000 & 0.011 & 0.063 & 0.110 & 0.012 & 0.065 & 0.110 & 0.010 & 0.065 & 0.117 & 0.011 & 0.065 & 0.111 & 0.011 & 0.057 & 0.120 & 0.007 & 0.058 & 0.119 \\
         & & 4000 & 0.008 & 0.055 & 0.122 & 0.009 & 0.053 & 0.124 & 0.007 & 0.053 & 0.122 & 0.010 & 0.048 & 0.122 & 0.006 & 0.061 & 0.115 & 0.006 & 0.057 & 0.118 \\
        \cmidrule{2-21}
        & \multirow{4}{*}{4} & 500 & 0.010 & 0.054 & 0.099 & 0.009 & 0.052 & 0.104 & 0.007 & 0.052 & 0.098 & 0.008 & 0.054 & 0.101 & 0.011 & 0.055 & 0.099 & 0.010 & 0.056 & 0.093 \\
        & & 1000 & 0.013 & 0.066 & 0.135 & 0.010 & 0.065 & 0.134 & 0.013 & 0.068 & 0.138 & 0.014 & 0.068 & 0.133 & 0.014 & 0.070 & 0.142 & 0.017 & 0.067 & 0.143 \\
        & & 2000 & 0.011 & 0.061 & 0.112 & 0.011 & 0.063 & 0.109 & 0.011 & 0.059 & 0.115 & 0.008 & 0.058 & 0.115 & 0.013 & 0.053 & 0.116 & 0.011 & 0.054 & 0.114 \\
        & & 4000 & 0.008 & 0.055 & 0.120 & 0.009 & 0.053 & 0.115 & 0.008 & 0.059 & 0.118 & 0.006 & 0.056 & 0.115 & 0.006 & 0.057 & 0.120 & 0.005 & 0.055 & 0.119 \\
        \bottomrule
    \end{tabular}
    \end{adjustbox}
\end{table}

Table \ref{tab:DGP_null_ll} reports the empirical sizes under $\mathbb{H}_0$ for DGPs 1--3. The results show that the proposed test has accurate size across different values of the under-smoothing factor $k_1$ for estimating bandwidth $h_r$. The reported rejection rates converge to the nominal levels as the sample size grows, demonstrating the size control of the proposed test. 

\begin{table}[H]
    \centering
    \caption{Rejection rates under DGPs 4--6 (local linear estimator)}
    \label{tab:DGP_alt_ll}
    \setlength{\extrarowheight}{-1pt}
    \begin{adjustbox}{max width=\textwidth, max height=\textheight}
    \begin{tabular}{@{}ccccccccccccccccccccc@{}}
        \toprule
        \multirow{3}{*}{DGP} & \multirow{3}{*}{$k_1$} & $k_2$ & \multicolumn{6}{c}{3} & \multicolumn{6}{c}{3.25} & \multicolumn{6}{c}{3.5} \\
        \cmidrule(lr){4-9} \cmidrule(lr){10-15} \cmidrule(lr){16-21}
        & & type & \multicolumn{3}{c}{KS} & \multicolumn{3}{c}{CvM} & \multicolumn{3}{c}{KS} & \multicolumn{3}{c}{CvM} & \multicolumn{3}{c}{KS} & \multicolumn{3}{c}{CvM} \\
        \cmidrule(lr){4-6} \cmidrule(lr){7-9} \cmidrule(lr){10-12} \cmidrule(lr){13-15} \cmidrule(lr){16-18} \cmidrule(lr){19-21}
        & & $n$ & 1\% & 5\% & 10\% & 1\% & 5\% & 10\% & 1\% & 5\% & 10\% & 1\% & 5\% & 10\% & 1\% & 5\% & 10\% & 1\% & 5\% & 10\% \\
        \midrule
        \multirow{12}{*}{4} & \multirow{4}{*}{3.5} & 500 & 0.010 & 0.056 & 0.111 & 0.004 & 0.051 & 0.098 & 0.008 & 0.058 & 0.118 & 0.006 & 0.055 & 0.108 & 0.010 & 0.066 & 0.143 & 0.010 & 0.054 & 0.111 \\
         & & 1000 & 0.024 & 0.118 & 0.208 & 0.013 & 0.069 & 0.131 & 0.034 & 0.133 & 0.246 & 0.014 & 0.078 & 0.139 & 0.048 & 0.169 & 0.286 & 0.018 & 0.085 & 0.162 \\
         & & 2000 & 0.117 & 0.319 & 0.458 & 0.031 & 0.103 & 0.192 & 0.169 & 0.396 & 0.528 & 0.035 & 0.125 & 0.232 & 0.244 & 0.465 & 0.602 & 0.041 & 0.149 & 0.260 \\
         & & 4000 & 0.348 & 0.617 & 0.731 & 0.058 & 0.190 & 0.338 & 0.466 & 0.710 & 0.802 & 0.082 & 0.252 & 0.400 & 0.574 & 0.775 & 0.846 & 0.109 & 0.310 & 0.506 \\
        \cmidrule{2-21}
         & \multirow{4}{*}{3.75} & 500 & 0.010 & 0.063 & 0.137 & 0.004 & 0.049 & 0.099 & 0.013 & 0.072 & 0.154 & 0.006 & 0.057 & 0.114 & 0.015 & 0.093 & 0.187 & 0.007 & 0.058 & 0.120 \\
         & & 1000 & 0.038 & 0.165 & 0.284 & 0.015 & 0.080 & 0.154 & 0.057 & 0.205 & 0.348 & 0.020 & 0.091 & 0.170 & 0.089 & 0.271 & 0.409 & 0.022 & 0.100 & 0.193 \\
         & & 2000 & 0.199 & 0.449 & 0.592 & 0.039 & 0.136 & 0.239 & 0.303 & 0.540 & 0.667 & 0.049 & 0.177 & 0.284 & 0.391 & 0.620 & 0.738 & 0.066 & 0.205 & 0.359 \\
         & & 4000 & 0.520 & 0.754 & 0.835 & 0.089 & 0.263 & 0.418 & 0.638 & 0.823 & 0.887 & 0.115 & 0.339 & 0.547 & 0.729 & 0.865 & 0.914 & 0.167 & 0.428 & 0.645 \\
        \cmidrule{2-21}
         & \multirow{4}{*}{4} & 500 & 0.014 & 0.092 & 0.173 & 0.005 & 0.060 & 0.111 & 0.014 & 0.119 & 0.198 & 0.008 & 0.063 & 0.117 & 0.021 & 0.127 & 0.243 & 0.013 & 0.068 & 0.129 \\
         & & 1000 & 0.070 & 0.225 & 0.372 & 0.019 & 0.096 & 0.171 & 0.099 & 0.298 & 0.457 & 0.024 & 0.111 & 0.193 & 0.153 & 0.374 & 0.523 & 0.035 & 0.122 & 0.230 \\
         & & 2000 & 0.311 & 0.571 & 0.691 & 0.053 & 0.171 & 0.290 & 0.428 & 0.658 & 0.778 & 0.066 & 0.214 & 0.363 & 0.511 & 0.740 & 0.817 & 0.090 & 0.272 & 0.449 \\
         & & 4000 & 0.636 & 0.838 & 0.911 & 0.111 & 0.336 & 0.538 & 0.751 & 0.901 & 0.938 & 0.174 & 0.436 & 0.667 & 0.828 & 0.935 & 0.958 & 0.230 & 0.568 & 0.756 \\
        \midrule
        \multirow{12}{*}{5} & \multirow{4}{*}{3.5} & 500 & 0.009 & 0.058 & 0.124 & 0.004 & 0.062 & 0.108 & 0.008 & 0.067 & 0.129 & 0.002 & 0.058 & 0.110 & 0.010 & 0.074 & 0.136 & 0.006 & 0.061 & 0.121 \\
         & & 1000 & 0.015 & 0.110 & 0.188 & 0.012 & 0.072 & 0.131 & 0.023 & 0.129 & 0.223 & 0.015 & 0.085 & 0.140 & 0.034 & 0.150 & 0.254 & 0.018 & 0.086 & 0.147 \\
         & & 2000 & 0.079 & 0.241 & 0.379 & 0.033 & 0.096 & 0.176 & 0.122 & 0.295 & 0.442 & 0.036 & 0.116 & 0.198 & 0.175 & 0.361 & 0.511 & 0.039 & 0.141 & 0.232 \\
         & & 4000 & 0.280 & 0.539 & 0.672 & 0.053 & 0.199 & 0.340 & 0.399 & 0.644 & 0.752 & 0.082 & 0.252 & 0.419 & 0.483 & 0.703 & 0.813 & 0.107 & 0.316 & 0.505 \\
        \cmidrule{2-21}
         & \multirow{4}{*}{3.75} & 500 & 0.012 & 0.068 & 0.124 & 0.004 & 0.062 & 0.110 & 0.013 & 0.077 & 0.139 & 0.008 & 0.063 & 0.118 & 0.014 & 0.087 & 0.169 & 0.014 & 0.062 & 0.125 \\
         & & 1000 & 0.030 & 0.142 & 0.242 & 0.015 & 0.084 & 0.143 & 0.046 & 0.180 & 0.293 & 0.019 & 0.092 & 0.153 & 0.063 & 0.210 & 0.334 & 0.024 & 0.100 & 0.174 \\
         & & 2000 & 0.133 & 0.352 & 0.496 & 0.040 & 0.123 & 0.214 & 0.194 & 0.430 & 0.593 & 0.050 & 0.151 & 0.261 & 0.259 & 0.534 & 0.672 & 0.053 & 0.195 & 0.331 \\
         & & 4000 & 0.420 & 0.686 & 0.797 & 0.084 & 0.268 & 0.435 & 0.536 & 0.762 & 0.857 & 0.120 & 0.339 & 0.550 & 0.645 & 0.829 & 0.889 & 0.164 & 0.443 & 0.653 \\
        \cmidrule{2-21}
         & \multirow{4}{*}{4} & 500 & 0.014 & 0.079 & 0.142 & 0.007 & 0.060 & 0.110 & 0.016 & 0.096 & 0.163 & 0.010 & 0.059 & 0.120 & 0.021 & 0.106 & 0.195 & 0.013 & 0.064 & 0.127 \\
         & & 1000 & 0.048 & 0.183 & 0.309 & 0.021 & 0.094 & 0.165 & 0.079 & 0.231 & 0.361 & 0.027 & 0.110 & 0.180 & 0.097 & 0.286 & 0.443 & 0.035 & 0.115 & 0.207 \\
         & & 2000 & 0.207 & 0.463 & 0.618 & 0.050 & 0.155 & 0.266 & 0.290 & 0.565 & 0.705 & 0.060 & 0.200 & 0.337 & 0.386 & 0.653 & 0.785 & 0.082 & 0.229 & 0.427 \\
         & & 4000 & 0.540 & 0.793 & 0.878 & 0.114 & 0.332 & 0.539 & 0.676 & 0.861 & 0.914 & 0.164 & 0.443 & 0.663 & 0.760 & 0.902 & 0.939 & 0.224 & 0.559 & 0.759 \\
        \midrule
        \multirow{12}{*}{6} & \multirow{4}{*}{3.5} & 500 & 0.009 & 0.065 & 0.105 & 0.006 & 0.056 & 0.099 & 0.012 & 0.063 & 0.118 & 0.007 & 0.058 & 0.105 & 0.015 & 0.068 & 0.135 & 0.009 & 0.063 & 0.112 \\
         & & 1000 & 0.020 & 0.096 & 0.170 & 0.018 & 0.077 & 0.131 & 0.023 & 0.109 & 0.201 & 0.009 & 0.074 & 0.140 & 0.026 & 0.128 & 0.236 & 0.014 & 0.071 & 0.147 \\
         & & 2000 & 0.052 & 0.174 & 0.293 & 0.022 & 0.088 & 0.157 & 0.071 & 0.233 & 0.368 & 0.026 & 0.099 & 0.179 & 0.100 & 0.289 & 0.450 & 0.031 & 0.107 & 0.203 \\
         & & 4000 & 0.143 & 0.379 & 0.566 & 0.032 & 0.140 & 0.231 & 0.220 & 0.501 & 0.658 & 0.042 & 0.160 & 0.284 & 0.303 & 0.598 & 0.734 & 0.057 & 0.203 & 0.360 \\
        \cmidrule{2-21}
         & \multirow{4}{*}{3.75} & 500 & 0.013 & 0.066 & 0.125 & 0.006 & 0.049 & 0.099 & 0.014 & 0.070 & 0.129 & 0.010 & 0.057 & 0.104 & 0.012 & 0.083 & 0.157 & 0.010 & 0.062 & 0.113 \\
         & & 1000 & 0.021 & 0.114 & 0.211 & 0.011 & 0.080 & 0.139 & 0.025 & 0.146 & 0.265 & 0.016 & 0.078 & 0.154 & 0.031 & 0.180 & 0.312 & 0.016 & 0.083 & 0.161 \\
         & & 2000 & 0.076 & 0.248 & 0.408 & 0.027 & 0.098 & 0.193 & 0.112 & 0.328 & 0.485 & 0.033 & 0.117 & 0.216 & 0.173 & 0.406 & 0.576 & 0.043 & 0.141 & 0.255 \\
         & & 4000 & 0.232 & 0.531 & 0.699 & 0.040 & 0.173 & 0.294 & 0.331 & 0.652 & 0.783 & 0.061 & 0.220 & 0.384 & 0.458 & 0.734 & 0.845 & 0.086 & 0.275 & 0.468 \\
        \cmidrule{2-21}
         & \multirow{4}{*}{4} & 500 & 0.014 & 0.074 & 0.125 & 0.009 & 0.057 & 0.099 & 0.014 & 0.084 & 0.143 & 0.009 & 0.061 & 0.104 & 0.017 & 0.095 & 0.161 & 0.010 & 0.063 & 0.115 \\
         & & 1000 & 0.029 & 0.155 & 0.268 & 0.016 & 0.087 & 0.146 & 0.038 & 0.191 & 0.320 & 0.017 & 0.085 & 0.166 & 0.053 & 0.248 & 0.387 & 0.023 & 0.096 & 0.189 \\
         & & 2000 & 0.108 & 0.328 & 0.496 & 0.031 & 0.123 & 0.224 & 0.185 & 0.420 & 0.604 & 0.048 & 0.143 & 0.256 & 0.252 & 0.527 & 0.692 & 0.051 & 0.176 & 0.315 \\
         & & 4000 & 0.324 & 0.645 & 0.797 & 0.063 & 0.213 & 0.374 & 0.458 & 0.757 & 0.863 & 0.085 & 0.272 & 0.477 & 0.585 & 0.832 & 0.919 & 0.117 & 0.367 & 0.575 \\
        \bottomrule
    \end{tabular}
    \end{adjustbox}
\end{table}

Table \ref{tab:DGP_alt_ll} demonstrates the empirical power under the alternative $\mathbb{H}_1$. In all three cases, the reported rejection rates increase as the sample size grows. This shows that the proposed test can effectively detect the unobserved heterogeneity for suitable bandwidth choices.

\let\oldthebibliography\thebibliography
\renewenvironment{thebibliography}[1]{%
  \oldthebibliography{#1}%
  \linespread{1.1}\selectfont   
}{%
  \endlist
}

\setlength{\bibsep}{4pt plus 1pt minus 1pt}
\putbib
\end{bibunit}

\end{document}